\newtheorem{corollary}[theorem]{Corollary}
\theoremstyle{definition}
\newtheorem{example}[theorem]{Example}
\newtheorem{definition}[theorem]{Definition}
\newtheorem{remark}[theorem]{Remark}
\tikzset{curve/.style={settings={#1},to path={(\tikztostart)
    .. controls ($(\tikztostart)!\pv{pos}!(\tikztotarget)!\pv{height}!270:(\tikztotarget)$)
    and ($(\tikztostart)!1-\pv{pos}!(\tikztotarget)!\pv{height}!270:(\tikztotarget)$)
    .. (\tikztotarget)\tikztonodes}},
    settings/.code={\tikzset{quiver/.cd,#1}
        \def\pv##1{\pgfkeysvalueof{/tikz/quiver/##1}}},
    quiver/.cd,pos/.initial=0.35,height/.initial=0}
\tikzset{tail reversed/.code={\pgfsetarrowsstart{tikzcd to}}}
\tikzset{2tail/.code={\pgfsetarrowsstart{Implies[reversed]}}}
\tikzset{2tail reversed/.code={\pgfsetarrowsstart{Implies}}}
\tikzset{no body/.style={/tikz/dash pattern=on 0 off 1mm}}
\newcommand{\subalign}[1]{%
  \vcenter{%
    \Let@ \restore@math@cr \default@tag
    \baselineskip\fontdimen10 \scriptfont\tw@
    \advance\baselineskip\fontdimen12 \scriptfont\tw@
    \lineskip\thr@@\fontdimen8 \scriptfont\thr@@
    \lineskiplimit\lineskip
    \ialign{\hfil$\m@th\scriptstyle##$&$\m@th\scriptstyle{}##$\hfil\crcr
      #1\crcr
    }%
  }%
}
\newcommand{\emphat}{\emph}
\newcommand{\define}{\textbf}
\newcommand{\cat}{\mathscr}
\newcommand{\ncat}{\mathsf}
\newcommand{\nat}{\Rightarrow}
\newcommand{\Hom}{\textup{Hom}}
\newcommand{\ob}{\textup{ob}}
\newcommand{\mor}{\textup{mor}}
\newcommand{\id}{\textup{id}_}
\newcommand{\colim}{\textup{colim}}
\newcommand{\emb}{\rightarrowtail}
\newcommand{\into}{\hookrightarrow}
\newcommand{\epi}{\twoheadrightarrow}
\newcommand{\Q}{\mathscr{Q}}
\newcommand{\M}{\mathscr{M}}
\renewcommand{\epsilon}{\varepsilon}
\renewcommand{\phi}{\varphi}
\newcommand{\+}[1]{\mathcal{#1}}
\newcommand{\bb}[1]{\mathbb{#1}}
\renewcommand{\tilde}{\widetilde}
\newcommand{\tup}[1]{\langle #1 \rangle}
\newcommand{\dow}{{\downarrow}}
\newcommand{\acc}{E}
\newcommand{\subs}[1]{[R_=/#1]}
\newcommand{\debt}{{\rm debt}}
\newcommand{\arity}{{\rm arity}}
\newcommand{\tr}{{\rm tr}}
\newcommand{\midd}{\mathrel{\,\mid\,}}
\newcommand{\sigmadgl}{\sigma_{\rm DGL}}
\newcommand{\sigmacdxp}{\sigma_{\rm CDXP}}
\newcommand{\bsigma}{\overline{\sigma}}
\newcommand{\st}{{\rm ST}} 
\newcommand{\dom}{{\rm dom}}
\newcommand{\union}{\cup}
\newcommand{\last}{{\rm last}}
\newcommand{\logicfullname}{Path Predicate Modal Logic\xspace}
\newcommand{\logicname}{PPML}
\newcommand{\logic}{\ensuremath{{\rm \logicname}}\xspace}
\newcommand{\logick}{\ensuremath{{\rm \logicname}_k}\xspace}
\newcommand{\logicpos}{\ensuremath{{\rm \logicname}^{+}\xspace}}
\newcommand{\logicposk}{\ensuremath{{\rm \logicname}_k^{+}\xspace}}
\newcommand{\datagl}{\ensuremath{{\rm DataGL}}\xspace}
\newcommand{\cdxp}{\ensuremath{{\rm CoreDataXPath}}\xspace}
\newcommand{\bml}{\ensuremath{{\rm BML}}\xspace}
\newcommand{\bisim}{\mathrel{\underline{~~}\hspace{-0.85em}{\leftrightarrow}}}
\newcommand{\simu}{\mathrel{\underline{~~}\hspace{-0.75em}{\rightarrow}}}
\newcommand{\Ck}{\mathbb{C}_k}
\newcommand{\Mk}{\mathbb{M}_k}
\newcommand{\Ek}{\mathbb{E}_k}
\newcommand{\Pn}{\mathbb{P}_n}
\newcommand{\Pkn}{\bb P_{k, n}}
\newcommand{\Pkns}{\bb P_{k+1, N}^*}
\newcommand{\A}{\+ A}
\newcommand{\bpa}{a}
\newcommand{\Ap}{{(\A, \bpa)}}
\newcommand{\B}{\+ B}
\newcommand{\bpb}{b}
\newcommand{\Bp}{{(\B, \bpb)}}
\newcommand{\C}{\+ C}
\newcommand{\bpc}{c}
\newcommand{\Cp}{{(\C, \bpc)}}
\newcommand{\T}{\+ T}
\newcommand{\bpt}{u}
\newcommand{\Tp}{{(\T, \bpt)}}
\newcommand{\G}{\+ G}
\newcommand{\bpp}{p_0}
\newcommand{\Pp}{{(\+ P, \bpp)}}
\newcommand{\N}[0]{\mathbb{N}}
\newcommand{\MM}{\mathfrak{M}}
\newcommand{\tMM}{{t \, \MM}}
\newcommand{\MMp}{{(\mathfrak{M}, w)}}
\newcommand{\tMMp}{{t \, \MMp}}
\newcommand{\MMpp}{{(\mathfrak{M}', w')}}
\newcommand{\tMMpp}{{t \, \MMpp}}
\newcommand{\Struct}[1][\sigma]{\ensuremath{{\sf Struct}(#1)}\xspace}
\newcommand{\Structpointed}[1][\sigma]{\ensuremath{{\sf Struct}_*(#1)}\xspace}
\newcommand{\Structpointedfin}[1][\sigma]{\ensuremath{{\sf Struct}_*^{\text{f}}(#1)}\xspace}
\newcommand{\St}{\Struct}
\newcommand{\Stp}{\Structpointed}
\newcommand{\Stpgl}{\Structpointed[\sigmadgl]}
\newcommand{\Stpf}{\Structpointedfin}
\newcommand{\EMf}{\EM^\text{f}}
\newcommand{\Spf}{{\sf S}_*^\text{f}}
\newcommand{\ModDGL}{\ensuremath{{\sf ModDGL}}\xspace}
\newcommand{\EM}{\text{\sf{EM}}}
\newcommand{\Kl}{\text{\sf{Kl}}}
\newcommand{\kbisimproblemsig}[1][\sigma]{\ensuremath{k\textup{\textsc{-Bisim}}(#1)}\xspace}
\newcommand{\modelcheckproblemsig}[1][\sigma]{\ensuremath{\textup{\textsc{ModelCheck}}(#1)}\xspace}
\newcommand{\satproblem}{\ensuremath{\textup{\textsc{PPML-Sat}}}\xspace}
\newcommand{\satproblembml}{\ensuremath{\textup{\textsc{BML-Sat}}}\xspace}
\newcommand{\pspace}{\textup{\textsc{PSpace}}\xspace}
\newcommand{\ptime}{{\textsc{PTime}}\xspace}
\begin{document}

\title{Modal Logic with Relations over Paths: a Theoretical Development through Comonadic Semantics}

\author[1, 3]{Santiago Figueira}
\author[2, 3]{Gabriel Goren-Roig}
\affil[1]{Universidad de Buenos Aires. Facultad de Ciencias Exactas y Naturales. Departamento de Computación. Buenos Aires, Argentina.}
\affil[2]
{Universidad de Buenos Aires. Facultad de Ciencias Exactas y Naturales. Departamento de Matemática. Buenos Aires, Argentina.}
\affil[3]{CONICET - Universidad de Buenos Aires. Instituto de Ciencias de la Computación (ICC). Buenos Aires, Argentina.}
\date{}
\setcounter{Maxaffil}{0}
\renewcommand\Affilfont{\itshape\small}
\maketitle

\begin{abstract}
Game comonads provide categorical semantics for comparison games in Finite Model Theory, thus providing an abstract characterisation of logical equivalence for a wide range of logics, each one captured through a specific choice of comonad.
Motivated by the goal of applying comonadic tools to the study of data-aware logics such as \cdxp, in this work we introduce
a generalisation of Modal Logic that allows relation symbols of arbitrary arity as atoms of the syntax, which we call \logicfullname or \logic.
We motivate this logic as arising from a shift in perspective on a previously studied fragment of \cdxp, called \datagl, and prove that \logic recovers \datagl for a specific choice of signature. We argue that this shift in perspective allows the capturing and designing of new data-aware logics. On the other hand, \logic enjoys an intrinsic motivation in that it extends Modal Logic to predicate over more general models.
Having introduced resource-bounded simulation and bisimulation games for \logic together with a proof of the Hennessy-Milner property relating bisimilarity and logical equivalence, we define the \logic comonad, which essentially amounts to an unravelling construction on models of \logic, and prove that it captures these games, following analogous results in the literature. However, we depart from the literature in our proof strategy, since we draw upon the axiomatic framework of arboreal categories, giving intuition for the axioms involved and supplying detailed verifications. Subsequently, we develop the model-theoretical understanding of \logic by making systematic use of the comonadic framework. This includes results such as a tree-model property and an alternative proof of the one-way Hennessy-Milner property using a correspondence between positive \logic formulas and canonical models. We also use the comonadic perspective to establish connections with other logics, such as bounded quantifier rank and bounded variable number fragments of First Order Logic on one side and Basic Modal Logic on the other, and show how the \logic comonad induces a syntax-free characterisation of logical equivalence for \datagl, our original motivation. With respect to Basic Modal Logic, a functorial assignment from \logic unravellings into Kripke trees enables us to obtain polynomial-time reductions from \logic problems to their Basic Modal Logic counterparts.
\end{abstract}

\section{Introduction}

\noindent One of the main problems of database theory is finding appropriate balances between complexity and expressivity of query languages. In this context,
data-aware logics are languages that reason on data graphs, i.e.\ finite graphs whose nodes are decorated with a label from  a finite alphabet and a data value from an infinite domain.
Formulas in data-aware logics express queries based not only on the graph topology and node labels, as modal logics do, but also with reference to data values. However, instead of directly accessing data values as constants, data-aware logics only allow comparison of data values in a controlled way through specific syntactic constructions~\cite{CoreDataXPath,benedikt2009xpath,abriola2018bisimulations}. One important comparison operation consists in checking for equality of data values, which is sufficient to express the data join, arguably the most important construct of a query language.

On the other hand, comonadic semantics~\cite{abramsky2017pebbling, abramsky2021relating,abramsky2021comonadic,abramsky2021arboreal,abramsky2022hybrid,dawar2021lovasz,jakl2023categorical,abramsky2022preservation} is a novel framework for Finite Model Theory in which a categorical language and methodology is adopted. The cornerstone of this theory consists in the observation that model comparison games for different logics can be expressed through corresponding \emphat{game comonads}. These comonads are indexed by a resource parameter that controls some notion of complexity of the formulas in the associated language. Given that monads and comonads feature prominently in formal semantics and functional programming,%
\footnote{More generally, monads and comonads are core concepts of Category Theory, arising from any adjunction between categories.}
this is particularly interesting from the perspective of unifying the two main strands of Theoretical Computer Science, which have been called `structure' (semantics and compositionality) and `power' (expressivity and complexity) in~\cite{abramsky2021relating} and subsequent works. Tending bridges between these two communities and their methods will hopefully provide new insights into the discipline of Theoretical Computer Science as a whole.

The following work constitutes a first application of comonadic semantics to the study of data-aware logics.
We define a family of logics which extends the syntax of Basic Modal Logic to relational signatures with symbols of arbitrary arity, and we study its model theory through the tools of comonadic semantics.
We call this family of logics \logicfullname or \logic for short. \logic seems well suited to express data-aware logics; in particular, it provides a framework for the language \datagl, studied in~\cite{dataGL} from a proof-theoretical point of view. It is also of independent interest as a modal logic which reasons over a more general class of models. In this sense, it gives a formal answer to the question of what it means to reason modally about arbitrary relational structures—at least once one has adopted a binary relation symbol as an accessibility relation. The corresponding \logic comonad occupies a middle ground between the Modal comonad on one side and the Ehrenfeucht-Fraïssé and Pebbling comonads on the other ~\cite{abramsky2021relating}, and shares a fundamental technical property with the former, namely idempotence, which helps us establish tight connections between \logic and Basic Modal Logic.

\paragraph*{Outline.}
The paper is structured as follows: after having fixed terminology and notation in Section~\ref{sec:prelim}, in Section~\ref{sec:ppml} we introduce the main objects of study—\logic on one side, and the \logic comonad on the other—and establish their fundamental interrelationship. Then in Sections~\ref{sec:model_th_of_ppml} and~\ref{sec:ppml_and_other_logics} we take advantage of the comonadic formalism in order to establish model-theoretic results about \logic and correspondences with other languages, namely First Order Logic, \datagl and Basic Modal Logic. We close with a discussion of conclusions and future lines of work in Section~\ref{sec:conclu}. For brevity, some of the proofs have been deferred to the Appendix.

\paragraph*{Contributions.}
After introducing the syntax and semantics of \logic (Defs.~\ref{def:syntax_ppml} and~\ref{def:semantics_ppml}), in Section~\ref{sec:ppml} we begin by defining appropriate notions of resource-bounded simulation and bisimulation (Def.~\ref{def:bisim}), together with their formulation as Spoiler-Duplicator games (Def.~\ref{def:ppml_games}), and proving a Hennessy-Milner property linking $k$-bisimilarity with logical indistinguishability by formulas of \logick, the fragment of \logic with modal depth bounded by $k$ (Thm.~\ref{thm:hennessy-milner}). On the other hand, we define a $k$-indexed family of comonads on the category of pointed relational structures, which we denote by $\{\Ck\}_{k\in\N}$, and identify the result of applying the comonad $\Ck$ to a given structure as constructing a $k$-step unravelling of the structure, turning it into a tree-shaped structure which we call path-predicate tree or pp-tree. Bisimulation serves as the point of contact between the logic and the comonad, since, as we show in the remaining of Section~\ref{sec:ppml}, the fundamental categorical constructions accompanying the \logic comonad (the Kleisli and Eilenberg-Moore categories associated with $\Ck$) capture $k$-similarity (Prop.~\ref{prop:kleisli_morphisms}, which holds almost by definition of $\Ck$) and $k$-bisimilarity (Thm.~\ref{thm:full_logical_equivalence}, which requires additional conceptual scaffolding) in terms of the existence of certain kinds of homomorphisms involving the unravelling construction.
The results of Section~\ref{sec:ppml} are analogous to previous results for Basic Modal Logic, both from the side of logic~\cite{blackburn2001modal} and from the side of comonads~\cite{abramsky2021relating}. However, we depart from the existing literature in our exposition leading to the proof of Theorem~\ref{thm:full_logical_equivalence} since we emphasise the axiomatic framework of arboreal categories~\cite{abramsky2021arboreal}, giving intuition for the axioms involved and supplying detailed verifications for their validity in our particular case. In this way, our proof of Theorem~\ref{thm:full_logical_equivalence} draws on abstract results from~\cite{abramsky2021arboreal} proven for arbitrary arboreal covers. The strategy is as follows: we first prove that the family $\{\Ck\}_k$ induces a resource-indexed arboreal cover of the category of pointed relational structures (Thm.~\ref{thm:arborealcover}). Thus, for each $k$ there exists an abstract Spoiler-Duplicator game played between objects of the Eilenberg-Moore category of $\Ck$, and winning strategies for Duplicator in this game correspond to spans of \emph{open pathwise embeddings}, an abstract notion that generalises functional bisimulation. We prove that the abstract game is equivalent to our definition of the \logic $k$-bisimulation game (Prop.~\ref{prop:games_coincide}) and that open pathwise embeddings in our case coincide with an appropriate definition of bounded morphism for \logic (Prop.~\ref{prop:open_PE}). Then by \cite[Prop. 46]{abramsky2021arboreal} we conclude that spans of bounded morphisms characterise $k$-bisimilarity for \logic.
We also note that previous, specific instances of game comonads have targeted well-known logics, while, on the contrary, here we undertake the initial characterisation of a new logic using the comonadic framework.

In Section~\ref{sec:model_th_of_ppml} we explore three additional topics in the model theory of \logic, which showcase the utility of the comonadic framework. First we show that isomorphism of resource-bounded unravellings coincides with resource-bounded bisimilarity for an extension of \logic with graded modalities which we denote by $\logic^\#$ (Thm.~\ref{thm:graded_bisimilarity_by_iso_classes}), which immediately implies a homomorphism-counting property using results from~\cite{dawar2021lovasz} (Thm.~\ref{thm:lovasz_for_ppml}). Then we conclude from Thm.~\ref{thm:graded_bisimilarity_by_iso_classes} a pp-tree-model property for $\logic^\#$, in particular for \logic. This is immediate from the fact that $\Ck$ is an idempotent comonad. Thirdly, we prove a Chandra-Merlin-like correspondence~\cite{chandra1977optimal} between formulas in the negation-free fragment of \logic and finite pp-trees (Corollaries~\ref{coro:chandra_merlin_for_ppml_1} and~\ref{coro:chandra_merlin_for_ppml_2}) which enables an alternative, comonadic proof of the Hennessy-Milner property for $k$-simulations (Thm.~\ref{thm:hm_with_chandra-merlin}). This suggests that Chandra-Merlin-like correspondences may serve as an independent point of contact between logic and comonads. Except for this last observation, the results in this section follow the footsteps of previously established results on game comonads; however, the induction arguments in the proofs of Thm.~\ref{thm:graded_bisimilarity_by_iso_classes} and of the Chandra-Merlin-like correspondence require non-trivial adaptations from Basic Modal Logic and involve novel constructions.

In Section~\ref{sec:ppml_and_other_logics} we establish relationships between \logic and First Order Logic, \datagl and Basic Modal Logic. Beginning with First Order Logic, we show that \logic translates into First Order Logic with bounded quantifier rank and, whenever the arity of atomic symbols in \logic is bounded, with bounded variable number (Prop.~\ref{prop:PPML_FOL_tranlation}). We also show how this is mirrored by the relationship between the corresponding comonads: $\Ck$ turns out to be a subcomonad of the Ehrenfeucht-Fraïssé and Pebbling comonads~\cite{abramsky2021relating} in a suitable sense (Prop.~\ref{prop:subcomonads}). We then return to our motivation of developing the theory of data-aware logics by considering \datagl as our starting point. We show that models of \datagl can be embedded as a subclass of models of \logic with a particular choice of relation symbols by encapsulating the actual data values into their corresponding `equal-data' relation. In this way, both logics are equi-expressive over this class (Thm.~\ref{thm:datagl-equiexpressive}). In this sense \logic contains \datagl and this allows us to capture logical indistinguishability by \datagl formulas in terms of morphisms involving \logic unravellings (Thm.~\ref{thm:full_logical_equivalence_datagl}). We also show how seeing \datagl as contained in \logic lets us define other data-aware logics by small modifications. Finally, we study the close relationship between \logic and Basic Modal Logic. Here lie the algorithmic contributions of this paper. We define a fully-faithful functor $K$ from pp-trees to Kripke trees which preserves and reflects $k$-bisimilarity for all $k$ (Thm.~\ref{thm:K_preserves_bisimilarity}) and we use it to establish polynomial-time computational reductions from the problems of checking $k$-bisimilarity, model checking and satisfiability for \logic to their Basic Modal Logic counterparts. In doing so, we prove the finite-model property for \logic as a corollary, and we observe that the functor $K$ also establishes an injective function from \logic logical types into modal logical types.

\paragraph*{Note on the category-theoretical background.}
In the spirit of bridging the gap between structure and power, we have strived to give an informative exposition of the necessary categorical concepts. This means that for most of this paper we only assume some familiarity with categories, functors and natural transformations. We hope that this will be helpful to readers interested in learning how to manipulate these concepts. Some additional categorical concepts are used in Section~\ref{sec:arboreal}, namely limits, colimits and adjunctions. Adjunctions also make an appearance in the proof of Theorem~\ref{thm:lovasz_for_ppml}.

\subsection{Preliminaries}\label{sec:prelim}

\paragraph*{Sequences.} For a set $\Sigma$, let $\Sigma^i$ be the set of all finite sequences of length $i$ over $\Sigma$, let $\Sigma^* \coloneqq \bigcup_{0 \leq i}\Sigma^i$, $\Sigma^+ \coloneqq \bigcup_{1 \leq i}\Sigma^i$, and $\Sigma^{\leq n} \coloneqq \bigcup_{1 \leq i\leq n}\Sigma^i$. For $s\in\Sigma^*$, let $|s|$ be the length of $s$ and let $s(i)$ be the $i$-th element of $s$ from left to right, so that $s=s(1)\dots s(|s|)$. Let $\last_k(s) \coloneqq s(|s|-k)\dots s(|s|)$ if $k\leq|s|$, and $\last_k(s) \coloneqq s$ otherwise. In accordance with the notation to be introduced in Section~\ref{sec:comonad}, we will denote the last element of $s$ by $\epsilon(s)$, if $|s|>0$. The concatenation of an element $a\in \Sigma$ and a sequence $s\in \Sigma^*$ is denoted by $a.s$. Although tuples are represented with parentheses and sequences with square brackets, we do not distinguish between them formally.

\paragraph*{Relational structures.} A \define{relational first-order signature}, or \define{signature} for short, consists of a set $\sigma$, elements of which are called \define{relation symbols}, and a function $\arity: \sigma \to \N_{>0}$ assigning a positive integer to each symbol, which is referred to as its \define{arity}. We refer to a signature $(\sigma, \arity)$ by the symbol $\sigma$. A \define{$\sigma$-structure} $\A$ consists of a set $|\A|$, which we refer to as its \define{universe} or \define{domain}, together with a subset $R^{\A} \subseteq |\A|^{\arity(R)}$ for each $R \in \sigma$, which we refer to as the \define{interpretation of $R$ in $\A$}. Since we use $|{-}|$ for the underlying set of a structure, if $S$ is a set we use the notation $\# S$ to denote its cardinality.

\paragraph*{Morphisms of relational structures.} We denote by \Struct the category whose objects are $\sigma$-structures and whose morphisms are the homomorphisms between them, that is, the interpretation-preserving functions between the underlying domains. \Structpointed denotes the category whose objects are \emphat{pointed} $\sigma$-structures, that is $\sigma$-structures $\A$ equipped with a distinguished element or \define{basepoint} $a \in |\A|$. The morphisms in this case are the \define{pointed} homomorphisms, i.e.\ those that preserve the basepoints. A homomorphism $f: \A \to \B$ is \define{strong} iff for all $R \in \sigma$ and for all $s = (a_1,\dots,a_r) \in |\A|^+$, $s \in R^\A \iff f(s) \in R^\B$, where $f(s) \coloneqq (f(a_1),\dots,f(a_r))$.
That is to say, strong homomorphisms reflect relations as well as preserve them. An \define{embedding of relational structures}, or \define{relational embedding}, is an injective strong homomorphism. When talking about homomorphisms between pointed structures, all homomorphisms will be assumed to be pointed unless stated otherwise. We say $\Ap$ is an \define{embedded substructure} of $\Bp$ iff $|\A| \subseteq |\B|$, $a=b$ and the inclusion $|\A| \hookrightarrow |\B|$ is a relational embedding.

\paragraph*{Chains and trees.} Let $V$ be a set and $R \subseteq V \times V$ a binary relation, to be treated as an accessibility relation. If $(v,v')\in R$, we say that $v'$ is a \define{successor} of $v$ and that $v$ is a \define{predecessor} of $v'$, and write $v \prec v'$. We denote by $R^+$ the transitive closure of $R$, and if $(v, v') \in R^+$ we say that $v'$ is \define{accessible from $v$}. We write $R(v)$ for the set of successors of $v$. We say that a sequence $s = [v_1,\dots,v_\ell] \in V^+$ is a \define{chain} if $v_i \prec v_{i+1} \forall i \in \{0,\dots,\ell-1\}$. The \define{length} of a chain is its length as a sequence.

We say that $(V, R, v)$, where $v \in V$, is a \define{rooted tree}, or \define{tree} for short, if $v$ (called the \define{root} of the tree) has no predecessors, and moreover all non-root points are accessible from $v$ and have a unique predecessor. The \define{height} of a point $v' \in V$ is defined as the unique $n \in \N$ such that $(v, v') \in R^n$, where $R^n$ is the $n$-fold composition of $R$ with itself (with the convention that $R^0$ is the identity relation). We say $(V, R, v)$ is of \define{finite height} if there exists a maximum height over all its points, which we refer to as the height of the tree. A point in a tree without successors is called a \define{leaf}. A maximal chain in a tree is called a \define{branch}. A tree is \define{finitely branching} if each point has a finite number of successors.

\paragraph*{Chains and trees in structures.} Throughout this paper we will assume that, unless stated otherwise, relational signatures contain a distinguished binary symbol $\acc$, which plays the role of an accessibility relation.
Given a relational structure $\A$, we use the notation and vocabulary of the preceding paragraph with $V = |\A|, R = \acc^\A$. To be more explicit, we may refer to a chain in $(|\A|, \acc^\A)$ as an \define{$\acc$-chain}. Moreover, we say that a pointed structure $\Ap$ is an \define{$\acc$-tree} if $(|\A|, \acc^\A, a)$ is a tree. We also say that $\Ap$ is an $\acc$-chain if it is an $\acc$-tree with a single branch, and that $\A$ is finitely branching if it is finitely branching as an $\acc$-tree.

\paragraph*{Functors and categories.} Given a category $\cat{A}$, we write $\ob(\cat{A})$ for its collection of objects and $\mor(\cat{A})$ for its collection of morphisms. Given $A, A' \in \cat{A}$, we write $A \cong A'$ iff $A$ and $A'$ are isomorphic (there exists an invertible morphism between them). We write $A \in \cat{A}$ to mean $A \in \ob(\cat{A})$ and given $f: A \to A'$ in $\cat{A}$, we denote by $\dom(f)$ its domain $A$. Let $F: \cat{A} \to \cat{B}$ be a functor between categories $\cat{A}$ and $\cat{B}$. We say that $F$ is \define{full} if the functions that define its action on morphisms are all surjective, i.e.\ if for every $A, A' \in \cat{A}$ and for every morphism $g: FA \to FA'$, there exists a morphism $f: A \to A'$ such that $g = Ff$. We say that $F$ is \define{faithful} if such actions are injective, i.e.\ if whenever $Ff = Ff'$ for a pair of morphisms $f, f': A \to A'$, it must be the case that $f = f'$. We say $F$ is \define{fully faithful} if it is full and faithful, in which case it defines bijections between the homsets (sets of morphisms) $\Hom_\cat{A}(A, A')$ and $\Hom_\cat{B}(FA, FA')$ for all $A, A' \in \cat{A}$. We use the notations $\Hom_\cat{A}(A,A')$, $\Hom(A, A')$ and $\cat{A}(A, A')$ interchangeably. The image of a fully faithful functor is a subcategory of its codomain, and in particular is a \define{full subcategory}, which means that it contains all morphisms between the objects it contains. We say $F$ is \define{essentially surjective on objects} if for all $B \in \cat{B}$ there exists some $A \in \A$ such that $FA \cong B$. Given a category $\cat{A}$, we denote the identity functor on $\cat{A}$ by $1_\A$. Throughout this paper, all categories may be safely assumed to be \define{locally small} and \define{well powered}, which means that the collections of morphisms between any two objects are sets, and that the collection of subobjects of any given object is also a set. More generally, we say that a collection is \define{small} if it constitutes a set. A \define{small category} is a category $\cat{A}$ such that $\mor(\cat{A})$ is small (and hence also $\ob(\cat{A})$ is).

\section{\logicfullname}\label{sec:ppml}

\begin{definition}\label{def:syntax_ppml}
    Let $\sigma$ be a first-order relational signature including a binary relation symbol $\acc$. The syntax of \define{\logicfullname (\logic) over $\sigma$} (or $\sigma$-\logic) is defined by the grammar
    \begin{align*}
    \varphi & \ \Coloneqq \top              \ \mid \ 
                    R                 \ \mid \ 
                    \lnot\varphi        \ \mid \ 
                    \varphi\land\varphi \ \mid \ 
                    \Diamond\varphi                 \tag{$R \in \bsigma$}
    \end{align*}
    where $\bsigma \coloneqq \sigma \setminus \{\acc\}$ (notice that $\acc$ is not an atom of the language). The \define{modal depth} of a formula $\varphi$ is defined as the maximum number of nested $\Diamond$ symbols in $\varphi$.
\end{definition}

\begin{remark}
    Throughout this paper we will assume that all relational signatures include a designated binary relation symbol $\acc$ unless stated otherwise.
\end{remark}

Analogously to Basic Modal Logic (which we will shorten to \bml), the truth value of a $\sigma$-\logic formula is defined relative to a $\sigma$-structure $\A$ and a specific point $a \in |\A|$. However, the evaluation of a \logic formula involves the construction of a path on the structure, to be interpreted as a history which must be remembered in order to continue the evaluation at any given point.\footnote{This interpretation is connected to memory logics~\cite{AFFM08b}, although this particular remembrance device is comparatively simple.} This lets us think of the language as manipulating paths at a propositional level, hence the name \logicfullname.

\begin{definition}\label{def:semantics_ppml}
    Given a signature $\sigma$, we define the semantics of \logic over a $\sigma$-structure $\A$ and a sequence or \define{valuation} $s \in |\A|^+$ as follows:
    \begin{align*}
    \A,s &\models \top            & &\text{always}\\
    \A,s &\models R              &\text{iff \quad} &\text{$\arity(R) \leq |s|$ and $\last_{\arity(R)}(s) \in R^{\A}$}\\
    \A,s &\models \lnot\varphi     &\text{iff \quad} &\A,s\not\models \varphi\\
    \A,s &\models \varphi\land\psi &\text{iff \quad} &\A,s\models \varphi \text{ and }\A,s\models \psi\\
    \A,s &\models \Diamond\varphi  &\text{iff \quad} &\exists a\in |\A|.(\epsilon(s),a)\in \acc^{\A} \text{ and } \A, s.a \models \varphi.
    \end{align*}
    We write $\A, a \models \phi$ for $\A, [a] \models \phi$, and say that a pointed structure $(\A, a) \in \Structpointed$ satisfies $\phi$ iff $\A, a \models \phi$.

    The \define{positive fragment} of \logic, \logicpos, consists of the subset of negation-free formulas. We denote by \logick and \logicposk~the fragments of \logic and \logicpos consisting of formulas of modal depth at most $k$. We write
    \begin{align*}
    \A,a \Rrightarrow^+_k \B,b \quad \text{ if }& \quad \A,a\models\varphi \implies \B,b\models\varphi \text{ for all $\varphi$ in \logicposk,} \\
    \A,a\equiv^+_k\B,b \quad \text{ if }& \quad \A,a\models\varphi \iff \B,b\models\varphi \text{ for all $\varphi$ in \logicposk,} \\ 
    \A,a\equiv_k\B,b \quad \text{ if }& \quad \A,a\models\varphi \iff \B,b\models\varphi \text{ for all $\varphi$ in \logick, and} \\
    \A,a\equiv\B,b \quad \text{ if }& \quad \A,a\models\varphi \iff \B,b\models\varphi \text{ for all $\varphi$ in \logic.}
    \end{align*}
\end{definition}

By default, we consider the semantics of \logic to be single-pointed, treating the more general valuation semantics mostly as a means to define the former.

\begin{example}\label{ex:ppml_first_examples}
Let $\sigma$ be a relational signature with $E \in \sigma$.
\begin{enumerate}
    \item If $\arity(R) = 1$ for all $R \in \bsigma$, we say that $\sigma$ is a \define{unimodal signature}, and we might suggestively write $\bsigma = \{p, q, r, \dots\}$. Then a $\sigma$-structure is a Kripke structure with propositional variables $\bsigma$ and accessibility relation $\acc$. Moreover the syntax of \logic coincides with the syntax of \bml, and the semantics of a formula seen as a \logic-formula and as a \bml-formula coincide. In this case.
    This example shows that \logic is an extension of Basic Modal Logic.
    
    \item Now let $\sigma \coloneqq \{E, p, q, r, \dots, S\}$ where $S$ is binary. An example $\sigma$-formula is $\phi = \Diamond (S \land \Diamond S)$. It is the case that a pointed structure $\Ap$ satisfies $\phi$ if and only if there exists an $E$-chain $[a, a', a'']$ starting at $a$ such that $\{(a, a'), (a', a'')\} \subseteq S^\A$.
    
    \item We now add a ternary relation $T$ to $\sigma$. Consider the formula $\psi_1 \coloneqq \Diamond (\lnot S \land \Diamond(\lnot S \land \lnot T))$. The extension of $\psi_1$ in a structure $\A$ consists of all points $a$ such that there exists an $E$-chain $[a, a', a'']$ starting at $a$ with $(a, a'), (a', a'') \not\in S^\A$ and $(a, a', a'') \not\in T^\A$.

    \item If $R$ is a relation symbol of arity greater than $1$, then $\phi = R$ is unsatisfiable. More generally, any instance of a relation symbol $R$ appearing in a formula $\phi$ not nested in at least $\arity(R) - 1$ diamond symbols can be rewritten to the falsum constant $\bot \coloneqq \lnot \top$. In this way, one can always rewrite a \logic formula to a formula in which all relation symbols are appropriately nested.
\end{enumerate}
\end{example}

The observation in the last example is formalised by the following definition.

\begin{definition}\label{def:modal_debt}
    Given $\phi \in \logic$, its \define{modal debt}, denoted by $\debt(\phi)$, is defined inductively as follows:
    \begin{align*}
        \debt(\top) &\coloneqq 0 \\
        \debt(R) &\coloneqq \arity(R) - 1\\
        \debt(\phi_1 \land \phi_2) &\coloneqq \max\{\debt(\phi_1), \debt(\phi_2)\}\\
        \debt(\lnot \psi) &\coloneqq \debt(\psi)\\
        \debt(\Diamond \psi) &\coloneqq \min\{0, \debt(\psi) - 1\}.
    \end{align*}
    We say that $\phi$ is \define{well nested} if $\debt(\phi) = 0$, and \define{badly nested} if otherwise.
\end{definition}
Intuitively, modal depth and modal debt play complementary roles: while the modal depth of a \logic formula $\phi$ quantifies how much one must explore the $\sigma$-structure from a given starting point in order to evaluate $\phi$, its modal debt quantifies how much one must have \emph{already} explored the structure before arriving at the current position, in order to evaluate $\phi$ in a sensible way.

\begin{remark}\label{rem:hints_at_relationship_with_modal}
    The fact that when $\sigma$ is unimodal the syntax of \logic becomes \emphat{exactly} that of \bml hints at the possibility of relating \logic and \bml by redeclaring all relation symbols in $\bsigma$ as unary. This is the approach we follow in Section~\ref{sec:relationship_modal} to obtain computational reductions from \logic problems to their \bml analogues. However, care must be taken when dealing with badly-nested formulas.
\end{remark}

\begin{remark}\label{rem:bounded_arity_signatures}
    Suppose that the arity of all symbols in $\sigma$ is bounded. This happens in important cases such when $\sigma$ is finite, when $\sigma$ is unimodal or when $\sigma$ is obtained from a unimodal signature by adding a finite collection of additional symbols.
    In such a situation, let $W$ be the maximum arity of relations in $\bsigma$, which we think of as a memory size. We can now replace the semantics given above in terms of valuations with equivalent semantics based on bounded-length valuations $s\in |\A|^{\leq W}$. The last clause in Definition~\ref{def:semantics_ppml} becomes
    $$A,s \models \Diamond\varphi  \text{\quad iff \quad} \exists a\in |\A|.(\epsilon(s),a)\in \acc^{\A} \text{ and } \A, (\last_{W-1}(s).a) \models \varphi.$$
    The resulting logic is equivalent, yet this makes explicit the fact that \logic effectively requires only a bounded amount of memory. We explore this in more detail in Section~\ref{sec:fragments_and_subcomonads}.
\end{remark}

\subsection{\boldmath{\logic} Bisimulation} \label{sec:bisim}

We now present natural notions of resource-bounded bisimulation and one-way simulation for \logic. These differ from their \bml counterparts in that checking the \logic analogue of atomic harmony requires remembering more than just the last visited node. We follow the stratified version of bisimulations, well studied in the literature (see e.g.\ \cite[Definition 2.30]{blackburn2001modal}).
\begin{definition}\label{def:bisim}
    Given two $\sigma$-structures $\A$ and $\B$, consider a chain of non-empty binary relations $\emptyset \neq Z_k \subseteq \dots \subseteq Z_0$ between sequences in $|\A|$ and sequences in $|\B|$ of the same length, such that for all $0\leq j\leq k$, the sequences related by $Z_j$ have length at most $k-j+1$.
    That is to say, $Z_j\subseteq \bigcup_{1\leq i\leq k-j+1}|\A|^i\times|\B|^i$ for each $j$.
    We say that these relations constitute a \define{$k$-bisimulation} between $\A$ and $\B$ if the following conditions hold:
    \begin{enumerate}
    \item\label{sim:harmony} if $s Z_j t$ for some $j$ (or, equivalently, if $s Z_0 t$), then
    $\A, s \models R \iff \B, t \models R$ for all $R \in \bsigma$;

    \item\label{sim:forth} whenever $s Z_j t$ for some $j\neq0$, for each $a\in|\A|$ such that $\epsilon(s) \prec a$ there exists some $b\in|\B|$ such that $\epsilon(t) \prec b$ and $(s.a) \ Z_{j-1} \ (t.b)$; and

    \item\label{sim:back} whenever $s Z_j t$ for some $j\neq0$, for each $b\in|\B|$ such that $\epsilon(t) \prec b$ there exists some $a\in|\A|$ such that $\epsilon(s) \prec a$ and $(s.a) \ Z_{j-1} \ (t.b)$.
    \end{enumerate}
    We say that $\Ap$ and $\Bp$ are \define{$k$-bisimilar}, denoted by $\Ap \bisim_k \Bp$, if there exists a $k$-bisimulation $(Z_j)_{0 \leq j\leq k}$ between $\A$ and $\B$ such that $\bpa Z_k \bpb$.

    A \define{$k$-simulation} from $\A$ to $\B$ is a family of non-empty relations $(Z_j)_{0 \leq j\leq k}$ defined analogously except that instead of satisfying conditions (1), (2) and (3), it satisfies condition (2) together with

    \begin{enumerate}
    \setcounter{enumi}{3}
    \item\label{sim:harmonypos}
    if $s Z_j t$ then
    $\A, s \models R \implies \B, t \models R$ for all $R \in \bsigma$.
    \end{enumerate}
\end{definition}

\begin{example}\label{ex:bisimilar_structures}
Let $\sigma = \{\acc, S\}$ where $S$ is binary. Figure~\ref{fig:example_bisimilar_structures} shows two $\sigma$-structures $\A$ and $\B$ together with a $2$-bisimulation $(Z_i)_{i\leq2}$ between them. The nested boxes represent the nested relations of the bisimulation, while the labelled edges represent choices of $a\in|\A|$ and $b\in|\B|$ in rules~\ref{sim:forth} and~\ref{sim:back} of Definition~\ref{def:bisim} respectively.

Since $(a,b)\in Z_2$ we conclude $(\A,a)\bisim_2(\B,b)$, and this trivially implies $(\A,a)\bisim_1(\B,b)$ and $(\A,a)\bisim_0(\B,b)$. Furthermore, $(\A,a)\bisim_k(\B,b)$ for any $k\geq 3$ via the $k$-bisimulation $(Z_i)_{i\leq k}$ given by $Z_i=Z_2$ for $i=3,\dots,k$. Hence $(\A,a)\bisim_k(\B,b)$ for all $k$.
\end{example}

\begin{figure}[h]
\centering
\includegraphics[scale=0.275]{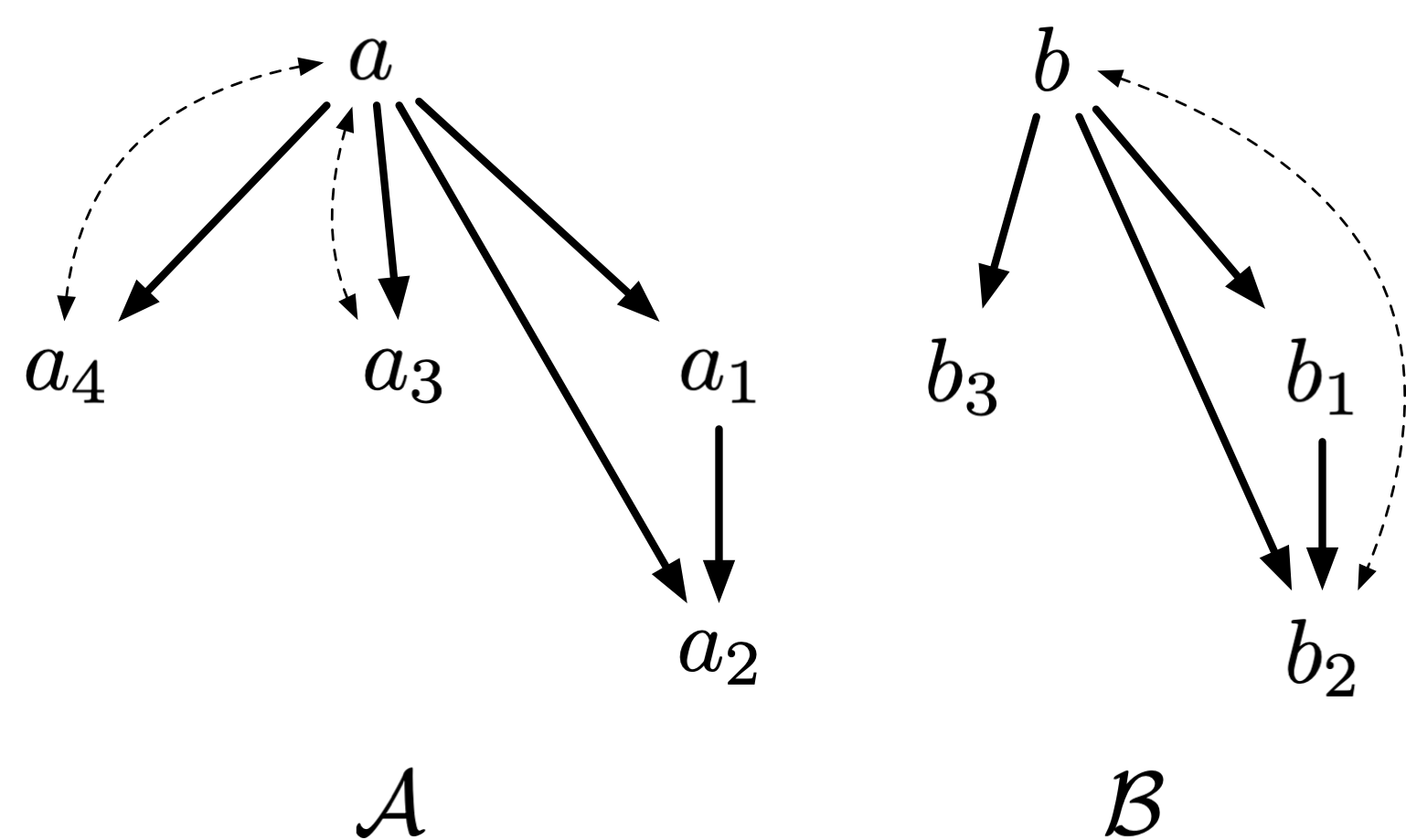}
\vspace{0.5cm}

\includegraphics[scale=0.275]{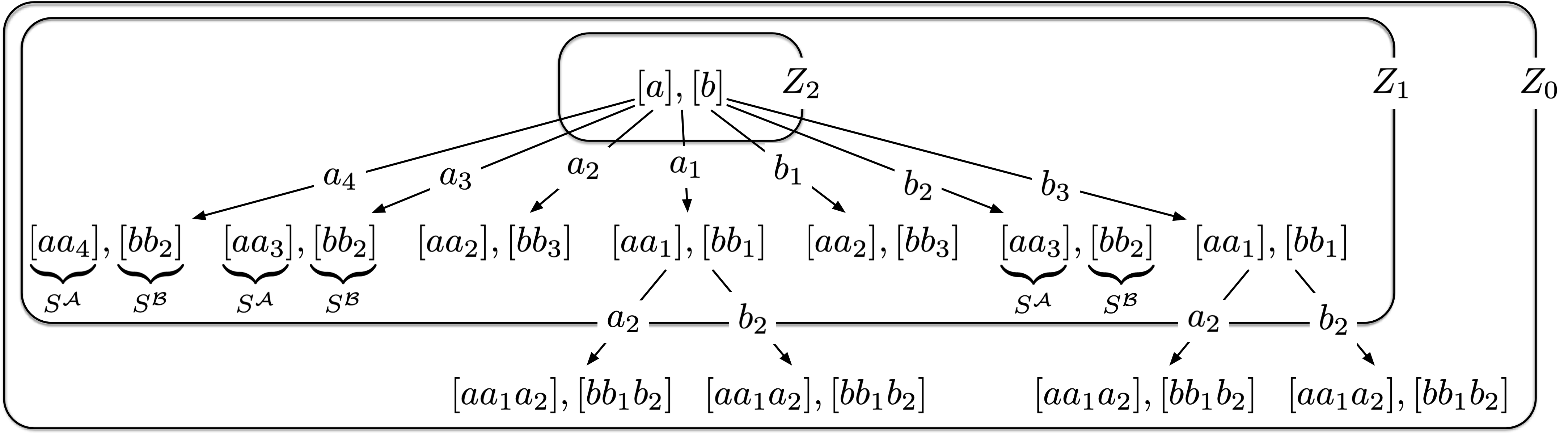}
\caption{
Two pointed $\sigma$-structures for 
$\sigma=\{\acc, S\}$, where $S$ is binary, given by $E^\A = \{a\}\times\{a_1,a_2,a_3,a_4\}$, $S^\A = \{(a,a_3),(a_3,a),(a,a_4),(a_4,a)\}$, $E^\B = \{b\}\times\{b_1,b_2,b_3\}$ and $S^\B = \{(b,b_2),(b_2,b)\}$. $\acc$ is represented by bold arrows, while $S$, being symmetric in these example structures, is represented by two-headed dotted arrows.
Below, we depict a $2$-bisimulation $(Z_i)_{i\leq 2}$ between witnessing that $\Ap \bisim_2 \Bp$. Each node of the tree contains a pair of sequences and the nested boxes represent the nested relations of the bisimulation.
}
\label{fig:example_bisimilar_structures}
\end{figure}

Simulations and bisimulations constitute a fundamental tool for the study of expressivity of modal languages thanks to the presence of so-called Hennessy-Milner properties. In our case, this is established by the following theorem.

\begin{theoremrep}\label{thm:hennessy-milner}
Let $\sigma$ be a relational signature with $E \in \sigma$ and let $(\A,a), (\B,b) \in \Structpointed$. Assume that $\sigma$ is finite or $\A$ and $\B$ are finitely branching.
Then
\begin{enumerate}
    \item $(\A,a) \simu_k (\B,b)$ if and only if $(\A,a) \Rrightarrow^+_k (\B,b)$, and
    \item $(\A,a) \bisim_k (\B,b)$ if and only if $(\A,a) \equiv_k (\B,b)$.
\end{enumerate}
\end{theoremrep}

\begin{proof}
We show $\A,a\simu_k\B,b$ iff $\A,a\Rrightarrow^+_k\B,b$. The proof for $\A,a\bisim_k\B,b$ iff $\A,a\equiv_k\B,b$ is analogous.

For the left-to-right implication, we assume $\A,a\simu_k\B,b$ via $(Z_j)_{j\leq  k}$ and we show that if $s Z_j t$, $\varphi\in \logicpos_j$ and $\A,s\models\varphi$, then $\B,t \models\varphi$ by structural induction in $\varphi$. If $\varphi=R$ of arity $r$ then $r \leq |s| = |t|$ and $\last_r(s) \in R^{\A}$. Since $Z_j\subseteq Z_0$, we have $s Z_0 t$ and by item~\ref{sim:harmonypos} of Definition~\ref{def:bisim} we conclude $\last_r(t) \in R^{\B}$. The case of $\varphi=\psi_1\land\psi_2$ is straightforward. Suppose $\varphi=\Diamond\psi$ and $j>0$. Then there is $a\in A$ such that $(\epsilon(s),a)\in \acc^{\A}$ and $\A, (s.a) \models \psi$. By item~\ref{sim:forth} of Definition~\ref{def:bisim}, there is $b\in|\B|$ such that $(\epsilon(t),b)\in \acc^\B$ and $(s.a) Z_{j-1} (t.b)$. 
By the inductive hypothesis, $\B, (t.b) \models \psi$ and then $\B, t \models \Diamond\psi$.

For the right-to-left implication, assume $\A,a\Rrightarrow^+_k\B,b$. We will show that a certain family of relations $(Z_j)_{j\leq k}$ is a $k$-simulation from $\A$ to $\B$ such that $a Z_k b$. This family is defined in the usual way as follows: 
for $s\in|\A|^{\leq k-j+1}$ and $t\in|\B|^{\leq k-j+1}$ we define
$s Z_j t$ iff for all $\varphi\in\logicpos_j$, if $\A,s \models\varphi$ then $\B,t \models\varphi$.

First, observe that $a Z_k b$ follows by the definition of $Z_k$ and the hypothesis that $\A,a\Rrightarrow^+_k\B,b$. We will now verify that $(Z_j)_{j\leq k}$ satisfies items~\ref{sim:harmonypos} and~\ref{sim:forth} of Definition~\ref{def:bisim}. For item~\ref{sim:harmonypos}, suppose $s Z_0 t$ and $\last_r(s) \in R^{\A}$ for some $R$ of arity $r \leq |s| = |t|$. Then, since relation symbols are formulas in $\logicpos_0$, we have $\last_r(t) \in R_i^{\B}$.

For item~\ref{sim:forth},
assume $s Z_j t$ for some $j>0$. By contradiction, suppose there is $a\in|\A|$ such that $\epsilon(s) \prec a$ and for all $b\in|\B|$ with $\epsilon(t) \prec b$ we have that $(s.a) Z_{j-1} (t.b)$ does not hold. Using the definition of $Z_{j-1}$, this means that for each successor $b$ of $\epsilon(t)$ we can choose a formula $\psi_b \in \logicpos_{j-1}$ such that $\A,s.a \models \psi_b$ and $\B,t.b \not\models \psi_b$. Let $\Psi$ be the set of all of these formulas, one for each $b$.

If $\B$ is finitely branching, then $\Psi$ is finite. Thus we can form $\varphi=\Diamond \left(\bigwedge \Psi \right) \in \logicpos_j$, which holds for $\A, a$ but not for $\B, b$, contradicting the definition of $Z_j$. If instead $\Psi$ is not finite but $\sigma$ is finite, we replace $\Psi$ for a subset $\Psi' \subseteq \Psi$ consisting of one formula of $\Psi$ representing each equivalence class of formulas under logical equivalence. Since $\sigma$ is finite, there are only finitely many equivalence classes in $\logicpos_{j-1}$, and thus $\Psi'$ will be finite. Thus $\varphi=\Diamond \left(\bigwedge \Psi' \right) \in \logicpos_j$ witnesses that $s$ and $t$ cannot be related by $Z_j$, giving us the desired contradiction.
\end{proof}

As in Basic Modal Logic, \logic $k$-bisimulations and $k$-simulations can also be presented in terms of games, which constitute variations of the \bml bisimulation game~\cite{blackburn2006handbook}.

\begin{definition}\label{def:ppml_games}
Given $(\A, a_0), (\B, b_0) \in \Stp$ and $k\geq 0$, the \define{$k$-round \logic bisimulation game}, denoted by $\G_k((\A, a_0), (\B, b_0))$, is played between two players, called Spoiler and Duplicator. There are $k$ rounds of the game, and the state of the game at round $\ell\leq k$ is given by a pair of sequences $(s, t) \in |\A|^{\ell + 1} \times |\B|^{\ell + 1}$.
We say that $(s, t)$ satisfies the \define{winning condition} for Duplicator iff $\A,s\models R \iff \B,t\models R$ for all $R \in \bsigma$.

The initial position (round $\ell = 0$) is $([a_0], [b_0])$.
 If $a_0$ and $b_0$ do not satisfy exactly the same unary relations, Duplicator loses the game. Otherwise, assuming the position $([a_0, \dots, a_{\ell}], [b_0, \dots, b_{\ell}])$ is reached after $\ell$ rounds with $0 \leq \ell < k$, position $\ell + 1$ is determined as follows: either Spoiler chooses $a_{\ell + 1}$ such that $a_\ell \prec a_{\ell+1}$ and Duplicator responds with $b_{\ell + 1}$ such that $b_\ell \prec b_{\ell+1}$, or Spoiler chooses $b_{\ell + 1}$ such that $b_\ell \prec b_{\ell+1}$ and Duplicator responds with $a_{\ell + 1}$ such that $a_\ell \prec a_{\ell+1}$. If Spoiler cannot make such a choice, then Duplicator wins the game immediately.
The resulting position for round $\ell + 1$ is $([a_0, \dots, a_{\ell + 1}], [b_0, \dots, b_{\ell + 1}])$. We say that Duplicator wins the round $\ell + 1$ if Duplicator is able to respond with a move which is valid according to the preceding description and which moreover makes the resulting state satisfy the winning condition. Otherwise, the game ends and Duplicator loses immediately.

A \define{winning strategy} for Duplicator consists in a choice of response that makes Duplicator win the round $\ell + 1$ for every move that Spoiler may make after any number $\ell < k$ of rounds and for any possible game state $([a_1, \dots, a_\ell], [b_0, \dots, b_\ell])$ reachable from the initial state by the progression of the game.

We also define the \define{$k$-round simulation game} $\G_k^{\rightarrow}((\A, a_0), (\B, b_0))$ as a variation of the bisimulation game in which Spoiler can only play on $\A$, Duplicator can only play on $\B$, and in which the winning condition for Duplicator is modified by replacing $\iff$ with $\implies$.
\end{definition}

\begin{example}
Duplicator has a winning strategy in the game $\G_2((\A, a), (\B, b))$, where $\A$ and $\B$ are the ones of Example~\ref{ex:bisimilar_structures}. If we ignore the boxes in Figure~\ref{fig:example_bisimilar_structures} labelled by the relations $Z_i$, the figure
shows a winning strategy for Duplicator in the form of a tree: Spoiler moves are represented as labeled arrows and positions of the game are represented as labeled nodes.
This also shows that Duplicator has a winning strategy in the game $\G_k((\A, a), (\B, b))$ for any $k\geq2$
since the game cannot be continued from the leaves of the tree.
\end{example}

\bigskip

The following result characterises $k$-(bi)simulation in terms of winning strategies in the corresponding games. We omit the proof since it follows standard ideas from comparison games.

\begin{theorem}\label{thm:bisimilarity_game}
    Given $(\A, a), (\B, b) \in \Structpointed$,
    \begin{itemize}
        \item $(\A, a) \bisim_k (\B, b)$ if and only if there exists a winning strategy for Duplicator in the game $\G_k((\A, a), (\B, b))$, and
        \item $(\A, a) \simu_k (\B, b)$ if and only if there exists a winning strategy for Duplicator in the game $\G_k^{\rightarrow}((\A, a), (\B, b))$.
    \end{itemize}
\end{theorem}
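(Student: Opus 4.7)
The plan is a standard back-and-forth correspondence between bisimulations and winning strategies, carried out by induction on the number of elapsed rounds. I describe the bisimulation case in detail; the simulation case requires only the obvious modifications (Spoiler moves only on $\A$, and condition~\ref{sim:harmonypos} of Definition~\ref{def:bisim} replaces condition~\ref{sim:harmony} throughout).

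For the forward direction, suppose $(Z_j)_{j \leq k}$ is a $k$-bisimulation with $[a]\, Z_k\, [b]$. Duplicator's strategy is to maintain the invariant that after round $\ell$ the current position $(s,t)$ satisfies $s\, Z_{k-\ell}\, t$. The invariant holds initially by hypothesis, and combining the chain $Z_k \subseteq Z_0$ with condition~\ref{sim:harmony} yields the atomic harmony required at $([a],[b])$, in particular the initial unary-relation check demanded by Definition~\ref{def:ppml_games}. At round $\ell+1 \leq k$, the invariant gives $j := k - \ell > 0$: if Spoiler extends $s$ to $s.a'$, condition~\ref{sim:forth} supplies $b'$ with $(s.a')\, Z_{j-1}\, (t.b')$, and symmetrically condition~\ref{sim:back} covers Spoiler moves on $\B$. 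Either way the resulting position lies in $Z_{j-1} \subseteq Z_0$, so condition~\ref{sim:harmony} guarantees the winning condition at the new position.

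For the backward direction, given a winning strategy $\tau$ for Duplicator, define
\[ s\, Z_j\, t \iff |s| = |t|,\ |s| + j \leq k+1, \text{ and } (s,t) \text{ is reached after } |s|-1 \text{ rounds of some play consistent with } \tau. \]
The nesting $Z_k \subseteq \cdots \subseteq Z_0$ is immediate from the length inequality, and $[a]\, Z_k\, [b]$ holds since the initial position is reached vacuously (that the initial unary-relation check is passed is precisely the content of $\tau$ being winning at round $0$, which in turn implies non-emptiness of every $Z_j$). To verify condition~\ref{sim:harmony}, any pair with $s\, Z_0\, t$ arises as an actual game position reached by playing along $\tau$, and the winning condition of Definition~\ref{def:ppml_games} gives exactly $\A, s \models R \iff \B, t \models R$ for all $R \in \bsigma$. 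For conditions~\ref{sim:forth} and~\ref{sim:back}, $s\, Z_j\, t$ with $j > 0$ forces $|s| \leq k$, so at least one round remains; Spoiler's move on either structure is then met by $\tau$'s response, producing an extended position $(s.a', t.b')$ compatible with $\tau$ and satisfying $|s.a'| + (j-1) = |s| + j \leq k+1$, hence $(s.a')\, Z_{j-1}\, (t.b')$.

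The main delicacy is purely bookkeeping: Definition~\ref{def:bisim} indexes its relations by a \emph{remaining-depth} counter $j$, while the game indexes positions by an \emph{elapsed-round} counter $\ell$, with the correspondence $j + \ell = k$ and sequence length $\ell + 1$. Keeping this dictionary straight, together with the nesting constraint $Z_j \subseteq Z_{j'}$ whenever $j' \leq j$, is the only technical nuance; once in place, the two directions reduce to a direct translation between the forth/back clauses of Definition~\ref{def:bisim} and the move/response structure of Definition~\ref{def:ppml_games}.
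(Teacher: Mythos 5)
Your argument is correct, and it is precisely the standard back-and-forth correspondence that the paper itself invokes when it omits the proof of this theorem ("it follows standard ideas from comparison games"). Both directions check out — in particular your bookkeeping $|s|+j\leq k+1$ matches the length constraint $|s|\leq k-j+1$ in Definition~\ref{def:bisim}, and the initial unary-relation check of the game corresponds exactly to condition~(\ref{sim:harmony}) at length-one sequences — so there is nothing to compare against or correct.
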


\begin{corollary}
    Given $(\A, a), (\B, b) \in \Structpointed$, if $\sigma$ is finite or $\A$ and $\B$ are finitely branching, then
    \begin{itemize}
        \item $(\A, a) \equiv_k (\B, b)$ if and only if there exists a winning strategy for Duplicator in the game $\G_k((\A, a), (\B, b))$, and
        \item $(\A, a) \equiv^+_k (\B, b)$ iff there exist winning strategies for Duplicator both in the game $\G_k^{\rightarrow}((\A, a), (\B, b))$ and in $\G_k^{\rightarrow}((\B, b), (\A, a))$.
    \end{itemize}
\end{corollary}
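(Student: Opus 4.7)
The plan is to derive the corollary as an immediate consequence of Theorem~\ref{thm:hennessy-milner} together with Theorem~\ref{thm:bisimilarity_game}, so there is essentially no new content to develop, only two chains of equivalences to record.

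For the first bullet, I would unfold the definitions to obtain the chain $(\A,a) \equiv_k (\B,b) \iff (\A,a) \bisim_k (\B,b) \iff \text{Duplicator has a winning strategy in } \G_k((\A,a),(\B,b))$, where the first equivalence uses the finiteness hypothesis on $\sigma$ or on the branching of $\A,\B$ via Theorem~\ref{thm:hennessy-milner}(2), and the second uses Theorem~\ref{thm:bisimilarity_game}.

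For the second bullet, the only extra observation is that $(\A,a)\equiv^+_k(\B,b)$ unfolds by Definition~\ref{def:semantics_ppml} into the conjunction of $(\A,a)\Rrightarrow^+_k(\B,b)$ and $(\B,b)\Rrightarrow^+_k(\A,a)$, because the biconditional $\A,a\models\varphi \iff \B,b\models\varphi$ over all $\varphi\in\logicposk$ is equivalent to both implications holding separately for all such $\varphi$. Applying Theorem~\ref{thm:hennessy-milner}(1) to each of the two implications (the finiteness hypotheses are symmetric, so they transfer to the reversed direction as well) yields the existence of $k$-simulations $(\A,a)\simu_k(\B,b)$ and $(\B,b)\simu_k(\A,a)$, and then Theorem~\ref{thm:bisimilarity_game} converts each $k$-simulation into a winning strategy for Duplicator in the corresponding simulation game $\G_k^{\rightarrow}$.

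There is no real obstacle here; the only thing to be careful about is making the symmetry argument explicit in the second bullet, namely that $(\A,a)\equiv^+_k(\B,b)$ is really the conjunction of two one-way conditions, so that Theorem~\ref{thm:hennessy-milner}(1) can be invoked twice rather than trying to apply a non-existent two-way Hennessy-Milner statement for the positive fragment.
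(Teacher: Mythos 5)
Your proposal is correct and follows exactly the route the paper intends: the corollary is stated without proof as an immediate combination of Theorem~\ref{thm:hennessy-milner} (which supplies the equivalences $\equiv_k \Leftrightarrow \bisim_k$ and $\Rrightarrow^+_k \Leftrightarrow \simu_k$ under the stated finiteness hypotheses) with Theorem~\ref{thm:bisimilarity_game} (which converts (bi)simulations into winning strategies). Your explicit remark that $\equiv^+_k$ decomposes into the two one-way conditions $\Rrightarrow^+_k$ is the only step requiring any care, and you handle it correctly.
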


\subsection{The \boldmath{\logic} Comonad}
\label{sec:comonad}
We now introduce now a $k$-indexed family of comonads which corresponds to the comparison games described above. This allows us to understand multiple aspects of \logic through naturally arising constructions associated to any comonad. Just as bisimulations and games offer complementary perspectives on bisimilarity, this approach will lead us to a third characterisation of (bi)similarity in terms of the existence of certain morphisms.

In order to give a self-contained account of the comonadic characterisation of \logic, we recall the definitions of comonad and related notions as they become necessary.

\begin{definition}\label{def:comonad}
Given a category $\mathscr{E}$, a \define{comonad}%
\footnote{We give the definition of a comonad in its \emphat{comonoidal} form, i.e.\ as a comonoid object in a monoidal category of endofunctors. There is an equivalent definition, sometimes called the Manes-style or Kleisli definition, which is often useful. We focus on the comonoidal definition to emphasise the copy-and-discard informational intuition.}
on $\mathscr{E}$ is a functor $G: \mathscr{E} \to \mathscr{E}$ equipped with natural transformations $\epsilon: G \nat 1_{\mathscr{E}}$ and $\delta: G \nat GG$, called the \define{counit} and \define{comultiplication} of $G$,
such that the following diagrams commute for all objects $X \in \mathscr{E}$:
\begin{center}
\begin{tikzcd}
G X \ar[r, "\delta_X"]  \ar[d, "\delta_X"']
& G G X \ar[d, "G \delta_X"] \\
G G X \ar[r, "\delta_{G X}"]  
& G G G X
\end{tikzcd}
$\qquad \qquad$
\begin{tikzcd}
& G X \ar[ld, "\delta_X"] \ar[d,equal] \ar[rd, "\delta_X"'] & \\
G G X \ar[r, "\epsilon_{G X}", swap] & G X & G G X \ar[l, "G \epsilon_X"].
\end{tikzcd}
\end{center}
\end{definition}

Following a standard abuse of notation, we often refer to a comonad $(G, \epsilon,\delta)$ by its underlying functor $G$.

An intuition that may be useful is that applying $G$ to an object $X$ amounts to exposing information contained in $X$ and assembling it into a new object $G X$ of the same kind. Readers acquainted with the notion of the universal cover of a graph or the tree unravelling of a Kripke structure (see for instance \cite[Def. 21]{blackburn2006handbook}) might keep such construction in mind: unfolding a directed graph into a tree is a procedure which exposes the information about the paths on a graph and organises it into a new directed graph. Indeed, the comonad that we will introduce shortly constitutes a straightforward generalisation of the \bml unravelling, and other game comonads can be seen as further, more distant variations on the same core idea.

From this point of view, the component $\delta_X$ of $\delta$ duplicates the extra information in $G X$ about $X$, while $\epsilon_X$ discards it.
The diagram on the left expresses the property that for every $n$, there is a unique way of iterating this duplication of information $n$ times (a property called co-associativity) while the diagram on the right expresses the fact that
duplicating information and then discarding one of the two copies is the same as doing nothing. 

Notice that the unravelling construction satisfies a property stronger than co-associativity, namely that the unravelling of a graph, being a tree already, is isomorphic to its own unravelling. Therefore, it is not really possible to duplicate the information by applying the construction twice. This is captured by the fact that $\delta_X: GX \to GGX$ is an isomorphism for all $X$ (i.e.\ $\delta$ is a natural isomorphism), in which case we say that the comonad $G$ is \define{idempotent}.

We are now ready to define the \logic comonads $\{\Ck\}_{k\geq 0}$.

\begin{definition}\label{def:ppml_comonad}
    Let $\sigma$ be a relational signature with $E \in \sigma$ and let $k \geq 0$. Given a pointed $\sigma$-structure $\Ap$ we define $\Ck \Ap$ to be the pointed $\sigma$-structure $(\Ck \Ap, [\bpa])$ with universe
    $$|\Ck \Ap| \coloneqq \{[a_0, \dots, a_\ell] \in |\A|^{\leq k+1} \mid a_0 = \bpa \textup{ and } a_j \prec a_{j+1} \forall j \in \{0, \dots, \ell-1\}\}$$
    and basepoint $[\bpa]$.

    Relations are interpreted as follows. Let $\epsilon_\Ap: |\Ck \Ap| \to |\A|$ be the function that sends a sequence to its last element. Then for each $R \in \sigma$ of arity $r$, $(s_1, \dots, s_r) \in R^{\Ck \A}$ iff $(\epsilon_\Ap(s_1), \dots, \epsilon_\Ap(s_{r})) \in R^\A$ and moreover $s_{j+1}$ is an immediate successor of $s_j$ in the prefix order for all $j \in \{1, \dots, r -1\}$.

    Given a morphism $f: \Ap \to \Bp$ in $\Stp$, we define the homomorphism $\Ck f: \Ck \Ap \to \Ck \Bp$ by $\Ck f ([a_0,\dots,a_\ell]) \coloneqq [f(a_0),\dots,f(a_\ell)]$.\footnote{Notice that, if we write $s = [a_0,\dots,a_\ell]$, we may also write $\Ck f (s) = f(s)$ when this does not lead to confusion.} We also define, for each $\Ap \in \Stp$, a homomorphism $\delta_\Ap: \Ck \Ap \to \Ck \Ck \Ap$ by $\delta_\Ap([a_0, \dots, a_\ell]) \coloneqq [[a_0], [a_0, a_1], \dots, [a_0, \dots, a_\ell]]$.
\end{definition}

\begin{propositionrep}
    For each $k \geq 0$, $(\Ck, \epsilon, \delta)$ is a comonad on the category $\Stp$.
\end{propositionrep}
\begin{proof}
    This is a routine verification, which we spell out in detail for the interested reader.

    To see that the data for $\Ck$ given above define a functor, we must check that given $f: \Ap \to \Bp$ and $g: \Bp \to \Cp$, $\Ck(g \circ f) = \Ck g \ \circ \ \Ck f$. This is immediate from the definition. Secondly, we must check that $\epsilon$ and $\delta$ are well defined. Notice that $\epsilon_\Ap$ and $\delta_{\Ap}$ are well defined homomorphisms for all $\Ap$. Now we must check that the collections of morphisms $\{\epsilon_{\Ap} : \Ap \in \Stp\}$ and $\{\delta_{\Ap} : \Ap \in \Stp\}$ assemble into natural transformations, which is to say that the following two squares commute
\[\begin{tikzcd}[ampersand replacement=\&]
	\Ck\Ap \& \Ap \&\& \Ck\Ap \& \Ck\Ck\Ap \\
	\Ck\Bp \& \Bp \&\& \Ck\Bp \& \Ck\Ck\Bp
	\arrow["{\delta_{\Ap}}", from=1-4, to=1-5]
	\arrow["{\Ck f}"', from=1-4, to=2-4]
	\arrow["{\Ck \Ck f}", from=1-5, to=2-5]
	\arrow["{\delta_{\Bp}}"', from=2-4, to=2-5]
	\arrow["{\epsilon_{\Ap}}", from=1-1, to=1-2]
	\arrow["f", from=1-2, to=2-2]
	\arrow["{\Ck f}"', from=1-1, to=2-1]
	\arrow["{\epsilon_{\Bp}}"', from=2-1, to=2-2]
\end{tikzcd}\]
    for all $f: \Ap \to \Bp$. Indeed, take an arbitrary $[a_0,\dots,a_\ell] \in \Ck \Ap$. If we trace this element through the two squares
\[\begin{tikzcd}[ampersand replacement=\&]
	{[a_0,\dots,a_\ell]} \& {a_\ell} \& {[a_0,\dots,a_\ell]} \& {[[a_0],\dots,[a_0,\dots,a_\ell]]} \\
	{[f(a_0),\dots,f(a_\ell)]} \& {f(a_\ell)} \& {[f(a_0),\dots,f(a_\ell)]} \& {[[f(a_0)],\dots,[f(a_0),\dots,f(a_\ell)]]}
	\arrow["{\delta_{\Ap}}", maps to, from=1-3, to=1-4]
	\arrow["{\Ck f}"', maps to, from=1-3, to=2-3]
	\arrow["{\Ck \Ck f}", maps to, from=1-4, to=2-4]
	\arrow["{\delta_{\Bp}}"', maps to, from=2-3, to=2-4]
	\arrow["{\epsilon_{\Ap}}", maps to, from=1-1, to=1-2]
	\arrow["f", maps to, from=1-2, to=2-2]
	\arrow["{\Ck f}"', maps to, from=1-1, to=2-1]
	\arrow["{\epsilon_{\Bp}}"', maps to, from=2-1, to=2-2]
\end{tikzcd}\]
    we verify that both paths lead to the same result. Finally, we must check the commutativity of the diagrams in Definition~\ref{def:comonad} with $G = \Ck$, which is again accomplished by tracing a generic element through them:
\[\begin{tikzcd}[ampersand replacement=\&]
	{[a_0,\dots,a_\ell]} \& {[[a_0],\dots,[a_0,\dots,a_\ell]]} \\
	{[[a_0],\dots,[a_0,\dots,a_\ell]]} \& {\subalign{&[ \, [[a_0]], \\ &[[a_0],[a_0,a_1]], \\ &\dots, \\ &[[a_0],\dots,[a_0,\dots,a_\ell]] \, ]}} \\
	\& {[a_0,\dots,a_\ell]} \\
	{[[a_0],\dots,[a_0,\dots,a_\ell]]} \& {[a_0,\dots,a_\ell]} \& {[[a_0],\dots,[a_0,\dots,a_\ell]]}
	\arrow["{\delta_{\Ap}}"', maps to, from=3-2, to=4-1]
	\arrow["{\delta_{\Ap}}", maps to, from=3-2, to=4-3]
	\arrow["{\epsilon_{\Ck\Ap}}"', maps to, from=4-1, to=4-2]
	\arrow["{G \epsilon_{\Ap}}", maps to, from=4-3, to=4-2]
	\arrow["{\delta_{\Ap}}"', maps to, from=1-1, to=1-2]
	\arrow["{\Ck\delta_{\Ap}}"', maps to, from=1-2, to=2-2]
	\arrow["{\delta_{\Ap}}", maps to, from=1-1, to=2-1]
	\arrow["{\delta_{\Ck\Ap}}", maps to, from=2-1, to=2-2]
	\arrow[Rightarrow, no head, from=3-2, to=4-2].
\end{tikzcd}\]
\end{proof}

We will often refer to $\{\Ck\}_{k\in\N}$ as \emphat{the} \logic comonad $\Ck$, even though strictly speaking it is an indexed family of comonads. We will also often omit the basepoint of pointed structures inside subscripts, writing e.g.\ $\epsilon_\A$ instead of $\epsilon_\Ap$.

The fact that \logic reduces to \bml for unimodal signatures (see Example~\ref{ex:ppml_first_examples}) is reflected by their corresponding comonads.

\begin{definition}\label{def:comonad_morphism}
    Given two comonads $(F, \epsilon^F, \delta^F)$ and $(G, \epsilon^G, \delta^G)$ over a common category, a \define{comonad morphism} $(F, \epsilon^F, \delta^F) \Rightarrow (G, \epsilon^G, \delta^G)$ is a natural transformation $\alpha : F \nat G$ between the underlying functors such that $\epsilon^F = \epsilon^G \circ \alpha$ and $\delta^G \circ \alpha = \alpha \alpha \circ \delta^F$.\footnote{$\alpha \alpha$ is the horizontal composition of $\alpha$ with itself, which can be computed as $G \alpha \circ \alpha F$.} 
\end{definition}

\begin{proposition}
    If $\sigma$ is a unimodal signature, then $\Ck$ is isomorphic to the Modal Comonad $\Mk$ on $\sigma$~\cite{abramsky2021relating}.
\end{proposition}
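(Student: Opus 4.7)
The plan is to observe that, in the unimodal setting, Definition~\ref{def:ppml_comonad} specialises exactly to the construction of the modal comonad $\Mk$ of~\cite{abramsky2021relating}, so that one may take the components of the putative isomorphism to be identities on underlying sets. I would first briefly recall $\Mk$: given a pointed Kripke structure $\Ap$, its universe is the set of $\acc$-chains in $\A$ of length at most $k+1$ starting at $\bpa$; for each unary $R \in \bsigma$, a chain $s$ satisfies $R$ iff its last element does in $\A$; the accessibility relation holds of $(s_1, s_2)$ whenever $s_2$ extends $s_1$ by a single element that is an $\acc$-successor in $\A$ of the last element of $s_1$. The counit of $\Mk$ takes a chain to its last element, and the comultiplication takes a chain $[a_0, \dots, a_\ell]$ to the sequence of its prefixes $[[a_0], \dots, [a_0, \dots, a_\ell]]$.

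Next I would compare this with $\Ck\Ap$ under the assumption that every $R \in \bsigma$ has arity $1$. The universe defined in Definition~\ref{def:ppml_comonad} is literally the same set of chains. For unary $R$, the clause $(s) \in R^{\Ck\A}$ iff $\epsilon_\Ap(s) \in R^\A$ matches the interpretation in $\Mk\Ap$; the immediate-successor-in-prefix-order condition together with $(\epsilon_\Ap(s_1), \epsilon_\Ap(s_2)) \in \acc^\A$ is precisely the accessibility relation of the modal unravelling; and there are no further relation symbols to inspect. Hence $\Ck\Ap$ and $\Mk\Ap$ coincide as $\sigma$-structures, which allows me to take $\alpha_\Ap : \Ck\Ap \to \Mk\Ap$ to be the identity on the underlying set and conclude it is an isomorphism in $\Stp$.

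It then remains to verify that the family $\{\alpha_\Ap\}_{\Ap \in \Stp}$ constitutes a comonad morphism in the sense of Definition~\ref{def:comonad_morphism}. Naturality follows because on morphisms both $\Ck f$ and $\Mk f$ act by elementwise application of $f$ to chains, so $\Mk f \circ \alpha_\Ap = \alpha_\Bp \circ \Ck f$ on the nose. The counit condition $\epsilon^{\Ck} = \epsilon^{\Mk} \circ \alpha$ and the comultiplication condition $\delta^{\Mk} \circ \alpha = \alpha\alpha \circ \delta^{\Ck}$ are immediate from the fact that in both comonads $\epsilon$ is the last-element map and $\delta$ is the prefix-list map. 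Since each $\alpha_\Ap$ is invertible, $\alpha$ is an isomorphism of comonads.

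There is no hard step: the argument is purely a matter of unpacking the two definitions and checking that they collapse onto one another. The only conceptual point I would emphasise is that the apparent extra generality of $\Ck$, namely the handling of higher-arity atomic symbols via the nesting condition on prefixes, becomes vacuous as soon as $\bsigma$ contains only unary symbols, so the construction reduces precisely to the $\bml$ tree unravelling up to depth $k$.
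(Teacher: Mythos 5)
Your proposal is correct and follows essentially the same route as the paper, which simply notes that the evident (natural) isomorphism between $\Ck\Ap$ and $\Mk\Ap$ exists and that the counit and comultiplication of the two comonads are given by the same formulas; you have merely spelled out the identity-on-underlying-sets verification in more detail.
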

\begin{proof}
    This is immediate since the isomorphism between structures $\Ck\Ap$ and $\Mk\Ap$ (natural in $\Ap$) is evident and the counit and comultiplication for both comonads are given by the same formula.
\end{proof}
For our purposes we will identify $\Ck$ and $\Mk$ whenever $\sigma$ is unimodal, in which case we refer to both as the \define{Basic Modal Comonad}.
Notice moreover that for any pointed Kripke structure $\Ap \in \Stp$, $\Mk\Ap$ is precisely the unravelling of $\A$ up to $k$ steps starting from $\bpa$, which is a Kripke tree of height at most $k$.
If now we allow $\sigma$ to be an arbitrary signature with $E \in \sigma$, we can still think of $\Ck\Ap$ as an unravelling of $\Ap$ transforming it into a particular kind of tree.

\begin{definition}\label{def:pptree}
    A \define{path-predicate tree} or \define{pp-tree} is a pointed $\sigma$-structure $(\+{T}, u)$ such that (1) $(|\+{T}|, \acc^{\+T}, u)$ is a rooted tree and (2) for each $R \in \bsigma$ of arity $r$, if $(u_1, \dots, u_r) \in R^{\+T}$, then $u_1 \prec \dots \prec u_r$, i.e.\ $[u_1,\dots,u_r]$ is the unique chain of length $r$ ending in $u_r$. The \define{height} of a pp-tree is its height as a rooted tree.
\end{definition}

It is immediate that given any $\Ap \in \Structpointed$, $\Ck \Ap$ is a pp-tree of height at most $k$ (see Figure~\ref{fig:pptrees}).

\begin{remark}\label{rem:notation_for_pptrees}
    We may denote a pp-tree $\Tp$ by $\T$, since the root is determined by the requirement that the $\sigma$-structure $\T$ is a pp-tree. Moreover, given any $v \in |\T|$, we write $\T_v$ both for the unique $E$-chain in $\T$ from the root to $v$, and for the embedded substructure of $\T$ determined by this chain. Which usage is meant will be clear from context. We will also treat $\T_v$ as a tuple or valuation for the semantics of \logic. In particular, notice that condition (2) in Definition~\ref{def:pptree} can be restated as follows: if $\T, s \models R$, then $s$ is a suffix of $\T_{\epsilon(s)}$; in particular, $\T, \T_{\epsilon(s)} \models R$.
\end{remark}

\begin{figure}[h]
\centering
\includegraphics[scale=0.3]{example_bisim.png}\qquad
\includegraphics[scale=0.3]{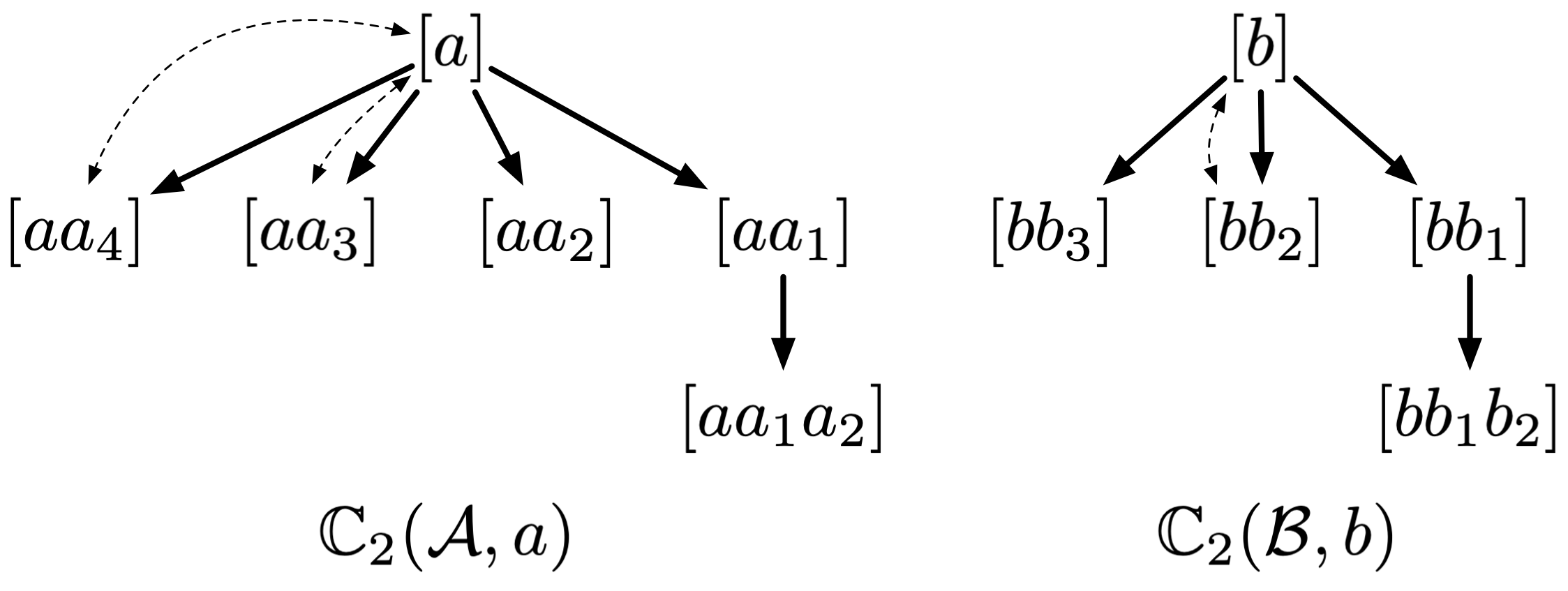}
\caption{The structures $\A$ and $\B$ from Figure~\ref{fig:example_bisimilar_structures} and their \logic-unravellings $\bb C_2 \Ap$ and $\bb C_2 \Bp$.
} 
\label{fig:pptrees}
\end{figure}

\begin{remark}\label{rem:Ck_vs_Ek_vs_Hybrid}
    The interpretation of relations in $\Ck\Ap$ is similar to that of $\Ek\Ap$ where $\Ek$ is the Ehrenfeucht-Fraïssé (EF) comonad (see Definition~\ref{def:EF_and_pebbling_comonads})
    but with an additional locality constraint: tuples of sequences related by some $R \in \sigma$ must be \emphat{immediate} extensions of each other. Thus $\Ck\Ap$ is \emphat{not} an embedded substructure of $\Ek \Ap$, contrary to the case of the Hybrid and Bounded comonads~\cite{abramsky2022hybrid}. In this sense, the \logic comonad occupies another, distinct middle ground between the Modal and EF comonads.
\end{remark}

A comonad $G$ that arises from the study of a certain comparison game receives the name of a game comonad. Although this is not a formal definition, all game comonads defined to date share multiple properties, among which we take the following to be fundamental:
\begin{enumerate}[(I)]
    \item morphisms of type $G X \to Y$
    correspond to winning strategies for Duplicator in some existential, one-way model comparison game played from $X$ to $Y$, and
    \item the category of coalgebras of $G$ is arboreal, and therefore pairs of objects in that category are equipped with an intrinsic notion of back-and-forth comparison game between them.
\end{enumerate}
We now prove that $\Ck$ satisfies these two properties, and that the corresponding games coincide with the \logic simulation and bisimulation games, thus making the name `\logic comonad' appropriate.
The first property is quite straightforward; indeed, the definition of $\Ck$ is reverse-engineered from the desideratum that it holds.

\begin{proposition}\label{prop:kleisli_morphisms}
    Given $(\A, a), (\B, b) \in \Structpointed$, there is a bijective correspondence between homomorphisms $\Ck (\A, a) \to (\B, b)$
    and the set of winning strategies for Duplicator in the $k$-round simulation game $\G_k^{\rightarrow}((\A, a), (\B, b))$.
\end{proposition}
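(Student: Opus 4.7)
The plan is to unpack both sides of the correspondence and show that they encode the same data. The crucial observation is that, in the simulation game, Spoiler's moves are entirely on $\A$ and determine (together with the strategy) Duplicator's moves on $\B$; thus a winning strategy for Duplicator in $\G_k^{\rightarrow}(\Ap, \Bp)$ is completely specified by its response to each possible chain of Spoiler moves, and such chains are precisely the elements of $|\Ck\Ap|$.

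From a pointed homomorphism $f : \Ck\Ap \to \Bp$, I build a strategy by declaring that, after Spoiler has played the chain $[a_0,\dots,a_{\ell+1}]$ (with $a_0 = a$), Duplicator responds with $f([a_0,\dots,a_{\ell+1}])$. I verify that this prescribes a legal play: since $([a_0,\dots,a_\ell], [a_0,\dots,a_{\ell+1}]) \in \acc^{\Ck\A}$ by the definition of $\Ck\A$, preservation of $\acc$ by $f$ yields $f([a_0,\dots,a_\ell]) \prec f([a_0,\dots,a_{\ell+1}])$ in $\B$. The winning condition is then maintained: for $R \in \bsigma$ of arity $r \leq \ell+1$, if $\A, [a_0,\dots,a_\ell] \models R$ then $(a_{\ell-r+1},\dots,a_\ell) \in R^\A$, hence by the definition of $R^{\Ck\A}$ the tuple of consecutive prefixes $([a_0,\dots,a_{\ell-r+1}],\dots,[a_0,\dots,a_\ell])$ lies in $R^{\Ck\A}$, and applying $f$ gives the induced tuple of Duplicator's responses in $R^\B$, i.e., $\B, [b_0,\dots,b_\ell] \models R$.

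Conversely, from a winning strategy I define $f: |\Ck\Ap| \to |\B|$ by setting $f([a]) \coloneqq b$ and, recursively on length, $f([a_0,\dots,a_{\ell+1}])$ equal to Duplicator's prescribed response after Spoiler has played $a_1,\dots,a_{\ell+1}$ in turn. Pointedness is immediate, and preservation of $\acc$ holds because Duplicator's moves in the game are by definition $\acc$-successors of the previous position in $\B$. Preservation of each $R \in \bsigma$ of arity $r$ reduces to a direct translation: any tuple in $R^{\Ck\A}$ consists of successive prefixes of some $[a_0,\dots,a_\ell] \in |\Ck\Ap|$ with $(a_{\ell-r+1},\dots,a_\ell) \in R^\A$, so $\A, [a_0,\dots,a_\ell] \models R$, and the winning condition at the position reached after $\ell$ simulated rounds forces the tuple of Duplicator's responses into $R^\B$.

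Finally, the two constructions are mutually inverse by a straightforward induction on sequence length, since both sides are specified by the same recursive rule. I expect the only delicate point to be the bookkeeping needed to match Spoiler's chain of length $\ell+1$, together with its length-$r$ suffix witnessing $R$ in $\A$, against the tuple of $r$ consecutive prefix-sequences in $|\Ck\Ap|$ witnessing membership in $R^{\Ck\A}$; once this identification is made explicit, the correspondence follows essentially by definition, which is why the proposition was advertised as ``reverse-engineered'' into the definition of $\Ck$.
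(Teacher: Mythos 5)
Your proposal is correct and follows essentially the same route as the paper: the paper's (much terser) proof likewise identifies $|\Ck\Ap|$ with the valid Spoiler histories and observes that the interpretation of relations in $\Ck\Ap$ is defined precisely so that a function $|\Ck\Ap|\to|\B|$ is a pointed homomorphism iff it prescribes valid, winning Duplicator responses. Your version merely makes explicit the bookkeeping (consecutive prefixes versus length-$r$ suffixes) that the paper leaves implicit.
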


\begin{proof}
    By definition, the elements of $|\Ck \Ap|$ are exactly the valid sequences of moves for Spoiler. The definition of the interpretations of relations in $\Ck \Ap$ is exactly such that for any function $f: |\Ck \Ap| \to |\B|$, $f$ is a pointed homomorphism if and only if, for all $s \in |\Ck\Ap|$, $f(s)$ is a valid and winning answer of Duplicator to the state of the game up to that point.
\end{proof}

\begin{corollary}\label{coro:logical_equivalence_kleisli}
    Let $(\A, a), (\B, b) \in \Structpointed$ and suppose that either $\sigma$ is finite or $\A$ and $\B$ are finitely branching structures. Then $(\A, a) \equiv_k^+ (\B, b)$ if and only if there exist homomorphisms $\Ck(\A, a) \to (\B, b)$ and $\Ck(\B, b) \to (\A, a)$.
\end{corollary}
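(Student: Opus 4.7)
The plan is to establish the equivalence by chaining together three results already available in the excerpt: the Hennessy–Milner property for one-way simulations (Theorem~\ref{thm:hennessy-milner}), the game characterisation of simulations (Theorem~\ref{thm:bisimilarity_game}), and the identification of Kleisli-style morphisms with Duplicator strategies in the one-way simulation game (Proposition~\ref{prop:kleisli_morphisms}). The key observation that glues the argument together is that $\equiv_k^+$, being a symmetric relation defined by biconditionals, decomposes into the conjunction of the two one-way preservation relations: $\Ap \equiv_k^+ \Bp$ iff $\Ap \Rrightarrow_k^+ \Bp$ and $\Bp \Rrightarrow_k^+ \Ap$.

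First I would unpack this observation from Definition~\ref{def:semantics_ppml}: since every $\logicposk$ formula $\varphi$ is treated on equal footing, the two-sided $\A,a \models \varphi \iff \B,b \models \varphi$ splits cleanly into the two one-sided implications. Then I would apply Theorem~\ref{thm:hennessy-milner}(1) in both directions, using the hypothesis that $\sigma$ is finite or both $\A$ and $\B$ are finitely branching (which also yields finite branching of the relevant substructures needed for that theorem). This converts $\Ap \equiv_k^+ \Bp$ into the conjunction $\Ap \simu_k \Bp$ and $\Bp \simu_k \Ap$.

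Next I would invoke the game characterisation of $k$-simulation in Theorem~\ref{thm:bisimilarity_game} to translate each $\simu_k$ into the existence of a Duplicator winning strategy in the corresponding one-way game $\G_k^{\rightarrow}$. Finally, Proposition~\ref{prop:kleisli_morphisms} delivers, in each direction, the desired bijection between Duplicator winning strategies in $\G_k^{\rightarrow}((\A,a),(\B,b))$ and pointed homomorphisms $\Ck(\A,a) \to (\B,b)$, and symmetrically for the other direction. Composing these biconditionals yields the statement.

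I do not expect any real obstacle: everything is a direct assembly of the previously established results, and the only care required is to make sure the finiteness/finitely branching hypothesis feeds Theorem~\ref{thm:hennessy-milner} in both orientations, which it does symmetrically. The main conceptual content has already been absorbed by Proposition~\ref{prop:kleisli_morphisms}; the corollary merely records that, in the presence of Hennessy–Milner, logical indistinguishability at the positive level admits a clean syntax-free characterisation in terms of the \logic comonad applied on each side.
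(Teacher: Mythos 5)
Your proposal is correct and matches the paper's (implicit) argument: the paper obtains this corollary by combining the unnamed corollary following Theorem~\ref{thm:bisimilarity_game} (which already packages the decomposition of $\equiv_k^+$ into two one-way simulations via Theorem~\ref{thm:hennessy-milner} and the game characterisation) with Proposition~\ref{prop:kleisli_morphisms}. Your chain of biconditionals is exactly this assembly, just with the intermediate corollary unpacked into its components.
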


From a categorical perspective, morphisms $\Ck\Ap \to \Bp$ can be understood as functions $\Ap \to \Bp$ that depend on `extra input' (cf. side effects, which are `extra output' of computations, captured by \emphat{monads}). This may serve as motivation for the following definition.

\begin{definition}
    Given a category $\cat{E}$ and a comonad $(G, \epsilon, \delta)$ on $\cat{E}$, the \define{Kleisli category} of $G$, denoted by $\Kl(G)$, is the category whose objects are the objects of $\cat{E}$ and whose morphisms $f: X \xrightarrow{\cdot} Y$, which we annotate with a dot to distinguish them from $\cat{E}$-morphisms, are given by morphisms $f: G X \to Y$ in $\cat{E}$. The identity morphism on an object $X$ is given by $\epsilon_X$, and given morphisms $f: X \xrightarrow{\cdot} Y$ and $g: Y \xrightarrow{\cdot} X$, their composite $g\circ f$ is given by the $\cat{E}$-morphism $G X \xrightarrow{\delta_X} G G X \xrightarrow{G f} GY \xrightarrow{g} Z$.
\end{definition}

Thus, the content of Proposition~\ref{prop:kleisli_morphisms} can be restated by saying that $\Kl(\Ck)$ is the category of $\sigma$-structures where a morphism $\Ap \to \Bp$ is precisely a winning strategy for Duplicator in the game $\G_k^{\rightarrow}(\Ap, \Bp)$. This point of view emphasises that strategies can be composed with each other by the composition law of $\Kl(\Ck)$. Moving from $\Stp$ to $\Kl(\Ck)$ can also be thought of as replacing homomorphisms by a weaker notion of morphism, in the sense that the existence of a homomorphism $\Ap \to \Bp$ is a strictly stronger condition than the existence of a homomorphism $\Ck\Ap \to \Bp$.\footnote{To obtain the latter from the former, simply precompose with $\epsilon_\A$.} In this way morphisms in $\Kl(\Ck)$ approximate homomorphisms (see e.g. \cite[Section 7]{abramsky2021relating}).

The Kleisli category of a comonad is one of the two fundamental categorical constructions that can be produced from it. The second one is the category of coalgebras, or Eilenberg-Moore category, to which we now turn.\footnote{Any adjunction between categories presents a comonad (and a monad). In the other direction, starting from a given comonad, these two constructions are the two universal solutions to finding an adjunction that presents it. Monads and comonads enjoy an incredibly rich theory which revolves around these two constructions and the associated adjunctions.}

\begin{definition}
    Let $(G, \epsilon, \delta)$ be a comonad on a category $\mathscr{E}$. A \define{coalgebra} for $G$ or \define{$G$-coalgebra} consists of an object $X \in \mathscr{E}$ together with a morphism $\gamma: X \to G X$ in $\mathscr{E}$, called its \define{structure map}, such that $\epsilon_X \circ \gamma = \id{X}$ and $\delta_X \circ \gamma = G(\gamma) \circ \gamma$. A morphism between coalgebras $f: (X, \gamma) \to (Y, \eta)$ is a morphism $f: X \to Y$ in $\mathscr{E}$ that commutes with the structure maps, i.e.\ $\eta \circ f = \gamma \circ G(f)$. This defines the category of coalgebras or \define{Eilenberg-Moore category} of $G$, denoted by $\EM(G)$. 
\end{definition}

The reader need not keep in mind the above definition for too long, since the following fact allows us to simplify the discussion of $\Ck$-coalgebras enormously.

\begin{propositionrep}\label{prop:Ck_is_idempotent}
    $\Ck$ is an idempotent comonad, i.e. $\delta$ is a natural isomorphism.
\end{propositionrep}
\begin{proof}
    We show that $\Ck\epsilon: \Ck\Ck \nat \Ck$ is the inverse of $\delta$.

    For every $\Ap$, we must show that $\Ck \epsilon_\A \circ \delta_\A = \id{\Ck \A}$ and $\delta_\A \circ \Ck \epsilon_\A = \id{\Ck \Ck \A}$. $\Ck \epsilon_\A$ acts by sending a sequence of sequences $[s_0,\dots,s_\ell]$ to the sequence $[\epsilon_\A(s_0),\dots,\epsilon_\A(s_\ell)]$, from which the first of the two equalities is immediate. The second amounts to proving that, for any $[s_0,\dots,s_\ell] \in \Ck \Ck \Ap$,
    $$[s_0,\dots,s_\ell] = [[\epsilon(s_0)], [\epsilon(s_0), \epsilon(s_1)], \dots, [\epsilon(s_0), \dots, \epsilon(s_\ell)]]. $$
    We do this by induction on $\ell$. For $\ell=0$ it holds since if $[s_0] \in \Ck \Ck \Ap$ then $s_0$ must be the basepoint of $\Ck \Ap$, that is to say $s_0 = [\bpa]$, thus $[s_0] = [[a]]$. On the other hand if the equality holds for sequences of length $\ell + 1$, then for any sequence $[s_0,\dots,s_{\ell+1}] \in \Ck \Ck \Ap$ of length $\ell + 2$, we have 
    $$ [s_0,\dots,s_{\ell+1}] = [[\epsilon(s_0)], [\epsilon(s_0), \epsilon(s_1)], \dots, [\epsilon(s_0), \dots, \epsilon(s_\ell)], s_{\ell+1}]. $$
    Finally, $s_{\ell+1} = [\epsilon(s_0), \dots, \epsilon(s_{\ell+1})]$ since $s_{\ell+1}$ must be an immediate successor of $s_\ell$ in the prefix ordering.
\end{proof}

Two well-known consequences of a comonad $G$ on $\cat{E}$ being idempotent are the following (see \cite[Prop 4.2.3]{borceux1994handbook} for the statements in dual form).
\begin{enumerate}
    \item $\EM(G)$ is a coreflective subcategory of $\cat{E}$. This means in particular that if an object $X \in \mathscr{E}$ admits a coalgebra structure, it is unique. Thus, we may talk about $X$ being a $G$-coalgebra as a \emphat{property} of the object rather than structure on it.
    \item If $X$ is a $G$-coalgebra, its structure map $X \to G X$ is an isomorphism. Therefore, all $G$-coalgebras are isomorphic to an object of the form $G X$ for some $X \in \mathscr{E}$.
\end{enumerate}
Thanks to (1) we need only identify which pointed $\sigma$-structures are $\Ck$-coalgebras to identify $\EM(\Ck)$ as the full subcategory of $\Structpointed$ spanned by those objects. From (2), we obtain immediately the desired characterisation.

\begin{corollary}\label{coro:caract_coalgebras}
    A pointed structure $(\A, a) \in \Structpointed$ is a $\Ck$-coalgebra if and only if it is a pp-tree of height at most $k$. Therefore, $\EM(\Ck)$ is the full subcategory of $\Stp$ spanned by pp-trees of height at most $k$.
\end{corollary}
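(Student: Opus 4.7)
The plan is to combine the two consequences of idempotence listed just above the statement with an explicit construction of the coalgebra map for pp-trees.

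For the forward direction, suppose $\Ap$ is a $\Ck$-coalgebra. By consequence (2) of idempotence, its structure map is an isomorphism $\Ap \cong \Ck\Bp$ for some $\Bp \in \Stp$ (in fact we may take $\Bp = \Ap$). Since $\Ck\Bp$ is a pp-tree of height at most $k$ by construction, and since being a pp-tree of height at most $k$ is a property preserved under isomorphism of pointed $\sigma$-structures (it is defined purely in terms of the interpretation of $E$ and of the other relation symbols, together with the basepoint), $\Ap$ itself is such a pp-tree.

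For the reverse direction, suppose $\Ap$ is a pp-tree of height at most $k$. Using the notation of Remark~\ref{rem:notation_for_pptrees}, define $\gamma_\Ap : \Ap \to \Ck\Ap$ by $\gamma_\Ap(v) \coloneqq \T_v$, the unique $\acc$-chain from the root $a$ to $v$. This is well defined since $\Ap$ is a rooted tree of height $\leq k$, so the chain exists, is unique, and has length $\leq k+1$. Basepoint preservation is immediate: $\gamma_\Ap(a) = [a]$. To check that $\gamma_\Ap$ is a homomorphism, fix $R \in \sigma$ of arity $r$ and $(v_1, \dots, v_r) \in R^\A$. If $R = \acc$, then $v_1 \prec v_2$ in $\Ap$, so $\T_{v_2} = \T_{v_1}.v_2$ is an immediate prefix extension of $\T_{v_1}$, giving $(\gamma_\Ap(v_1), \gamma_\Ap(v_2)) \in \acc^{\Ck\Ap}$. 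If $R \neq \acc$, clause (2) of Definition~\ref{def:pptree} ensures $v_1 \prec \dots \prec v_r$, and moreover $[v_1, \dots, v_r]$ is the unique chain ending at $v_r$; in particular, each $\T_{v_{j+1}}$ is the immediate prefix extension of $\T_{v_j}$ by $v_{j+1}$, and $(\epsilon_\Ap(\T_{v_1}), \dots, \epsilon_\Ap(\T_{v_r})) = (v_1, \dots, v_r) \in R^\A$, so $(\gamma_\Ap(v_1), \dots, \gamma_\Ap(v_r)) \in R^{\Ck\Ap}$ by Definition~\ref{def:ppml_comonad}. The coalgebra axioms are now direct: $\epsilon_\Ap \circ \gamma_\Ap$ sends $v$ to $\epsilon_\Ap(\T_v) = v$, so equals the identity; and both $\delta_\Ap \circ \gamma_\Ap$ and $\Ck\gamma_\Ap \circ \gamma_\Ap$ send $v$ to the sequence $[\T_{v_0}, \T_{v_1}, \dots, \T_{v_\ell}]$ where $\T_v = [v_0, \dots, v_\ell]$, proving coassociativity.

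Finally, the statement that $\EM(\Ck)$ is the \emph{full} subcategory of $\Stp$ spanned by pp-trees of height at most $k$ follows from consequence (1) of idempotence, which ensures that $\EM(\Ck) \hookrightarrow \Stp$ is fully faithful, combined with the characterisation of objects just established. The only real subtlety is verifying that $\gamma_\Ap$ is a homomorphism in the case of non-accessibility relations of arity greater than one, where the locality constraint in the definition of $R^{\Ck\Ap}$ must match the pp-tree property; this is the step I expect to require the most care, but it is handled cleanly by clause (2) of Definition~\ref{def:pptree}.
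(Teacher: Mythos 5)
Your proof is correct and takes essentially the same approach as the paper: the forward direction is identical (consequence (2) of idempotence plus invariance of the pp-tree property under isomorphism), and for the converse the paper simply observes that the counit $\epsilon_\T: \Ck\T \to \T$ is injective, surjective and strong, hence an isomorphism, whereas your map $\gamma_\Ap(v) = \T_v$ is precisely its inverse, verified explicitly against the coalgebra axioms. The extra verification is sound but amounts to the same argument unpacked.
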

\begin{proof}
    If $\Ap$ is a $\Ck$-coalgebra, then by (2) above $\Ap$ is isomorphic to $\Ck\Ap$, which is a pp-tree of height at most $k$, and the property of being a pp-tree is invariant under isomorphism, as is the height of the pp-tree. Conversely, given a pp-tree $\T$ of height at most $k$, the counit $\epsilon_\T: \Ck\T \to \T$ is injective, surjective and strong, hence an isomorphism.
\end{proof}

Following the discussion of previous game comonads, we may introduce the coalgebra number $\kappa\Ap$ defined as the smallest $k$ such that $\Ap$ is a $\Ck$-coalgebra, if it exists.
This parameter generalises the corresponding coalgebra number for $\Mk$ in the obvious way: it is defined only for pp-trees, and it coincides with the height of the pp-tree. Coalgebras of idempotent comonads are not equipped with extra structure with respect to $\sigma$-structures, which explains why they do not give rise to rich combinatorial parameters. In contrast, tree-depth and tree-width arise in this way from the non-idempotent game comonads $\Ek$ and $\bb P_k$, respectively.

\subsection{Arboreal Categories and Bisimilarity through Bounded Morphisms}\label{sec:arboreal}

\begin{toappendix}
\subsection{Proofs for Section 2.3}
It is well known that the categories $\Struct, \Structpointed$ are complete and cocomplete.\footnote{A category is said to be (co)complete if it has all (co)limits of small diagrams, i.e. diagrams indexed by small categories.} For reference, we state here explicit descriptions of (co)products and (co)equalisers in $\Structpointed$, in terms of which all other (co)limits can be expressed (see e.g.\ \cite[Thm. 3.4.12]{riehl2017category} or any other standard reference on basic Category Theory).

\begin{proposition}\label{prop:caract_limits_in_structpointed}
    For any small family of pointed relational structures $(\A_\alpha, \bpa_\alpha)_{\alpha \in \Lambda}$, the product $\prod_\alpha (\A_\alpha, \bpa_\alpha)$ can be characterised as the following structure. Its universe is the Cartesian product $\prod_\alpha |\A_\alpha|$ and its basepoint is $(a_\alpha)_{\alpha\in\Lambda}$. The product projections are given by the projection functions $\pi_\alpha$ from the Cartesian product to each of its set-theoretical factors, and the interpretation of $R \in \sigma$ is
    $R^{\prod_\alpha \A_\alpha} \coloneqq \cap_\alpha \pi_\alpha^{-1}(R^{\A_\alpha})$.
    The empty product ($\Lambda =\varnothing$), i.e.\ the terminal object, is given by a singleton universe $\{*\}$ with interpretations $R^{\{*\}} \coloneqq \{(*,\dots,*)\}$ for each $R \in \sigma$ and basepoint $*$.

    Moreover, given a parallel pair of morphisms $f,g: \Ap \rightrightarrows \Bp$ in $\Structpointed$, their equaliser can be characterised as the inclusion morphism $h: (\A', a) \into \Ap$ where $\A'$ is the embedded substructure of $\A$ on the set of all $a'$ such that $f(a') = g(a')$.
\end{proposition}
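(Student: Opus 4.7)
The plan is to verify each construction in two stages: first confirming it defines a legitimate object of $\Structpointed$ together with the claimed structure maps, and then checking the relevant universal property. Since all three claims (products, terminal object, equalisers) are standard, the argument is essentially a bookkeeping exercise, with the main subtleties being (i) the handling of the basepoints and (ii) the verification that the proposed relation interpretations make the projection/inclusion maps into \emph{homomorphisms} (not just strong homomorphisms or mere functions).

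For the product, I would first check well-definedness: the tuple $(\bpa_\alpha)_{\alpha\in\Lambda}$ lies in $\prod_\alpha |\A_\alpha|$ by definition, and each projection $\pi_\alpha$ sends it to $\bpa_\alpha$, so $\pi_\alpha$ is pointed. The interpretation $R^{\prod_\alpha \A_\alpha} = \bigcap_\alpha \pi_\alpha^{-1}(R^{\A_\alpha})$ is exactly the smallest subset of $(\prod_\alpha|\A_\alpha|)^{\arity(R)}$ making every $\pi_\alpha$ a homomorphism, so the $\pi_\alpha$ are morphisms in $\Structpointed$. For the universal property, given $f_\alpha:\Cp \to (\A_\alpha, \bpa_\alpha)$, the unique candidate is $f(c) \coloneqq (f_\alpha(c))_{\alpha\in\Lambda}$, whose uniqueness is forced by the requirement $\pi_\alpha \circ f = f_\alpha$. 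It preserves basepoints since $f_\alpha(\bpc) = \bpa_\alpha$ for all $\alpha$, and it preserves each $R \in \sigma$ precisely because membership in $R^{\prod_\alpha \A_\alpha}$ reduces, by the intersection-of-preimages definition, to each $f_\alpha$ preserving $R$.

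The terminal object is the special case $\Lambda = \varnothing$: a singleton $\{*\}$ with $R^{\{*\}}$ containing the unique tuple means every function $\Cp \to \{*\}$ is automatically a strong (and in particular pointed) homomorphism, and set-theoretic uniqueness gives the universal property.

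For the equaliser, I would observe first that $f(a) = g(a) = b$ since both are pointed, so $a \in |\A'|$ and $(\A', a) \in \Structpointed$ is well defined; the inclusion $h:(\A',a) \into \Ap$ is a relational embedding in the sense of the preliminaries, hence in particular a morphism in $\Structpointed$, and satisfies $f\circ h = g\circ h$ by construction. For the universal property, any morphism $k:\Cp \to \Ap$ with $f\circ k = g\circ k$ factors set-theoretically through $|\A'|$, defining a unique function $k':|\C| \to |\A'|$ with $h \circ k' = k$; preservation of relations and of the basepoint by $k'$ is then inherited directly from $k$, since the inclusion $h$ reflects all relations. The one mild obstacle worth flagging is making sure that in each case one is checking preservation and not reflection of relations, since the product structure is defined so that its projections are only guaranteed to be \emph{homomorphisms} (not strong), while the equaliser inclusion is automatically strong — this asymmetry is where most routine confusions arise.
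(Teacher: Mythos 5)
The paper does not actually prove this proposition: it is stated in the appendix purely for reference, prefaced by the remark that completeness and cocompleteness of $\Struct$ and $\Structpointed$ are well known, with a pointer to a standard textbook. So there is no argument in the paper to compare against; your direct verification of the universal properties is exactly the standard one and is essentially correct. One small slip worth fixing: you describe $\bigcap_\alpha \pi_\alpha^{-1}(R^{\A_\alpha})$ as the \emph{smallest} interpretation making every $\pi_\alpha$ a homomorphism, but it is the \emph{largest} such interpretation (the empty interpretation would also make the projections homomorphisms, vacuously). Maximality is in fact the property your universal-property argument relies on: the induced map $f(c) = (f_\alpha(c))_\alpha$ preserves $R$ precisely because every tuple whose projections all lie in the $R^{\A_\alpha}$ is admitted into $R^{\prod_\alpha \A_\alpha}$. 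Since your subsequent verification computes this directly rather than leaning on the "smallest" claim, the proof still goes through; your closing remark about the asymmetry between the product projections (merely homomorphisms) and the equaliser inclusion (a strong homomorphism, hence reflecting relations) is exactly the right point to flag.
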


\begin{proposition}\label{prop:caract_colims_in_structpointed}
    For any small family of pointed relational structures $(\A_\alpha, \bpa_\alpha)_{\alpha \in \Lambda}$, the coproduct $\coprod_\alpha (\A_\alpha, \bpa_\alpha)$ can be characterised as the following structure. Its universe is given by $\coprod_\alpha |\A_\alpha|/{\sim}$, i.e.\ the disjoint union of the universes modulo the equivalence relation $\sim$ generated by identifying all basepoints. Concretely, $(\alpha, a') \sim (\beta, a'')$ iff $a' = \bpa_\alpha$ and $a'' = \bpa_\beta$, or $(\alpha, a') = (\beta, a'')$. Its basepoint is the equivalence class of all basepoints. The coproduct inclusions $\iota_\alpha$ are given by sending $a \mapsto [(\alpha,a)]$, and the interpretation of $R \in \sigma$ is
    $R^{\coprod_\alpha \A_\alpha} \coloneqq \union_\alpha \iota_\alpha(R^{\A_\alpha})$.
    The empty coproduct ($\Lambda = \varnothing$), i.e.\ the initial object, is given by a singleton universe with empty interpretations.

    Moreover, given a parallel pair of morphisms $f,g: \Ap \rightrightarrows \Bp$ in $\Structpointed$, their coequaliser can be characterised as the morphism $h: \Bp \to (\B/{\simeq}, [\bpb])$ given as follows. $\B/{\simeq}$ is the quotient of $\B$ by the equivalence relation ${\simeq}$ generated by the pairs of the form $(f(a),g(a))$ for all $a \in |\A|$, that is to say $|\B/{\simeq}| \coloneqq |\B|/{\simeq}$ and $R^{\B/{\simeq}} \coloneqq \{([b_1],\dots,[b_r]) \mid (b_1,\dots,b_r) \in R^\B\}$ for each $R \in \sigma$; and $h$ is the quotient map $h(b) \coloneqq [b]$.
\end{proposition}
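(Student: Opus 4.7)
The plan is to verify the stated universal properties directly for the coproduct, the initial object, and the coequaliser. An alternative high-level route would be to observe that $\Structpointed$ is equivalent to the coslice $\mathbf{1} \downarrow \Struct$, where $\mathbf{1}$ is the initial object of $\Struct$ (a singleton with empty interpretations), and invoke the general fact that colimits in coslice categories are computed as colimits in the base over the extended diagram with the basepoint as apex. However, a direct verification is more self-contained and mirrors the style of the preceding proposition.

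For the coproduct: given a cocone $(f_\alpha: (\A_\alpha, \bpa_\alpha) \to (\B, b))_{\alpha\in\Lambda}$, I define the mediating morphism by $f([(\alpha, x)]) \coloneqq f_\alpha(x)$. Well-definedness reduces to the fact that every $f_\alpha$ sends $\bpa_\alpha$ to $b$, so the identification of basepoints is respected. Preservation of relations follows because, by the defining formula $R^{\coprod_\alpha \A_\alpha} = \bigcup_\alpha \iota_\alpha(R^{\A_\alpha})$, each tuple in $R^{\coprod_\alpha \A_\alpha}$ lies in the image of some single $\iota_\alpha$, and $f_\alpha$ is a homomorphism. Uniqueness is forced by the required equation $f \circ \iota_\alpha = f_\alpha$ together with the joint surjectivity of the $\iota_\alpha$. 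The initial object is then the empty coproduct: since its interpretations are empty, any basepoint-preserving function out of it is automatically a homomorphism, and the target basepoint fixes the unique choice.

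For the coequaliser: given $h': (\B, b) \to (\C, c)$ satisfying $h' \circ f = h' \circ g$, the equality $h'(f(a)) = h'(g(a))$ for every $a \in |\A|$ shows that $h'$ is constant on the generating pairs of $\simeq$; since $h'$ takes values in a set and set-theoretic equality is itself an equivalence relation, $h'$ descends uniquely to the quotient via $\tilde{h}([b']) \coloneqq h'(b')$. The quotient interpretations $R^{\B/{\simeq}}$ are defined as the image of $R^\B$ under the quotient map, so $h$ is a homomorphism by construction, and $\tilde{h}$ preserves relations because any tuple in $R^{\B/{\simeq}}$ lifts to a tuple in $R^\B$, which $h'$ preserves. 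Basepoint preservation and uniqueness of $\tilde h$ are routine.

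The main subtlety I expect is the interplay between the explicitly given generating pairs $\{(f(a), g(a)) \mid a \in |\A|\}$ and their transitive, symmetric, reflexive closure $\simeq$, which can identify many more elements than the generators themselves. The point is that, precisely because $h'$ maps into a structure $\C$ whose underlying equality is already an equivalence relation containing the images of the generators, $h'$ automatically respects the entire closure $\simeq$; this is what legitimises the factorisation and makes the construction universal. A secondary point worth noting is that one must check that the prescribed interpretations on $\B/{\simeq}$ yield the minimal coequalising cocone, which is immediate because any other candidate interpretation preserved by $h$ must contain $\{([b_1],\dots,[b_r]) \mid (b_1,\dots,b_r) \in R^\B\}$, and this minimal choice already satisfies the universal property established above.
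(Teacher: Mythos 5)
Your verification is correct, but note that the paper does not actually prove this proposition: it is stated in the appendix as a standard fact about (co)completeness of $\Structpointed$, with a pointer to a general reference on basic category theory. So there is no proof to compare against; what you have done is fill in the routine check that the paper delegates to the literature. Your direct argument is sound on all the points that matter: the mediating map out of the coproduct is well defined because every leg sends its basepoint to the common target basepoint; relation preservation follows from each tuple of $R^{\coprod_\alpha \A_\alpha}$ lying in the image of a single $\iota_\alpha$; uniqueness follows from joint surjectivity of the inclusions; and for the coequaliser, the key observation that a map coequalising $f$ and $g$ automatically respects the \emph{generated} equivalence relation $\simeq$ (because the kernel relation of any function is itself an equivalence relation containing the generators) is exactly the point one must make, and you make it explicitly.

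One minor slip, confined to the alternative route you mention but do not pursue: the singleton structure with empty interpretations is \emph{not} the initial object of $\Struct$ (that is the empty structure); it is rather the representing object for the underlying-set functor, and it is the coslice under \emph{that} object which recovers $\Structpointed$. Had it been the initial object, the coslice would be equivalent to $\Struct$ itself. Since your actual proof is the direct verification, this does not affect correctness. Your closing remark about the ``minimal'' interpretation on the quotient is also slightly redundant --- the universal property you verify already pins the coequaliser down up to isomorphism --- but it is not wrong: a larger interpretation on $|\B|/{\simeq}$ would indeed break the factorisation $\tilde h$ being a homomorphism in general.
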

\end{toappendix}

In this section we explain how the \logic comonad encodes bisimulation games, and hence how we can characterise bisimilarity in categorical terms. To this end, we will use the axiomatic approach of arboreal categories. This lets us access general results in a clean fashion; concretely, we will make use of Propositions 42 and 46 in~\cite{abramsky2021arboreal}.
Arboreal categories and covers constitute a categorical axiomatisation of the situation arising from a game comonad. In~\cite{abramsky2021arboreal}, the authors draw an analogy with computability and computational complexity: in the same way that assigning a program (an intensional description) to a computable function (an extensional object) allows us to assign some complexity measure to the function itself, objects in an arboreal category may serve as intensional descriptions of objects in some `extensional' category that we wish to study—in our case, the category $\Stp$. These intensional descriptions are axiomatically defined to be tree-shaped since they represent processes unfolding in space and time, hence the name `arboreal category'.

We proceed as follows.

\begin{itemize}
    \item We define arboreal categories and arboreal covers, and give some intuition for them.
    \item We prove property (II) above, namely that $\EM(\Ck)$ is an arboreal category for each $k \geq 0$. In fact we prove a slightly stronger statement: the $k$-indexed family of \logic comonads constitutes a resource-indexed arboreal cover of $\Stp$.
    \item We show that the abstract bisimulation game associated with each arboreal category $\EM(\Ck)$ coincides with the \logic $k$-bisimulation game.
    \item We prove that the general notion of open pathwise embedding in an arboreal category reduces in our case to a natural notion of bounded morphism.
    \item Putting all of this together, we end the section with Theorem~\ref{thm:full_logical_equivalence}, which characterises \logic $k$-bisimilarity using bounded morphisms.
\end{itemize}

\begin{definition}
    Let $\ncat{ppTree}$ be the full subcategory of $\Structpointed$ spanned by all (not necessarily finite) pp-trees.
\end{definition}
Notice that $\ncat{ppTree}$ contains all the categories $\EM(\Ck)$ as full subcategories. Our first step towards proving (II) is to verify that $\ncat{ppTree}$ is arboreal, hence we will interleave intermediate definitions with their verifications in the case of $\ncat{ppTree}$.

We have already assumed knowledge of the basic categorical concepts of categories, functors, and natural transformations. For the remainder of this section we will also assume familiarity with limits and colimits (including products, coproducts and pullbacks) as well as adjunctions.
We review the definition of pullbacks in order to fix some terminology. Given a diagram $X \xrightarrow{f} Z \xleftarrow{g} Y$ in $\cat{C}$, its pullback, if it exists, is an object $X \times_Z Y$ in $\cat{C}$ together with morphisms $\overline{f}$ and $\overline{g}$ making the following square commute
\[\begin{tikzcd}
	{X\times_ZY} & X \\
	Y & Z
	\arrow["{\overline{g}}", from=1-1, to=1-2]
	\arrow["{\overline{f}}"', from=1-1, to=2-1]
	\arrow["\lrcorner"{anchor=center, pos=0.125}, draw=none, from=1-1, to=2-2]
	\arrow["f", from=1-2, to=2-2]
	\arrow["g"', from=2-1, to=2-2]
\end{tikzcd}\]
and such that for any other object $W$ and morphisms $h: W \to X$ and $k: W \to Y$ such that $f\circ h = g\circ k$, there exists a unique morphism $\phi: W \to X\times_Z Y$ such that $\overline{g}\circ\phi = h$ and $\overline{f}\circ\phi = k$. We decorate the above commutative square with the symbol $\lrcorner$ to indicate that it is a pullback square, and we refer to $\overline{f}$ as \define{the pullback of $f$ along $g$}. Notice that, as is always the case for definitions through universal properties, the object $X\times_Z Y$ (and hence the morphisms $\overline{f}$ and $\overline{g}$) are defined only up to isomorphism.

We assume $\cat{C}$ is a locally small and well powered category.

\begin{definition}[Factorisation systems]\label{def:stable_proper_fact_syst}
    Given a category $\cat{C}$ and a pair of arrows $e$ and $m$ in $\cat{C}$, we say that $e$ has the \define{left lifting property} with respect to $m$, or that $m$ has the \define{right lifting property} with respect to $e$, if for every commutative square
\[\begin{tikzcd}
	\bullet & \bullet \\
	\bullet & \bullet
	\arrow["e", from=1-1, to=1-2]
	\arrow[from=1-1, to=2-1]
	\arrow["d"', dashed, from=1-2, to=2-1]
	\arrow[from=1-2, to=2-2]
	\arrow["m"', from=2-1, to=2-2]
\end{tikzcd}\]
    there exists a diagonal filler $d$ (possibly non-unique) such that the two resulting triangles commute.
    A pair of classes of morphisms $(\Q, \M)$ is a \define{weak factorisation system} on $\cat{C}$ iff (1) every morphism in $\cat{C}$ can be factored as $f = m \circ e$ with $e \in \mathscr{Q}$ and $m \in \mathscr{M}$, (2) $\mathscr{Q}$ is precisely class of morphisms having the left lifting property against every morphism in $\mathscr{M}$, and (3) $\mathscr{M}$ is precisely the class of morphisms having the right lifting property against every morphism in $\mathscr{Q}$.
    A factorisation system $(\Q, \M)$ on $\cat{C}$ is \define{proper} if all $\mathscr{Q}$-morphisms are epimorphisms and all $\mathscr{M}$-morphisms are monomorphisms, and it is \define{stable} if for any $e \in \mathscr{Q}$ and $m \in \mathscr{M}$ with common codomain, the pullback of $e$ along $m$ exists and belongs to $\mathscr{Q}$.
    We write $\M$-morphisms using the arrow shape $\emb$ and $\Q$-morphisms using the arrow shape $\epi$.
\end{definition}

We have anticipated that, intuitively, objects in an arboreal category are `tree-shaped'. More precisely, this is enforced by requiring that all objects are path-generated (Def.~\ref{def:path_generated_object}), which means that  they can be obtained by glueing together path-shaped objects along initial segments. However, this in turn requires a well-behaved notion of path-shaped object.

One way to talk about the shape of an object is by considering the shape of its poset of subobjects. Then we could postulate that a path is an object whose poset of subobjects is a finite total order. This can be done in any category, but the standard notion of subobject turns out to be inadequate for this task. For instance, even the subobjects of a $\sigma$-structure with a singleton universe may not be totally ordered, since for each tuple in the interpretation of a relation, one can obtain a proper subobject by subtracting that tuple from the interpretation. Factorisation systems, which are also used in other categorical axiomatic contexts (e.g.\ model categories in homotopy theory~\cite{riehl2008factorization}), give us a way to solve this problem by giving us some flexibility in the definition of a subobject.

\begin{definition}\label{def:subobjects_relative_to_factorisation_system}
    Let $\cat{C}$ be a category equipped with a stable proper factorisation system $(\Q, \M)$ and let $X\in\cat{C}$. An \define{$\mathscr{M}$-subobject} of $X$ is an equivalence class of $\mathscr{M}$-morphisms with codomain $X$ according to the relation $\sim$ given as follows: $m \sim n$ whenever there exists an isomorphism $i$ such that $m = n \circ i$. The set of $\mathscr{M}$-subobjects of $X$ has a natural partial ordering given by $[m] \leq [n]$ iff there exists a morphism $i$ such that $m = n \circ i$. We write $[m] \prec [n]$ if $[n]$ is an immediate successor of $[m]$. As is often done, we abuse notation and refer to a $\M$-subobject by any of its representatives.
\end{definition}

\begin{propositionrep}\label{prop:fact_system}
    In the context of the category $\ncat{ppTree}$, let $\Q$ denote the class of pointed surjective homomorphisms and let $\M$ denote the class of pointed relational embeddings. Then $(\Q, \M)$ is a stable proper factorisation system on $\ncat{ppTree}$.
\end{propositionrep}
\begin{proof}
Notice that epimorphisms in $\ncat{ppTree}$ are precisely the surjective pointed homomorphisms and monomorphisms are precisely the injective pointed homomorphisms; in particular, relational embeddings are monic. Since every $\Q$-morphism is an epimorphism and every $\M$-morphism is a monomorphism, if $(\Q,\M)$ is a weak factorisation system on $\ncat{ppTree}$, then it is proper.

Recall that any homomorphism $f: \Ap \to \Bp$ can be given a canonical image factorisation $f = i \circ f|^{f(\A)}$ where $f|^{f(\A)}: \Ap \to (f(\A), f(\bpa))$ is the co-restriction of $f$ to its image, $f(\A)$ is the structure defined by $|f(\A)| \coloneqq f(|\A|) \subseteq |\B|$ and $R^{f(\A)} \coloneqq R^\B \cap |f(\A)|^r$ for each $R \in \sigma$ of arity $r$, and $i$ is the inclusion $f(\A) \into \B$.
Notice that indeed $f|^{f(\A)}$ is a pointed relational embedding and $i$ is a pointed surjective homomorphism. Moreover this construction, which describes the usual stable proper factorisation system on $\Stp$, restricts properly to $\ncat{ppTree}$, in the sense that if $\Ap$ and $\Bp$ are pp-trees, then $(f(\A),f(\bpa))$ is a pp-tree as well. Indeed, it is straightforward to verify that $f(\bpa)$ has no predecessors in $f(\A)$, $f(a')$ has a unique predecessor in $f(\A)$ for each $a' \neq \bpa$ (given by the image of the unique predecessor of $a'$), every point of $f(\A)$ is obviously accessible from $\bpb$, and relations in $f(\A)$ only hold along paths (since interpretations in $f(\A)$ are subsets of the interpretations in $\B$).

The preceding discussion establishes that any morphism in $\ncat{ppTree}$ admits a $(\Q,\M)$-factorisation. We now show that conditions (2) and (3) of Def.~\ref{def:stable_proper_fact_syst} hold as well. Consider a commutative square in $\ncat{ppTree}$
\[\begin{tikzcd}
	{(\+X, x)} & {(\+Y, y)} \\
	{(\+Z,z)} & {(\+W,w)}
	\arrow["e", two heads, from=1-1, to=1-2]
	\arrow["f"', from=1-1, to=2-1]
	\arrow["d"', dashed, from=1-2, to=2-1]
	\arrow["g", from=1-2, to=2-2]
	\arrow["m"', tail, from=2-1, to=2-2]
\end{tikzcd}\]
where $e \in \Q$ and $m \in \M$. Notice that $f$ is constant along the fibres of $e$, i.e.\ $e(x')=e(x'')$ implies $f(x')=f(x'')$ for all $x', x''\in |\+X|$, because $e(x')=e(x'')$ implies $m\circ f(x') = g\circ e(x')=g\circ e(x'') = m\circ f(x'')$ and $m$ is injective. Hence, the function $f$ lifts to a unique function $d: |\+Y| \to |\+Z|$ such that $d\circ e = f$ as functions.
On the other hand, since $g\circ e = m \circ f = m\circ d\circ e$ and $e$ is surjective, we conclude that $g=m\circ d$ as set-theoretical functions. Let us now see that $d$ is a homomorphism (the fact that it preserves basepoints is obvious from the definition). Given $(y_1,\dots,y_r) \in R^{\+Y}$, we know that $(g(y_1),\dots,g(y_r))\in R^{\+W}$, but $g(y_i) = m\circ d(y_i)$ for each $i \in \{1,\dots,r\}$, hence since $m$ is a relational embedding we conclude that $(d(y_1),\dots,d(y_r))\in R^{\+Z}$.

We have found a diagonal filler for any $m \in \M$ and any $e \in \Q$, which shows that $\Q$ is contained in the class of morphisms having the left lifting property against every morphism in $\M$, and that $\M$ is contained in the class of morphisms having the right lifting property against every morphism in $\Q$. We must now check the reverse inclusions. We do one of them, the other one being dual.
Let $e: (\+X, x) \to (\+Y, y)$ be a morphism in $\ncat{ppTree}$ such that it has the left lifting property against every morphism in $\M$, and take a $(\Q, \M)$-factorisation of $e$, say $e = (\+X,x) \overset{\tilde{e}}{\epi} (\+Z,z) \overset{m}{\emb} (\+Y, y)$. Then there is some diagonal filler $d$ making the square
\[\begin{tikzcd}
	{(\+X,x)} & {(\+Y,y)} \\
	{(\+Z,z)} & {(\+Y,y)}
	\arrow["e", from=1-1, to=1-2]
	\arrow["{\tilde{e}}"', two heads, from=1-1, to=2-1]
	\arrow["d"', dashed, from=1-2, to=2-1]
	\arrow[Rightarrow, no head, from=1-2, to=2-2]
	\arrow["m"', tail, from=2-1, to=2-2]
\end{tikzcd}\]
commute, i.e.\ $m\circ d = 1_{(\+Y,y)}$, i.e.\ $m$ is a split epimorphism. Recall that a split epimorphism that is also a monomorphism is an isomorphism, hence $m$ is an isomorphism. Since $\Q$ is closed under isomorphisms, it follows that $e \in \Q$.

We have shown that $(\Q,\M)$ is a proper factorisation system. As for stability, it is straightforward to verify that the pullback in $\Stp$ of a diagram of pp-trees is also a pp-tree, and since inclusions of full subcategories reflect limits, this defines the pullback in $\ncat{ppTree}$. Hence $\ncat{ppTree}$ has all pullbacks. The fact that the pullback of $e \in Q$ along $m \in \M$ is again in $\Q$ follows from the fact that the forgetful functor $\ncat{ppTree} \to \ncat{Set}$ sending each pp-tree to its underlying universe preserves pullbacks, which means that the universes and underlying set-theoretical functions of the pullback in $\ncat{ppTree}$ are constructed as the pullback of the universes and underlying functions in $\ncat{Set}$. Since the pullback of a surjective function along any function is again surjective, stability follows.
\end{proof}

\begin{definition}\label{def:path}
    Given a category $\cat{C}$ equipped with a stable proper factorisation system $(\Q, \M)$, an object $X \in \cat{C}$ is a \define{path} iff its poset of $\M$-subobjects is a finite chain. We say a morphism $m \in \M$ is a \define{path embedding} if its domain is a path, and we denote by $\bb P X$ the sub-poset of $\M$-subobjects of $X$ which are (represented by) path embeddings.
\end{definition}
Notice that from this categorical perspective paths \emphat{in} an object of $\cat{C}$ are precisely the path embeddings (when considered up to isomorphism, i.e.\ as a particular kind of $\M$-subobject).

\begin{definition}
    A \define{pp-path} is a pointed $\sigma$-structure $\Pp$ such that $(|\+P|, \acc^{\+P})$ is a finite chain with minimal element $\bpp$. In other words, a pp-path is a pp-tree $\+P$ with a single branch.
\end{definition}

\begin{propositionrep}\label{prop:aths_are_pp-paths}
    Paths in $\ncat{ppTree}$ with the factorisation system $(\Q,\M)$ as above are precisely the pp-paths.
\end{propositionrep}
\begin{proof}
    Begin by noticing that given a pointed $\sigma$-structure $\Tp$, for each of its $\M$-subobjects we can take as a representative an embedded substructure of $\Tp$ (which by definition must contain $\bpt$). Moreover, under this choice of representatives, given two subobjects $m : (\+S, s) \emb \Tp$ and $m': (\+S', s')\emb\Tp$, $m \leq m'$ if and only if $|\+S| \subseteq |\+S'|$. Hence we must show that a pp-tree $\T$ is a pp-path if and only if its poset of embedded sub-pp-trees ordered by inclusion is a finite chain.

    Let $\T$ be a pp-path whose underlying $E$-tree is given by $\bpt \prec u_1 \prec \dots \prec u_\ell$ for some $\ell \geq 0$. Then there are exactly $\ell + 1$ embedded sub-pp-trees of $\T$, whose universes are given by the $\ell+1$ non-empty prefixes of the sequence $[\bpt, u_1,\dots,u_\ell]$. In particular, the poset of sub-pp-trees of $\T$ is a finite chain.
    Conversely, let $\T$ be a pp-tree whose embedded sub-pp-trees have universes $\{\bpt\} = |\+S_0| \prec \dots\prec|\+S_\ell| = |\T|$ for some $\ell \geq 0$, where we use the symbol $\prec$ for successors in the inclusion order. Then let us see that for each $0\leq j < \ell$, $|\+S_{j+1}| = |\+S_j| \union \{u_{j+1}\}$ for some $u_{j+1} \in |\T|$. Indeed, suppose that for some $j$ there exist $u', u'' \in |\+S_{j+1}|\setminus|\+S_j|$ with $u' \neq u''$. Since $u'$ is an $E$-successor to some point in $|\+S_j|$, $|\+S_j| \union \{u'\}$ is the universe of yet another embedded sub-pp-tree $\+S'$ such that $|\+S_j| \prec |\+S'| < |\+S_{j+1}|$, a contradiction. We conclude that the underlying $E$-tree of $\T$ is $u\prec u_1\prec\dots\prec u_\ell$, hence it is a pp-path.
\end{proof}

\begin{remark}\label{rem:paths_in_a_pptree}
    Given a pp-tree $\T$ we may refer to a path embedding $m: \+P \to \T$ as a \define{path in $\T$}. If we instead think of path embeddings up to isomorphism, i.e.\ as elements of the poset $\bb P \T$, then by taking an appropriate representative we may identify a path in $\+ T$ with an embedded sub-pp-path of $\T$, i.e.\ an embedded substructure of the form $\T_v$ for some $v \in |\T|$. In other words, paths in a pp-tree $\T$ are exactly the embedded sub-pp-paths, which are all of the form $\T_v$ for some $v$. We may also identify paths in $\T$ with their underlying $E$-chains.
    
    Notice that the assignment $v \mapsto \T_{v}$ induces a bijection between $|\T|$ and $\bb P\T$. If $\T$ is of finite height $\ell$, then these two sets are also in bijection with $\Ck\+T$ for all $k \geq \ell$.
\end{remark}

The following are technical conditions on paths that make them well behaved. Note that the definition of connected object given here, reproduced from~\cite{abramsky2021arboreal}, is not standard.

\begin{definition}\label{def:path_cat_and_connected_object}
    A category $\cat{C}$ equipped with a stable proper factorisation system $(\Q, \M)$ is a \define{path category} if the following conditions hold: (1) $\cat{C}$ has all coproducts of small families of paths, and (2) for any paths $P, Q, R$, if a composite $P \to Q \to R$ is a $\mathscr{Q}$-morphism, then so is $P \to Q$.
    If $\cat{C}$ is a path category, let $\cat{C}_p$ denote the full subcategory of $\cat{C}$ spanned by paths.

    An object $X$ in a path category is \define{connected} if for all small families of \emphat{paths} $(P_i)_{i \in I}$ in $\cat{C}_p$, any morphism $X \to \coprod_{i \in I} P_i$ factors through some coproduct inclusion $P_j \to \coprod_{i \in I} P_i$.
\end{definition}

\begin{proposition}\label{prop:morphisms_of_pptrees_preserve_height}
    Let $f: \T \to \T'$ be a morphism in $\ncat{ppTree}$. Then for all $v \in |\T|$, the height of $v$ is equal to the height of $f(v)$. In particular, morphisms out of a pp-tree are injective.
\end{proposition}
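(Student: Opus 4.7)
The plan is to prove both assertions by induction on the height of $v \in |\T|$, with the tree property of $\T'$—in particular, uniqueness of predecessors—doing the heavy lifting in each step.

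For height preservation, the base case takes $v$ to be the root of $\T$; since $f$ is pointed, $f(v)$ is the root of $\T'$, of height $0$. In the inductive step, suppose $v$ has height $n+1$ with unique predecessor $v^*$ of height $n$. Because $f$ preserves $\acc$, we have $(f(v^*), f(v)) \in \acc^{\T'}$, so $f(v)$ admits a predecessor in $\T'$ and in particular is not the root. The tree property of $\T'$ then forces $f(v)$ to have a unique predecessor, which must therefore equal $f(v^*)$. By induction $f(v^*)$ has height $n$, so $f(v)$ has height $n+1$, completing the argument.

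For injectivity, suppose $f(v_1) = f(v_2)$. By the first part $v_1$ and $v_2$ share a common height $n$, and I would induct on $n$. The base case $n = 0$ is trivial since the only height-$0$ node in the tree $\T$ is the root. For the inductive step, let $v_1^*, v_2^*$ denote the unique predecessors of $v_1, v_2$; since the image of each is a predecessor of the common target $f(v_1) = f(v_2)$ in $\T'$, uniqueness of predecessors in $\T'$ gives $f(v_1^*) = f(v_2^*)$. The induction hypothesis then yields $v_1^* = v_2^*$, and one concludes by appealing to the tree structure to identify $v_1$ and $v_2$ as children of a common node with a common image.

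The step I expect to be the main obstacle is this final identification in the injectivity argument: pinning down two distinct children of a common node in $\T$ whose images in $\T'$ coincide. This does not follow from the bare tree axioms alone, and I would look to exploit condition~(2) in the definition of pp-tree (the requirement that any relation witnessed at depth $r$ sits along the unique length-$r$ chain ending at its last coordinate), together with the way $\ncat{ppTree}$ sits as a full subcategory of $\Stp$, in order to pin down the necessary rigidity.
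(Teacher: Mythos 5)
Your height-preservation argument is correct and is exactly the induction the paper has in mind: the root maps to the root because morphisms are pointed, and the image of the unique predecessor of $v$ is a predecessor of $f(v)$, hence \emph{the} predecessor by the tree property of $\T'$, so heights match by induction.

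The injectivity argument, however, cannot be completed, and the obstacle you flag at the end is not a missing lemma but a sign that the claim you are trying to prove is false as literally stated. Two distinct children of a common node can have the same image: take $\T$ to be the tree with root $u$ and two successors $v_1, v_2$, with $R^{\T} = \varnothing$ for every $R \in \bsigma$, and $\T'$ the tree with root $u'$ and a single successor $v'$, again with empty interpretations; then $u \mapsto u'$, $v_1, v_2 \mapsto v'$ is a morphism in $\ncat{ppTree}$ that is not injective. Condition (2) of the pp-tree definition gives no rigidity here, since it is vacuous when the interpretations are empty. The second sentence of the proposition should be read with domain a pp-\emph{path}: that is how the statement is used later (in the proofs of Propositions~\ref{prop:pptree_is_path_cat} and~\ref{prop:open_PE}, where the morphisms in question are restrictions to path embeddings), and it is what the paper's own proof argues, namely that ``in a pp-path there is at most one point of each height.'' With that reading, injectivity is an immediate corollary of your first part—distinct points of a pp-path have distinct heights, hence distinct images—and no second induction is needed.
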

\begin{proof}
    The first claim is proven by an easy induction on the height of points of a pp-tree. The second claim follows since in a pp-path there is at most one point of each height.
\end{proof}

\begin{propositionrep}\label{prop:pptree_is_path_cat}
    $\ncat{ppTree}$ together with the factorisation system $(\Q,\M)$,where $\Q$ is the class of pointed surjective homomorphisms and $\M$ is the class of pointed relational embeddings, is a path category. Moreover, every pp-path is connected.
\end{propositionrep}
\begin{proof}
    By Proposition~\ref{prop:fact_system}, $(\Q, \M)$ as above is a stable proper factorisation system for $\ncat{ppTree}$. To see that $\ncat{ppTree}$ has all coproducts of small families of paths, we show that \emphat{a fortiori} all small coproducts exist in $\ncat{ppTree}$. Indeed, by Prop.~\ref{prop:caract_colims_in_structpointed}, it is clear that if $(\A_\alpha, a_\alpha)_{\alpha\in\Lambda}$ is a collection of structures in $\Stp$ for some set $\Lambda$ which happen to be pp-trees, then $\coprod_\alpha (\A_\alpha, a_\alpha)$ is too: it is an $E$-tree obtained by glueing together all the $E$-trees at their root, and if $([(\alpha',a_1)],\dots,[(\alpha',a_r)]) \in R^{\coprod_\alpha \A_\alpha} \coloneqq \union_\alpha \iota_\alpha (R^{\A_\alpha})$ for some $R \in \bsigma$ and some $\alpha'\in\Lambda$, then it must be the case that $(a_1,\dots,a_r) \in R^{\A_{\alpha'}}$ and hence $([(\alpha',a_1)],\dots,[(\alpha',a_r)])$ is an $E$-chain in $\coprod_\alpha \A_\alpha$. Notice that this also works for $\Lambda = \varnothing$ since the initial object of $\Stp$ is a pp-tree. Finally, because inclusions of categories reflect colimits, this construction must also be the coproduct of $(\A_\alpha, a_\alpha)_{\alpha\in\Lambda}$ in $\ncat{ppTree}$.

    Let us now see that if $\+P, \+Q$ and $\+R$ are pp-paths and $f: \+P \to \+Q$ and $g: \+Q \to \+R$ are morphisms in $\ncat{ppTree}$ such that $g \circ f$ is surjective, then $f$ is surjective. $g \circ f$ being surjective implies that $g$ is surjective as well, hence by Prop.~\ref{prop:morphisms_of_pptrees_preserve_height} it follows that both $g \circ f$ and $g$ are bijective (although not necessarily isomorphisms). Hence $f$ is bijective; in particular it is surjective.
    
    This establishes that $\ncat{ppTree}$ is a path category. That every pp-path is connected is straightforward from Prop.~\ref{prop:caract_colims_in_structpointed} and the proof of Prop.~\ref{prop:pptree_is_path_cat}, since there it is shown that a coproduct of pp-paths $\coprod_{\alpha \in \Lambda} \+P_\alpha$ coincides with the coproduct in $\Stp$, hence it is a tree whose only branching node (point with more than one successor) is the root node. Thus given a pp-path $\+Q$, the image of a morphism $\+Q \to \coprod_{\alpha \in \Lambda} \+P_\alpha$ lies entirely in $\iota_\alpha(\+P_\alpha)$ for some $\alpha$.
\end{proof}
From now on we will refer to $\ncat{ppTree}$ as a path category, leaving the factorisation system implicit.

\begin{definition}
Let $\cat{T}$ denote the category $\ncat{ppTree}$ for the particular choice $\sigma = \{\acc\}$. We refer to $\cat{T}$ as the \define{category of trees}.
\end{definition}
Notice that this notion of tree homomorphism, as a morphism in $\cat{T}$, preserves heights (by Prop.~\ref{prop:morphisms_of_pptrees_preserve_height}), and that contrary to~\cite{abramsky2021arboreal} we do not allow the empty tree.

We quote the following result in order to convey the meaning of the axioms for path categories just given. Although we do not need the empty tree, since all our structures are pointed and hence their associated trees are non-empty, for general path categories we must modify $\cat{T}$ so that it contains the empty tree.

\begin{theorem}[\cite{abramsky2021arboreal}, Theorem 14]\label{thm:functor_into_trees}
Let $\cat{C}$ be a path category. Then the assignment $X \mapsto \bb P X$ induces a functor $\bb P: \cat{C} \to \cat{T}$.
\end{theorem}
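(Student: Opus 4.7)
The plan is to prove the theorem in three stages: first, construct $\bb P X$ as an object of $\cat{T}$ (a tree); second, define the action $\bb P f$ on morphisms and show it lands in $\cat{T}$; third, check functoriality. To keep things concrete, the accessibility relation on $\bb P X$ will be taken to be the immediate-successor relation $\prec$ in the poset of path $\M$-subobjects of $X$, and the basepoint will be the minimum path subobject.

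To show $\bb P X$ is a tree, I must exhibit a root and verify uniqueness of immediate predecessors above it. For the root, I use that every path $P$ has a minimum $\M$-subobject $P_0 \emb P$ (since its $\M$-subobjects form a finite chain); composing with any path embedding $P \emb X$ gives a path subobject of $X$, and a short argument based on uniqueness of $(\Q,\M)$-factorisations shows that all such minima are canonically identified, yielding a well-defined minimum element of $\bb P X$. For the tree structure above the root, given a non-minimal path embedding $m: P \emb X$, the finite chain of $\M$-subobjects of $P$, post-composed with $m$, produces a unique immediate predecessor of $[m]$ in $\bb P X$; uniqueness follows because any $[n] \prec [m]$ arises from an $\M$-subobject of $P$, by the lifting property of the factorisation system applied to $n \emb X$ factored through $m$. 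Iterating yields accessibility from the root, so $(|\bb P X|, \prec, \text{root}) \in \cat{T}$.

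For a morphism $f: X \to Y$ and a path embedding $m: P \emb X$, I factor the composite $f \circ m$ as $P \overset{e}{\epi} P' \overset{m'}{\emb} Y$, and set $\bb P f([m]) := [m']$. The crucial technical point is that $P'$ is itself a path. I would argue this by using stability of the factorisation system: any $\M$-subobject $R \emb P'$ can be pulled back along $e$ to an $\M$-subobject $R^* \emb P$ equipped with a $\Q$-quotient $R^* \epi R$; this defines an order-preserving assignment from $\M$-subobjects of $P'$ to those of $P$. I then use condition (2) of Definition~\ref{def:path_cat_and_connected_object}—that $\Q$-morphisms between paths compose in a well-behaved way—to show this assignment is injective and its image is an up-set in a finite chain, hence a finite chain. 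Thus $\bb P'$'s poset of $\M$-subobjects is finite and totally ordered, making $P'$ a path. To verify $\bb P f$ is a tree homomorphism, I check preservation of the root (immediate, since the root of $\bb P X$ factors through any path embedding to the root of $\bb P Y$) and preservation of $\prec$ (immediate predecessors are sent to immediate predecessors or equalities, and a pointed homomorphism of trees in $\cat{T}$ allows the latter degeneration\footnote{Recall from Prop.~\ref{prop:morphisms_of_pptrees_preserve_height} that morphisms in $\cat{T}$ preserve heights, which here corresponds to the fact that $\bb P f$ restricted to the chain of ancestors of a given path is still a chain.}).

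Functoriality—$\bb P(1_X) = 1_{\bb P X}$ and $\bb P(g \circ f) = \bb P g \circ \bb P f$—then follows from the essential uniqueness of $(\Q, \M)$-factorisations: applying the factorisation to $1_X \circ m$ just recovers $m$, and factoring $(g \circ f) \circ m$ in one step yields the same $\M$-subobject (up to isomorphism) as factoring in two steps, first $f \circ m = m_1 \circ e_1$ and then $g \circ m_1 = m_2 \circ e_2$. The main obstacle is the middle step: proving that $\Q$-quotients of paths are paths, since the axioms of a path category are deliberately minimal and the argument must carefully combine stability (to pull back subobjects) with the compositional condition (2) (to rule out spurious collapse of the chain). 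Everything else is either bookkeeping with the factorisation system or a direct translation of poset-theoretic facts into the categorical language.
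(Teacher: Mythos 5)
The paper does not actually prove this statement—it quotes it from \cite{abramsky2021arboreal}—so your argument can only be judged on its own terms. Its overall architecture is the standard one: the tree structure on $\bb P X$ comes from the finite chains $\M\text{-Sub}(P)$, the action on morphisms is defined via the $(\Q,\M)$-factorisation of $f\circ m$, and the key lemma that $\Q$-quotients of paths are again paths is obtained by pulling back $\M$-subobjects along the quotient. All of that is sound in outline.

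There is, however, one genuine gap, and it sits exactly where axiom (2) of Definition~\ref{def:path_cat_and_connected_object} has to be spent. You assert that $\bb P f$ may send an immediate predecessor to an equality and that ``a pointed homomorphism of trees in $\cat{T}$ allows the latter degeneration''. It does not: $\cat{T}$ is $\ncat{ppTree}$ over the signature $\{\acc\}$, so a morphism must preserve the (irreflexive) successor relation, and by Proposition~\ref{prop:morphisms_of_pptrees_preserve_height} it preserves heights; collapsing a cover $[n]\prec[m]$ to an equality is therefore not a morphism of $\cat{T}$ at all. So you must prove the collapse cannot occur. Concretely, write $n = m\circ i$ with $i: Q \emb P$ the immediate predecessor in $\M\text{-Sub}(P)$, factor $f\circ m = m'\circ e$ with $e: P \epi P'$, and factor $e\circ i = j\circ e''$ with $j: Q' \emb P'$; degeneration means $j$ is an isomorphism, i.e.\ $e\circ i \in \Q$. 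Since $Q$, $P$, $P'$ are all paths, axiom (2) then forces $i \in \Q$, so $i \in \Q\cap\M$ is an isomorphism, contradicting that $[n]$ is a \emph{proper} predecessor of $[m]$. (That $[\,m'\circ j\,]$ is moreover an \emph{immediate} predecessor of $[m']$ follows because pullback along $e$ order-embeds $\M\text{-Sub}(P')$ into the finite chain $\M\text{-Sub}(P)$.) Note also that you invoke axiom (2) instead for the injectivity of the pullback assignment in the ``quotients of paths are paths'' step, where it is not needed: essential uniqueness of $(\Q,\M)$-factorisations already shows that the image under $e$ of the pullback of an $\M$-subobject $R \emb P'$ is $R$ itself, which makes the assignment split injective (and any subposet of a finite chain is a finite chain, so the ``up-set'' claim is superfluous). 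As written, your proof thus uses the one substantive path-category axiom where it is dispensable and omits it where it is indispensable.
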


In view of this result, we may reformulate the intuition behind arboreal categories by saying that an arboreal category is a path category for which the functor $\bb P$ does not lose any relevant information, i.e.\ $X$ is determined by $\bb P X$. This is captured by the following definition.

\begin{definition}\label{def:path_generated_object}
    Given an object $X$ in a path category $\cat{C}$, consider the diagram consisting of all path embeddings with codomain $X$, together with morphisms between their domains (necessarily path embeddings as well) making the triangles
    \[\begin{tikzcd}
        P && Q \\
        & X
        \arrow[curve={height=6pt}, tail, from=1-1, to=2-2]
        \arrow[curve={height=-6pt}, tail, from=1-3, to=2-2]
        \arrow[tail, from=1-1, to=1-3]
    \end{tikzcd}\]
    commute.
    More precisely, we consider this diagram as a cocone over $X$ in the following way. Let $(\cat{C}/X)_{p}$ be the full subcategory of $\cat{C}/X$
    spanned by the path embeddings%
    \footnote{Given a category $\cat{C}$ and an object $X \in \cat{C}$, an object in the \define{category over $X$}, $\cat{C}/X$, is a pair $(X', f)$ where of $X' \in \cat{C}$ and $f: X' \to X$, while a morphism $(X', f) \to (X'', g)$ is a morphism $h: X' \to X''$ such that $g \circ h = f$.}
    and let $\Pi_X: (\cat{C}/X)_p \to \cat{C}$ be the functor sending each path embedding to its domain. Then the cocone in question is the cocone $\lambda: \Pi_X \Rightarrow X$ whose leg $\lambda_{(P, f)}$ is $f$. We say that $X$ is \define{path-generated} if $\lambda$ is a colimit cocone in $\cat{C}$.
\end{definition}

\begin{definition}\label{def:arboreal_cat}
    An \define{arboreal category} is a path category $\cat{C}$ such that (1) every object of $\cat{C}$ is path-generated, and (2) every path in $\cat{C}$ is connected.
\end{definition}

Definition~\ref{def:path_generated_object} relates to the poset $\bb P X$ for an object $X$ in an arboreal category $\cat{C}$ as follows. Let $D: \cat{J} \to \cat{C}$ be some diagram in $\cat{C}$ and let $\cat{J}'$ be a full subcategory of $\cat{J}$. Suppose that for all $j \in \cat{J}$ there exists some morphism $a: j \to j'$ in $\cat{J}$ with $j' \in \cat{J}'$ and such that $D(a): D(j) \to D(j')$ is an isomorphism. Then the colimit of $D$ coincides with the colimit of its restriction $D|_{\cat{J}'}: \cat{J}' \to \cat{C}$ (assuming they exist). In our case this means that, if we think of $\bb P X$ as a full subcategory of $(\cat{C}/X)_p$ by choosing a representative for each equivalence class in $\bb P X$, then the diagram $\Pi_X$ can be restricted to $\bb P X$ without changing the resulting colimit. We may say succinctly, then, that $X$ is path-generated if and only if it is the colimit of its paths, $X \cong \colim_{P \in \bb P X} P$. It is in this sense that a path-generated object $X$ is determined by $\bb P X$.

\begin{theoremrep}\label{thm:is_arboreal}
    $\ncat{ppTree}$ is an arboreal category.
\end{theoremrep}
\begin{proof}
    By Proposition~\ref{prop:pptree_is_path_cat}, it is enough to show that every object of $\ncat{ppTree}$ is path-generated.
    Given $\T \in \ncat{ppTree}$, consider the diagram $\Pi_{\T}: (\ncat{ppTree}/\T)_p \to \ncat{ppTree}$ of all path embeddings into $\T$ and let $\lambda: \Pi_{\T} \Rightarrow \T$ be the canonical cocone given by $\lambda_{(\+P, f)} \coloneqq f$. In order to prove that this is a colimit cocone in $\ncat{ppTree}$, by taking representatives, we can assume that all paths $\+P$ in the diagram are embedded substructures of $\T$, and thus the legs of this cocone are the corresponding inclusions $i_{\+P}$. Now consider a generic cocone $\eta: \Pi_{\T} \Rightarrow \T'$ in $\ncat{ppTree}$ over the same diagram, this time with nadir $\T'$. We must show that there exists a unique homomorphism $\phi: \T \to \T'$ such that it commutes with the legs of the cocones, as indicated by the following diagram.
    \[\begin{tikzcd}
        \+P && \+Q \\
        & \T \\
        & \T'
        \arrow["{i_{\+P}}", curve={height=6pt}, tail, from=1-1, to=2-2]
        \arrow["{i_Q}"', curve={height=-6pt}, tail, from=1-3, to=2-2]
        \arrow[tail, from=1-1, to=1-3]
        \arrow["{\eta_{\+P}}"', curve={height=12pt}, from=1-1, to=3-2]
        \arrow["{\eta_Q}", curve={height=-12pt}, from=1-3, to=3-2]
        \arrow["{\exists! \phi}"', dashed, from=2-2, to=3-2]
    \end{tikzcd}\]

    Let $v \in |\T|$. Then if $\phi$ exists, by the commutativity condition $\phi \circ i_{\+P} = \eta_{\+P}$ for the case $\+P = \T_{v}$ (see Remark~\ref{rem:paths_in_a_pptree}) it must be the case that $\phi(v) = \eta_{\T_{v}}(v)$. Moreover this choice of $\phi$ makes the corresponding triangles commute \emphat{for all} $\+P$, since
    for any pp-path $\+P$ embedded in $\T$ and containing $v$, it must be the case that $|\T_{v}| \subseteq |\+P|$, thus $\eta_{\+P}$ and $\eta_{\T_{v}}$ agree on $v$ since $\eta$ is a cocone.
    
    Now let us see that the function $\phi$ defines a morphism in $\ncat{ppTree}$. By the argument above it is immediate that $\phi$ preserves the root of $\T$. To see that $\phi$ preserves relations, suppose $s \in R^\T$ for some $R \in \sigma$. Then $s$ is a suffix of $\T_v$ for $v = \epsilon(s)$ (see Remark~\ref{rem:notation_for_pptrees}), hence $\phi(s) = \eta_{\T_v}(s) \in R^{\T'}$ since $\eta_{\T_v}$ is a homomorphism.
\end{proof}

\begin{remark}
    We have seen that path objects in $\ncat{ppTree}$ are precisely the pp-paths. In contrast, the notion is not well-behaved in the category $\Structpointed$ itself. Indeed, any pointed structure with at least three distinct points will fail to be a path since its poset of embedded substructures is not totally ordered. Similarly, no structure with more than one point is a path object in $\Struct$.
\end{remark}

We still have not proved that $\EM(\Ck)$ is arboreal for each $k$, but this follows from the fact that all of these categories sit nicely inside $\ncat{ppTree}$, as we will now show. Recall that given a path category $\cat{C}$ we denote its full subcategory of paths by $\cat{C}_p$.

\begin{definition}\label{def:res_indexed_arboreal_cat_and_cover}
    A collection $\{\cat{C}_k\}_{k \in \N}$ of categories is a \define{resource-indexed arboreal category} if there exists an arboreal category $\cat{C}$ together with a chain of full subcategories of $\cat{C}_p$ $\cat{C}_p^0 \into\dots\into\cat{C}_p^k\into\dots\into\cat{C}_p$
    such that the following hold:
    \begin{enumerate}
        \item each $\cat{C}_p^k$ is closed under $\M$-morphisms, i.e.\ for all $y \in \cat{C}_p^k$ and $\M$-morphisms $x \emb y$, $x \in \cat{C}_p^k$; and
        \item for every $k$,
        $\cat{C}_k$ is the full subcategory of $\cat{C}$ spanned by the \define{$k$-path-generated objects}, that is to say the objects whose cocone of path embeddings with domain in $\cat{C}_p^k$ is a colimit cocone in $\cat{C}$.
    \end{enumerate}

    Moreover, if $\{\cat{C}_k\}_{k\in\N}$ is a resource-indexed arboreal category, a \define{resource-indexed arboreal cover} of a category $\mathscr{E}$ by $\{\cat{C}_k\}_{k\in\N}$ is a $k$-indexed family of adjunctions
    \begin{equation*}
    \begin{tikzcd}
    \cat{C}_k \arrow[r, bend left=25, ""{name=U, below}, "L_k"{above}]
    \arrow[r, leftarrow, bend right=25, ""{name=D}, "R_k"{below}]
    & \mathscr{E}
    \arrow[phantom, "\textnormal{\footnotesize{$\bot$}}", from=U, to=D] 
    \end{tikzcd}
    \end{equation*}
    which are \define{comonadic}, that is to say $\cat{C}_k$ is isomorphic to $\EM(L_k R_k)$ for all $k$.\footnote{Since any adjunction induces a comonad, a $k$-indexed family of adjunctions induces a $k$-indexed family of comonads $\{(G_k, \delta^k, \epsilon^k)\}_{k \in \N}$ where $G_k = L_k R_k$, $\delta^k = L_k \eta_k R_k$ and $\eta^k, \epsilon^k$ are the unit and counit of each adjunction, respectively.}
\end{definition}

\begin{theoremrep}\label{thm:arborealcover}
    $\{\EM(\Ck)\}_{k\in\N}$ is a resource-indexed arboreal category. Therefore, the family of comonadic adjunctions
\[\begin{tikzcd}[ampersand replacement=\&]
	{\EM(\Ck)} \&\& \Stp
	\arrow[""{name=0, anchor=center, inner sep=0}, "{U_k}", shift left=3, from=1-1, to=1-3]
	\arrow[""{name=1, anchor=center, inner sep=0}, "{F_k}", shift left=3, from=1-3, to=1-1]
	\arrow["\dashv"{anchor=center, rotate=-90}, draw=none, from=0, to=1]
\end{tikzcd}\]
    where $U_k$ is the inclusion of categories and $F_k$ is the co-restriction of $\Ck$ to its image, constitutes a resource-indexed arboreal cover of $\Structpointed$.
\end{theoremrep}
\begin{proof}
    By definition, the adjunctions involved are comonadic and the induced comonads on $\Structpointed$ are precisely $\{\Ck\}_k$. Let us then prove that $\{\EM(\Ck)\}_k$ is a resource-indexed arboreal category.

    By, Theorem~\ref{thm:is_arboreal}, $\ncat{ppTree}$ is arboreal, and by Proposition~\ref{prop:aths_are_pp-paths}, $\ncat{ppTree}_p$ is the full subcategory of pp-paths.
    Let $\ncat{ppTree}_p^k$ denote the full subcategory of $\ncat{ppTree}_p$ spanned by pp-paths of length at most $k$, which is clearly closed under embeddings and a full subcategory of $\ncat{ppTree}_p^{k+1}$ for all $k$.
    Let us now see that $\EM(\Ck)$ is precisely the full subcategory of $\ncat{ppTree}$ spanned by $k$-path-generated objects.

    To this end, it is enough to show that in the proof of Theorem~\ref{thm:is_arboreal},
    if the pp-tree $\T \in \ncat{ppTree}$ has height at most $k$, and thus belongs to $\EM(\Ck)$, then all the pp-paths in the diagram have length at most $k$ and hence are actually objects of $\ncat{ppTree}_p^k$. Indeed, by Prop.~\ref{prop:morphisms_of_pptrees_preserve_height}, there are no morphisms from a pp-path of height greater than $k$ into a pp-tree of height at most $k$. Thus, the cocone $\lambda : \Pi_\T \nat \T$
    which was shown above to be a colimit cocone in $\ncat{ppTree}$ coincides with the canonical cocone of path embeddings with codomain $\T$ and domains in $\ncat{ppTree}_p^k$. That is to say, if $\T \in \EM(\Ck)$ then $\T$ is $k$-path-generated.
\end{proof}

\begin{corollary}\label{coro:all_EM_are_arboreal}
    For all $k \geq 0$, $\EM(\Ck)$ is arboreal.
\end{corollary}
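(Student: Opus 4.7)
The plan is to verify the three ingredients in Definition~\ref{def:arboreal_cat} for each fixed $k$, leveraging the work already done for $\ncat{ppTree}$ together with the identification of $\EM(\Ck)$ as a full subcategory (Corollary~\ref{coro:caract_coalgebras}). Concretely, I want to show that $\EM(\Ck)$ inherits a stable proper factorisation system from $\ncat{ppTree}$, that it is a path category whose paths are the pp-paths of length at most $k$, that every such path is connected, and that every object is path-generated, where this last point is precisely what is packaged in Theorem~\ref{thm:arborealcover}.

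First I would check the factorisation system. The classes $(\Q_k, \M_k)$ are defined as the restrictions of the classes $(\Q,\M)$ from Proposition~\ref{prop:fact_system} to morphisms between objects of $\EM(\Ck)$. The key observation, which uses Proposition~\ref{prop:morphisms_of_pptrees_preserve_height}, is that the canonical image factorisation of a morphism $\Tp \to \Tpp$ between pp-trees of height at most $k$ factors through a pp-tree whose height is also at most $k$ (since heights are preserved by morphisms of pp-trees). Hence the factorisation $(\Q,\M)$ restricts cleanly; the lifting properties, together with properness and pullback stability, transfer from $\ncat{ppTree}$ by full-subcategory arguments.

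Next I would identify the paths of $\EM(\Ck)$. By Proposition~\ref{prop:aths_are_pp-paths} the paths in $\ncat{ppTree}$ are exactly the pp-paths, and the $\M$-subobject posets are unchanged when one restricts to the full subcategory $\EM(\Ck)$, so the paths in $\EM(\Ck)$ are exactly the pp-paths of height at most $k$, i.e.\ the objects of the subcategory $\ncat{ppTree}_p^k$ introduced in the proof of Theorem~\ref{thm:arborealcover}. The existence of small coproducts of such paths follows from Proposition~\ref{prop:pptree_is_path_cat} together with the fact that a coproduct of pp-paths of length at most $k$ is a pp-tree of height at most $k$, hence still lies in $\EM(\Ck)$. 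The cancellation property for surjective composites of path morphisms, and the connectedness of each path, are inherited verbatim from the corresponding facts in $\ncat{ppTree}$, so no further work is required.

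The main obstacle and the payoff is the path-generation axiom. Here the point is that Theorem~\ref{thm:arborealcover} exhibits $\EM(\Ck)$ as the full subcategory of $k$-path-generated objects in $\ncat{ppTree}$, so for $\T \in \EM(\Ck)$ the cocone of path embeddings with domain in $\ncat{ppTree}_p^k$ is a colimit cocone in $\ncat{ppTree}$. Since both this cocone and its nadir $\T$ lie inside the full subcategory $\EM(\Ck)$, and since full embeddings reflect colimits of diagrams that factor through them, the same cocone is a colimit cocone in $\EM(\Ck)$. As the path embeddings into $\T$ within $\EM(\Ck)$ are precisely those with domain a pp-path of height at most $k$, this is exactly the statement that $\T$ is path-generated in $\EM(\Ck)$. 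Combining these four points with Definition~\ref{def:arboreal_cat} yields that $\EM(\Ck)$ is arboreal.
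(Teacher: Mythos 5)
Your proof is correct, but it takes a genuinely different route from the paper. The paper's proof of this corollary is a one-line appeal to the general machinery: having established in Theorem~\ref{thm:arborealcover} that $\{\EM(\Ck)\}_{k}$ is a resource-indexed arboreal category, it simply invokes \cite[Prop.~42]{abramsky2021arboreal}, which states abstractly that each level of a resource-indexed arboreal category is itself arboreal. You instead re-derive this fact concretely for $\EM(\Ck)$, verifying each axiom by hand: that the image factorisation of a morphism between pp-trees of height at most $k$ stays within height $k$ (via Proposition~\ref{prop:morphisms_of_pptrees_preserve_height}), that $\M$-subobject posets and hence paths are unchanged under passage to the full subcategory, that coproducts of short pp-paths remain short, and that the colimit cocone witnessing $k$-path-generation is reflected into $\EM(\Ck)$ because fully faithful inclusions reflect colimits. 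All of these steps are sound, and the one place where you are slightly terse --- asserting that the lifting properties ``transfer by full-subcategory arguments'' --- is salvaged by your explicit observation that the factorisation does not leave $\EM(\Ck)$, which is exactly what makes the retraction argument from Proposition~\ref{prop:fact_system} go through verbatim. What your approach buys is self-containedness: a reader need not unpack the cited abstract proposition. What the paper's approach buys is brevity and alignment with its stated goal of showcasing how the axiomatic framework of arboreal covers lets one import general results wholesale; in effect, your argument is a hands-on proof of the relevant special case of that general result.
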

\begin{proof}
    Apply \cite[Prop. 42]{abramsky2021arboreal}.
\end{proof}

Recall that our goal is to show how the \logic comonads capture resource-bounded bisimilarity. The link between game comonads and bisimulation games is established through the fact that arboreal categories admit an intrinsic notion of back-and-forth or bisimulation game~\cite{abramsky2021arboreal} whose definition we reproduce in the Appendix for ease of reference (see Definition~\ref{def:abstract_backandforth_game}). 
By Corollary~\ref{coro:all_EM_are_arboreal}, there is a back-and-forth-game associated to each category $\EM(\Ck)$. We now establish their equivalence to the \logic $k$-bisimulation games of Definition~\ref{def:ppml_games}.

\begin{toappendix}
\begin{definition}\label{def:abstract_backandforth_game}
    Let $\cat{C}$ be an arboreal category and let $X,Y \in \cat{C}$. The \define{back-and-forth game} $\mathscr{G}(X,Y)$ between $X$ and $Y$ is defined as follows.
    The game is played between two players, called Spoiler and Duplicator. 
    The state of the game at round $\ell$ is given by a pair of equivalence classes of path embeddings $([m],[n]) \in \mathbb{P}X \times \mathbb{P}Y$.
    We say that $([m],[n])$ satisfies the \define{winning condition} if the domains $\dom(m)$ and $\dom(n)$ are isomorphic (notice that this is well defined on equivalence classes).

    The initial position is $(\bot_X,\bot_Y)$ where $\bot_X\colon P \to X$ and $\bot_Y\colon Q \to Y$ are the roots of $\mathbb{P}{X}$ and $\mathbb{P}{Y}$, respectively (which exist by Theorem~\ref{thm:functor_into_trees}). If $P$ and $Q$ are not isomorphic, then Duplicator loses the game. Otherwise, assuming position $([m],[n])\in\mathbb{P}{X}\times\mathbb{P}{Y}$ is reached after $\ell \geq 0$ rounds, position $\ell + 1$ is determined as follows: either Spoiler chooses some $[m']\succ [m]$ and Duplicator must respond with some $[n']\succ [n]$, or Spoiler chooses some $[n'']\succ [n]$ and Duplicator must respond with $[m'']\succ [m]$.
    If Spoiler cannot make such a choice, then Duplicator wins the game immediately.
    We say that Duplicator wins the round $\ell + 1$ if Duplicator is able to respond with a move which is valid according to the preceding description and which moreover makes the resulting state satisfy the winning condition. Otherwise, the game ends and Duplicator loses immediately. A \define{winning strategy} for Duplicator consists in a choice of a response that makes Duplicator win the round $\ell + 1$ for every move that Spoiler may make after any number of rounds $\ell > 0$.
\end{definition}
\end{toappendix}

\begin{propositionrep}\label{prop:games_coincide}
    Given $\Ap, \Bp \in \Stp$ and $k>0$, the $k$-round bisimulation game for \logic played between $\Ap$ and $\Bp$, $\G_k(\Ap, \Bp)$, is equivalent to the back-and-forth game $\mathscr{G}(\Ck \Ap, \Ck \Bp)$ in the arboreal category $\EM(\Ck)$ played between $\Ck \Ap$ and $\Ck \Bp$. More precisely, there exists a winning strategy for Duplicator in one of the two games if and only if there exists one in the other.\footnote{Although the proof given amounts to a certain equivalence between the games themselves, our precise statement is given in terms of existence of winning strategies since this is all we need and we do not present a formal notion of equivalence between games.}
\end{propositionrep}
\begin{proof}
    A position in the abstract back-and-forth game $\mathscr{G}(\Ck \Ap, \Ck \Bp)$ is a pair $([m],[n]) \in \bb P \Ck \Ap \times \bb P \Ck \Bp$. Following Remark~\ref{rem:paths_in_a_pptree}, this can be seen as a pair $(s, t) \in \Ck \Ap \times \Ck \Bp$, i.e.\ a position in the \logic game $\G_k(\Ap, \Bp)$, by taking $s$ and $t$ to be the maximal points in the images of $[m]$ and $[n]$, respectively. This correspondence maps the starting position $(\bot_{\Ck \Ap}, \bot_{\Ck \Bp})$ of $\mathscr{G}(\Ck \Ap, \Ck \Bp)$ to the starting position $([\bpa], [\bpb])$ of $\G_k(\Ap, \Bp)$. Moreover, the procedures in each game by which a new position is obtained from the previous one coincide under this identification. With respect to the ending of the game, notice that if Duplicator has not lost in $\mathscr{G}(\Ck \Ap, \Ck \Bp)$ after having played the $k$-th round, it wins the game, since Spoiler will not have any valid move to make after playing the $k$-th round (because $\Ck \Ap$ and $\Ck \Bp$ are trees of height at most $k$).
    We conclude that $\mathscr{G}(\Ck \Ap, \Ck \Bp)$ can be regarded as being the same as $\G_k(\Ap, \Bp)$, except for a possibly different winning condition for Duplicator. We focus now on showing that the two winning conditions are indeed equivalent.
    
    The winning condition of $\mathscr{G}(\Ck \Ap, \Ck \Bp)$, namely that $\dom([m]) \cong \dom([n])$, can be restated through the above mentioned correspondence as consisting of positions $(s, t)$ that satisfy the condition
    \begin{equation}
        (\Ck\A)_s \cong (\Ck\B)_t \tag{$*$}
    \end{equation}
    where we have used the notation from Remark~\ref{rem:notation_for_pptrees}.\footnote{Notice that there is at most one possible isomorphism between any pair of pp-paths, in particular between $(\Ck\A)_s$ and $(\Ck\B)_t$.}
    Thus we must show that for any position $(s, t)$ that is reachable during the game,
    (\textbf{$*$}) holds if and only if
    \begin{equation}
        \A,s\models R \iff \B,t\models R \text{ for all $R \in \bsigma$.} \tag{$**$}
    \end{equation}
    We prove this by induction.
    The base case $(s, t) = ([a], [b])$ is immediate. Now assume the result holds for any position $(s, t) = ([a_0,\dots,a_\ell], [b_0,\dots,b_\ell])$ obtained during round $\ell$. If this position does not satisfy any of the two winning conditions (and therefore, by inductive hypothesis, none of them), then Duplicator loses the game and there is no next position to analyse. Otherwise, Duplicator wins that round and the game progresses to round $\ell + 1$. If $\ell + 1 > k$, then there is no next position, either, since Spoiler will not be able to make a move. Hence, suppose that $\ell \leq k$ and both Spoiler and Duplicator have valid moves to make, resulting in position $(s', t') = (s.a_{\ell+1}, t.b_{\ell+1})$. For each $j \in \{0,\dots,\ell+1\}$, let $s_j \coloneqq [a_0,\dots,a_j]$ and $t_j \coloneqq [b_0,\dots,b_j]$.

    First let us see that condition ($*$) implies condition ($**$). For this, the inductive hypothesis is not needed. Indeed, assume that ($*$) holds for $(s', t')$ and let $R \in \bsigma$ have arity $r \leq \ell + 2$ (since otherwise condition ($**$) holds vacuously). Then
    \begin{align*}
        \A,s' \models R &\iff \last_{r}(s') \in R^\A \\
        &\iff [s_{\ell+2-r},\dots,s'] \in R^{\Ck \A} \tag{Def. $R^{\Ck \A}$}\\
            &\iff [t_{\ell+2-r},\dots,t'] \in R^{\Ck \B} \tag{$*$} \\
            &\iff \last_{r}(t') \in R^\B \tag{Def. $R^{\Ck \B}$} \\
            &\iff \B, t' \models R.
    \end{align*}
    
    In the other direction,
    by contrapositive, assume that condition ($*$) does \emphat{not} hold for $(s', t')$. Then there exist $R \in \bsigma$ of arity $r$ and $j_1,\dots,j_{r} \in \{0,\dots,\ell+1\}$ such that $(s_{j_1},\dots,s_{j_{r}}) \in R^{\Ck \A}$ and $(t_{j_1},\dots,t_{j_{r}}) \not\in R^{\Ck \B}$, or viceversa. In particular, $r \leq \ell+2$. Given the definitions of $R^{\Ck \A}$ and $R^{\Ck \B}$, it must be that $j_1,\dots,j_{r}$ are consecutive and non-repeating, and since the previous round had been won by Duplicator, condition ($*$) holds for $(s, t)$, which implies that $[j_1,\dots,j_{r}] = [\ell+2-r, \dots, \ell+1]$. Thus in particular either $\last_{r}(s') \in R^\A$ and $\last_{r}(t') \not\in R^\B$ or viceversa.
\end{proof}

The fact that the \logic $k$-bisimulation game coincides with the abstract back-and-forth game in the arboreal category $\EM(\Ck)$ establishes a connection between \logic and an abstract notion of functional bisimulation internal to any arboreal category, namely open pathwise embeddings~\cite{abramsky2021arboreal}.

\begin{definition}\label{def:open_pe}
    Let $\cat{C}$ be a category equipped with a stable, proper factorisation system $(\Q, \M)$. A morphism $f: X \to Y$ in $\cat{C}$ is said to be a \define{pathwise embedding} if for all path embeddings $e: P \to X$, $f \circ e$ is also a (path) embedding. $f$ is said to be \define{open} iff given any commutative square
\[\begin{tikzcd}[ampersand replacement=\&]
	P \& Q \\
	X \& Y
	\arrow["m", tail, from=1-1, to=1-2]
	\arrow["e"', tail, from=1-1, to=2-1]
	\arrow["d"', dashed, from=1-2, to=2-1]
	\arrow["{e'}", tail, from=1-2, to=2-2]
	\arrow["f"', from=2-1, to=2-2]
\end{tikzcd}\]
    where $P, Q$ are paths and $e,e',m$ are embeddings, there exists a diagonal morphism $d: Q \to X$ making the two triangles commute.
\end{definition}

Intuitively, a morphism $f: \T \to \T'$ in $\ncat{ppTree}$ is a pathwise embedding if it preserves embedded sub-pp-paths. As for the openness condition, the commutativity of the square above encodes the possibility of taking a path of shape $P$ in $X$, pushing it forward along $f$ to a path in $Y$, and then extending it to a longer path of shape $Q$ in $Y$. Then the existence of the diagonal filler $d$ amounts to a lifting of this extension back in $X$. The reader familiar with bounded morphisms in Modal Logic \cite[Def. 2.10]{blackburn2001modal} may be able to recognise them in this definition. Indeed, we now define an appropriate notion of bounded morphism for \logic which generalises that of \bml and show that it coincides with open pathwise embeddings in $\ncat{ppTree}$.

\begin{definition}\label{def:bounded_morphism}
    We say that a morphism $f: \A \to \B$ between non-pointed $\sigma$-structures is \define{bounded} iff the following hold:
    \begin{enumerate}
        \item for all $E$-chains $s$ in $\A$, $\A, s \models R$ if and only if $\B, f(s)\models R$ for all $R \in \bsigma$; and
        \item for all $a \in |\A|$ and $b \in |\B|$, if $f(a) \prec b$ then there exists some $a' \in |\A|$ such that $a \prec a'$ and $f(a') = b$.
    \end{enumerate}
    The same conditions define boundedness for morphisms of pointed structures. We refer to (1) as the \define{harmony condition} and to (2) as the \define{back condition}.

\end{definition}

\begin{proposition}\label{prop:open_PE}
    A morphism $f: \T \to \T'$ in $\ncat{ppTree}$ is an open pathwise embedding if and only if it is bounded.
\end{proposition}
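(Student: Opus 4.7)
The plan is to prove both implications by leveraging two structural features: morphisms in $\ncat{ppTree}$ preserve heights (Prop.~\ref{prop:morphisms_of_pptrees_preserve_height}), and in any pp-tree each non-$E$ relation instance must be a chain (Def.~\ref{def:pptree}). These features convert geometric/tree-theoretic arguments into algebraic ones about interpretations of relations.

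For the forward direction, assume $f$ is an open pathwise embedding. To prove harmony, let $s$ be an $E$-chain in $\T$ with $v = \epsilon(s)$, and consider the path embedding $e_v: \T_v \emb \T$. By pathwise embedding, $f \circ e_v$ is an $\M$-morphism, hence a strong homomorphism. Since $\last_r(s)$ lies in $|\T_v|$ (where $r = \arity(R)$), the biconditional $\last_r(s) \in R^\T \iff f(\last_r(s)) \in R^{\T'}$ follows from strongness, yielding harmony. For the back condition, given $v \in |\T|$ and $v' \in |\T'|$ with $f(v) \prec v'$, note that $f|_{\T_v}$ is a bijection onto $\T'_{f(v)}$ (by height preservation, since pp-paths are determined by their length and a tree has a unique chain from its root to any node) and an embedding (by pathwise embedding). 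We then apply openness to the commutative square with $P = \T_v$, $Q = \T'_{v'}$, $e = e_v$, $e' = (\T'_{v'} \emb \T')$, and $m: P \to Q$ the composite $\T_v \xrightarrow{\sim} \T'_{f(v)} \emb \T'_{v'}$. The resulting diagonal $d: Q \to \T$ satisfies $f(d(v')) = e'(v') = v'$, and since $d$ preserves $E$ and $d(m(v)) = e(v) = v$, we conclude $v \prec d(v')$ in $\T$.

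For the backward direction, assume $f$ is bounded. To show $f$ is a pathwise embedding, let $e: P \to \T$ be a path embedding. Then $f \circ e$ is injective (paths have at most one element per height), and strong: if $(f \circ e)(s) \in R^{\T'}$, then $f(e(s))$ is a chain in $\T'$ (since relations in pp-trees hold only along chains), so by height preservation $e(s)$ is a chain in $\T$; harmony then yields $e(s) \in R^\T$, and strongness of $e$ gives $s \in R^P$. For openness, consider a commutative square as in Def.~\ref{def:open_pe}; by height preservation, $m$ identifies $P = [p_0, \ldots, p_n]$ with an initial segment $[q_0, \ldots, q_n]$ of $Q = [q_0, \ldots, q_N]$. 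Define $d(q_i) := e(p_i)$ for $i \leq n$ (forced by $d \circ m = e$); then for $i = n+1, \ldots, N$, invoke the back condition to obtain $d(q_i)$ as a successor of $d(q_{i-1})$ satisfying $f(d(q_i)) = e'(q_i)$. Finally, $d$ is a homomorphism: $E$ is preserved by construction, and for any $R \in \bsigma$ with $s \in R^Q$, the tuple $s$ is a chain in $Q$, so $d(s)$ is a chain in $\T$ with $f(d(s)) = e'(s) \in R^{\T'}$, and harmony forces $d(s) \in R^\T$.

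The main technical obstacle is setting up the correct commutative square in the forward direction's back-condition step: the embedding $m$ exists only because pathwise embedding together with height preservation pins the image of $f|_{\T_v}$ down to exactly $\T'_{f(v)}$. A parallel delicate point on the backward side is verifying strongness of $f \circ e$: one must first argue that a tuple whose image inhabits a relation is itself a chain, combining height preservation with the pp-tree chain constraint, before harmony can be invoked. Both subtleties disappear once one internalises that in pp-trees the chain datum of relations converts locality conditions on morphisms into genuine reflection properties.
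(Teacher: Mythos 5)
Your proof is correct and follows essentially the same strategy as the paper's: harmony and the back condition are extracted from the pathwise-embedding and openness properties via the one-step extension square built on the sub-pp-path $\T_v$, and conversely the chain constraint on relations in pp-trees together with height preservation (Prop.~\ref{prop:morphisms_of_pptrees_preserve_height}) converts boundedness into strongness along paths. The only real difference is presentational: you construct the openness diagonal in a single pass by iterating the back condition along the tail of $Q$, whereas the paper runs an induction on $\#|\+Q| - \#|\+P|$; your version makes the role of the back condition in each one-step extension more explicit.
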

\begin{proof}
    Let $f: \T \to \T'$ be an open pathwise embedding and let $s$ be an $E$-chain in $\T$. Let $v = \epsilon(s)$ and consider the path embedding $e: \T_v \emb \T$. Since $f$ is a pathwise embedding, $f \circ e = f|_{\T_v}$ is an embedding, hence for any $R \in \bsigma$,
    $\T, s \models \T_v, s \models R \iff \T', f(s) \models R$.
    This establishes the harmony condition.
    For the back condition, let $v \in |\T|$ and $v' \in |{\T'}|$ such that $f(v)\prec v'$, and consider the path embeddings $e: \T_v \emb \T$ and $e': \+Q \emb \T'$, where $\+Q \coloneqq f(\T_v) \cup v'$ is considered as an embedded sub-pp-path of $\T'$. By definition, $f$ restricts and corestricts to the function $f|_{\T_v}^{\+Q} = f\circ e|^{\+Q}: \T_v \to {\+Q}$ which is an embedding since $f$ is a pathwise embedding. We thus have a commutative square
\[\begin{tikzcd}
	{\T_v} & {f(\T_v)\cup\{v'\}} \\
	\T & {\T'}
	\arrow["{f\circ e|^{\+Q}}", tail, from=1-1, to=1-2]
	\arrow["e"', tail, from=1-1, to=2-1]
	\arrow["d"', dashed, from=1-2, to=2-1]
	\arrow["{e'}", tail, from=1-2, to=2-2]
	\arrow["f"', from=2-1, to=2-2]
\end{tikzcd}\]
    which, since $f$ is open, induces a diagonal filler $d: f(\T_v) \cup v' \to \T$.
    Let $w \coloneqq d(v')$. The commutativity of the lower triangle means that $f(w) = v'$, while by the commutativity of the upper triangle means that $d(f\circ e|^Q(v)) = v$. By assumption, $f \circ e|^Q(v) \prec v'$, hence since $d$ is a homomorphism, $v \prec w$. Hence $f$ satisfies the back condition.

    In the other direction, assume that $f$ is a bounded morphism. Let $e: \+P \emb \T$ be a path embedding. Without loss of generality we may assume $\+P = \T_v$ for some $v \in |\T|$. By Prop.~\ref{prop:morphisms_of_pptrees_preserve_height} $f \circ e$ is an injective homomorphism; let us see that it is strong.
    Let $s = [u_1,\dots,u_r] \in |\T_v|^+ \subseteq |\T|^+$ be any sequence (not necessarily a chain) and assume that $\T', f(s) \models R$ for some $R \in \bsigma$. Without loss of generality we may assume that $r$ coincides with the arity of $R$. Then, since ${\T'}$ is a pp-tree, this implies that $f(s)$ is an $E$-chain. Using the back condition, since $f(u_1) \prec f(u_2)$, let $u_2' \succ u_1$ be such that $f(u_2') = f(u_2)$. But since $f\circ e$ is injective, we have $u_2'=u_2$. In this way, we conclude by induction on prefixes of $s$ that $s$ is a chain in $\T$. Hence $\T, s\models R$ by the harmony condition. This establishes that $f$ is a pathwise embedding.

    Finally, let us see that $f$ is open. Let $\+P, \+Q$ be paths and let $e, e', m$ be embeddings such that
\[\begin{tikzcd}
	{\+P} & {\+Q} \\
	\T & {\T'}
	\arrow["m", tail, from=1-1, to=1-2]
	\arrow["e"', tail, from=1-1, to=2-1]
	\arrow["{e'}", tail, from=1-2, to=2-2]
	\arrow["f"', from=2-1, to=2-2]
\end{tikzcd}\]
    commutes. In order to prove that there exists a diagonal filler $d: \+Q \to \T$, we proceed by induction in the parameter $\#|\+Q| - \#|\+P|$, i.e.\ the height difference between $\+Q$ and $\+P$. The base case $\#|\+Q| - \#|\+P| = 0$ is trivial since an embedding between $\sigma$-structures of the same size is an isomorphism, hence we can take $d = e \circ m^{-1}$.

    For the inductive step, let $\tilde{q}$ be the unique point of height $\#|\+Q| - 2$, i.e.\ the unique predecessor of the unique leaf of $\+Q$. Then since $\#|\+Q| - \#|\+P| > 0$ it is immediate that $m$ factors as the composite $m''\circ m'$ of the two path embeddings $m' = m|^{\+Q_{\tilde{q}}} : \+P \emb \+Q_{\tilde{q}}$ and $m'': \+Q_{\tilde{q}} \emb \+Q$.
    Now consider the following diagram.
\[\begin{tikzcd}
	{\+P} & {\+Q_{\tilde{q}}} & {\+Q} \\
	\T & {\T'} & {\T'}
	\arrow["{m'}", tail, from=1-1, to=1-2]
	\arrow["e"', tail, from=1-1, to=2-1]
	\arrow["{m ''}", tail, from=1-2, to=1-3]
	\arrow["{d'}"', dashed, from=1-2, to=2-1]
	\arrow["{e'\circ m''}"{pos=0.8}, tail, from=1-2, to=2-2]
	\arrow["d"{description, pos=0.7}, dashed, from=1-3, to=2-1]
	\arrow["{e'}", tail, from=1-3, to=2-3]
	\arrow["f"', from=2-1, to=2-2]
	\arrow[Rightarrow, no head, from=2-2, to=2-3]
\end{tikzcd}\]
    By inductive hypothesis there exists a diagonal filler $d': \+Q_{\tilde{q}} \to \T$ making the two triangles inside the square on the left commute. Since the diagonal filler is an embedding, we apply openness of $f$ again to obtain a second diagonal filler $d: \+Q \to \T$. The commutativity of the triangle below $d$ is immediate while the commutativity of the triangle above $d$ follows from the commutativity of the two triangles sharing $d'$ as one of their sides.
\end{proof}

Finally, putting everything together we obtain our desired characterisation of $k$-bisimilarity for \logic.

\begin{theorem}\label{thm:full_logical_equivalence}
    Two structures $(\A, a), (\B, b) \in \Structpointed$ are $k$-bisimilar iff there exists a span of bounded morphisms $\T \to \Ck(\A, a)$ and $\T \to \Ck(\B, b)$ with some pp-tree $\T$ of height at most $k$ as common domain.
\end{theorem}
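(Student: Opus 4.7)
My plan is to chain together the equivalences already established in this section, so that the theorem reduces to a bookkeeping exercise of invoking the right results. Given $\Ap, \Bp \in \Stp$, the equivalences I would line up are:
\begin{enumerate}
\item $\Ap \bisim_k \Bp$;
\item Duplicator has a winning strategy in $\G_k(\Ap,\Bp)$;
\item Duplicator has a winning strategy in the abstract back-and-forth game $\mathscr{G}(\Ck\Ap, \Ck\Bp)$ in the arboreal category $\EM(\Ck)$;
\item there is a span of open pathwise embeddings $\T \to \Ck\Ap$, $\T \to \Ck\Bp$ in $\EM(\Ck)$;
\item there is a span of bounded morphisms $\T \to \Ck\Ap$, $\T \to \Ck\Bp$ with common domain a pp-tree $\T$ of height at most $k$.
\end{enumerate}

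The equivalence (1)$\iff$(2) is Theorem~\ref{thm:bisimilarity_game}; (2)$\iff$(3) is Proposition~\ref{prop:games_coincide}; (3)$\iff$(4) is obtained by invoking \cite[Prop.~46]{abramsky2021arboreal} in the arboreal category $\EM(\Ck)$, whose arboreality is Corollary~\ref{coro:all_EM_are_arboreal}; and (4)$\iff$(5) combines Proposition~\ref{prop:open_PE} with the description of the objects of $\EM(\Ck)$ as pp-trees of height at most $k$ given by Corollary~\ref{coro:caract_coalgebras}.

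I do not expect any substantial obstacle, since every link has been prepared by the preceding material. The one detail I would verify explicitly is that Proposition~\ref{prop:open_PE}, stated for morphisms in $\ncat{ppTree}$, transfers cleanly to the full subcategory $\EM(\Ck)$: because $\ncat{ppTree}_p^k$ is closed under $\M$-subobjects inside $\ncat{ppTree}_p$, paths and path embeddings relative to an object of $\EM(\Ck)$ agree whether computed in $\EM(\Ck)$ or in $\ncat{ppTree}$. Consequently, a morphism between objects of $\EM(\Ck)$ is an open pathwise embedding in $\EM(\Ck)$ iff it is one in $\ncat{ppTree}$, so the span produced in (4) consists of bounded morphisms with apex a pp-tree of height at most $k$, yielding (5), and conversely the span of (5) furnishes the data demanded by (4).
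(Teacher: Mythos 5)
Your proposal is correct and follows essentially the same chain as the paper's proof: Theorem~\ref{thm:bisimilarity_game}, Proposition~\ref{prop:games_coincide}, \cite[Prop.~46]{abramsky2021arboreal} applied to the arboreal cover, and Proposition~\ref{prop:open_PE}. Your explicit check that open pathwise embeddings agree between $\EM(\Ck)$ and $\ncat{ppTree}$ is a welcome detail the paper leaves implicit; the only hypothesis you omit to mention is that $\Structpointed$ has binary products, which \cite[Prop.~46]{abramsky2021arboreal} requires and which holds here.
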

\begin{proof}
    By \cite[Proposition 46]{abramsky2021arboreal}, since $\Structpointed$ has binary products
    and considering the resource-indexed arboreal cover given by the comonadic adjunctions of $\{\Ck\}_k$, there exists a span of open pathwise embeddings in $\EM(\Ck)$ with codomains $\Ck \Ap$ and $\Ck \Ap$ if and only if Duplicator has a winning strategy in the game $\mathscr{G}(\Ck \Ap, \Ck \Bp)$.
    
    By Prop.~\ref{prop:games_coincide} this game coincides with $\mathscr{G}(\Ck \Ap, \Ck \Bp)$, and thus by Theorem~\ref{thm:bisimilarity_game}, there exists such a span of open pathwise embeddings iff $\Ap \bisim_k \Bp$. Finally, open pathwise embeddings in $\EM(\Ck)$ coincide with bounded morphisms by Prop.~\ref{prop:open_PE}.
\end{proof}

\begin{figure}[h]
\centering
\includegraphics[scale=0.3]{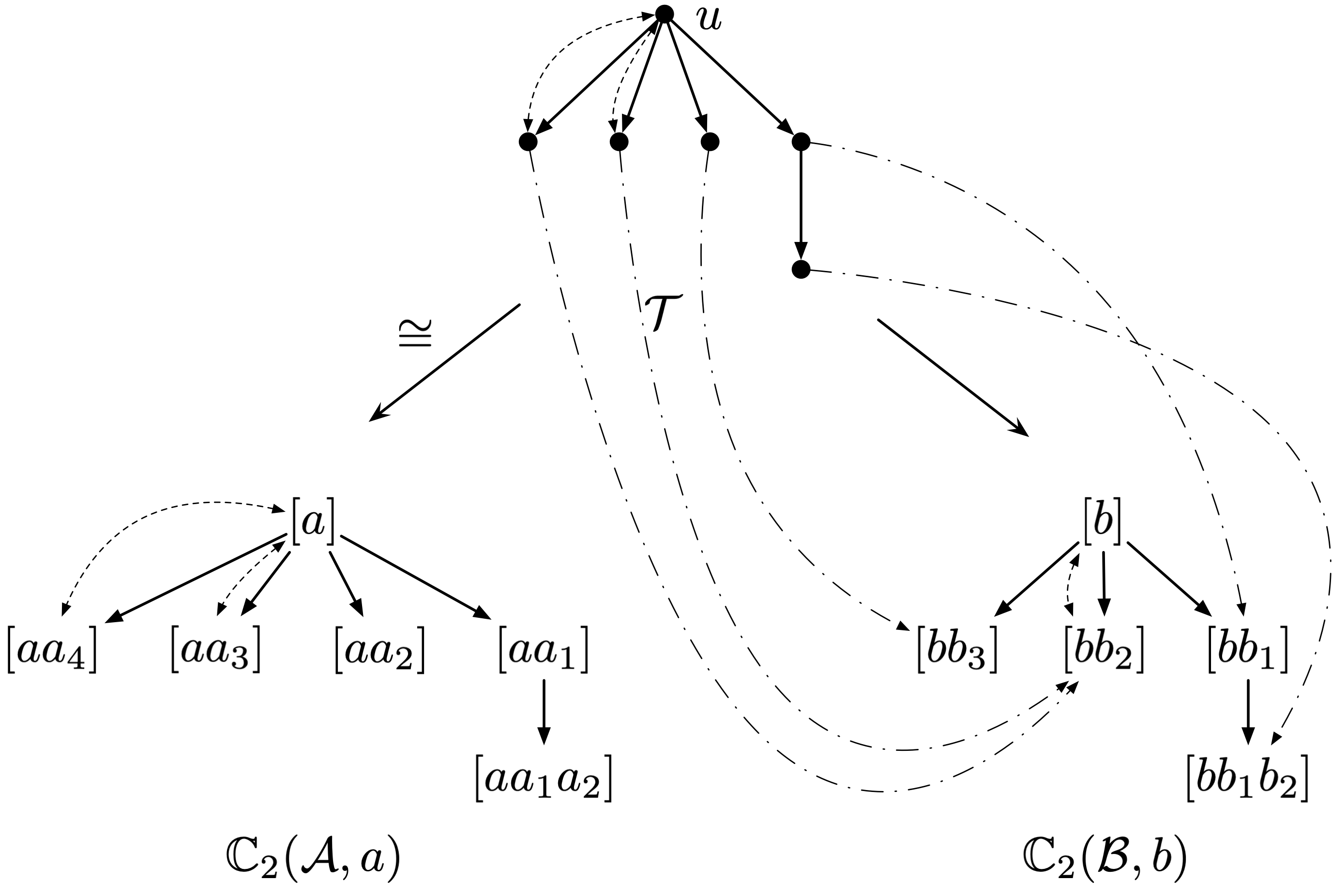}
\caption{
A span of bounded morphisms between the unravellings $\bb C_2(\A,a)$ and $\bb C_2(\B,b)$ of the structures $\Ap$, $\Bp$ as in Figures~\ref{fig:example_bisimilar_structures} and~\ref{fig:pptrees}. The left leg of the span is an isomorphism (there are two possible choices), while the dashed arrows indicate how the right leg acts on each point. By Corollary~\ref{coro:full_logical_equivalence}, the existence of this span of bounded morphisms shows that $\Ap \equiv_2 \Bp$.
} 
\label{fig:span}
\end{figure}

\begin{corollary}\label{coro:full_logical_equivalence}
    Let $\Ap, \Bp \in \Structpointed$ and suppose that either $\sigma$ is finite or both $\A$ and $\B$ are finitely branching structures. Then $\Ap \equiv_k \Bp$ iff there exists a span of pointed bounded morphisms $\Ck(\A, a) \leftarrow \T \to \Ck(\B, b)$ with some pp-tree $\T$ of height at most $k$ as common domain.
\end{corollary}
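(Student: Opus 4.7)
The plan is to simply chain two already-established results. Concretely, the Hennessy--Milner style Theorem~\ref{thm:hennessy-milner} tells us that under the stated hypothesis (either $\sigma$ finite or both $\A$ and $\B$ finitely branching), $k$-logical equivalence $\Ap \equiv_k \Bp$ collapses to $k$-bisimilarity $\Ap \bisim_k \Bp$. On the other hand, Theorem~\ref{thm:full_logical_equivalence}, proved through the arboreal-categorical machinery, characterises $k$-bisimilarity in terms of the existence of a span of bounded morphisms out of some pp-tree $\T$ of height at most $k$ into $\Ck\Ap$ and $\Ck\Bp$. Composing these two equivalences yields the corollary.

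So, first I would invoke Theorem~\ref{thm:hennessy-milner}(2) to rewrite $\Ap \equiv_k \Bp$ as $\Ap \bisim_k \Bp$; notice that this is exactly the step that consumes the finiteness assumption on $\sigma$ or on the branching of $\A$ and $\B$, since Theorem~\ref{thm:full_logical_equivalence} itself makes no such assumption. Then I would apply Theorem~\ref{thm:full_logical_equivalence} to convert $\Ap \bisim_k \Bp$ into the existence of a span of bounded morphisms $\Ck\Ap \leftarrow \T \to \Ck\Bp$ with $\T$ a pp-tree of height at most $k$. Chaining the two equivalences gives both directions at once.

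There is no real obstacle here, since the corollary is a direct consequence of the two theorems quoted and, strictly speaking, there is nothing new to verify beyond observing that the finiteness hypothesis is needed only for the Hennessy--Milner step. The proof therefore reduces to a one-line citation of Theorems~\ref{thm:hennessy-milner} and~\ref{thm:full_logical_equivalence}.
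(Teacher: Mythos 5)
Your proposal is correct and is exactly the argument the paper intends (the corollary is stated without an explicit proof precisely because it follows by chaining Theorem~\ref{thm:hennessy-milner}(2) with Theorem~\ref{thm:full_logical_equivalence}). Your observation that the finiteness hypothesis is consumed only by the Hennessy--Milner step is also accurate.
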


\begin{example}
Consider structures $\A$ and $\B$ as in Example~\ref{ex:bisimilar_structures}. We have already seen that $\Ap \bisim_k \Bp$ for all $k$ using explicit bisimulations and bisimulation games. In Figure~\ref{fig:span} we show a proof of the same fact based on bounded morphisms. Indeed, we show a span of bounded morphisms between $\Ck\Ap$ and $\Ck\Bp$
whose apex is a pp-tree of height $2$, hence by Theorem~\ref{thm:full_logical_equivalence} $\Ap \bisim_2 \Bp$ (and thus $\Ap \equiv_2 \Bp$). The fact that this also holds for $k > 2$ follows immediately from the observation that $\Ck\Ap \cong \bb C_2\Ap$ and $\Ck\Bp \cong \bb C_2\Bp$ for all $k > 2$.

Notice that in this example the left leg of the span is an isomorphism (indeed, isomorphisms are trivially bounded morphisms). This means that the span can be rewritten as simply a bounded morphism $\bb C_2 \Ap \to \bb C_2\Bp$. Since this bounded morphism can be seen to arise as $\bb C_2 f$ from a morphism $f: \Ap \to \Bp$, we may express this fact as saying that $f$ is a \define{functional $k$-bisimulation} (with $k=2$) for \logic.
\end{example}

\section{Model Theory of \boldmath{\logic}}\label{sec:model_th_of_ppml}

In this section, we explore some additional consequences of the comonadic formalism for the model theory of \logic. We first describe an extension of \logic with graded modalities, which we call $\logic^\#$, which of course generalises the graded modalities of \bml (see e.g.~\cite{de2000note}). In the context of logics described through game comonads, this kind of extension is captured by a simpler condition than the existence of spans of open pathwise embeddings—in our case, it is captured by isomorphism of \logic-unravellings. Considering $\logic^\#$ allows us to obtain immediately a homomorphism counting theorem for \logic by an application of a general comonadic result, namely \cite[Corollary 14]{dawar2021lovasz}. It also allows us to state and prove in full strength the tree-model property that \logic enjoys; this is the second topic covered in this section. Finally, we define a Chandra-Merlin-like correspondence between \logicpos formulas and finite pp-trees, which leads to an alternative proof of the Hennessy-Milner property for $k$-similarity between finite structures.
\subsection{Graded Modalities and Homomorphism Counting}\label{sec:graded_modalities}

Game comonads function as a mathematical framework in two different ways: it can provide general theorems which can be directly instantiated in new game comonads, or, when there is currently no such theorem, it can provide a guideline for producing new results by simple adaptations of previous arguments.
In this section, we combine both methodologies: first, inspired by similar results for many other game comonads (and in particular for the Modal comonad), we prove that isomorphism between the \logic-unravellings of two pointed structures captures logical indistinguishability for a suitably defined graded extension of $\logic$. Then we apply a general theorem of~\cite{dawar2021lovasz} to obtain a homomorphism-counting or Lovász-type theorem for our logic.

Note that having isomorphic unravellings is a stronger condition than the one featuring in Theorem~\ref{thm:full_logical_equivalence}, since given an isomorphism $\phi: \Ck(\A, a) \xrightarrow[]{\sim} \Ck(\B, b)$ we have a span of bounded morphisms $\Ck(\A, a) = \Ck(\A, a) \xrightarrow[]{\phi} \Ck(\B, b)$.

Analogously to the Modal Logic case, we extend \logic with graded modalities of the form $\Diamond_n$ with intended meaning `there exist at least $n$ successors such that...'. 
Formally, the syntax of $\logic^\#$ is as in Definition~\ref{def:syntax_ppml} but replacing $\Diamond\varphi$ by $\Diamond_n\varphi$ where $n$ ranges over the positive integers. The semantics are as in Definition~\ref{def:semantics_ppml} but this time with
\begin{align*}
    \A,s \models \Diamond_n\varphi  &\text{\quad iff \quad} |\{ a\in |\A|\colon (\epsilon(s),a)\in \acc^{\A}, \A, s.a \models \varphi\}|\geq n.
\end{align*}
A suitable notion of bisimulation for graded \bml was introduced in~\cite{de2000note}. Here we take the view of~\cite{abramsky2021relating} and adapt it to $\logic^\#$: we modify the game of Definition ~\ref{def:ppml_games} to define a new $k$-round game.\footnote{We change our style of presentation with respect to previously introduced games into a recursive definition of $\G_k^{\#}$ in terms of $\G_{k-1}$. In doing so, we define a game where the initial position can be given by arbitrary $E$-chains $s, t$ in $\A$ and $\B$, respectively, allowing us to define graded $k$-bisimilarity between $E$-chains directly in terms of the game.}

\begin{definition}
    Given structures $\A, \B \in \St$ and $E$-chains $s$ in $\A$ and $t$ in $\B$, the game $\G_k^{\#}((\A, s), (\B, t))$ is played between the two players Spoiler and Duplicator as follows. If $s$ and $t$ do not satisfy exactly the same relations, or if there is no bijection $\acc^{\A}(\epsilon(s)) \xrightarrow{\cong} \acc^{\B}(\epsilon(t))$, then Duplicator loses the game. Otherwise, if $k=0$, Duplicator wins the game, while if $k>0$, Duplicator chooses one such bijection $\theta$ and Spoiler chooses a pair $(a', b')$ in the graph of $\theta$, which we denote by $\Gamma(\theta)$. The players then continue playing the game $\G_{k-1}^{\#}((\A, s.a'), (\B, t.b'))$. If Spoiler cannot choose such a pair (because $\acc^{\A}(\epsilon(s)) = \acc^{\B}(\epsilon(t)) = \varnothing$), Duplicator wins the game.
    We use the notation $(\A, s) \bisim_k^{\#} (\B, t)$ (or simply $s \bisim_k^{\#} t$) to mean that there exists a winning strategy for Duplicator in the game $\G_k^{\#}((\A, s), (\B, t))$. In what follows we may specialise to the case where $|s| = |t| = 1$, in which case we do not distinguish between points $a, b$ and the corresponding sequences $[a], [b]$.
\end{definition}

To relate graded bisimilarity to the comonad $\Ck$, we use a construction of a $k$-unravelling of a structure $\A$ at a chain $s$ in $\A$, which we denote by $\Ck(\A ,s)$. The key property of $\Ck(\A, s)$ is that it can be decomposed as a (non-disjoint) union of embedded substructures of the form $\bb C_{k-1}(\A, s.a')$ where $\epsilon(s) \prec a'$.

\begin{definition}\label{def:unravelling_at_a_chain}
    Given $\A \in \Struct$ and an $E$-chain $s$ in $\A$, let $\Ck(\A, s)$ denote the embedded substructure of $\bb C_{k+|s|-1}(\A, s(1))$ whose universe is the set of $E$-chains that are comparable with $s$ in the prefix order.
\end{definition}

We can follow similar a similar approach to the proof for the analogous statement for Modal Logic \cite[Proposition 5.5]{abramsky2021relating} in order to obtain the following the result. However, a non-trivial adaptation is needed in order for the induction to go through, for which the construction in Definition~\ref{def:unravelling_at_a_chain} plays a key role.

\begin{lemmarep}\label{lem:graded_bisimilarity_isomorphism_strings}
    Let $\A, \B \in \Struct$ and $k \geq 0$. Then for all pairs $(s, t)$ where $s$ is a chain in $\A$, $t$ is a chain in $\B$ and $|s| = |t|$ the following are equivalent:
    \begin{enumerate}
        \item $s \bisim_k^{\#} t$ and $\bb C_0(\A, s) \cong \bb C_0(\B, t)$;
        \item $\bb C_{k}(\A,s) \cong \bb C_{k}(\B, t)$.
    \end{enumerate} 
\end{lemmarep}
\begin{proof}
    We proceed by induction on $k$. The base case $k = 0$ is evident. Now assume the statement holds for $k$ and let us prove it for $k + 1$. For the $\text{(1)} \implies \text{(2)}$ implication, suppose that $s \bisim_{k+1}^{\#} t$ and $\bb C_0(\A, s) \cong \bb C_0(\B, t)$. Then there exists a bijection $\theta: E^\A(\epsilon(s)) \xrightarrow{\cong} E^\B(\epsilon(t))$ such that $s.a \bisim_k^{\#} t.b$ for all $(a, b) \in \Gamma(\theta)$.
    In order to use the inductive hypothesis, let us prove that $\bb C_0(\A, s.a) \cong \bb C_0(\B, t.b)$ holds for all $(a, b) \in \Gamma(\theta)$. Let $\phi$ be the (unique) isomorphism $\phi: \bb C_0(\A, s) \cong \bb C_0(\B, t)$ and given such a pair $(a, b) \in \Gamma(\theta)$ let $\phi_a: \bb C_0(\A, s.a) \cong \bb C_0(\B, t.b)$ be the bijection which extends $\phi$ by $\phi_a(s.a) \coloneqq t.b$. Now given a relation symbol $R$ of arity $r \leq |s| + 1$ and $(s_1,\dots,s_r) \in |\bb C_0(\A, s.a)|^r$, if $s_1,\dots,s_r$ are prefixes of $s$ then
    $$(s_1,\dots,s_r) \in R^{\bb C_0(\A, s.a)} \iff (\phi_a(s_1),\dots, \phi_a(s_r)) = (\phi(s_1),\dots, \phi(s_r)) \in R^{\bb C_0(\B, t)} \subseteq R^{\bb C_0(\B, t.b)}$$
    where we have used that $\phi$ is an isomorphism. If on the other hand $s.a$ is among $s_1,\dots,s_r$, then it must be that $s_1 \prec \dots \prec s_r = s.a$. The assumption that $s.a \bisim_k^{\#} t.b$ implies that $s.a \bisim_0^{\#} t.b$, hence 
    \begin{align*}
    (s_1,\dots,s_r) \in R^{\bb C_0(\A, s.a)} & \iff \A, s.a \models R \\
    &\iff \B, t.b \models R \\
    &\iff (\phi(s_1),\dots, \phi(s_r)) \in R^{\bb C_0(\B, t.b)}.
    \end{align*}
    This proves that $\bb C_0(\A, s.a) \cong \bb C_0(\B, t.b)$ for all $(a, b) \in \Gamma(\theta)$, thus the inductive hypothesis gives us a pair of inverse homomorphisms $$f_a: \bb C_k(\A, s.a) \rightleftarrows \bb C_k(\B, t.b) : g_b$$ for each $(a, b) \in \Gamma(\theta)$.
    
    Notice that each $f_a$ is defined on a subset of $|\bb C_{k+1}(\A, s)|$. Thus we have a collection of homomorphisms $(f_a)_{a \in E^\A(\epsilon(s))}$ such that they agree on the pairwise intersection of their domains and such that the union of their domains is equal to $|\bb C_{k+1}(\A, s)|$, thereby defining a unique function $f: |\bb C_{k+1}(\A, s)| \to |\bb C_{k+1}(\B, t)|$ by $f(s') \coloneqq f_a(s')$ for any $f_a$ such that its domain contains $s'$. This function is actually a homomorphism: whenever $(s_1,\dots,s_r) \in R^{\bb C_{k+1}(\A, s)}$, it must be the case that $\{s_1,\dots,s_r\}$ is in the domain of at least one $f_a$ and then, since $f_a$ is a homomorphism, $(f(s_1),\dots,f(s_r)) = (f_a(s_1),\dots,f_a(s_r)) \in R^{\bb C_{k}(\B, t.b)} \subseteq R^{\bb C_{k+1}(\B, t)}$.

    The same argument defines a homomorphism $g: \bb C_{k+1}(\B, t) \to \bb C_{k+1}(\A, s)$ which by construction is the inverse of $f$, thus showing that $\bb C_{k+1}(\B, t) \cong \bb C_{k+1}(\A, s)$.

    We now prove the leftwards implication starting from the assumption that there exists an isomorphism $f: \bb C_{k+1}(\A,s) \xrightarrow{\cong} \bb C_{k+1}(\B, t)$. Trivially, $\bb C_0(\A, s) \cong \bb C_0(\B, t)$. Now let $\theta$ be the composite bijection
    $$ E^\A(\epsilon(s)) \cong E^{\bb C_{k+1}(\A, s)}(s) \xrightarrow{f} E^{\bb C_{k+1}(\B, t)}(t) \cong E^\B(\epsilon(t))$$
    where the intermediate arrow represents the appropriate (co)restriction of $f$. We must show that for each pair $(a, b) \in \Gamma(\theta)$ it is the case that $s.a \bisim_k^{\#} t.b$. By the inductive hypothesis, it is enough to show that there exist isomorphisms $f_a : \bb C_k(\A, s.a) \xrightarrow{\cong} \bb C_k(\B, t.\theta(a))$ for all $a \in E^\A(\epsilon(s))$. These isomorphisms are supplied by appropriately restricting and co-restricting $f$, since by our choice of $\theta$, if $s.a$ is a prefix of $s' \in \bb C_{k+1}(\A, s)$ then $f(s.a) = t.\theta(a)$ is a prefix of $f(s')$.
\end{proof}

\begin{theorem}\label{thm:graded_bisimilarity_by_iso_classes}
    For all $(\A, a), (\B, b) \in \Structpointed$, $(\A, a) \bisim_k^{\#} (\B, b)$ iff $\Ck(\A, a) \cong \Ck(\B, b)$.
\end{theorem}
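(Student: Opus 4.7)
The plan is to derive Theorem~\ref{thm:graded_bisimilarity_by_iso_classes} as a direct specialisation of Lemma~\ref{lem:graded_bisimilarity_isomorphism_strings} to the pair of length-one chains $s = [a]$ and $t = [b]$. The main work is a pair of small bookkeeping reductions rather than any fresh inductive argument.

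First I would check that Definition~\ref{def:unravelling_at_a_chain} recovers the original comonad on length-one chains: $\bb C_k(\A, [a])$ is the embedded substructure of $\bb C_k(\A, a)$ whose universe consists of the $E$-chains comparable with $[a]$ in the prefix order, and since every element of $|\Ck(\A, a)|$ is an $E$-chain starting at $a$, it automatically has $[a]$ as a prefix. Hence $\bb C_k(\A, [a]) = \Ck(\A, a)$ as pointed structures, and analogously $\bb C_k(\B, [b]) = \Ck(\B, b)$. Thus the right-hand side of the theorem coincides with condition (2) of the lemma at $s = [a]$, $t = [b]$.

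Next I would analyse the auxiliary condition $\bb C_0(\A, [a]) \cong \bb C_0(\B, [b])$ appearing in (1) of the lemma. Both are single-point pointed structures whose relational data records only unary predicates satisfied by the basepoint: any tuple in an $R$ of arity $r \geq 2$ would have to be an $E$-chain of length $r$ in the immediate-successor sense of Definition~\ref{def:ppml_comonad}, which is impossible in a universe of cardinality one. So $\bb C_0(\A, [a]) \cong \bb C_0(\B, [b])$ is equivalent to the statement that $a$ and $b$ satisfy exactly the same unary predicates.

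With these facts in hand, the theorem follows. For the $(\Leftarrow)$ direction, if $\Ck(\A, a) \cong \Ck(\B, b)$, then the lemma's implication $(2)\Rightarrow(1)$ delivers $(\A,a)\bisim_k^\#(\B,b)$ directly. For the $(\Rightarrow)$ direction, any winning strategy for Duplicator in $\G_k^\#((\A,[a]),(\B,[b]))$ forces the initial-position check of the game to pass, which in particular forces $a$ and $b$ to satisfy the same unary predicates; by the preceding observation this gives $\bb C_0(\A, [a]) \cong \bb C_0(\B, [b])$, and then the lemma's implication $(1)\Rightarrow(2)$ yields $\Ck(\A,a) \cong \Ck(\B,b)$. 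I do not expect any substantial obstacle here: the theorem is essentially a corollary of the lemma, with the only non-trivial content being the identification of $\bb C_k(\A, [a])$ with $\Ck(\A, a)$ and the description of $\bb C_0$ at a one-element chain.
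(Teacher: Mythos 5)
Your proof is correct and takes essentially the same route as the paper: the paper also obtains the theorem by instantiating Lemma~\ref{lem:graded_bisimilarity_isomorphism_strings} at $s=[a]$, $t=[b]$ and observing that $a \bisim_k^{\#} b$ already forces $\bb C_0(\A,a)\cong\bb C_0(\B,b)$. Your additional bookkeeping (identifying $\bb C_k(\A,[a])$ with $\Ck(\A,a)$ and describing $\bb C_0$ on a one-element chain) is accurate and simply makes explicit what the paper leaves implicit.
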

\begin{proof}
    Apply Lemma~\ref{lem:graded_bisimilarity_isomorphism_strings} with $s = [a], t = [b]$. In this case, condition (1) reduces to $a \bisim_k^{\#} b$ since this already implies that $\bb C_0(\A, a) \cong \bb C_0(\B, b)$.
\end{proof}

From the side of logic, a routine adaptation of \cite[Proposition 4.11]{aceto2010resource} proves a Hennessy-Milner-type result analogous to Theorem~\ref{thm:hennessy-milner}. Let $\logick^\#$ denote the fragment of $\logic^\#$ of modal depth at most $k$. Given $\Ap, \Bp \in \Structpointed$, we write $\Ap \equiv_k^{\#} \Bp$ to mean that $\Ap\models\varphi$ iff $\Bp\models\varphi$ for all $\varphi$ in $\logick^{\#}$.

\begin{proposition}\label{prop:hm_for_graded_modalities}
Let $\sigma$ be a relational signature with $E \in \sigma$ and let $(\A,a)$, $(\B,b) \in \Structpointed$. Assume that $\sigma$ is finite or $\A$ and $\B$ are finitely branching. Then 
$(\A,a) \bisim_k^{\#} (\B,b)$ if and only if $(\A,a) \equiv_k^{\#} (\B,b)$.
\end{proposition}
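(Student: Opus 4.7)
The plan is to mirror the proof of Theorem~\ref{thm:hennessy-milner}, adapting both directions to handle the graded modality $\Diamond_n$. The atomic, negation, and conjunction cases go through verbatim, so the novelty lies only in the treatment of $\Diamond_n$ and in the construction of a winning strategy from graded logical equivalence.

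For the left-to-right direction, I would assume $(\A,a) \bisim_k^{\#} (\B,b)$ via a winning strategy for Duplicator in $\G_k^{\#}((\A,[a]),(\B,[b]))$ and proceed by structural induction on $\varphi$. The only new case is $\varphi = \Diamond_n \psi$: at any reachable position $(s,t)$ where Duplicator is still winning, Duplicator's move supplies a bijection $\theta \colon \acc^\A(\epsilon(s)) \xrightarrow{\cong} \acc^\B(\epsilon(t))$ such that Duplicator continues to win from $(s.a', t.\theta(a'))$ for every $a'$. If $\A, s \models \Diamond_n \psi$, then the set $U \coloneqq \{a' \in \acc^\A(\epsilon(s)) \mid \A, s.a' \models \psi\}$ satisfies $|U| \geq n$. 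By the inductive hypothesis applied in each continuation, $\theta(U) \subseteq \{b' \in \acc^\B(\epsilon(t)) \mid \B, t.b' \models \psi\}$, and since $\theta$ is a bijection this image has cardinality at least $n$; hence $\B, t \models \Diamond_n \psi$. The converse implication is symmetric via $\theta^{-1}$.

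For the right-to-left direction, I would define, for sequences $s, t$ of equal length at most $k - j + 1$, $s \, Z_j \, t$ iff for every $\varphi \in \logic^{\#}$ of modal depth at most $j$, $\A, s \models \varphi \iff \B, t \models \varphi$, and then show that $(Z_j)_{j \leq k}$ furnishes a winning strategy. The only nontrivial requirement is that, given $s \, Z_j \, t$ with $j > 0$, we can produce a bijection $\theta \colon \acc^\A(\epsilon(s)) \xrightarrow{\cong} \acc^\B(\epsilon(t))$ such that $(s.a') \, Z_{j-1} \, (t.b')$ for every $(a', b') \in \Gamma(\theta)$. To build such a $\theta$, partition $\acc^\A(\epsilon(s))$ into equivalence classes under the relation ``$a_1 \sim a_2$ iff $s.a_1$ and $s.a_2$ satisfy the same formulas of modal depth at most $j-1$'', and similarly $\acc^\B(\epsilon(t))$. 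Under finite branching the classes are finite in number, so for each class $C$ one builds a characteristic formula $\chi_C$ of modal depth at most $j-1$ as a finite conjunction of formulas distinguishing $C$ from every other class. Then $\A, s \models \Diamond_{|C|} \chi_C \wedge \neg \Diamond_{|C|+1} \chi_C$, a formula of modal depth at most $j$, and by $s \, Z_j \, t$ the same holds on the $\B$-side; consequently, the $\B$-successors satisfying $\chi_C$ form a single equivalence class of size exactly $|C|$. Pairing corresponding classes and choosing any bijection within each pair will assemble to the required $\theta$.

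The hard part will be the existence of the characteristic formulas $\chi_C$ together with the cardinality-matching argument. Under the finite-branching hypothesis the construction above is straightforward, since finiteness of the two sets of successors makes both the number of classes and their sizes finite. In the finite-$\sigma$ case with infinitely branching structures, one must first argue by induction on $j$ that there are only finitely many equivalence classes of successors at each modal depth (so that the characteristic conjunctions remain finite), and then match possibly infinite classes via a standard cardinality argument, as in the line of proof of~\cite[Proposition~4.11]{aceto2010resource}.
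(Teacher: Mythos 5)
The paper does not actually spell out a proof of this proposition: it only remarks that the result follows by ``a routine adaptation of \cite[Proposition 4.11]{aceto2010resource}''. Your outline is exactly the standard adaptation one would write, and the forward direction is correct as you state it: the bijection $\theta$ supplied by Duplicator's strategy injects the $\geq n$ witnesses of $\Diamond_n\psi$ on the $\A$-side into witnesses on the $\B$-side, and symmetry via $\theta^{-1}$ handles the converse.

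Two points in the backward direction need attention. First, a repairable one: you define $\chi_C$ using only separating formulas between the \emph{$\A$-side} classes, and then assert that the $\B$-successors satisfying $\chi_C$ ``form a single equivalence class''. That does not follow --- a $\B$-successor can satisfy the finitely many conjuncts of $\chi_C$ without being $\equiv^{\#}_{j-1}$-equivalent to the members of $C$. The standard fix is to form the equivalence classes \emph{jointly} on $\acc^\A(\epsilon(s))\sqcup\acc^\B(\epsilon(t))$ and build $\chi_X$ to separate each joint class $X$ from all the others on both sides; then $\Diamond_{\#X_\A}\chi_X\land\lnot\Diamond_{\#X_\A+1}\chi_X$ really does pin down $X_\B$ and its cardinality.

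Second, a genuine gap: your last paragraph's plan for the case ``$\sigma$ finite but $\A,\B$ not finitely branching'' cannot work. Unlike \logick, the graded fragment $\logick^{\#}$ has infinitely many pairwise inequivalent formulas of each positive depth even over a finite signature ($\Diamond_1\top,\Diamond_2\top,\dots$), so the claim that there are only finitely many successor classes at each depth is false, and the characteristic conjunctions need not be finite. Worse, no ``cardinality argument'' can match two classes of distinct infinite cardinalities: if $a$ has $\aleph_0$ dead-end successors and $b$ has $2^{\aleph_0}$ of them (over $\sigma=\{\acc\}$), then $(\A,a)\equiv_1^{\#}(\B,b)$, since every grade $n$ is a finite integer, yet no bijection $\acc^\A(a)\cong\acc^\B(b)$ exists, so $(\A,a)\not\bisim_1^{\#}(\B,b)$. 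So the right-to-left implication actually fails under the ``$\sigma$ finite'' disjunct alone; the hypothesis that makes the proposition true is finite branching, and your proof should be stated (and is correct, with the joint-class fix) under that hypothesis. The disjunctive hypothesis carried over from Theorem~\ref{thm:hennessy-milner} does not survive the passage to graded modalities, and you should flag this rather than defer to a cardinality argument that does not exist.
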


\begin{corollary}\label{coro:graded_types_are_iso_classes}
    Let $(\A,a), (\B,b) \in \Structpointed$. If $\sigma$ is finite or $\A$ and $\B$ are finitely branching, then $(\A,a) \equiv_k^{\#} (\B,b)$ if and only if $\Ck(\A,a) \cong \Ck(\B,b)$.
\end{corollary}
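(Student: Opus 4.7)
The plan is to obtain the corollary as an immediate consequence of two already-proved equivalences, simply composing them. Specifically, Theorem~\ref{thm:graded_bisimilarity_by_iso_classes} gives us, without any finiteness hypothesis, the equivalence
\[ (\A, a) \bisim_k^{\#} (\B, b) \iff \Ck(\A, a) \cong \Ck(\B, b), \]
while Proposition~\ref{prop:hm_for_graded_modalities}, under precisely the hypothesis stated in the corollary (i.e.\ $\sigma$ finite or both structures finitely branching), yields
\[ (\A, a) \bisim_k^{\#} (\B, b) \iff (\A, a) \equiv_k^{\#} (\B, b). \]
Chaining these two bi-implications immediately produces the desired statement. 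So the proof will fit in a single sentence.

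There is no real obstacle here, since the corollary is literally the composition of a comonadic characterisation of graded bisimilarity (the ``structural'' side) with the Hennessy-Milner correspondence between graded bisimilarity and indistinguishability in $\logick^\#$ (the ``logical'' side). The only place where a finiteness hypothesis is needed is in invoking Proposition~\ref{prop:hm_for_graded_modalities}; Theorem~\ref{thm:graded_bisimilarity_by_iso_classes} is already unconditional, which explains why the two sides of the final equivalence are not on equal footing and why the hypothesis of the corollary appears exactly where it does. I therefore expect the proof to read essentially as: ``Combine Theorem~\ref{thm:graded_bisimilarity_by_iso_classes} with Proposition~\ref{prop:hm_for_graded_modalities}.''
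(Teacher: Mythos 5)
Your proposal is correct and is exactly the intended argument: the paper states this as a corollary with no separate proof precisely because it is the composition of Theorem~\ref{thm:graded_bisimilarity_by_iso_classes} with Proposition~\ref{prop:hm_for_graded_modalities}, with the finiteness hypothesis entering only through the latter. Your observation about where and why the hypothesis is needed is also accurate.
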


\begin{remark}\label{rem:isomorphisms_where}
    In general, a game comonad $G$ captures the indistinguishability of two objects $X$ and $Y$ for its corresponding graded or counting logic through their isomorphism in the category $\Kl(G)$. Here, instead, we have shown that $\Ap \bisim_k^\# \Bp$ if and only if $\Ck \Ap \cong \Ck \Bp$ as pointed $\sigma$-structures. This simplification is possible thanks to the idempotence of $\Ck$. Simply put, for an idempotent comonad $G$ on a category $\cat{E}$, the Kleisli and EM categories are equivalent, and as we have already mentioned, the EM category is a full subcategory of $\cat{E}$. Hence $\Ck \Ap \cong \Ck \Bp$ in $\Stp$ if and only if $\Ap \cong \Bp$ in $\Kl(\Ck)$, if and only if $\Ck \Ap \cong \Ck \Bp$ in $\EM(\Ck)$.
\end{remark}

As an application of this equivalence, the fact that \logic admits a description through an idempotent comonad readily implies a homomorphism-counting theorem akin to the classic theorem of Lovász~\cite{lovasz1967operations} and the more recent theorems of Grohe~\cite{grohe2020counting} and Dvořak~\cite{dvovrak2010recognizing}. This is what we now prove, essentially as a Corollary of \cite[Corollary 14]{dawar2021lovasz}. For this proof, we will make use again of the language of adjunctions.

\begin{remark}\label{rem:ppml_comonad_restricts_to_finite}
    Let $\Stpf$ denote the full subcategory of $\Stp$ on the finite structures, and let $\EMf(\Ck)$ denote the full subcategory of $\EM(\Ck)$ on the finite pp-trees of height at most $k$ (thus excluding the trees with an infinite number of branches of bounded length). To ease notation, in what follows we will write $\Spf$ for $\Stpf$ and $\EMf$ for $\EMf(\Ck)$.

    Notice that, since the $k$-unravelling of a finite structure is again finite, the comonadic adjunction $U \dashv F$ of $\Ck$ restricts to an adjunction
\[\begin{tikzcd}[ampersand replacement=\&]
	{\Spf \coloneqq \Stpf} \&\& {\EMf \coloneqq \EMf(\Ck)}
	\arrow[""{name=0, anchor=center, inner sep=0}, "{F_\text{f}}", curve={height=-18pt}, from=1-1, to=1-3]
	\arrow[""{name=1, anchor=center, inner sep=0}, "{U_\text{f}}", curve={height=-18pt}, hook', from=1-3, to=1-1]
	\arrow["\dashv"{anchor=center, rotate=90}, draw=none, from=1, to=0]
\end{tikzcd}\]
    between the corresponding full subcategories of finite objects. The resulting comonad $\Ck^\text{f} \coloneqq U_\text{f}F_\text{f}$ is the restriction of $\Ck$ to $\Spf$.
\end{remark}

Recall that given a locally small category $\cat{C}$ and $c, c' \in \cat{C}$, we write $\cat{C}(c, c')$ for the set of morphisms from $c$ to $c'$ and hence $\#\cat{C}(c, c')$ denotes the cardinality of that set.

\begin{theorem}\label{thm:lovasz_for_ppml}
    Let $\Ap, \Bp \in \Structpointed$ be finite $\sigma$-structures. Then $\Ap \equiv_k^\# \Bp$ if and only if
    $$\#\Spf((\+T, u), \Ap) = \# \Spf((\+T, u), \Bp)$$
    for all finite pp-trees $(\+T, u)$ of height at most $k$.
\end{theorem}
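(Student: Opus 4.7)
The plan is to reduce the theorem to the abstract homomorphism-counting result \cite[Corollary 14]{dawar2021lovasz}, using as intermediate steps Corollary~\ref{coro:graded_types_are_iso_classes} (which identifies $\equiv_k^\#$ with isomorphism of \logic-unravellings) and the comonadic adjunction of Remark~\ref{rem:ppml_comonad_restricts_to_finite} (which bridges hom-counts in $\Spf$ with hom-counts in $\EMf$).

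The first step would be to translate the hom-counting condition in $\Spf$ into an equivalent condition in $\EMf$. For every finite pp-tree $T$ of height at most $k$ (equivalently, for every $T \in \EMf$) and every $\A \in \Spf$, the comonadic adjunction $U_\text{f} \dashv F_\text{f}$ supplies a natural bijection
$$\Spf(U_\text{f}T, \A) \;\cong\; \EMf(T, F_\text{f}\A) \;=\; \EMf(T, \Ck\A),$$
so the hypothesis of the theorem is equivalent to $\#\EMf(T, \Ck\Ap) = \#\EMf(T, \Ck\Bp)$ for all $T \in \EMf$.

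Next, by Corollary~\ref{coro:graded_types_are_iso_classes} (which applies since finite structures are finitely branching), $\Ap \equiv_k^\# \Bp$ is equivalent to $\Ck\Ap \cong \Ck\Bp$ as pointed $\sigma$-structures, or equivalently as objects of $\EMf$. The forward implication of the theorem then follows immediately, since an isomorphism $\Ck\Ap \cong \Ck\Bp$ in $\EMf$ gives the equality of the corresponding hom-counts. For the reverse direction, the plan is to invoke \cite[Corollary 14]{dawar2021lovasz}, which in its appropriate categorical generality states that, for a well-behaved comonad on a category of finite structures, two coalgebras are isomorphic if and only if they receive the same number of homomorphisms from every coalgebra. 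Applied to $\Ck\Ap$ and $\Ck\Bp$ in $\EMf$, this yields $\Ck\Ap \cong \Ck\Bp$, whence $\Ap \equiv_k^\# \Bp$ by Corollary~\ref{coro:graded_types_are_iso_classes} again.

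The main (and essentially only) point to check is that the hypotheses of \cite[Corollary 14]{dawar2021lovasz} are met in our setting. Here idempotence of $\Ck$ is decisive: it guarantees that $\EMf$ is a coreflective full subcategory of $\Spf$ and that every object of $\EMf$ is already of the form $\Ck\A$ for some $\A \in \Spf$, so the abstract Lovász-type argument applies without further modification.
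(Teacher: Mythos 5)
Your proposal is correct and follows essentially the same route as the paper: translate hom-counts from $\Spf$ to $\EMf$ via the adjunction $U_\text{f} \dashv F_\text{f}$, identify $\equiv_k^\#$ with isomorphism of unravellings via Corollary~\ref{coro:graded_types_are_iso_classes}, and conclude with the combinatoriality of $\EMf$ from \cite[Corollary 14]{dawar2021lovasz} (the paper also cites Remark 15 there to verify the hypotheses). One small correction of emphasis: idempotence is not what makes the hypotheses of the Lovász-type result hold (that result applies equally to the non-idempotent EF and pebbling comonads); rather, it is what lets you state the isomorphism condition in $\EMf$ as plain isomorphism of $\sigma$-structures instead of isomorphism in the Kleisli category, as the paper notes in Remark~\ref{rem:isomorphisms_where}.
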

\begin{proof}
    By \cite[Corollary 14]{dawar2021lovasz} and \cite[Remark 15]{dawar2021lovasz}, $\EMf$ is \define{combinatorial}, meaning that for all $X, Y \in \EMf$, $X \cong Y$ if and only if $\# \EMf(Z, X) = \# \EMf(Z, Y)$ for all $Z \in \EM^\text{f}$.\footnote{It is arguably the most fundamental fact of Category Theory that in any category $\cat{C}$, $X \cong Y$ if and only if $\cat{C}(-,X) \cong \cat{C}(-,Y)$ in the sense of a natural isomorphism between functors. In this sense, a combinatorial category is one in which it is enough to count the number of elements in each set of the form $\cat{C}(Z, X)$ in order to determine the functor $\cat{C}(-,X)$ up to isomorphism.}
    Consider now the following chain of equivalences:
\begin{align*}
    \Ap \equiv_k^\# \Bp &\iff \Ap \bisim_k^\# \Bp \tag{Prop.~\ref{prop:hm_for_graded_modalities}, $\A, \B$ finite} \\
    &\iff F_\text{f}\Ap \cong F_\text{f}\Bp \tag{Thm.~\ref{thm:graded_bisimilarity_by_iso_classes}, Rmk.~\ref{rem:isomorphisms_where}} \\
    &\iff \forall X \in \EMf : \# \EMf(X, F_\text{f}\Ap) = \# \EMf(X, F_\text{f}\Bp) \tag{$\EMf$ combinatorial}\\
    &\iff \forall X \in \EMf : \# \Spf(U_\text{f}X, \Ap) = \# \Spf(U_\text{f}X, \Bp) \tag{$U_\text{f} \dashv F_\text{f}$}\\
\end{align*}
where in the last line we have used the restricted adjunction of Remark~\ref{rem:ppml_comonad_restricts_to_finite}.
This is exactly what we wished to prove, since finite pp-trees of height at most $k$ are precisely the objects of $\EMf$ and the left adjoint $U_\text{f}$ is simply the inclusion of $\EMf$ in $\Spf$.
\end{proof}

We note in passing that the proof given above seems simpler than the one given for the analogous result for \bml \cite[Theorem 31]{dawar2021lovasz}, which we recover as a particular case for unimodal choices of $\sigma$.

\subsection{The pp-tree-model Property}\label{subsec:expressivity}

In the same way one can reason with games or with bisimulations to conclude that a certain property is not expressible within a logic such as \logic, one can also do so through Corollary~\ref{coro:full_logical_equivalence}.

\begin{definition}
    Following the literature~\cite{abramsky2022hybrid,abramsky2022emerging, abramsky2022preservation}, if $G$ is the comonad arising from an arboreal cover of a category $\cat{E}$, we say that $X, Y \in \cat{E}$ are \define{bisimilar} iff there exists a span of open pathwise embeddings $GX \leftarrow Z \to GY$ in $\EM(G)$. We say that $G$ has the \define{bisimilar companion property} if $GX$ is bisimilar to $X$ for all $X \in \cat{E}$.
\end{definition}

In~\cite{abramsky2022preservation} comonads (and arboreal covers) that satisfy either idempotence or the bisimilar companion property are referred to as \emphat{tame}, and the good properties that these comonads enjoy seem to reflect the fact that they correspond to less expressive logics. First notice that if $G$ is idempotent then it satisfies the bisimilar companion property: given $X \in \cat{E}$, just take the span of open pathwise embeddings in $\EM(G)$ to be $G G X \xleftarrow{\delta_X} G X = G X$.\footnote{Here we are again using the fact that $\EM(G)$ can be identified with a full subcategory of $\cat{E}$ thanks to idempotence. Contrast this with the language used in e.g.\ \cite[Prop. 5.4]{abramsky2022preservation}.}

However, the converse is not true, as is exemplified by the comonads for guarded fragments~\cite{abramsky2021comonadic}.
To understand why idempotence is even stronger than the bisimilar companion property, we can make use of the graded logic $\logic^\#$ introduced in Section~\ref{sec:graded_modalities}.

\begin{corollary}\label{coro:bisimilar-companion-ppml}
    $\Ap \bisim_k^\# \Ck \Ap$ for all $\Ap \in \Structpointed$. Thus,
    if $\Ap$ is finitely branching or $\sigma$ is finite, $\Ap \equiv_k^\# \Ck \Ap$, and hence also $\Ap \equiv_k \Ck \Ap$.
\end{corollary}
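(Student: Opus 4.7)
The plan is to deduce this corollary directly from the machinery already established, essentially reducing it to the idempotence of $\Ck$. Specifically, Theorem~\ref{thm:graded_bisimilarity_by_iso_classes} characterises graded bisimilarity as isomorphism of $\Ck$-unravellings, so to prove $\Ap \bisim_k^\# \Ck \Ap$ it suffices to exhibit an isomorphism $\Ck \Ap \cong \Ck \Ck \Ap$ in $\Stp$.

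Such an isomorphism is exactly the content of Proposition~\ref{prop:Ck_is_idempotent}: the comultiplication $\delta_\Ap \colon \Ck \Ap \to \Ck \Ck \Ap$ is a natural isomorphism (with inverse $\Ck \epsilon_\Ap$). Instantiating Theorem~\ref{thm:graded_bisimilarity_by_iso_classes} with $\Bp = \Ck \Ap$ then immediately yields the first assertion $\Ap \bisim_k^\# \Ck \Ap$.

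For the second assertion, assuming $\sigma$ is finite or $\Ap$ is finitely branching, I would first observe that $\Ck \Ap$ is then also finitely branching (each node $[a_0,\dots,a_\ell]$ of $\Ck \Ap$ has as its $E$-successors the sequences $[a_0,\dots,a_\ell, a_{\ell+1}]$ with $a_\ell \prec a_{\ell+1}$, hence its out-degree is bounded by the out-degree of $a_\ell$ in $\A$). This ensures the hypotheses of Proposition~\ref{prop:hm_for_graded_modalities} hold for the pair $\Ap, \Ck\Ap$, so bisimilarity upgrades to graded logical equivalence: $\Ap \equiv_k^\# \Ck \Ap$.

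Finally, the fragment inclusion $\logick \subseteq \logick^\#$ (obtained by reading $\Diamond \varphi$ as $\Diamond_1 \varphi$) lets us restrict the equivalence to get $\Ap \equiv_k \Ck \Ap$. There is no genuine obstacle here; the only care needed is the routine verification that finite branching is preserved by the unravelling construction, so that Proposition~\ref{prop:hm_for_graded_modalities} is applicable.
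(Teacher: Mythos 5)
Your proof is correct and follows essentially the same route as the paper: idempotence of $\Ck$ gives $\Ck\Ap \cong \Ck\Ck\Ap$, and Theorem~\ref{thm:graded_bisimilarity_by_iso_classes} (together with Proposition~\ref{prop:hm_for_graded_modalities} under the finiteness hypotheses) converts this into graded bisimilarity and then graded logical equivalence. Your explicit check that finite branching of $\A$ is inherited by $\Ck\Ap$ is a detail the paper leaves implicit, and it is verified correctly.
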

\begin{proof}
    Since $\Ck$ is idempotent, $\Ck \Ap \cong \Ck \Ck \Ap$ for all $\Ap \in \Structpointed$, hence the result follows by Corollary~\ref{coro:graded_types_are_iso_classes}.
\end{proof}

Note that the argument does not depend on the concrete description of open pathwise embeddings as bounded morphisms. Instead, the result follows immediately from the fact that $\Ck$ is idempotent, using only abstract notions.

Corollary~\ref{coro:bisimilar-companion-ppml} allows us to derive many expressivity results about \logic. One such result says that \logic enjoys a \define{pp-tree-model property} which generalises \bml's tree-model property:
\begin{corollary}\label{coro:coalgebra-model_property}
    A $\logic^\#$ formula $\phi$ is satisfiable if and only if it is satisfied by a pp-tree of finite height. In particular this also holds for \logic formulas.
\end{corollary}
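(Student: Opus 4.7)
The plan is to observe that this is an essentially immediate consequence of Corollary~\ref{coro:bisimilar-companion-ppml}, once combined with the easy direction of the graded Hennessy-Milner property. The nontrivial direction is the ``only if'', since the converse is obvious. So suppose $\phi$ is a $\logic^\#$ formula of modal depth at most $k$ and $\Ap \in \Stp$ satisfies $\phi$. I would then consider the pp-tree $\Ck\Ap$, which by construction has height at most $k$ and hence finite height, and argue that $\Ck\Ap \models \phi$.

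To justify the transfer of satisfaction from $\Ap$ to $\Ck\Ap$, the key observation is that $\Ap \bisim_k^\# \Ck\Ap$ holds \emph{unconditionally} (this is the first assertion of Corollary~\ref{coro:bisimilar-companion-ppml}, which follows purely from the idempotence of $\Ck$ via Theorem~\ref{thm:graded_bisimilarity_by_iso_classes}, without any finiteness hypothesis). Then I would invoke one direction of the Hennessy-Milner property for graded modalities, namely that $\Ap \bisim_k^\# \Bp$ implies $\Ap$ and $\Bp$ agree on all formulas of $\logick^\#$. This direction is proven by a routine structural induction on $\phi$ playing out Duplicator's winning strategy in $\G_k^\#(\Ap,\Bp)$, and it does \emph{not} require $\sigma$ to be finite nor the structures to be finitely branching—unlike the converse, which is exactly the point where those hypotheses enter the proof of Proposition~\ref{prop:hm_for_graded_modalities}.

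The one subtle point—and the main thing to be careful about—is that Corollary~\ref{coro:bisimilar-companion-ppml} as stated only concludes $\Ap \equiv_k^\# \Ck\Ap$ under a finiteness hypothesis, so it cannot be applied verbatim to a completely arbitrary $\Ap$. The remedy is to bypass the full biconditional and use only the unconditional bisimilarity $\Ap \bisim_k^\# \Ck\Ap$ together with the easy one-way implication from the graded HM property discussed above. With this in hand, $\Ap \models \phi$ gives $\Ck\Ap \models \phi$, and $\Ck\Ap$ is a pp-tree of height at most $k$, as required.

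Finally, the statement for \logic follows by the inclusion $\logic \subseteq \logic^\#$ (any \logic formula is trivially a $\logic^\#$ formula, using $\Diamond = \Diamond_1$), so a \logic formula that is satisfiable is in particular a satisfiable $\logic^\#$ formula, and the pp-tree of finite height supplied by the previous argument witnesses its satisfiability as a \logic formula as well.
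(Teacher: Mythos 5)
Your proof is correct, and it reaches the same conclusion through the same basic mechanism (idempotence of $\Ck$ giving $\Ap \bisim_k^\# \Ck\Ap$, then a graded Hennessy--Milner transfer), but you discharge the finiteness side condition differently from the paper. The paper's proof is shorter: it first observes that one may assume without loss of generality that $\sigma$ consists only of $\acc$ and the symbols occurring in $\phi$ — satisfaction of $\phi$ is insensitive to the interpretations of the remaining symbols, and a pp-tree over the reduced signature expands to one over the full signature by empty interpretations — so that $\sigma$ becomes finite and Corollary~\ref{coro:bisimilar-companion-ppml} applies as a black box to give $\Ap \equiv_k^\# \Ck\Ap$. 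You instead keep $\sigma$ arbitrary and open up Proposition~\ref{prop:hm_for_graded_modalities}, correctly noting that the implication from $\bisim_k^\#$ to agreement on $\logick^\#$ formulas is the direction whose proof needs neither finiteness of $\sigma$ nor finite branching (those hypotheses are only used to assemble the distinguishing formula in the converse). Both routes are sound; the paper's buys brevity by a one-line WLOG reduction, while yours avoids any signature manipulation at the cost of relying on an unconditional strengthening of one direction of Proposition~\ref{prop:hm_for_graded_modalities} that the paper states only under the finiteness hypothesis — a claim you would need to justify explicitly (a routine induction on the game), since it is not literally quotable from the stated results.
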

\begin{proof}
    Let $k$ be the modal depth of $\phi$. Without loss of generality, we can assume that $\sigma$ contains only the relation symbols in $\phi$ together with $\acc$, hence it is finite. Corollary~\ref{coro:bisimilar-companion-ppml} then implies that $\phi$ is satisfiable if and only if it is satisfiable in some $\Ck$-coalgebra. Thus any $\phi$ is satisfiable if and only if it is satisfied by a pp-tree of finite height.
\end{proof}
More generally, any idempotent game comonad will imply some ``coalgebra-model property'' for its corresponding logic.

This allows us to prove that many properties are not \logic-expressible, e.g.\ the property ``in the interpretation of $R$, there is a tuple that is not an $\acc$-chain'' for some $R \in \overline{\sigma}$. This is obviously the case for many structures and yet it cannot be true of any pp-tree; thus it is not expressible in \logic.

\subsection{Canonical Models and the Hennessy-Milner Property}\label{sec:chandra_merlin}

Following the general relationship between coalgebras and conjunctive queries presented in~\cite{abramsky2021relating}, we can think of finite pp-trees as reifications of \logicpos formulas

via a Chandra-Merlin-like correspondence~\cite{chandra1977optimal}.
In this way, the comonadic formalism leads to an alternative proof of Theorem~\ref{thm:hennessy-milner} (1) for finite structures.

In adapting the idea from First Order Logic, we must be careful with the fact that in \logic not all positive formulas are satisfiable, and hence not all of them have a canonical model: we must restrict to the well nested formulas.

Given a finite pp-tree $\T$ of height $k$, we wish to construct a \logicpos formula $\nu(\T)$ such that for all $\Ap \in \Stp$, $\Ap \models \nu(\T)$ if and only if there exists a morphism $\T \to \Ap$. The construction deviates from that of canonical modal conjunctive queries for \bml \cite[Section 8.3]{abramsky2021relating} since the presence of relations over paths implies that we cannot write $\nu(\T)$ recursively in terms of $\nu(\T')$ for each of the subtrees of $\T$. We solve this by a technique analogous to Lemma~\ref{lem:graded_bisimilarity_isomorphism_strings}, which involves generalising to formulas $\nu(\T,s)$ where $s$ is a \emphat{stem} of $\T$, in the sense of the definition below.

\begin{definition}
    Given a pp-tree $\T$, we say that an $E$-chain $s$ in $\T$ is a \define{stem} of $\T$ if it is a prefix of all branches of $\T$. Moreover, given $w \in |\T|$ we define $\T^w$ as the embedded sub-pp-tree of $\T$ containing all points of $\T$ that are comparable with $w$ in the partial order $(\acc^\T)^*$.
\end{definition}
Notice that given a pp-tree $\T$ and $v \in |\T|$, $\T_v$ is the maximal stem of $\T^v$. These notions are related to Definition~\ref{def:unravelling_at_a_chain}: given $\A \in \St$ and a chain $s$ in $\A$, $\Ck(\A,s)$ can be equivalently defined as $(\C_{k+|s|-1}(\A, s(1)))^s$.
    
\begin{definition}
    Let $\T$ be a pp-tree and $s$ be a stem of $\T$. Suppose that there are only finitely many non-empty interpretations in $\T$, i.e.\ $\{R \in \bsigma \mid \T^R \neq \varnothing\}$ is finite.%
    \footnote{This condition is necessary since we are working with finitary conjunctions only. In particular, this is trivial when $\sigma$ is finite.}
    Then we define inductively the formula\footnote{We define $\nu(\T,s)$ only up to a choice of ordering on $\bsigma$ and on the successor sets, but this is immaterial for the current discussion. Also note that the empty conjunction is taken to be syntactically equal to $\top$.}
    $$\nu(\T, s) \coloneqq \bigwedge\{R\in\bsigma\mid \T,s\models R\} \land \bigwedge_{w\in\acc^\T(\epsilon(s))} \Diamond \nu(\T^w, s.w).$$
    Since $\T$ is a finite tree, the definition is well-founded, and moreover the resulting formula is clearly in \logicposk where $k$ is the height of $\T$. When $s=[\bpt]$, we denote $\nu(\T, [\bpt])$ by $\nu(\T)$.
\end{definition}

\begin{lemmarep}
    Let $\T$ be a finite pp-tree with finitely many non-empty interpretations. Let $s$ be a stem of $\T$ and let $\Ap \in \Stp$. If there exists a morphism $f: \T \to \Ap$, then $\A, f(s) \models \nu(\T,s)$. Conversely, if there exists an $E$-chain $t$ in $\A$ starting at $\bpa$ with $|t| = |s|$ such that $\A, t \models \nu(\T, s)$, then there exists a morphism $f: \T \to \Ap$ such that $f(s) = t$.
\end{lemmarep}
\begin{proof}
    Let $k$ be the height of $\T$. We proceed by induction on the parameter $q \coloneqq k - |s| + 1$. If $q=0$, then $s$ is a branch of $\T$, hence since it is also a stem of $\T$, it must be that $\T$ is a pp-path. On the other hand $\nu(\T, s)$ reduces to $\bigwedge\{R \in \bsigma \mid \T, s \models R\}$, hence the result is immediate.

    For the inductive step, let $q + 1 = k - |s| + 1$ and suppose the result holds for all pairs $(\T', s')$ where $\T'$ is a pp-tree of height $k'$ and $s'$ is a stem of $\T'$ such that $k' - |s'| + 1 \leq q$. First, suppose there exists a morphism $f: \T \to \Ap$. We wish to show that $\A, f(s) \models \phi_1 \land \phi_2$ where $\phi_1 \coloneqq \bigwedge\{R \in \bsigma \mid \T, s \models R\}$ and $\phi_2 \coloneqq \bigwedge_{w \in E^\T(\epsilon(s))} \nu(\T^w, s.w)$. Clearly $\A, f(s) \models \phi_1$ since $f$ preserves relations. As for $\phi_2$, for each $w \in E^\T(\epsilon(s))$ consider the restriction $f|_{\T^w}: \T^w \to \Ap$. Setting $\T' = \T^w$ and $s' = s.w$, by the inductive hypothesis we obtain that $\A, f(s.w) \models \nu(\T^w, s.w)$. Since $f(s.w) = f(s).f(w)$, $f(w)$ is a witness showing that $\A, f(s) \models \Diamond \nu(\T^w, s.w)$.

    In the other direction, suppose now that $\A, t \models \nu(\T, s)$ where $t$ is an $E$-chain starting at $\bpa$ with $|t| = |s|$. We wish to construct a morphism $f: \T \to \Ap$ such that $f(s) =t$. For each $w \in E^\T(\epsilon(s))$, let $a^w$ be some successor of $\epsilon(t)$ such that $\A, t.a^w\models \nu(\T^w, s.w)$. Then by the inductive hypothesis there exists a morphism $f^w: \T^w \to \Ap$ such that $f(s.w) = t.a^w$, in particular $f(s) = t$. Since the family of partially defined morphisms $(f^w)_w$ matches on their intersection and moreover $\bigcup_{w} \T^w = \T$, this defines a unique morphism $f: \T \to \Ap$ such that $f|_{\T^w} = f^w$ for all $w$.
\end{proof}

We write $\Ap \to \Bp$ to mean that there exists a homomorphism from $\Ap \to \Bp$.

\begin{corollary}\label{coro:chandra_merlin_for_ppml_1}
    For every finite pp-tree $\T$ with finitely many non-empty interpretations and for all $\Ap \in \Stp$, $\T \to \Ap$ if and only if $\Ap \models \nu(\T)$.
\end{corollary}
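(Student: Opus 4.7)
The plan is to derive this corollary by specialising the preceding lemma to the choice of stem $s = [\bpt]$ (the singleton chain consisting just of the root of $\T$) and the choice of $E$-chain $t = [\bpa]$ in $\A$ (the singleton at the basepoint). Note that these satisfy $|s| = |t| = 1$, which is the hypothesis needed to invoke the converse direction of the lemma, and that by construction $\nu(\T) = \nu(\T, [\bpt])$ and $\Ap \models \nu(\T)$ unfolds definitionally to $\A, [\bpa] \models \nu(\T, [\bpt])$.

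First I would handle the forward direction. Suppose $\T \to \Ap$, witnessed by some pointed morphism $f: \T \to \Ap$. Since $f$ preserves basepoints, $f(\bpt) = \bpa$, so viewing $[\bpt]$ as the singleton stem of $\T$ we have $f([\bpt]) = [\bpa]$. Then the forward direction of the lemma gives $\A, f([\bpt]) \models \nu(\T, [\bpt])$, that is, $\A, [\bpa] \models \nu(\T)$, which is exactly $\Ap \models \nu(\T)$.

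For the converse, suppose $\Ap \models \nu(\T)$, i.e.\ $\A, [\bpa] \models \nu(\T, [\bpt])$. Since $[\bpa]$ is an $E$-chain in $\A$ starting at $\bpa$ with $|[\bpa]| = |[\bpt]| = 1$, the converse direction of the lemma supplies a morphism $f: \T \to \Ap$ with $f([\bpt]) = [\bpa]$, equivalently $f(\bpt) = \bpa$. Hence $f$ preserves basepoints and is therefore a morphism in $\Stp$, so $\T \to \Ap$. There is no real obstacle here: the only subtlety is checking that $[\bpt]$ qualifies as a stem of $\T$ (it does, since the root is a prefix of every branch) and that the length compatibility $|s|=|t|=1$ is met, after which the result is an immediate instantiation.
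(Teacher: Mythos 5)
Your proposal is correct and matches what the paper intends: the corollary is stated without proof precisely because it is the instantiation of the preceding lemma at the stem $s=[\bpt]$ and the chain $t=[\bpa]$, using that morphisms in $\Stp$ are pointed and that $\Ap \models \nu(\T)$ abbreviates $\A,[\bpa]\models\nu(\T,[\bpt])$. Nothing further is needed.
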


This establishes one of the two directions of the correspondence between \logicpos formulas and finite pp-trees. In the other direction,for each well-nested \logicpos formula $\phi$ we wish to define a pp-tree $\+{M}(\phi)$ such that $\+M(\phi) \to \Ap$ if and only if $\Ap\models\phi$. Again, we must define more generally a pp-tree $\+M(\phi, \T, v)$ where $\phi$ is any, not necessarily well-nested formula in \logicpos, $\T$ is a pp-tree and $v$ is a leaf of $\T$ whose height is large enough with respect to the modal debt of $\phi$ (see Def.~\ref{def:modal_debt}). To this end we will make use of the following operations on pp-trees.
\begin{itemize}
    \item \emph{Adding a tuple to an interpretation.} Given a pp-tree $\T$, a symbol $R \in \bsigma$ of arity $r$ and a point $v \in |\T|$ of height at least $r-1$, we define the pp-tree $\T(R,v)$ by $|\T(R,v)| \coloneqq |\T|$, $R^{\T(R,v)} \coloneqq R^\T \cup \{\last_r(\T_v)\}$ and $R'^{\T(R,v)} := R'^\T$ for all $R' \in \bsigma \setminus \{R\}$.
    \item \emph{Pushout along a common sub-pp-tree.} Given two pp-trees $\T, \T'$ and pointed injective homomorphisms $f: \+U \to \T$ and $g: \+U \to \T'$ whose domain $\+U$ is a pp-tree, we denote by $\T +_{\+U} \T'$ the corresponding pushout in $\Stp$, i.e.\ the colimit of the span $\T \leftarrow \+U \to \T'$. For concreteness, we may assume (by renaming points if necessary) that $|\+U| = |\T| \cap |\T'|$ and set $|\T +_{\+U} \T'| \coloneqq |\T| \cup |\T'|$ and $R^{\T +_{\+U} \T'} \coloneqq R^\T \cup R^{\T'}$ for all $R \in \sigma$.
    \item \emph{Edge creation at a leaf.} Given a pp-tree $\T$ and a leaf $v \in |\T|$, we denote by $\T \cup_v w$ the pp-tree obtained by adding to $\T$ a new point $w$ as a successor of $v$ (we guarantee that $w \not\in |\T|$ by a renaming of points if necessary). Formally, $|\T \cup_v w| = |\T| \cup \{w\}$, $\acc^{\T\cup_v w} = \acc^\T \cup \{(v,w)\}$ and $R^{\T \cup_v w} = R^\T$ for all $R \in \bsigma$.
\end{itemize}

\begin{definition}
    Given a \logicpos formula $\phi$, a pp-tree $\T$ and a leaf $v$ such that $\debt(\phi)$ is at most equal to the height of $v$, we define the pp-tree $\+M(\phi, \T, v)$ inductively as follows:
    \begin{align*}
        \+M(\top, \T, v) &\coloneqq \T \\
        \+M(R, \T, v) &\coloneqq \T(R,v) \tag{$R \in \bsigma$} \\
        \+M(\phi_1 \land \phi_2, \T, v)  &\coloneqq \+M(\phi_1, \T, v) +_\T \+M(\phi_2, \T, v) \\
        \+M(\Diamond \psi, \T, v) &\coloneqq  \+M(\psi, \T \cup_{v} w, w).
    \end{align*}
    Notice that the second clause is well defined because $h \geq \debt(\phi)$, the third clause is well defined because $\T$ is always a sub-pp-tree of $\+M(\phi, \T, s)$ (although it will not in general be an \emphat{embedded} sub-pp-tree), and the fourth clause is well defined because although the formula $\psi$ on the right-hand side has higher debt, the newly added leaf $w$ has higher height as well.

    If $\T = \{*\}$ is the singleton universe with empty interpretations and $\phi$ is a well-nested \logicpos formula, we write $\+M(\phi) \coloneqq \+M(\phi, \{*\}, *)$. Notice that $\+M(\phi)$ is finite and its height coincides with the modal depth of $\phi$.
\end{definition}
    
    Given $\phi, \T$ and $v$ as in the definition above and a morphism $f: \+M(\phi, \T, v) \to \Ap$, since $\T$ is always a (not necessarily embedded) sub-pp-tree of $\+M(\phi, \T, v)$, we may always restrict $f$ to a well-defined morphism $f|_\T: \T \to \Ap$ (in the case $\phi = R \in \bsigma$, this just means that a homomorphism is still a homomorphism if a tuple is erased from the interpretation of $R$ in the domain). The following lemma answers the question of when a morphism $f: \T \to \Ap$ can be \emph{extended} from $\T$ to $\+M(\phi, \T, v)$, i.e.\ whether there exists some $F: \+M(\phi, \T, v) \to \Ap$ such that $F|_\T = f$.

\begin{lemmarep}
    Let $\phi$ be a \logicpos formula, let $\T$ be a finite pp-tree and let $v$ be a leaf of $\T$ such that $\debt(\phi)$ is at most equal to the height of $v$. Then given a morphism $f: \T \to \Ap$, $f$ extends to a morphism $F: \+M(\phi, \T, v) \to \Ap$ if and only if $\A, f(\T_v) \models \phi$.
\end{lemmarep}
\begin{proof}
    We proceed by structural induction on $\phi$. The case $\phi = \top$ is trivial. As for the case $\phi = R \in \bsigma$, the claim is that $f$ extends from $\T$ to $\T(R, v)$ if and only if $\A, f(\T_v) \models R$, which is clearly the case.

    Consider now $\phi = \phi_1 \land \phi_2$. Suppose that $\A, f(\T_v) \models \phi$. Then $\A, f(\T_v) \models \phi_1$ and $\A, f(\T_v) \models \phi_2$. Notice that the height of $v$ is greater than or equal to $\max(\debt(\phi_1), \debt(\phi_2))$, hence by the inductive hypothesis, $f$ extends to morphisms $F_1: \+M(\phi_1,\T,v) \to \Ap$ and $F_2: \+M(\phi_2,\T,v)$. Since $F_1$ and $F_2$ coincide on $\T$, they determine a unique morphism $F: \+M(\phi_1,\T,v) +_\T \+M(\phi_2, \T,v) \to \Ap$.
    
    In the other direction, suppose that $f$ extends to a morphism $F$ as above. Then in particular it extends to morphisms $F_1 \coloneqq F|_{\+M(\phi_1,\T,v)}: \+M(\phi_1,\T,v) \to \Ap$ and $F_2 \coloneqq F|_{\+M(\phi_2,\T,v)}: \+M(\phi_2,\T,v) \to \Ap$, which by the inductive hypothesis implies $\A, f(\T_v) \models \phi_1$ and $\A, f(\T_v) \models \phi_2$.

    Finally, consider the case $\phi = \Diamond \psi$. Suppose that $\A, f(\T_v) \models \phi$, i.e.\ there exists some successor $a'\succ f(v)$ such that $\A, f(\T_v).a' \models \psi$. We extend $f$ in two steps. First, let $\tilde{F}: \T \cup_v w \to \Ap$ be defined by extending $f$ with $\tilde{F}(w) \coloneqq a'$. This is a well defined morphism by definition of $\T \cup_v w$. Notice that $f(\T_v).a' = \tilde{F}((\T\cup_v w)_w)$. Therefore, since $\debt(\psi) = \debt(\phi) + 1$ is at most the height of $w$ in $\T\cup_v w$, by the inductive hypothesis $\tilde{F}$ extends to a morphism $F: \+M(\psi, \T\cup_v w, w) \to \Ap$. This is the morphism we wanted since $F|_\T = f$.

    In the other direction, let $F: \+M(\psi, \T\cup_v w, w) \to \Ap$ be a morphism such that $F|_\T = f$. Then since $\debt(\psi)$ is at most the height of $w$, by the inductive hypothesis we have that $\A, F((\T\cup_v w)_w) \models \psi$. Since $F((\T\cup_v w)_w) = f(\T_v).F(w)$, $F(w)$ witnesses that $\A, f(\T_v) \models \Diamond \psi$.
\end{proof}

\begin{corollary}\label{coro:chandra_merlin_for_ppml_2}
    Let $\phi$ be a well nested \logicpos formula. Then for all $\Ap \in \Stp$, $\Ap \models \phi$ if and only if $\+M(\phi) \to \Ap$.
\end{corollary}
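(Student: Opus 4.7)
My plan is to invoke the preceding lemma with $\T$ chosen as the singleton pp-tree $(\{*\}, *)$ with empty interpretations, and $v \coloneqq *$. First, I will check that the hypothesis of the lemma holds in this case: as $\phi$ is well-nested we have $\debt(\phi) = 0$, which is equal to the height of the unique leaf $*$ (also $0$), so the lemma is applicable. I will also observe that pointed homomorphisms preserve basepoints, so there is a unique morphism $f: (\{*\}, *) \to \Ap$, namely the one determined by $f(*) \coloneqq \bpa$; in particular $\T_v = [*]$ and $f(\T_v) = [\bpa]$.

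Applying the lemma with these choices then yields that $f$ extends to a morphism $F: \+M(\phi, \{*\}, *) = \+M(\phi) \to \Ap$ if and only if $\A, [\bpa] \models \phi$, which by definition of pointed satisfaction is exactly $\Ap \models \phi$. To finish, I will remark that every morphism $G: \+M(\phi) \to \Ap$ automatically restricts on the sub-pp-tree $\{*\}$ (which is a sub-pp-tree of $\+M(\phi)$ as can be read off inductively from the definition of $\+M(-,-,-)$) to the unique pointed homomorphism $f$; hence the existence of an extension of $f$ to $\+M(\phi)$ coincides with the plain existence of some homomorphism $\+M(\phi) \to \Ap$. Combining these two observations gives the desired equivalence. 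I do not anticipate any obstacle, since the whole argument is a direct specialisation of the preceding lemma to the base case $\T = \{*\}$.
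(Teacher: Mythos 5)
Your proposal is correct and matches the paper's (implicit) argument exactly: the corollary is obtained by specialising the preceding lemma to $\T = (\{*\},*)$, $v = *$, noting that well-nestedness gives $\debt(\phi) = 0 \leq$ the height of $*$, that the unique pointed morphism $\{*\} \to \Ap$ sends $*$ to $\bpa$, and that any morphism $\+M(\phi) \to \Ap$ restricts to it. No issues.
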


Corollaries~\ref{coro:chandra_merlin_for_ppml_1} and~\ref{coro:chandra_merlin_for_ppml_2} establish the desired Chandra-Merlin-like correspondence. In particular we obtain the following consequence.

\begin{corollary}\label{coro:de_chandra_merlin}
    Given finite structures $\Ap, \Bp \in \Stp$ over an arbitrary signature $\sigma$ with $E \in \sigma$, the following are equivalent:
    \begin{enumerate}
        \item $\forall \T \in \EM(\Ck^\text{f}). \T \to \Ap \implies \T \to \Bp$
        \item $\Ap \Rrightarrow_k^+ \Bp$.
    \end{enumerate}
\end{corollary}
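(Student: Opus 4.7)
The plan is to exploit the Chandra--Merlin-like correspondence between finite pp-trees and well-nested $\logicpos$ formulas established in Corollaries~\ref{coro:chandra_merlin_for_ppml_1} and~\ref{coro:chandra_merlin_for_ppml_2}. Each direction of the equivalence amounts to translating between the homomorphism-preservation condition on pp-trees and the formula-preservation condition, via the two operations $\phi \mapsto \+M(\phi)$ and $\T \mapsto \nu(\T)$.

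For the implication $(1) \Rightarrow (2)$, I take $\phi \in \logicposk$ with $\Ap \models \phi$ and aim to derive $\Bp \models \phi$. First I would observe that $\phi$ must be well nested, for otherwise it contains an atom $R$ at diamond-nesting depth strictly less than $\arity(R) - 1$, which makes the formula unsatisfiable at any single-element valuation (there are no negations in $\logicpos$ to save such an atom). Hence $\+M(\phi)$ is well defined; it is a finite pp-tree whose height equals the modal depth of $\phi$, so $\+M(\phi) \in \EM(\Ck^{\text{f}})$. Corollary~\ref{coro:chandra_merlin_for_ppml_2} gives $\+M(\phi) \to \Ap$; hypothesis (1) then provides $\+M(\phi) \to \Bp$; and a second application of Corollary~\ref{coro:chandra_merlin_for_ppml_2} concludes $\Bp \models \phi$.

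For the converse $(2) \Rightarrow (1)$, I take $\T \in \EM(\Ck^{\text{f}})$ with $\T \to \Ap$ and aim to produce $\T \to \Bp$. The argument is to form $\nu(\T) \in \logicposk$ (well defined as a formula of modal depth at most the height of $\T$, which is at most $k$), apply Corollary~\ref{coro:chandra_merlin_for_ppml_1} to obtain $\Ap \models \nu(\T)$, invoke hypothesis (2) to conclude $\Bp \models \nu(\T)$, and use Corollary~\ref{coro:chandra_merlin_for_ppml_1} once more to extract $\T \to \Bp$.

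The main obstacle I anticipate is ensuring that $\nu(\T)$ is a genuine finitary formula. Corollary~\ref{coro:chandra_merlin_for_ppml_1} requires $\T$ to have only finitely many non-empty interpretations, which does not follow from the finiteness of $\T$ alone when $\sigma$ is allowed to be infinite. The plan is to resolve this by a preliminary reduction to the case of finite $\sigma$: statement (2) can be verified over each finite subsignature $\sigma' \subseteq \sigma$ separately, since every $\logicpos$-formula uses only finitely many relation symbols; and for statement (1), since $|\T|$ and $|\Bp|$ are finite, the set of candidate pointed maps $|\T|\to|\Bp|$ is a finite set, so a standard compactness argument on the decreasing family of its subsets consisting of homomorphisms over each finite subsignature shows that (1) holds over $\sigma$ iff it holds over every finite subsignature containing $\acc$. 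Once this reduction is made, $\nu(\T)$ becomes a finitary formula of $\logicposk$ and the plan above goes through without modification.
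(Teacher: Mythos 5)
Your proof is correct and takes essentially the same route as the paper's: the forward direction is Corollary~\ref{coro:chandra_merlin_for_ppml_2} applied to $\+M(\phi)$ (with your explicit observation that badly nested positive formulas are unsatisfiable, which the paper leaves implicit), and the converse is Corollary~\ref{coro:chandra_merlin_for_ppml_1} applied to finite-signature truncations of $\T$ followed by a finiteness argument on the set of pointed functions $|\T|\to|\B|$. The paper realises the truncation as sub-pp-trees $\T_\tau$ over the same signature $\sigma$ (emptying all interpretations outside a finite $\tau\subseteq\bsigma$) and derives a contradiction by producing infinitely many distinct such functions, which is the same finite-intersection-property argument you phrase as compactness over finite subsignatures.
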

\begin{proof}
    The implication $(1) \implies (2)$ is immediate by Corollary~\ref{coro:chandra_merlin_for_ppml_2}. For the converse implication, suppose that $(2)$ holds and that there exists a morphism $\T \to \Ap$. For any finite subset $\tau \subseteq \bsigma$, let $\T_\tau$ denote the sub-pp-tree of $\T$ defined by $|\T_\tau| \coloneqq |\T|$, $E^{\T_\tau} \coloneqq E^\T$, $R^{\T_\tau} \coloneqq R^\T$ if $R \in \tau$ and $R^{\T_\tau} \coloneqq \varnothing$ otherwise. Then, by Corollary~\ref{coro:chandra_merlin_for_ppml_1}, we know that $\Ap \models \nu(\T_\tau)$ for all choices of $\tau$. By $(2)$, therefore, $\Bp \models \nu(\T_\tau)$ for all $\tau$, which is to say that $\T_\tau \to \Bp$ for all $\tau$.

    Arguing by contradiction, suppose that $\T \not\to \Bp$. In other words, for all functions $f: |\T| \to |\B|$ that map the root of $\T$ to $\bpb$ there exists some $R \in \sigma$ and some $s \in |\T|^+$ such that $\T,s\models R$ but $\B, f(s)\not\models R$.
    Then pick a morphism $f_0: \T_{\{E\}} \to \Bp$. By the preceding observation, there exists some $R_1 \in \bsigma$ and some $s$ such that $\T, s \models R_1$ but $\B, f_0(s) \not\models R_1$. Now let $f_1$ be some morphism $f_1:\T_{\{E,R_1\}} \to \Bp$. Clearly $f_1 \neq f_0$ since $f_1$ preserves the relation $R_1$. Inductively, given $i \in \N$ and having chosen symbols $R_1,\dots,R_i$ and morphisms $f_{1},\dots,f_{i}$, let $f_{i+1}$ be a morphism $f_{i+1}: \T_{\{E,R_1,\dots,R_{i+1}\}}$ where $R_{i+1}$ is some relation which is not preserved by $f_i$. By construction, $f_{i+1} \not\in \{f_0,\dots,f_i\}$ for all $i \in \N$, hence we have obtained an infinite family of pairwise distinct functions $|\T| \to |\B|$, which is absurd since $|\T|$ and $|\B|$ are finite sets.
\end{proof}

This correspondence between positive formulas and pp-trees allows us to give a simple, alternative proof of the one-way Hennessy-Milner-type property for \logic restricted to the case of finite structures. Once one has internalised the correspondence, the Hennessy-Milner-type property becomes an immediate consequence of the following elementary observation about coalgebras of an arbitrary comonad.

\begin{proposition}\label{prop:coalgebras_and_mappings}
    Let $G$ be a comonad on $\cat{E}$ and let $X, Y \in \cat{E}$. Then there exists a morphism $GX \to Y$ if and only if for all $G$-coalgebras $(Z, \gamma: Z \to GZ)$, if $Z$ maps into $X$ then it also maps into $Y$.
\end{proposition}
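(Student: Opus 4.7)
The plan is to give a short direct argument using the universal properties of the counit and the cofree coalgebra construction.

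For the forward direction, I would assume there exists a morphism $f: GX \to Y$ and fix an arbitrary $G$-coalgebra $(Z, \gamma : Z \to GZ)$ equipped with a morphism $g: Z \to X$. The goal is to produce a morphism $Z \to Y$. The natural candidate is the composite
\[
Z \xrightarrow{\gamma} GZ \xrightarrow{Gg} GX \xrightarrow{f} Y,
\]
which makes no additional assumptions on $\gamma$ (the coalgebra axioms are not even needed here; only the existence of $\gamma$ matters, supplying the extra ``unravelling'' input demanded by $f$).

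For the backward direction, the key observation is that for any object $X \in \cat{E}$, the pair $(GX, \delta_X)$ is itself a $G$-coalgebra, called the cofree coalgebra on $X$; the coalgebra axioms $\epsilon_{GX} \circ \delta_X = \id{GX}$ and $G\delta_X \circ \delta_X = \delta_{GX} \circ \delta_X$ are precisely the comonad axioms of Definition~\ref{def:comonad}. Moreover, this coalgebra admits the canonical morphism $\epsilon_X : GX \to X$ into $X$. Hence by the hypothesis (applied to $Z = GX$ with the mapping $\epsilon_X : GX \to X$), there must exist a morphism $GX \to Y$, which is exactly what we want.

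I do not anticipate any technical obstacle: the forward direction only requires functoriality of $G$ and composition, while the backward direction is an immediate application of the hypothesis to the cofree coalgebra. The only conceptual point worth highlighting is that this proposition is really just a restatement of the universal property exhibiting $\EM(G)$ as the right-hand side of the comonadic adjunction $\cat{E} \rightleftarrows \EM(G)$: morphisms in $\cat{E}$ out of $GX$ correspond to morphisms in $\EM(G)$ out of the cofree coalgebra, and the latter is tested by probing with arbitrary coalgebras mapping into $X$.
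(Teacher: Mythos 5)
Your proof is correct and follows essentially the same route as the paper's: the forward direction uses the composite $f \circ Gg \circ \gamma$, and the backward direction applies the hypothesis to the cofree coalgebra $(GX, \delta_X)$ with the counit $\epsilon_X : GX \to X$. The additional remarks on the coalgebra axioms and the comonadic adjunction are accurate but not needed beyond what the paper already records.
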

\begin{proof}
    Given a morphism $f: GX \to Y$, a coalgebra $(Z, \gamma: Z \to GZ)$ and a map $g: Z \to X$, we obtain a morphism $f \circ Gg \circ \gamma: Z \to Y$. In the other direction, take the cofree coalgebra on $X$, $(GX, \delta_X: GX \to GGX)$. Then from the morphism $\epsilon_X: GX \to X$ we obtain a morphism $GX \to Y$.
\end{proof}

\begin{theorem}[Theorem~\ref{thm:hennessy-milner} (1) for finite structures]\label{thm:hm_with_chandra-merlin}
    Let $\Ap$ and $\Bp$ be finite, pointed $\sigma$-structures. Then $\Ap \simu_k \Bp$ if and only if $\Ap \Rrightarrow_k^+ \Bp$.
\end{theorem}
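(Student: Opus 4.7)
The plan is to combine the comonadic characterisation of $k$-simulations from Section~\ref{sec:comonad} with the Chandra-Merlin-like correspondence just developed in order to reduce the theorem to Corollary~\ref{coro:de_chandra_merlin}. First I would observe that, by Proposition~\ref{prop:kleisli_morphisms} together with Theorem~\ref{thm:bisimilarity_game}, the statement $\Ap \simu_k \Bp$ is equivalent to the existence of a morphism $\Ck\Ap \to \Bp$ in $\Stp$. The crucial remark is that when $\A$ is finite, $\Ck\Ap$ consists of $E$-chains of length at most $k+1$ with entries in the finite set $|\A|$, so $\Ck\Ap$ is itself a finite pp-tree of height at most $k$, i.e.\ an object of $\EM(\Ck^{\text{f}})$.

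Second, I would argue that the existence of a morphism $\Ck\Ap \to \Bp$ is equivalent to condition (1) of Corollary~\ref{coro:de_chandra_merlin}, namely that $\T \to \Ap \implies \T \to \Bp$ for every $\T \in \EM(\Ck^{\text{f}})$. This is an instance of the elementary reasoning behind Proposition~\ref{prop:coalgebras_and_mappings}: given a morphism $\Ck\Ap \to \Bp$ and any finite pp-tree $\T$ of height at most $k$ equipped with a morphism $g: \T \to \Ap$, one lifts $g$ through the coalgebra structure of $\T$ (which by Corollary~\ref{coro:caract_coalgebras} and idempotence is an isomorphism $\T \xrightarrow{\sim} \Ck\T$) and applies functoriality of $\Ck$ to obtain the composite $\T \xrightarrow{\sim} \Ck\T \xrightarrow{\Ck g} \Ck\Ap \to \Bp$. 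Conversely, the hypothesis applied to the finite coalgebra $\T \coloneqq \Ck\Ap$ together with the counit $\epsilon_\A: \Ck\Ap \to \Ap$ immediately produces a morphism $\Ck\Ap \to \Bp$.

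Finally, chaining these equivalences with Corollary~\ref{coro:de_chandra_merlin} yields $\Ap \simu_k \Bp \iff \Ap \Rrightarrow_k^+ \Bp$, as required. The only delicate point is the second step, and specifically ensuring that it suffices to quantify over \emph{finite} pp-trees in $\EM(\Ck^{\text{f}})$ rather than over all $\Ck$-coalgebras. This is precisely what the finiteness hypothesis on $\A$ gives us: it makes $\Ck\Ap$ itself a finite coalgebra, so the cofree coalgebra on $\A$ that realises the universal property in Proposition~\ref{prop:coalgebras_and_mappings} already lies in $\EM(\Ck^{\text{f}})$, allowing us to apply Corollary~\ref{coro:de_chandra_merlin} directly without any compactness argument.
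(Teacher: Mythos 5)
Your proposal is correct and follows essentially the same route as the paper: the same chain of equivalences through the Kleisli characterisation of $k$-simulation ($\Ap \simu_k \Bp$ iff $\Ck\Ap \to \Bp$), the restriction to the finite setting, Proposition~\ref{prop:coalgebras_and_mappings}, and finally Corollary~\ref{coro:de_chandra_merlin}. The only difference is presentational — you unfold the coalgebra argument of Proposition~\ref{prop:coalgebras_and_mappings} explicitly and flag why finiteness keeps everything inside $\EM(\Ck^{\text{f}})$, which the paper handles by citing Remark~\ref{rem:ppml_comonad_restricts_to_finite}.
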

\begin{proof}
Given $\Ap, \Bp$ as above,
\begin{align*}
    \Ap \simu_k \Bp &\text{ iff }
    \Ck \Ap \to \Bp \tag{Coro.~\ref{coro:logical_equivalence_kleisli}}\\
    &\text{ iff }
    \Ck^\text{f} \Ap \to \Bp \tag{Rmk.~\ref{rem:ppml_comonad_restricts_to_finite}}\\
    &\text{ iff }
    \forall \+T \in \EM(\Ck^\text{f}). \+T \to \Ap \implies \+T \to \Bp \tag{Prop.~\ref{prop:coalgebras_and_mappings}}\\
    &\text{ iff }
    \forall \phi \in \logicposk. \Ap \models \phi \implies \Bp \models \phi \tag{Cor.~\ref{coro:de_chandra_merlin}}\\
    &\text{ iff } \Ap \Rrightarrow_k^+ \Bp
\end{align*}
and this concludes the proof.\end{proof}

\section{Relating \boldmath{\logic} to Other Logics}\label{sec:ppml_and_other_logics}

\subsection{\boldmath{\logic} and First Order Logic}\label{sec:fragments_and_subcomonads}

In order to relate \logic and the \logic comonad to other well-known logics and their corresponding comonads, we begin by giving a standard translation, akin to that of \bml, from \logic to First Order Logic. As anticipated in Remark~\ref{rem:bounded_arity_signatures}, when $\sigma$ has bounded arity (and in particular when $\sigma$ is finite) this translation lands in a fragment of First Order Logic with bounded variable number.

Let $N$ be the maximum arity of relations in $\sigma$ if such a number exists, or $N = \infty$ otherwise. We fix an indexed set of first order variables, $\{x_i\}_{0\leq i\leq N}$ if $N$ is finite or $\{x_i\}_{0\leq i}$ otherwise.
We write
\[
\bar x = [x_{(j\!\!\! \mod N)}\ , x_{(j+1\!\!\! \mod N)}\ , \dots, \ x_{(j+\ell-1\!\!\! \mod N)}]
\]
for a cyclic sequence of variables of length $1 \leq \ell\leq N$, where $0 \leq j \leq N - 1$. If $N = \infty$, then $j \mod N$ is defined as $j$.

Given any such cyclic sequence $\bar{x}$, we define a mapping $\st_{\bar x}$ computable in polynomial time from $\sigma$-\logic formulas to First Order Logic formulas over the signature $\sigma$ in variable context $\bar{x}$ as follows:
\begin{align*}
\st_{\bar x}(\top) &\coloneqq \top \\
\st_{\bar x}(R) &\coloneqq \begin{cases}R(\last_{\arity(R)}(\bar x)) &\mbox{if $\arity(R)\leq|\bar x|$}\\
\bot&\mbox{if $\arity(R)>|\bar x|$}\end{cases}\tag{$R\in\bsigma$}\\
\st_{\bar x}(\lnot\varphi) &\coloneqq \lnot \st_{\bar x}(\varphi)\\
\st_{\bar x}(\varphi\land\psi) &\coloneqq \st_{\bar x}(\varphi) \land \st_{\bar x}(\psi)\\
\st_{\bar x}(\Diamond\varphi) &\coloneqq \exists y\ (\acc(x,y)\ \land \st_{\bar z}(\varphi))
\end{align*}
where $x = \epsilon(\bar x)$, $y$ is the next variable after $x$ in the cyclic order,
and $\bar z = (\last_{N-1}(\bar x).y)$. If $N = \infty$, we define $\last_{N-1}(s) = \last_N(s) = s$.

The following proposition follows immediately by structural induction.

\begin{proposition}\label{prop:PPML_FOL_tranlation}
For any $\sigma$-\logic formula $\varphi$ we have
$\A,a\models\varphi$ iff $\A\models\st_{[x_0]}(\varphi)[x_0\mapsto a]$. Furthermore $\st_{[x_0]}(\varphi)$ has at most $N$ variables and the depth of $\varphi$ is equal to the quantifier rank of $\st_{[x_0]}(\varphi)$.
\end{proposition}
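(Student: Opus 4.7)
The proof proceeds by structural induction on $\varphi$, but it does not go through as stated since the inductive step for $\Diamond$ produces translations in strictly longer contexts. The first step is therefore to strengthen the inductive statement: for every cyclic sequence $\bar{x}$ of length $1 \leq \ell \leq N$, every structure $\A$, and every valuation $s \in |\A|^\ell$, one has $\A, s \models \varphi$ iff $\A \models \st_{\bar{x}}(\varphi)[\bar{x} \mapsto s]$. The stated proposition is then the case $\bar{x} = [x_0]$, $s = [a]$.

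The base cases are routine. For $\top$ it is trivial, and for $R \in \bsigma$ of arity $r$: if $r \leq \ell$ both sides assert $\last_r(s) \in R^\A$, while if $r > \ell$ both sides evaluate to false (the left by the length clause in Def.~\ref{def:semantics_ppml}, the right because $\st_{\bar{x}}(R) = \bot$). The Boolean cases $\lnot$ and $\land$ follow immediately from the inductive hypothesis.

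The only case requiring attention is $\Diamond\psi$. Its translation is $\exists y \, (\acc(x,y) \land \st_{\bar{z}}(\psi))$ with $\bar{z} = \last_{N-1}(\bar{x}).y$, and the new context has length $\min(\ell+1, N)$. Under the assignment $\bar{x}\mapsto s$, the existential ranges over successors $a$ of $\epsilon(s)$; when $\ell = N$ the variable $y$ coincides with the oldest element of $\bar{x}$, so quantifying over $y$ correctly implements a shift, and the effective assignment seen by $\st_{\bar{z}}(\psi)$ is $\bar{z} \mapsto \last_{|\bar{z}|}(s.a)$. Applying the inductive hypothesis to $\psi$ in the context $\bar{z}$ gives equivalence with $\A,\last_{|\bar{z}|}(s.a) \models \psi$, and by Remark~\ref{rem:bounded_arity_signatures} truncation to the last $N$ positions is invisible to \logic semantics whenever every relation symbol has arity at most $N$, hence $\A, s.a\models \psi$, matching the semantics of $\Diamond\psi$.

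The bound on variables is immediate since the translation only ever introduces variables from $\{x_i\}_{0\leq i \leq N-1}$ (or the full infinite set when $N = \infty$), and the equality of modal depth and quantifier rank follows by an obvious induction since each $\Diamond$ contributes exactly one $\exists$ and all other connectives preserve quantifier rank. The main obstacle is the bookkeeping in the $\Diamond$ case: verifying that the cyclic rebinding of $y$ onto an already-used variable slot faithfully simulates the memory-bounded valuation semantics of Remark~\ref{rem:bounded_arity_signatures}, which is what makes the $N$-variable translation sound.
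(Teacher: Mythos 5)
Your proof is correct and takes the same route as the paper, which states only that the proposition ``follows immediately by structural induction'': your argument is that induction carried out in detail, with the necessary strengthening of the inductive hypothesis to arbitrary cyclic contexts and valuations, and with the truncation step in the $\Diamond$ case justified by the bounded-memory observation of Remark~\ref{rem:bounded_arity_signatures}. Nothing to flag.
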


Notice that when $\sigma$ is unimodal, we recover the standard translation for \bml whose image is the two-variable fragment of First Order Logic.

Proposition~\ref{prop:PPML_FOL_tranlation} allows us to treat $\sigma$-\logic as a fragment of First Order Logic over $\sigma$ with $N$ variables. On the other hand, it is immediate that a formula of modal depth $\ell$ is translated into a formula of quantifier rank $\ell$; hence the translation identifies \logick with a fragment of First Order Logic which is contained in First Order Logic with $N$ variables and quantifier rank at most $k$.

The comonadic formalism reflects this fact. For $\sigma$ of maximum arity $N < \infty$, $\Ck$ turns out to be a subcomonad of the comonads corresponding to these fragments. To be more precise, since these are comonads over $\Struct$ we must consider their liftings to the category $\Structpointed$.
We first recall the definition of subcomonad and the definitions of the EF and Pebbling comonads following~\cite{abramsky2021relating} as well as the combined comonad $\bb P_{k,n}$ introduced in~\cite{paine2020pebbling}.\footnote{For compatibility with our notation, we invert the names of the parameters from~\cite{paine2020pebbling}.}

\begin{definition}
    Given two comonads $(F, \epsilon^F, \delta^F)$ and $(G, \epsilon^G, \delta^G)$ over a common category, we say that $F$ is a \define{subcomonad} of $G$ whenever there exists a comonad morphism $F \Rightarrow G$ whose components are monomorphisms.\footnote{In the case of $\Structpointed$, these are the injective homomorphisms.}
\end{definition}

\begin{definition}\label{def:EF_and_pebbling_comonads}
Let $\sigma$ be any relational signature. The \define{Ehrenfeucht-Fraïssé comonad}~\cite{abramsky2021relating} with parameter $k \geq 0$, $\Ek: \Struct \to \Struct$ is defined as follows. For each $\sigma$-structure $\A$, define a new structure $\Ek \A$, with universe $\Ek \A \, \coloneqq \, |\A|^{\leq k}$. For each $\A$ we define a counit morphism $\epsilon^{\Ek}_\A$ and a comultiplication morphism $\delta^{\Ek}_\A$ by the same formulas as those of $\Ck$, given in Definition~\ref{def:ppml_comonad}. For each relation symbol $R$ of arity $r$, we define $R^{\Ek\A}$ to be the set of $r$-tuples $(s_1, \ldots , s_r)$ of sequences which (1) are pairwise comparable in the prefix ordering, and such that (2) $R^\A(\epsilon^{\Ek}_\A(s_1), \ldots , \epsilon^{\Ek}_\A(s_r))$. For each homomorphism $f: \A \to \B$, $\Ek f$ is defined by elementwise application of $f$ (also analogously to Def.~\ref{def:ppml_comonad}). This defines a comonad $\Ek$, which in turn lifts to a comonad $\Ek^*: \Structpointed \to \Structpointed$ by letting $\Ek^*\Ap \coloneqq (\Ek\A, [\bpa])$.

The \define{Pebbling comonad}~\cite{abramsky2021relating} with parameter $n \geq 1$, $\bb P_n: \Struct \to \Struct$ is defined as follows. Given a structure $\A$, define a new structure $\bb P_n \A$ with universe $(\{1,\dots,n\} \times |\A|)^{+}$, intuitively interpreted as the set of finite non-empty sequences of moves $(p, a)$ in an $n$-pebble game, where $p$ is a pebble index and $a \in |\A|$. The counit $\epsilon^{\bb P_n}_\A$ and the comultiplication $\delta^{\bb P_n}_\A$ have analogous definitions to those of $\Ck$ and $\Ek$, discarding and duplicating the information about pebble indexes respectively (in particular, $\epsilon^{\bb P_n}(s)$ is the position $a \in |\A|$ of the last move in $s$).
For each relation symbol $R$ of arity $r$, we define $R^{\bb P_n \A}$ to be the set of $r$-tuples $(s_1, \ldots , s_r)$ of sequences such that satisfy (1) and (2) as in the previous paragraph and for which moreover (3) the pebble index of the last move in each $s_i$ does not appear in the suffix of $s_i$ in $s_j$ for any $s_j$ extending $s_i$. For any homomorphism $f: \A \to \B$, $\bb P_n f$ is also defined by elementwise application of $f$. This defines a comonad $\bb P_n$ on $\Struct$, which in turn lifts to a comonad $\bb P_n^*: \Structpointed \to \Structpointed$ by letting $\bb P_n^*\Ap \coloneqq (\bb P_n\A, [(1, \bpa)])$.

Given $\A$ and some $k > 0$, let $\Pkn \A$ denote the embedded substructure of $\bb P_n \A$ with universe 
$$\{[(p_1, a_1), \dots, (p_\ell, a_\ell)] \in (\{1,\dots,n\} \times |\A|)^{\leq k} \mid p_1,\dots,p_{\min(\ell, n)} \text{ pairwise distinct}\}.$$
This restriction of universes defines a subcomonad of $\bb P_n$~\cite{paine2020pebbling} which we denote by $\Pkn$, and a subcomonad of $\bb P_n^*$ which we denote by $\Pkn^*$.\footnote{The restriction to sequences in which the first $n$ pebbles must be pairwise distinct is there to make $\bb P_{k,k}$ isomorphic to $\Ek$ as comonads. In this sense, both $\Ek$ and $\Pn$ are somehow expressible by $\Pkn$, although $\Ek$ is emphatically not a subcomonad of $\Pkn$.}

\end{definition}

\begin{propositionrep}\label{prop:subcomonads}
    $\Ck$ is a subcomonad of $\bb E_{k+1}^*$. Moreover, if $\sigma$ is a signature of bounded arity with maximum arity $N$, $\Ck$ is also a subcomonad of $\bb P_{k+1, N}^*$ and $\bb P_N^*$.
\end{propositionrep}
\begin{proof}
    For the EF comonad, it is straightforward that the universe inclusions $|\Ck \Ap| \to |\Ek \Ap|$ for every $\Ap$ induce injective homomorphisms and thus define a natural transformation with monic components.\footnote{Note that these universe inclusions do not define embeddings since they are not strong, as alluded to in Remark~\ref{rem:Ck_vs_Ek_vs_Hybrid}.} Compatibility with the counit and comultiplication is trivial since these are defined by the same formulas for both comonads.

    Now assume that there is some finite number $N$ which is the maximum arity over symbols in $\sigma$. Notice that for each $k \geq 0$, $\Pkns$ is a subcomonad of $\bb P_N^*$. Hence it is enough to establish the existence of a monic comonad morphism $\Ck \nat \Pkns$.

    For each $\Ap \in \Structpointed$ we define a map $\alpha_{\Ap}: \Ck \Ap \to \Pkns\Ap$ by $\alpha_{\Ap} ([a_0,\dots,a_\ell]) \coloneqq [(p_0, a_0), \dots, (p_\ell, a_\ell)]$ where $p_j \coloneqq (j+1) \textup{ mod $N$}$, e.g.\ for $\ell = 5$ and $N = 3$, $(p_0, p_1, p_2, p_3, p_4) = (1, 2, 3, 1, 2)$.

    This mapping is well defined since the length of the image sequence is $\ell + 1 < k + 1$ and the first $N$ pebbles are pairwise distinct. It also clearly preserves the basepoint; let us see that it preserves interpretations.

    Suppose $(s_1, \dots, s_{r}) \in R^{\Ck \A}$ for some $R \in \sigma$ of arity $r$. By definition of $R^{\Ck \A}$, we have that $(\epsilon(s_1),\dots,\epsilon(s_r)) \in R^{\A}$ and $s_{j+1}$ is the immediate successor of $s_j$ in the prefix ordering for all $j$. This means that $(\epsilon(\alpha_{\Ap}(s_1)),\dots,\epsilon(\alpha_{\Ap}(s_{r}))) \in R^{\A}$ and that the elements $\alpha_{\Ap}(s_j) \in \Pkns \Ap$ are pairwise comparable in the prefix ordering.
    The only remaining condition to be checked is that the pebble index of the last move in each $\alpha_{\Ap}(s_j)$ does not appear in the suffix of $\alpha_{\Ap}(s_j)$ in any sequence $\alpha_{\Ap}(s_j')$ extending $\alpha_{\Ap}(s_j)$. Since each $\alpha_{\Ap}(s_{j+1})$ is an immediate successor of $\alpha_{\Ap}(s_j)$, it is enough to check this for $j' = r$.
    If $|s_1| = \ell+1$, we may write
    \begin{align*}
        s_{r} &= [a_0,\dots,a_\ell,\dots,a_{\ell+r-1}] \\
        \alpha_{\Ap}(s_{r}) &= [(p_0, a_0),\dots,(p_\ell, a_\ell),\dots,(p_{\ell+r-1}, a_{\ell+r-1})].
    \end{align*}
    We must show that the last position where $p_j$ appears in $\alpha_{\Ap}(s_{r})$ is the $j$-th position, for all $\ell \leq j \leq \ell+r-1$. By definition of $\alpha_{\Ap}$, the next position where $p_j$ appears after position $j$ would be position $j + N$, but this will always be out of range because $r \leq N$.

    This proves that $\alpha_{\Ap}$ is a well defined morphism in $\Structpointed$. Naturality of $\alpha$ is precisely the fact that for any $f: \Ap \to \Bp$ and for any chain $[a_0,\dots,a_\ell] \in \Ck$, $\alpha([f(a_0),\dots,f(a_\ell)]) = \Pkns f (\alpha([a_0,\dots,a_\ell])$, and the preservation of counit and comultiplication amounts to a similar verification. Hence $\alpha$ is a comonad morphism. Finally, it is evident that the components of $\alpha$ are injective, hence monomorphisms in $\Stp$.

\end{proof}

The subcomonad inclusion $\Ck \hookrightarrow \bb P_{k+1, N}^*$ hints towards representing the \logic comparison games as restricted pebble games. Indeed, we see that elements of $\Ck\Ap$, when interpreted as sequences of Spoiler's moves in the \logic $k$-simulation game,
get translated to certain sequences of Spoiler's moves in the $k$-round $N$-pebble game. These sequences are constrained by the fact that Spoiler (and therefore, Duplicator as well) must move the $N$ pebbles in a cyclic pattern. Thus, the resulting embedded substructure of $\A$ which is picked out by the positions of the $N$ pebbles always consists in the last $N$ visited positions, with a particular ordering.

\subsection{\boldmath{\logic} and Data-Aware Logics}

\subsubsection{\logic and \datagl}\label{subsec:datagl}

We now return to one of our main motivations for this work: the study of data-aware logics. We will focus on explaining how \logic `contains' \datagl, and how this lets us conclude model-theoretical properties of \datagl using the \logic comonad. Throughout this discussion we will take the data-aware logic \cdxp~\cite{CoreDataXPath} as a point of reference, since it is expressive enough to contain as fragments other data-aware logics of interest.

Indeed, \datagl captures a fragment of $\cdxp(\downarrow^+)$, i.e.\ \cdxp with the `descendant' accessibility relation. In~\cite{dataGL}, \datagl is presented as a modal logic with two different modal operators, $\Diamond_=$ and $\Diamond_{\neq}$, which are called \emphat{data-aware modalities} since their associated accessibility relations contain and indeed encapsulate all of the information about data values that can be accessed by this language, namely checking for equality.

\begin{definition}\label{def:datagl_syntax}
    The syntax of \datagl is that of a modal logic with two modalities, $\Diamond_=$ and $\Diamond_{\neq}$, namely
    \begin{align*}
    \varphi & \ \Coloneqq  
    p						\ \mid \ 
    \lnot\varphi            \ \mid \ 
    \varphi\land\varphi     \ \mid \ 
    \Diamond_=\varphi       \ \mid \ 
    \Diamond_{\neq}\varphi 
    \tag{$p \in \text{PROP}$}
    \end{align*}
    where $\text{PROP}$ is a finite set of unary symbols.
\end{definition}

Following the discussion in \cite[Section 2.2.2]{dataGL}, 
even though \datagl is originally taken to predicate over finite data trees, thanks to a tree-model property \cite[Prop. 4]{dataGL} we can instead choose to work with equivalent semantics based on data Kripke structures, which we do since it is more general. This does not introduce any significant differences in the results to be presented below. To this end, fix a countably infinite set of data values $\bb D$.

\begin{definition}\label{def:datagl_model_and_semantics}
    A \define{data Kripke structure} is a tuple $\MM = \tup{W, R, d, \ell}$ where $\tup{W, R}$ is a directed graph specified by a set $W$ and a binary relation $R$, $d: W \to \bb D$ is a function labelling each node with a data value from a countably infinite set, and $\ell: W \to 2^\text{PROP}$ labels each node $w$ with a subset $\ell(p) \subseteq \text{PROP}$ of atomic propositions, which are said to hold at $w$.

    A \define{\datagl model} is a data Kripke structure $\MM = \tup{W, R, d, \ell}$ where $W$ is finite and $R$ is transitive irreflexive.

    Given a \datagl model $\MM = \tup{W, R, d, \ell}$ and $w \in W$, the semantics of \datagl are defined by
    \begin{align*}
        \MM,w &\models p &\text{\quad iff\quad } &p \in \ell(w) \tag{$p \in \text{PROP}$}\\
        \MM,w &\models \lnot\varphi&\text{\quad iff\quad } &\MM,w\not\models \varphi\\
        \MM,w &\models \varphi\land\psi&\text{\quad iff\quad } &\MM,w\models \varphi \text{ and } \MM,w \models \psi\\
        \MM,w &\models \Diamond_=\varphi   &\text{\quad iff\quad } &\exists w'\in W. (w,w')\in R^{(=)}  \text{ and }\MM,w'\models \varphi\\
        \MM,w &\models \Diamond_{\neq}\varphi   &\text{\quad iff\quad } &\exists w'\in |\A|.(w,w')\in R^{(\neq)} \text{ and }\MM,w'\models \varphi
    \end{align*}
    where $R^{(=)} \coloneqq \{(w, w') \in R \mid d(w) = d(w')\}$ and $R^{(\neq)} \coloneqq \{(w, w') \in R \mid d(w) \neq d(w')\}$.
\end{definition}

\begin{remark}
    The requirement that the relation $R$ in a \datagl model as above is transitive irreflexive generalises the definition of data trees, in which $R$ is the transitive closure of the successor relation of a tree.
\end{remark}

As we see in Def.~\ref{def:datagl_model_and_semantics}, the accessibility relations for $\Diamond_=$ and $\Diamond_{\neq}$ are defined by intersecting two relations: an accessibility relation representing the underlying graph structure, and a data-derived relation which encapsulates either data equality or non-equality. In contrast, by thinking about \datagl from the point of view of its bisimulation game, it becomes natural to represent this language in a different way: instead of encapsulating information about equality and non-equality of data values through two different modalities, which in turn depend on data-aware accessibility relations, allow the data-aware relations as literals of the syntax.
Let $\sigmadgl \coloneqq \{\acc, R_=\} \cup \text{PROP}$ where $\acc$ and $R_=$ are binary. The resulting \logic syntax
\begin{align*}
\varphi & \ \Coloneqq \top              \ \mid \ 
                R_=                 \ \mid \ 
                p                   \ \mid \
                \lnot\varphi        \ \mid \ 
                \varphi\land\varphi \ \mid \ 
                \Diamond\varphi                 \tag{$p \in \text{PROP}$}.
\end{align*}
consists of a language with a single modal operator and a separate binary relation symbol $R_=$ expressing data equality.

We adopt the perspective that data-aware logics predicate not over data graphs but over relational structures obtained by forgetting the actual data values and retaining only the information about how these values relate to each other according to the comparison operations of our language. This motivates the following definition.

\begin{definition}\label{def:datagl_model_to_sigma_structure}
    Given a \datagl model $\MM = \tup{W, R, d, \ell}$, we define a $\sigmadgl$-structure $\tMM$ as follows:
    \begin{align*}
        |\tMM| &\coloneqq W, &p^\tMM &\coloneqq \{w \in W \mid p \in \ell(w)\}, \tag{$p \in$ PROP} \\
        E^\tMM &\coloneqq R, &R_=^\tMM &\coloneqq \{(w,w') \in W^2 \mid d(w) = d(w')\}. 
    \end{align*}
\end{definition}

This defines a mapping $t$ from the collection of \datagl models to $\sigmadgl$-structures (see Figure \ref{fig:datakripke_to_rel_structure}). This mapping is obviously not injective since many data assignments give rise to the same equal data relation, and yet it does not lose any information relevant to \datagl, as is made precise by the following result. We will return to this issue of encapsulating data values in Remark~\ref{rem:replacing_data_values_with_equal_data}.

\begin{figure}[h]
\centering
\includegraphics[scale=0.3]{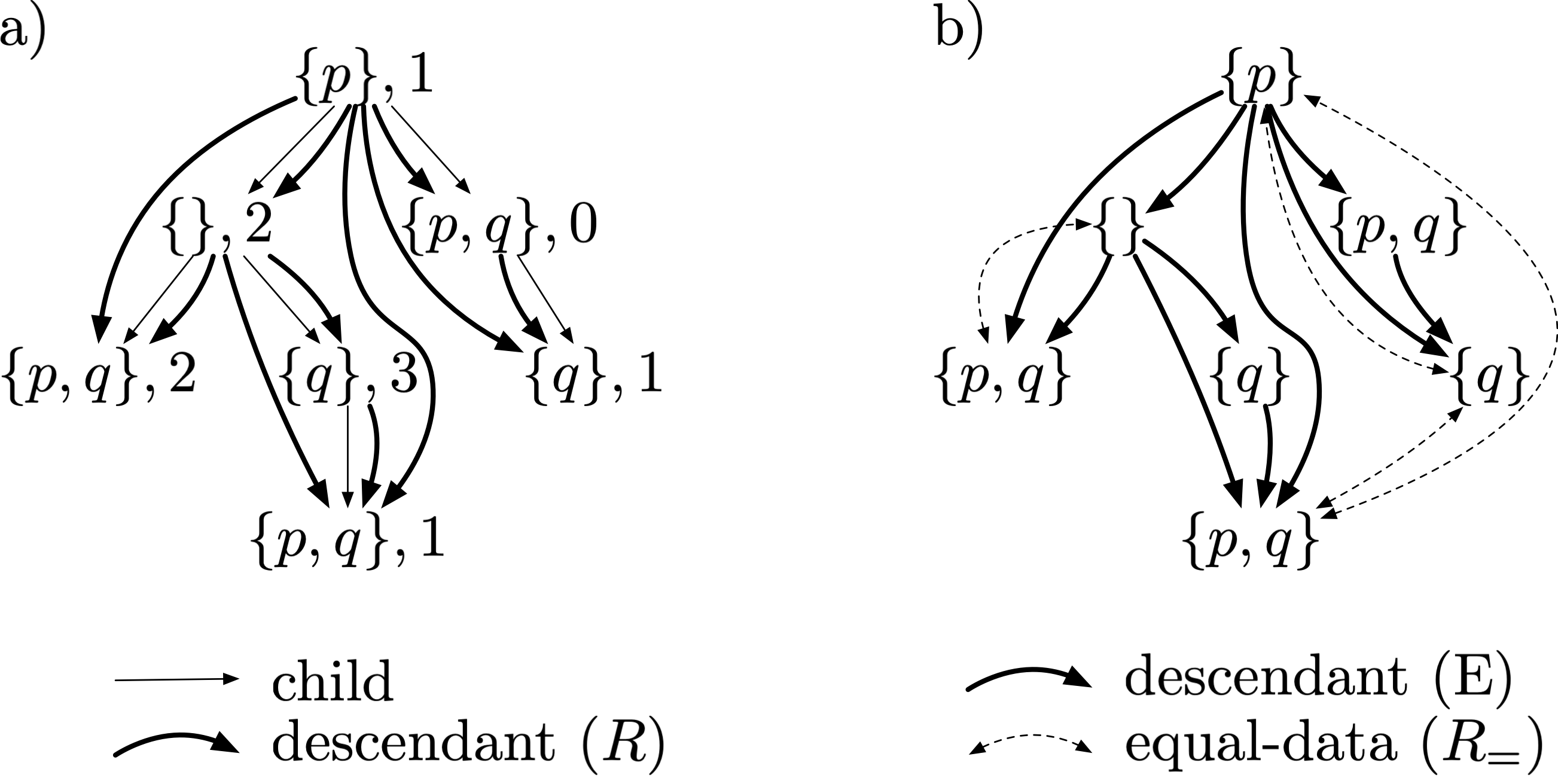}
\caption{a) An example of a \datagl model $\MM$ (in particular a data tree) with $\text{PROP} = \{p, q\}$ and $\bb D$ equal to the set of integers. Notice how $\MM$ is a valid \datagl model (Def.~\ref{def:datagl_model_and_semantics}) since $R$ is the transitive closure of the accessibility relation of a tree, and hence it is transitive irreflexive. b) The corresponding $\sigmadgl$-structure $\tMM$ (Def.~\ref{def:datagl_model_to_sigma_structure}), where each point is labelled with the set of unary relation symbols that are true at that point.
}
\label{fig:datakripke_to_rel_structure}
\end{figure}

\begin{toappendix}
For simplicity, we add the falsum symbol $\bot$ to the language of $\logic$, which of course does not increase the expressive power. We also write $\phi \lor \psi$ as syntactic sugar for $\lnot(\lnot \phi \land \lnot \psi)$.
Given a \logic-formula $\varphi$ and given $*\in\{\top,\bot\}$, let $\varphi\subs *$ be the result of replacing every occurrence of $R_=$ which is not in the scope of a $\Diamond$ with $*$. Formally, 
\begin{align*}
\top\subs *&\coloneqq \top\\
R_=\subs *&\coloneqq *\\
p\subs *&\coloneqq p \tag{$p \in \text{PROP}$}\\
(\lnot\varphi)\subs *&\coloneqq \lnot(\varphi\subs *)\\
(\varphi\land\psi)\subs *&\coloneqq (\varphi\subs *)\land(\psi\subs *)\\
(\Diamond\varphi)\subs *&\coloneqq \Diamond\varphi.
\end{align*}

\begin{lemma}\label{lem:traduccion}
For any \datagl model $\A$, for any $(a, a') \in E^\A$ and any $\sigmadgl$-\logic-formula $\varphi$ we have that
$\A,aa'\models\varphi$ iff the following two conditions hold:
\begin{itemize}
\item $(a,a')\in R_=^{\A}$ implies $\A,a'\models\varphi\subs\top$, and 
\item $(a,a')\notin R_=^{\A}$ implies $\A,a'\models\varphi\subs\bot$.
\end{itemize}
\end{lemma}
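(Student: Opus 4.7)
The plan is to proceed by structural induction on $\varphi$, treating the two implications uniformly: since exactly one of $(a,a') \in R_=^\A$ and $(a,a') \notin R_=^\A$ holds, the lemma amounts to showing $\A, [a,a'] \models \varphi$ iff $\A, [a'] \models \varphi\subs{*}$ with $*$ chosen accordingly. The intuitive content of $\subs{*}$ is that it resolves in advance every occurrence of $R_=$ sitting outside any $\Diamond$, which is precisely the collection of $R_=$-atoms whose arity the shorter valuation $[a']$ is unable to accommodate.

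The base cases are immediate. For $\varphi = \top$ both sides are true. For $\varphi = p \in \text{PROP}$, satisfaction depends only on the last element of the valuation and $p\subs{*} = p$. For $\varphi = R_=$ the equivalence $\A, [a,a'] \models R_=$ iff $(a,a') \in R_=^\A$ matches $\A, [a'] \models R_=\subs{\top} = \top$ when $(a,a') \in R_=^\A$ and $\A, [a'] \not\models R_=\subs{\bot} = \bot$ otherwise. The Boolean cases $\lnot \psi$ and $\psi_1 \land \psi_2$ are routine since $\subs{*}$ commutes with these connectives and the inductive hypothesis transfers through them.

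The main obstacle is the modal case $\varphi = \Diamond \psi$. Here $(\Diamond \psi)\subs{*} = \Diamond\psi$ by definition, so I must show directly that $\A, [a,a'] \models \Diamond\psi$ iff $\A, [a'] \models \Diamond\psi$, independently of whether $(a,a') \in R_=^\A$. Unfolding the semantics, both sides reduce to the existence of an $E$-successor $c$ of $a'$ such that $\A, [a,a',c] \models \psi$ and $\A, [a',c] \models \psi$ respectively, so the induction closes once these two judgments are shown to coincide.

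To this end I would prove an auxiliary sub-lemma: since every symbol of $\sigmadgl$ has arity at most $2$, for any $\sigmadgl$-\logic formula $\psi$ and any valuations $s, s'$ with $|s|, |s'| \geq 2$ and $\last_2(s) = \last_2(s')$, one has $\A, s \models \psi$ iff $\A, s' \models \psi$. This is established by an independent structural induction on $\psi$: the atomic cases use directly that $\arity(R) \leq 2$ so only the last $\arity(R)$ positions of the valuation matter; Boolean cases are trivial; and the $\Diamond$ case is preserved because extending $s$ and $s'$ by a common successor $c$ preserves the coincidence of their last two elements. Applied with $s = [a,a',c]$ and $s' = [a',c]$, the sub-lemma yields the equivalence required for the modal case and thus completes the main induction.
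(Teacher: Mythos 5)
Your proof is correct and follows essentially the same route as the paper's: induction on the formula, where the only non-routine step is the modal case, i.e.\ showing $\A,[a,a']\models\Diamond\psi$ iff $\A,[a']\models\Diamond\psi$. The paper dispatches that case with the remark that ``the same argument applies'' as for propositional atoms, whereas you isolate and prove the fact actually needed --- valuations of length at least $2$ agreeing on their last two elements satisfy the same formulas, because every symbol of $\sigmadgl$ has arity at most $2$ --- and your reformulation of the two implications as a single biconditional also streamlines the negation case relative to the paper's chain of equivalences; both are presentational refinements rather than a different argument.
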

\begin{proof}
We proceed by induction in $||\varphi||$, the complexity $\varphi$, defined by
\begin{align*}
||R_i||=||\top||=||\bot||&=0\\
||\lnot\varphi||=||\Diamond\varphi||&=1+||\varphi||\\
||\varphi\land\psi||&=1+||\varphi||+||\psi||
\end{align*}
If $\varphi=p \in P$ then the result follows from the fact that $\A,aa'\models\varphi$ iff $\A,a'\models\varphi$ and that $\varphi\subs\top=\varphi\subs\bot=\varphi$. The same argument applies if $\varphi$ is of the form $\Diamond\psi$, $\top$ or $\bot$. If on the other hand $\varphi=R_=$, then $\varphi\subs\top=\top$ and $\varphi\subs\bot=\bot$; 
$\A,aa'\models\varphi$ iff $(a,a')\in R_=^{\A}$ and from this the result follows straightforwardly. The case $\phi = \psi_1 \land \psi_2$ is immediate by the inductive hypothesis. Finally, if $\varphi=\lnot\psi$ then 
\begin{align*}
\A,aa'\models\varphi&\text{ iff } \A,aa'\not\models\psi
\\
&\text{ iff } (a,a')\in R_=^{\A} \text{ and }\A,a'\not\models\psi\subs\top, \text{ or } \\
&\qquad(a,a')\notin R_=^{\A} \text{ and }\A,a'\not\models\psi\subs\bot\tag{ind.\ hyp.}
\\
&\text{ iff } (a,a')\in R_=^{\A}\text{ and }\A,a'\models\varphi\subs\top, \text{ or } \\
&\qquad(a,a')\notin R_=^{\A}\text{ and }\A,a'\models\varphi\subs\bot\tag{$\dagger$}
\\
&\text{ iff } (a,a')\in R_=^{\A}\text{ implies }\A,a'\models\varphi\subs\top, \text{ and } \\
&\qquad(a,a')\notin R_=^{\A}\text{ implies }\A,a'\models\varphi\subs\bot.
\end{align*}
Observe that $||\psi\subs\top||=||\psi\subs\bot|| < ||\varphi||$ and hence\footnote{This is the technical reason why we insist in adding $\bot$ as another primitive of the language instead of defining it as syntactic sugar for $\lnot \top$.} we can apply the inductive hypothesis; and $(\dagger)$ holds by the fact that $\varphi=\lnot\psi$ and the definition of $(\lnot\varphi)\subs *$.
\end{proof}
\end{toappendix}

\begin{theoremrep}\label{thm:datagl-equiexpressive}
\datagl and $\sigmadgl$-\logic are equi-expressive over the class of \datagl models. More precisely, there is a translation $\tr_1$ mapping \datagl-formulas to $\sigmadgl$-\logic-formulas and a translation $\tr_2$ in the reverse direction such that for any \datagl model $\MM = \tup{W, R, d, \ell}$ and $w \in W$,
\begin{align*}
    \MM, w \models\varphi &\text{\qquad iff\qquad } \tMM, w \models\tr_1(\varphi), \\
    \MM, w \models \tr_2(\psi) &\text{\qquad iff\qquad } \tMM, w \models \psi.
\end{align*}
Moreover, both translations preserve modal depth.
\end{theoremrep}

\begin{proof}
Define the translation $\tr_1$ of \datagl into $\sigmadgl{\text -}\logic$ as follows:
\begin{align*}
\tr_1(p)&=p\tag{$p \in \text{PROP}$}\\
\tr_1(\lnot\varphi)&=\lnot\tr_1(\varphi)\\
\tr_1(\varphi\land\psi)&=\tr_1(\varphi) \land \tr_1(\psi)\\
\tr_1(\Diamond_=\varphi)&=\Diamond(R_=\land\tr_1(\varphi))\\ 
\tr_1(\Diamond_{\neq}\varphi)&=\Diamond(\lnot R_=\land\tr_1(\varphi)) 
\end{align*}

A straightforward induction in the structure of \datagl formulas shows that for any \datagl model $\MM = \tup{W,R,d,\ell}$, any $w\in W$ and any \datagl-formula $\phi$, we have $\MM,w\models\varphi$ iff $\tMM, w\models\tr_1(\phi)$.

Next, define a translation $\tr_2$ of $\sigmadgl$-\logic into \datagl as follows:
\begin{align*}
\tr_2(\top) &= \top \\
\tr_2(\bot) &= \bot \\
\tr_2(R_=)&=\bot\\
\tr_2(p)&=p\tag{$p \in \text{PROP}$}\\
\tr_2(\lnot\varphi)&=\lnot\tr_2(\varphi)\\
\tr_2(\varphi\land\psi)&=\tr_2(\varphi)\land\tr_2(\psi)\\
\tr_2(\Diamond\varphi)&=\left(\Diamond_=\tr_2(\varphi\subs\top)\right) \lor \left(\Diamond_{\neq}\tr_2(\varphi\subs\bot)\right) 
\end{align*}
where the expressions $\top$ and $\bot$ on the right-hand side are to be regarded as syntactic sugar for some \datagl-tautology and its negation such as $\lnot(q \land \lnot q)$ and $q \land \lnot q$, respectively, for some fixed choice of $q \in \text{PROP}$.

We must show that for any \datagl model $\MM = \tup{W,R,d,\ell}$, any $w\in W$ and any $\sigmadgl$-\logic formula $\phi$, we have $\tMM, w \models\varphi$ iff $\MM, w \models\tr_2(\varphi)$.
We proceed by induction in $||\varphi||$, defined in the proof of Lemma~\ref{lem:traduccion}.
The only interesting case is when $\varphi=\Diamond\psi$, where we reason as follows:

\begin{align*}
\tMM,w\models\Diamond\psi&\text{ iff } \exists w'.w \prec w'\text{ such that }\tMM,aa'\models\psi
\\ 
&\text{ iff }\exists w'.w \prec w' \text{ such that } \\
&\quad\qquad (w,w')\in R_=^{\tMM} \text{ and }\tMM, w'\models\psi\subs\top, \text{ or } \\
&\quad\qquad(w,w')\notin R_=^{\tMM} \text{ and }\tMM,w'\models\psi\subs\bot\tag{Lemma~\ref{lem:traduccion}}
\\
&\text{ iff }\exists w'.w \prec w' \text{ such that } \\
&\quad\qquad (w,w')\in R_=^{\tMM} \text{ and }\MM,w'\models\tr_2(\psi\subs\top), \text{ or } \\
&\quad\qquad(w,w')\notin R_=^{\tMM} \text{ and }\MM,w'\models\tr_2(\psi\subs\bot)\tag{ind.\ hyp.}
\\
&\text{ iff }\exists w'. (w,w') \in R^{(=)} \text{ such that } \MM,w'\models\tr_2(\psi\subs\top), \text{ or } \\
&\qquad\exists w'. (w,w') \in R^{(\neq)} \text{ such that } \MM,w'\models\tr_2(\psi\subs\bot)
\\
&\text{ iff }\MM,w\models \left(\Diamond_{=}\tr_2(\psi\subs\top)\right) \lor \left(\Diamond_{\neq}\tr_2(\psi\subs\bot)\right)\tag{def.\ semantics}
\\
&\text{ iff }\MM,w\models\tr_2(\Diamond\psi).
\end{align*}

Finally, it is immediate from the definition of both translations that they preserve modal depth.
\end{proof}

We now extend $t$ to a functor from pointed \datagl models into $\Stpgl$. To this end, we must define the notion of morphism of \datagl models, for which we have some freedom.%
\footnote{Technically, we do not need to regard $t$ as a functor for our main result in this section (Theorem~\ref{thm:full_logical_equivalence_datagl}). Equivalently, we could consider $t$ as a functor out of a discrete category, meaning that the only morphisms between \datagl models would be the identities. Our definition of morphisms of \datagl models is motivated by trying to explain as best as possible how \logic generalises \datagl.}
We define morphisms of \datagl morphisms as structure-preserving functions obtained from the requirement that they preserve the truth of \datagl formulas without the symbols $\{\lnot, \Diamond_{\neq}\}$. Although, of course, other choices are possible, this choice of `positive fragment' is compatible with our translation into \logic, as our next theorem shows.

\begin{definition}\label{def:functor_from_datagl_models}
    We denote by $\ModDGL_*$ the category whose objects are \datagl models together with a choice of basepoint and whose morphisms are given as follows. Given pointed \datagl models $\MMp = \tup{W, R, d, \ell, w}$ and $\MMpp = \tup{W', R', d', \ell', w'}$, a morphism $f: \MMp \to \MMpp$ is a function $f: W \to W'$ such that $f(w) = w'$ and for all $w_1, w_2 \in W$, $(w_1, w_2) \in R \implies (f(w_1), f(w_2)) \in R'$, $d(w_1) = d(w_2) \implies d'(f(w_1)) = d'(f(w_2))$ and $\ell(w_1) \subseteq \ell'(f(w_1))$.

    We define a functor $t: \ModDGL_* \to \Stpgl$ as follows. It is defined on objects by the construction of Definition~\ref{def:datagl_model_to_sigma_structure}, extended to pointed models by declaring the basepoint of $\tMMp$ to be $w$. On morphisms, it takes $f : \MMp \to \MMpp$ to the pointed homomorphism of $\sigmadgl$-structures $\tMMp \to \tMMpp$ whose underlying function $W \to W'$ is the underlying function of $f$.\footnote{There is an implicit verification to be made that given $f$, $tf$ is a well defined homomorphism. Notice how, once well definition is established, functoriality of such a mapping is immediate.}
\end{definition}

\begin{proposition}\label{prop:t_is_ff}
    $t: \ModDGL_* \to \Stpgl$ is fully faithful, and its image consists of all the finite $\sigmadgl$-structures for which the interpretation of $E$ is transitive irreflexive and the interpretation of $R_=$ is an equivalence relation.
\end{proposition}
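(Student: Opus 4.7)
The plan is to handle faithfulness, fullness, and the image characterization in turn. Faithfulness is immediate: morphisms in both $\ModDGL_*$ and $\Stpgl$ are set-theoretic functions equipped with preservation conditions, and $t$ acts as the identity on underlying functions, so $tf = tg$ forces $f = g$.

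For fullness, given a pointed homomorphism $h: \tMMp \to \tMMpp$ in $\Stpgl$, I would define $f: W \to W'$ to equal $h$ on underlying sets and verify each of the four conditions in Definition~\ref{def:functor_from_datagl_models}. Preservation of the basepoint is inherited from $h$; preservation of $R$ follows from $h$ preserving the interpretation of $\acc$; the implication $d(w_1) = d(w_2) \implies d'(f(w_1)) = d'(f(w_2))$ follows from $h$ preserving $R_=$, since by Definition~\ref{def:datagl_model_to_sigma_structure} $R_=^{\tMM}$ is precisely the kernel of $d$; and preservation of labels $\ell(w_1) \subseteq \ell'(f(w_1))$ follows from $h$ preserving each unary symbol $p \in \text{PROP}$. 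These are all direct unpackings of the definitions.

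For the image characterization, one inclusion is immediate from Definitions~\ref{def:datagl_model_and_semantics} and~\ref{def:datagl_model_to_sigma_structure}: if $\MM$ is a \datagl model then $\tMM$ is a finite $\sigmadgl$-structure in which $\acc^{\tMM} = R$ is transitive irreflexive and $R_=^{\tMM}$ is the kernel of $d$, hence an equivalence relation. For the other inclusion, given a pointed finite $\sigmadgl$-structure $(\+X, x)$ whose interpretation of $\acc$ is transitive irreflexive and whose interpretation of $R_=$ is an equivalence relation, I would realize it as some $\tMMp$ by explicit construction. Set $W \coloneqq |\+X|$, $R \coloneqq \acc^\+X$, $\ell(w) \coloneqq \{p \in \text{PROP} \mid w \in p^\+X\}$, and choose a data labeling $d: W \to \bb D$ by picking one distinct element of $\bb D$ per equivalence class of $R_=^{\+X}$ (possible since there are finitely many classes and $\bb D$ is countably infinite), assigning to each $w$ the value chosen for its class. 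Pointing the resulting \datagl model at $x$, one verifies $t(\MM, x) = (\+X, x)$ on the nose: universes, basepoints, and all interpretations match, with $R_=^{\tMM} = R_=^{\+X}$ because two points have equal $d$-value if and only if they lie in the same $R_=^{\+X}$-class.

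The steps are routine; the only delicate point is observing that the interpretation of $R_=$ in any $\tMM$ is precisely the kernel of a data function, which is exactly the property needed both to conclude preservation of data equality in the fullness argument and to guarantee that the construction above succeeds in realising any qualifying structure.
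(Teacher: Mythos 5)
Your proposal is correct and follows essentially the same route as the paper: faithfulness and fullness by unpacking that $t$ is the identity on underlying functions and that $R_=^{\tMM}$ is the kernel of $d$, and the image characterisation by constructing a data assignment constant on each $R_=$-class with distinct values across classes (the paper does this by a greedy inductive assignment rather than choosing one value per class, but the idea is identical). No gaps.
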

\begin{proof}
    A straightforward verification shows that $t$ is fully faithful. It is also immediate that any $\sigmadgl$-structure in the image of $t$ satisfies the conditions stated above. Now consider any $\sigmadgl$-structure $\Ap$ satisfying those conditions and choose an ordering $a_1,\dots,a_n$ of $|\A|$. Without loss of generality we consider $\bb D = \bb N$ to be the non-negative integers, and define a data assignment $d: |\A| \to \bb N$ inductively as
    \begin{align*}
        d(a_1) &\coloneqq 0 \\
        d(a_{j+1}) &\coloneqq \begin{cases}
            d(a_{j'}) & \text{if $(a_{j'}, a_j) \in R_=^\A$ for some $j'\leq j$} \\
            d(a_j) + 1 & \text{otherwise.}
        \end{cases}
    \end{align*}
    This is well defined since $R_=$ is an equivalence relation. Let $\ell(a) \coloneqq \{p \in \text{PROP} \mid a \in p^\A\}$. It is then straightforward that $\MM \coloneqq \tup{|\A|, E^\A, d, \ell}$ is a \datagl model and that $t (\MM, a) = \Ap$.
\end{proof}

Since $t$ is a fully faithful functor, $\ModDGL_*$ can be identified with the image of $t$, which we denote by $t(\ModDGL_*)$ and which is a full subcategory of $\Stpgl$.\footnote{Since any $\sigmadgl$-structure isomorphic to one in $t(\ModDGL_*)$ is also in $t(\ModDGL_*)$, the image of $t$ also coincides with what is known as its \emph{essential} image.}

\begin{remark}\label{rem:replacing_data_values_with_equal_data}    
    Proposition~\ref{prop:t_is_ff} completes the argument that we can safely replace the data assignment functions with a relation $R_=$ encapsulating the relevant or operationally accessible information. Indeed, combined with Theorem~\ref{thm:datagl-equiexpressive}, it shows that we may safely identify \datagl models with $\sigmadgl$-structures in which $E$ is transitive irreflexive and $R_=$ is an equivalence relation. From this perspective, Theorem~\ref{thm:datagl-equiexpressive} says that the only essential difference between \datagl and $\sigmadgl$-\logic is that the latter admits more general models.

    On the other hand, although $t$ is not injective on objects, as we have already noted, since the corestriction $t|^{t(\ModDGL_*)}: \ModDGL_* \to t(\ModDGL_*)$ is fully faithful and surjective on objects, is an equivalence of categories \cite[Def. 1.5.4]{riehl2017category}. This means that from the categorical point of view\footnote{More precisely, working up to equivalence of categories corresponds to working up to isomorphism of objects.} there is no loss of information when moving from $\ModDGL_*$ to $t(\ModDGL_*)$ (or viceversa).
\end{remark}

Note that if $\MMp \in \ModDGL_*$, then $\Ck \tMMp$ is not in $t(\ModDGL_*)$ except for trivial models: $\acc^{\Ck \tMM}$ will not in general be transitive irreflexive and $R_=^{\Ck \tMM}$ will not be an equivalence relation. This implies that $\Ck$ cannot be restricted to a comonad on $t(\ModDGL_*)$. However, this does not stop us from tapping into the comonadic formalism.

Given $\MMp, \MMpp \in \ModDGL_*$, we write 
$\MM,w\equiv_k\MM',w'$ if $\MM,w\models\varphi \iff \MM',w'\models\varphi$ for all $\varphi$ in $\datagl_k$
where $\datagl_k$ is the fragment of \datagl of modal depth at most~$k$.

\begin{theorem}\label{thm:full_logical_equivalence_datagl}
    Let $\MM = \tup{W, R, d, \ell}$ and $\MM' = \tup{W', R', d', \ell'}$ be \datagl models and let $w \in W$ and $w' \in W'$. Then
    \begin{enumerate}
        \item $\MM, w \equiv_k \MM', w'$ if and only if there exists a span of bounded morphisms $\Ck (\tMM, w) \leftarrow \+T \to \Ck (\tMM', w')$ with some pp-tree $\+T$ of height at most $k$ as common domain; and
        \item there exists a homomorphism $\Ck \tMMp \to \tMMpp$ if and only if $\MM, w \models \phi$ implies $\MM', w' \models \phi$ for all formulas $\phi$ in the translation of \logicposk under $\tr_2$ (defined in the proof of Theorem~\ref{thm:datagl-equiexpressive}).
    \end{enumerate}
\end{theorem}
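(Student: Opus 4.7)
The plan is to transport the results already established for \logic across the equi-expressivity bridge with \datagl given by Theorem~\ref{thm:datagl-equiexpressive}. The crucial observation is that both hypotheses in Corollary~\ref{coro:full_logical_equivalence} and in Theorem~\ref{thm:hennessy-milner} hold automatically for structures of the form $\tMM$: $\text{PROP}$ is finite by Definition~\ref{def:datagl_syntax}, so $\sigmadgl = \{\acc, R_=\} \cup \text{PROP}$ is finite, and \datagl models have finite universe.

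For part (1), I would first prove the intermediate equivalence $\MMp \equiv_k \MMpp \iff \tMMp \equiv_k \tMMpp$ (where the latter $\equiv_k$ is for \logic over $\sigmadgl$). The $(\Leftarrow)$ direction uses $\tr_1$: for any $\phi \in \datagl_k$ we have $\tr_1(\phi) \in \logick$ by modal-depth preservation, so $\tMMp \models \tr_1(\phi) \iff \tMMpp \models \tr_1(\phi)$, and unfolding the semantic equivalence of Theorem~\ref{thm:datagl-equiexpressive} gives $\MMp \models \phi \iff \MMpp \models \phi$. The $(\Rightarrow)$ direction is symmetric, using $\tr_2$ to send an arbitrary $\psi \in \logick$ into $\datagl_k$. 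Once this intermediate equivalence is established, Corollary~\ref{coro:full_logical_equivalence} applied with $\sigma = \sigmadgl$ (finite) yields the span of bounded morphisms $\Ck\tMMp \leftarrow \T \to \Ck\tMMpp$.

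For part (2), I would chain together the comonadic and logical characterisations of simulation already proven. By Proposition~\ref{prop:kleisli_morphisms} together with Theorem~\ref{thm:bisimilarity_game}, the existence of a morphism $\Ck\tMMp \to \tMMpp$ is equivalent to $\tMMp \simu_k \tMMpp$. By Theorem~\ref{thm:hennessy-milner}(1), applied with $\sigmadgl$ finite, this in turn is equivalent to $\tMMp \Rrightarrow^+_k \tMMpp$, i.e.\ $\tMMp \models \varphi \implies \tMMpp \models \varphi$ for every $\varphi \in \logicposk$. Finally, translating pointwise through Theorem~\ref{thm:datagl-equiexpressive} rewrites this as $\MMp \models \tr_2(\varphi) \implies \MMpp \models \tr_2(\varphi)$ for all $\varphi \in \logicposk$, which is precisely the stated condition on formulas in $\tr_2(\logicposk)$.

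The argument is essentially a glueing of previously proven theorems, so no substantial new machinery is needed. The only mildly subtle point will be to confirm that \logic-equivalence over $\sigmadgl$-structures coincides with \datagl-equivalence even though $\tr_1(\datagl_k)$ is a proper syntactic fragment of $\logick$; the use of $\tr_2$ in the reverse direction handles this, since $\tr_2$ is defined on all \logic formulas and respects the semantic identification of \datagl models with their associated $\sigmadgl$-structures. Note also that for part (2) one does not need the positive fragment of \datagl: $\tr_2$ can introduce negations (e.g.\ $\tr_2(R_=) = \bot$), but the statement only speaks about the literal image $\tr_2(\logicposk)$ inside \datagl, not about positive \datagl formulas.
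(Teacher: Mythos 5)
Your proposal is correct and follows essentially the same route as the paper, whose proof is just the one-line citation ``Immediate from Theorem~\ref{thm:datagl-equiexpressive} and Corollary~\ref{coro:full_logical_equivalence}''; you have simply filled in the details that the paper leaves implicit (the two-way transfer of $\equiv_k$ via $\tr_1$ and $\tr_2$, the verification of the finiteness hypotheses, and the chain through Proposition~\ref{prop:kleisli_morphisms} and Theorem~\ref{thm:hennessy-milner}(1) for part (2)).
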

\begin{proof}
    Immediate from Theorem~\ref{thm:datagl-equiexpressive} and Corollary~\ref{coro:full_logical_equivalence}.
\end{proof}

Notice that the translation of $\logicpos$ under $\tr_2$ does not coincide with \datagl without $\{\lnot, \Diamond_{\neq}\}$, nor with its negation-free fragment, as can be seen for example in the translation $\tr_2(\Diamond p) = \lnot(\lnot\Diamond_= p \land \lnot\Diamond_{\neq} p)$. Thus we have obtained a correspondence between certain Kleisli morphisms and a fragment of \datagl which is not obviously interpretable as a positive fragment, at least not from the point of view of \datagl syntax alone.

The procedure we have put in practice to derive results about \datagl from a translation into \logic is representative of a general technique, which, for instance, is used to add equality to First Order Logic in the context of the EF and Pebbling comonads. In~\cite{abramsky2021relating}, this is originally presented in terms of \emphat{relative comonads}~\cite{altenkirch2010monads}, a theoretical device which equips a functor between two different categories with comonad-like structure. For instance, in our case we can define \datagl relative comonad $\bb D_k: \ModDGL_* \to \Stpgl$ whose underlying functor is $\Ck \circ t$ and which automatically inherits a relative comonad structure from the comonad structure of $\Ck$ \cite[Prop. 2.3]{altenkirch2010monads}. This encapsulates the comonad and the translation functor $t$ in a single mathematical object. However, the concept of relative comonad is not required at a technical level for the core of the procedure, just as presented in this section and other works (e.g.~\cite{dawar2021lovasz},~\cite{jakl2023categorical}).

\subsubsection{\logic as Framework for Other Data-Aware Logics}\label{sec:ppml_as_framework}

The idea underlying the description of \datagl as a particular case of \logic (rather than as a Modal Logic with two modalities) is to split each data-aware modality into two different syntactic building blocks. This idea can be applied more generally and in this sense we propose that \logic gives a flexible approach to data-aware logics. Practically speaking, this approach allows us both to express previously existing data-aware logics such as \datagl or other fragments of \cdxp and to build new data-aware logics starting from these more fundamental language components. We illustrate this with an example.

\begin{example}\label{ex:more_expressive}\rm
Let $\sigma = \sigmadgl \union \{T\}$ where $T$ is ternary. We interpret $\sigma$-\logic over structures $\A$
in which the interpretation of $\acc$ is transitive irreflexive, that of $R_=$ is an equivalence relation, and $T^\A \coloneqq \{(x,y,z):(x,z)\in R_=^\A\}$.
Following Remark~\ref{rem:replacing_data_values_with_equal_data}, we think of these structures as $T$-expanded \datagl models. Given the requirement on the interpretation of $E$, we do not distinguish between successors and strict descendants.

Consider the $\sigma$-\logic formula $\psi_1 \coloneqq \Diamond (\lnot R_= \land \Diamond (\lnot R_= \land \lnot T))$, which is the formula of Example~\ref{ex:ppml_first_examples} where we have renamed $S$ to $R_=$. $\psi_1$ evaluated at a point $a$ of a structure $\A$ as above expresses the existence of a descendant $a'$ and a descendant $a''$ of $a'$ such that $a$, $a'$ and $a''$ have pairwise distinct data values.

Let us see that $\psi_1$ is not expressible in \datagl. To this end, recall the structures $\A$ and $\B$ from Example~\ref{ex:bisimilar_structures}(3), which we now interpret as $\sigmadgl$-structures (by renaming $S$ to $R_=$ and giving empty extensions to propositional variables; see Figure~\ref{fig:psi1_datagl}-i)). Since $\Ap \equiv_k \Bp$ for all $k$, by Theorem~\ref{thm:datagl-equiexpressive} $\Ap$ and $\Bp$ are also indistinguishable in \datagl. If we now expand these models with relations $T^\A, T^\B$ as above, then $\Ap$ satisfies $\psi_1$ but $\Bp$ does not. We conclude that the property expressed by $\psi_1$ is not expressible in \datagl, and hence that $\sigma$-\logic is a strictly more expressive extension of \datagl.

\begin{figure}[h]
\centering
\includegraphics[scale=0.275]{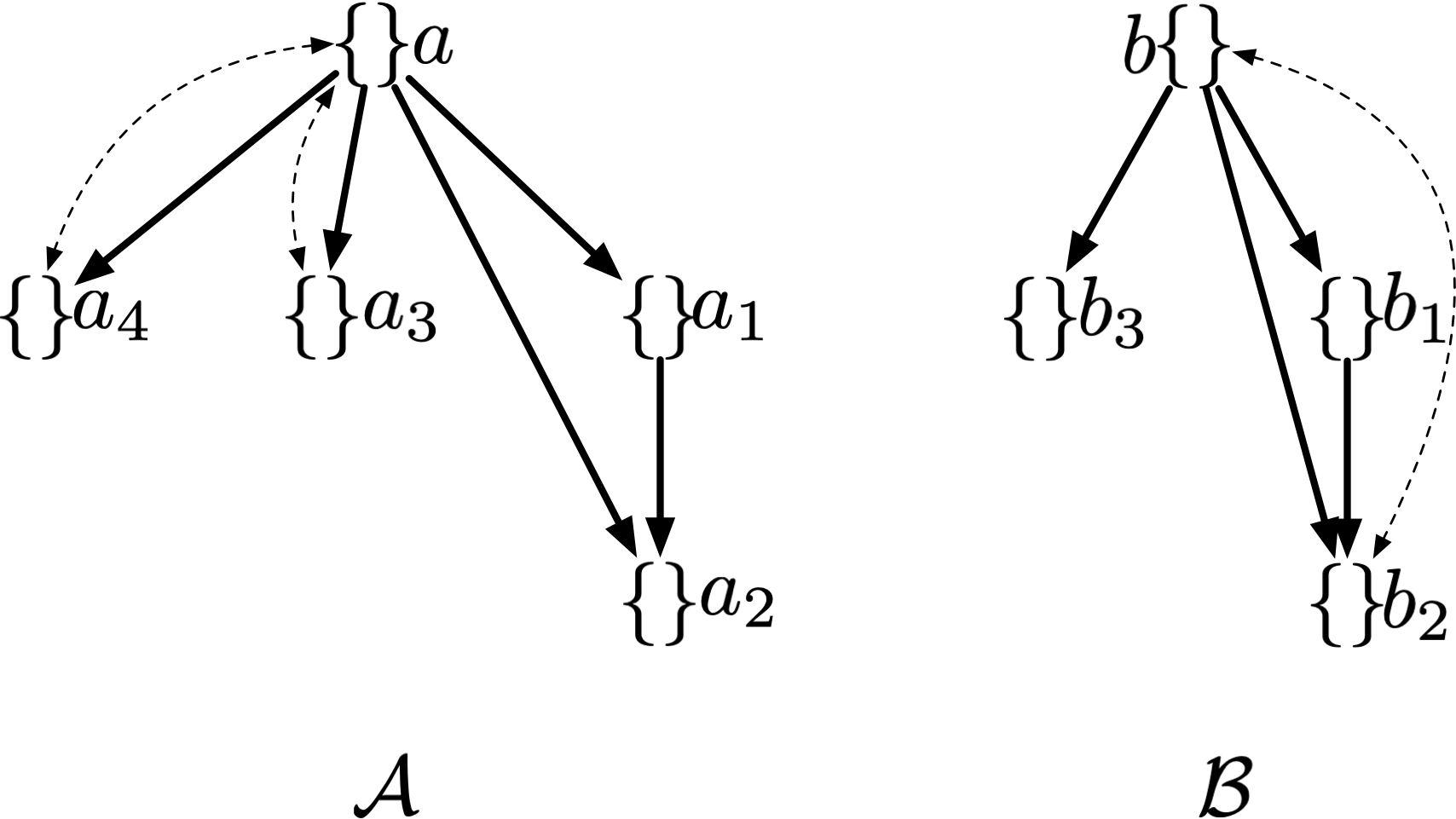}
\caption{$\sigmadgl$-structures which are $k$-bisimilar for all $k$. Bold arrows represent $\acc$ and dotted arrows represent $R_=$. 
$\psi_1 \coloneqq \Diamond (\lnot R_= \land \Diamond (\lnot R_= \land \lnot T))$ is true in $\Ap$ but false in $\Bp$, hence $\psi_1$ is not expressible in $\sigmadgl$-\logic. 
} 
\label{fig:psi1_datagl}
\end{figure}

Since our motivation for studying \datagl came from regarding it as a simple fragment of $\cdxp(\downarrow^+)$, we may wonder whether $\sigma$-\logic is also a fragment of $\cdxp(\downarrow^+)$. This is not the case, as $\psi_1$ is not expressible in $\cdxp(\downarrow^+)$ \cite[Proposition 39.2]{FFA15}. 
\end{example}

Another relevant data-aware logic of interest for \logic is the fragment of $\cdxp(\dow^+)$ restricted to data comparisons of the form $\tup{\epsilon=\dow^+[\phi_1]\dow^+[\phi_2]\dow^+\cdots\dow^+[\phi_n]}$ or of the form $\tup{\epsilon\neq\dow^+[\phi_1]\dow^+[\phi_2]\dow^+\cdots\dow^+[\phi_n]}$, which we call {\boldmath $\cdxp_\epsilon(\dow^+)$}.
As with \datagl, this logic predicates over pointed \datagl models $\MM = \tup{W, R, d, \ell}$ and extends \datagl to paths with intermediate tests. In a nutshell, for for $w \in W$ the semantics of $\tup{\epsilon=\dow^+[\phi_1]\dow^+[\phi_2]\dow^+\cdots\dow^+[\phi_n]}$ [resp.\ $\tup{\epsilon\neq\dow^+[\phi_1]\dow^+[\phi_2]\dow^+\cdots\dow^+[\phi_n]}$] in $\MM,w$ is ``there is an $E$-chain $[w_0,\dots,w_n]$ in $\MM$ such that 1) $w_0=w$, 2) $\MM,w_i\models\phi_i$ for $i=1,\dots,n$, and 3) $d(w)=d(w_n)$ [resp.\ $d(w)\neq d(w_n)$]''.
The idea of the ternary relation of Example~\ref{ex:more_expressive} can be generalized to $n$-ary relations $R_n$ for $n\geq 2$. One can extend the construction of $\tMM$ given in Definition~\ref{def:datagl_model_to_sigma_structure} to the signature $\sigmacdxp=\sigmadgl \union \{R_n \mid n\geq 2\}$ as follows:
\begin{align*}
        R_n^\tMM &\coloneqq \{(w_1,\dots,w_n) \in W^n \mid d(w_1) = d(w_n)\}.
\end{align*}
One can also modify the translation of $\tr_1$ given in the proof of Theorem~\ref{thm:datagl-equiexpressive} in order to map $\cdxp_\epsilon(\dow^+)$-formulas to $\sigmacdxp$-\logic formulas as follows:
\begin{align*}
\tr_1(\tup{\epsilon=\dow^+[\phi_1]\dow^+[\phi_2]\dow^+\cdots\dow^+[\phi_n]})&\coloneqq\Diamond(\tr_1(\phi_1)\land\Diamond(\tr_1(\phi_2)\land\Diamond(\dots(\tr_1(\phi_n)\land R_n)\dots ) )),
\\
\tr_1(\tup{\epsilon\neq\dow^+[\phi_1]\dow^+[\phi_2]\dow^+\cdots\dow^+[\phi_n]})&\coloneqq\Diamond(\tr_1(\phi_1)\land\Diamond(\tr_1(\phi_2)\land\Diamond(\dots(\tr_1(\phi_n)\land \lnot R_n)\dots ) )). 
\end{align*}
As with \datagl,  one can show that $\MM, w \models\varphi$ iff $\tMM, w \models\tr_1(\varphi)$ for any formula $\varphi\in\cdxp_\epsilon(\dow^+)$. 
However, in this case one can also show that there is no translation $\tr_2$ from $\cdxp_\epsilon(\dow^+)$ to $\sigmacdxp$-\logic such that $\MM, w \models\tr_2(\varphi)$ iff $\tMM, w \models \varphi$. Hence, over the appropriate classes of structures, $\sigmacdxp$-logic is strictly more expressive than $\cdxp_\epsilon(\dow^+)$.

As a final remark, observe that there is nothing special with the fact that the navigation axis of $\cdxp_\epsilon$ is the `descendant' relation $\downarrow^+$; similar results may be obtained with the `child' relation $\downarrow$ instead.

\subsection{\boldmath{\logic} and Basic Modal Logic}\label{sec:relationship_modal}\label{sec:bml}

\logic shares many properties with \bml. It contains \bml for particular choices of $\sigma$, it has essentially the same syntax as \bml (aside for the information about arities of symbols) and has a tree-model property, owing to its game comonad being idempotent, just like the one for \bml. Although the similarities are evident, a direct comparison is hindered by the fact that these two logics predicate, in the general case, over different classes of models.

In what follows, we will give a way of transforming pp-trees into Kripke trees. This transformation will preserve and reflect open pathwise embeddings, as well as the truth value of formulas in a suitable sense. The first property will allow us to reduce checking $k$-bisimilarity in \logic to checking $k$-bisimilarity between Kripke trees of height at most $k$. Meanwhile, the second property of this transformation will give polynomial reductions from the model checking and satisfiability problems for $\logic$ to those for \bml.

\begin{definition}
    Given a signature $\sigma$ with $E \in \sigma$, we define a new signature $\tilde{\sigma}$ by replacing its arity function $\arity_\sigma: \sigma \to \N$ with
    $$\arity_{\tilde{\sigma}}(R) \coloneqq \begin{cases}
        2 & \text{if } R = E \\
        1 & \text{otherwise.}
    \end{cases}$$
    In other words, $\tilde{\sigma}$ has the same relation symbols as $\sigma$ but all symbols except $E$ are now considered to be unary. Notice that $\tilde{\sigma}$ is a unimodal signature.
    
    Moreover, given a pp-tree $\T$ over $\sigma$, let $K\T$ be the $\tilde{\sigma}$-structure with universe $|K\T| \coloneqq |\T|$, basepoint $u$, and the following relations: $\acc^{K\T} \coloneqq \acc^{\+ T}$, and for $R \in \bsigma$, $R^{K\T} \coloneqq \{\epsilon(s) \mid s \in R^\T\}$. Notice that, since $\T$ is a pp-tree, this is equivalent to saying that for all $v \in |\+ T|$, $v \in R^{K\T}$ if and only if $\T, \T_v \models R$.
    In particular, if $v \in R^{K\T}$ then the height of $v$ must be at least $\arity(R)-1$.
    Since $K\T$ is a pp-tree over $\tilde{\sigma}$, we omit its basepoint in notation.
\end{definition}
Intuitively, we lose no information when moving from $\T$ to $K\T$ since, for any pp-tree $\T$, knowing the arity of a relation $R$ and the last element of tuples in $R^\T$ determines those tuples uniquely.

\begin{remark}
    Given $\sigma$ and $\tilde{\sigma}$ as above, since the \logic comonad is defined uniformly over all signatures containing $E$, we also have a $k$-indexed family of \logic comonads on $\Structpointed[\tilde{\sigma}]$. Since the \logic comonad on a unimodal signature coincides with the Modal comonad with a unique modality, we use the notation $\Mk$ for this latter comonad.

    Note that $\Mk$-coalgebras are the pp-trees of height at most $k$ over $\tilde{\sigma}$, or, equivalently, the rooted Kripke trees of height at most $k$. We retain the notation $\ncat{ppTree}$ for the category of pp-trees over the original signature $\sigma$, and denote the category of pp-trees over $\tilde{\sigma}$, i.e.\ of rooted Kripke trees, by $\ncat{KripkeTree}$.
\end{remark}

\begin{proposition}\label{prop:translation_functor}
    $K$ as given above defines the action on objects of a functor $K: \ncat{ppTree} \to \ncat{KripkeTree}$, which acts as the identity on morphisms.
    
    Moreover this functor is fully faithful, and its image%
    \footnote{As was the case for the translation functor $t: \ModDGL_* \to \Stpgl$, the image of the functor $K$ turns out to be closed under isomorphisms and hence coincides with the notion of essential image of the functor.}
    is the full subcategory of $\ncat{KripkeTree}$ spanned by the rooted Kripke trees $\T'$ that satisfy the following condition:
    \begin{itemize}
        \item[{\em ($\star$)}]
     for all $R \in \bsigma$ and for all $v' \in |\T'|$, if $v' \in R^{\T'}$ then the height of $v'$ is at least $\arity(R)-1$.
 \end{itemize}
\end{proposition}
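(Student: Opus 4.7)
The proposal is to verify the claims in order: well-definedness of the action on morphisms, faithfulness, fullness, and finally the characterisation of the image.

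First I would check that $K$ is indeed a functor acting as the identity on morphisms. Given $f: \T \to \T'$ in $\ncat{ppTree}$, I need the underlying function $f: |\T| \to |\T'|$ to be a homomorphism $K\T \to K\T'$. Preservation of $\acc$ is immediate since $\acc^{K\T} = \acc^\T$ and similarly for $\T'$. For $R \in \bsigma$ of arity $r$, if $v \in R^{K\T}$ then $v = \epsilon(s)$ for some $s \in R^\T$; since $f$ preserves $R$ as a relation on tuples, $f(s) \in R^{\T'}$, and $\epsilon(f(s)) = f(v)$ lies in $R^{K\T'}$. Preservation of basepoints is trivial, and functoriality follows because the mapping on morphisms is the identity.

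Faithfulness is immediate from the identity action on morphisms. For fullness, the key observation is that pp-trees encode no more information than their $K$-images: given $g: K\T \to K\T'$ in $\ncat{KripkeTree}$, I would set $f \coloneqq g$ as a function and verify that it preserves every $R \in \bsigma$ on tuples. The main point is that a tuple $s = (v_1, \dots, v_r) \in R^\T$ must, since $\T$ is a pp-tree, coincide with the unique $E$-chain $\T_{v_r}$ of length $r$ ending at $v_r$ (Def.~\ref{def:pptree}). Because $g$ preserves $\acc$, the image $(g(v_1), \dots, g(v_r))$ is also an $E$-chain of length $r$ ending at $g(v_r)$. Since $g$ preserves the unary relation $R$ in $K\T'$, we have $g(v_r) \in R^{K\T'}$, so there is a tuple in $R^{\T'}$ ending at $g(v_r)$, which by the same uniqueness property in $\T'$ must be $(g(v_1), \dots, g(v_r))$. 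Hence $g$ preserves $R$ on tuples, so $g: \T \to \T'$ is a morphism in $\ncat{ppTree}$ with $Kg = g$.

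For the characterisation of the image, that every $K\T$ satisfies $(\star)$ is direct: if $v \in R^{K\T}$ then some chain of length $\arity(R)$ in $\T$ ends at $v$, forcing the height of $v$ to be at least $\arity(R) - 1$. Conversely, given a rooted Kripke tree $\T'$ over $\tilde{\sigma}$ satisfying $(\star)$, I would build a pp-tree $\T$ over $\sigma$ with the same underlying tree and, for each $R \in \bsigma$ of arity $r$, define $R^\T$ to consist of the chains $(v_1, \dots, v_r) = \T'_{v_r}$ for those $v_r \in R^{\T'}$ (well-defined because $(\star)$ guarantees the chain of length $r$ exists and uniqueness holds in the tree $\T'$). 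A straightforward unpacking shows that this $\T$ is a pp-tree and that $K\T = \T'$ on the nose. Since $K$ is fully faithful, its image is therefore a full subcategory, and it consists exactly of those Kripke trees satisfying $(\star)$. No step is particularly subtle; the only place where care is needed is the fullness argument, where one must use that pp-trees have unique $E$-chains ending at each point in order to recover tuple-preservation from the preservation of the unary trace.
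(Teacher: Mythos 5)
Your proof is correct and follows essentially the same route as the paper's, which simply declares the key steps ``immediate from the definition of $K$''; your uniqueness-of-chains argument for fullness is precisely the detail being elided there, and your inverse construction for the image coincides with the paper's. (One cosmetic slip: in the image characterisation, the length-$r$ chain ending at $v_r$ is $\last_r(\T'_{v_r})$ rather than $\T'_{v_r}$ itself, but this does not affect the argument.)
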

\begin{proof}
    The claim that $K$ extends to a functor acting as the identity on morphisms reduces to the claim that given a function $f: |\T| \to |\T'|$, if $f$ constitutes a morphism $\T \to \T'$ of pp-trees over $\sigma$, then it also constitutes a morphism $K\T \to K\T'$ of pp-trees over $\tilde{\sigma}$. On the other hand, checking that $K$ is full reduces to checking the converse implication. This is immediate from the definition of $K$. Meanwhile, a functor acting as the identity on morphisms is automatically faithful.

    Since $K$ is fully faithful, its image is a full subcategory of $\ncat{KripkeTree}$, and by definition of the interpretation $R^{K\T}$ of symbols $R \in \bsigma$ on structures of the form $K\T$ for some $\+T$, it is clear that all objects in the image of $K$ satisfy condition $(\star)$. 
    Conversely, any Kripke tree $\+T'$ satisfying condition $(\star)$ is the image of the pp-tree $\+T$ defined by $|\+T| \coloneqq |\+T'|$ and for each $R \in \bsigma$ of arity $r$, $R^{\+T} \coloneqq \{s \in |\+T'|^r \midd \text{$s$ is an $E$-chain and } \epsilon(s) \in R^{\+T'}\}$.
\end{proof}

Intuitively, this means that for a fixed $\sigma$ we can identify pp-trees with the Kripke trees where the truth value of propositional variables cannot be true too close to the root, and where how close is too close is controlled by the arities of propositional variables when seen as symbols in $\sigma$.

\begin{proposition}\label{prop:K_preserves_and_reflects_opes}
    $K$ preserves and reflects open pathwise embeddings.
\end{proposition}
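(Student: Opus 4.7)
The plan is to invoke Proposition~\ref{prop:open_PE} on both sides of $K$. Since that proposition is phrased for arbitrary $\sigma$ (in particular it applies to the unimodal case), open pathwise embeddings coincide with bounded morphisms both in $\ncat{ppTree}$ and in $\ncat{KripkeTree}$. Because $K$ acts as the identity on morphisms (as a function between underlying sets), the task reduces to showing that, for a morphism $f:\T \to \T'$ in $\ncat{ppTree}$, $f$ is bounded as a morphism of $\sigma$-pp-trees if and only if $Kf$ is bounded as a morphism of $\tilde{\sigma}$-pp-trees.

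The back condition of Def.~\ref{def:bounded_morphism} only mentions the accessibility relation $E$, and $K$ leaves $E^{\T}$ untouched, so the back condition for $f$ and for $Kf$ are literally the same statement. Only the harmony condition requires work, and the bridge is the observation that in a pp-tree $\T$, for any $R \in \bsigma$ of arity $r$ and any $E$-chain $s$ in $\T$ with $v \coloneqq \epsilon(s)$, we have $\T, s \models R$ if and only if $|s| \geq r$ and $\last_r(s) \in R^{\T}$; but by Def.~\ref{def:pptree} / Remark~\ref{rem:notation_for_pptrees}, $\last_r(s)$ is forced to be the unique $E$-chain of length $r$ ending at $v$, which by definition of $K\T$ gives $\T, s \models R$ iff $|s| \geq r$ and $v \in R^{K\T}$.

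From this I conclude both directions. For the forward direction, suppose $f$ is bounded in $\ncat{ppTree}$. Given $v \in |K\T|$ and $R \in \bsigma$, by Prop.~\ref{prop:morphisms_of_pptrees_preserve_height} the chain $f(\T_v)$ in $\T'$ starts at the root and ends at $f(v)$, hence by uniqueness of such chains in pp-trees it equals $(\T')_{f(v)}$; then $v \in R^{K\T}$ iff $\T, \T_v \models R$ iff (by harmony in $\ncat{ppTree}$) $\T', f(\T_v) \models R$ iff $f(v) \in R^{K\T'}$. For the reverse direction, given an $E$-chain $s$ in $\T$ with $v = \epsilon(s)$, $f(s)$ is an $E$-chain in $\T'$ of the same length ending at $f(v)$ (again by Prop.~\ref{prop:morphisms_of_pptrees_preserve_height}), so the bridge above applied in both $\T$ and $\T'$, combined with the harmony of $Kf$ at the level of unary predicates ($v \in R^{K\T} \iff f(v) \in R^{K\T'}$), yields $\T, s \models R \iff \T', f(s) \models R$.

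I expect no real obstacle: once the pp-tree structure is exploited to collapse ``truth of $R$ along a chain'' into ``membership of the endpoint in the unary predicate $R^{K\T}$ plus a length check'', both implications become a single-line calculation. The only subtlety is remembering to use height preservation (Prop.~\ref{prop:morphisms_of_pptrees_preserve_height}) to guarantee that the bridge applies identically on the source and the target side.
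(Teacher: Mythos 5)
Your proposal is correct and follows essentially the same route as the paper: reduce via Proposition~\ref{prop:open_PE} to bounded morphisms on both sides, observe that the back condition is unchanged by $K$, and check that the two harmony conditions are mutually equivalent. The paper dismisses that last equivalence as a ``straightforward verification''; your bridge observation (collapsing truth of $R$ along a chain in a pp-tree to a length check plus membership of the endpoint in $R^{K\T}$, using height preservation to align source and target) is exactly the verification the paper leaves implicit.
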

\begin{proof}
    Since open pathwise embeddings in both the domain and codomain categories are bounded morphisms, all we must show is that a function $f : |\T| \to |\T'|$ constitutes a bounded morphism $\T \to \T'$ iff it constitutes a bounded morphism $K\T \to K\T'$. The back condition in both cases is exactly the same, while the mutual implication between the harmony conditions amounts to a straightforward verification.
\end{proof}

In the following theorem, we use the symbol $\bisim_k$ to refer to both $k$-bisimilarity between $\sigma$-structures and $k$-bisimilarity between $\tilde{\sigma}$-structures. Recall that since $\tilde{\sigma}$ is unimodal, $\tilde{\sigma}$-structures are Kripke models and the relation $\bisim_k$ between them is precisely $k$-bisimilarity in \bml \cite[Section 10.3]{abramsky2021relating}.

\begin{theorem}\label{thm:K_preserves_bisimilarity}
    Given two pointed $\sigma$-structures $(\A, a)$ and $(\B,b)$, $(\A,a) \bisim_k (\B,b)$ if and only if $K\Ck(\A,a) \bisim_k K\Ck(\B,b)$.
\end{theorem}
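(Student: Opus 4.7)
The plan is to derive both directions uniformly by combining Theorem~\ref{thm:full_logical_equivalence} with the fact (Prop.~\ref{prop:K_preserves_and_reflects_opes}) that $K$ preserves and reflects open pathwise embeddings. Recall that $K$ acts as the identity on underlying sets, hence preserves heights of pp-trees, and that $\bisim_k$ in the Kripke case is captured by $\Mk$ via exactly the same arboreal machinery as $\Ck$ (since $\Mk$ is the \logic comonad over the unimodal signature $\tilde{\sigma}$).

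\textbf{Forward direction.} Assume $(\A,a) \bisim_k (\B,b)$. By Theorem~\ref{thm:full_logical_equivalence} there exists a span of bounded morphisms $\Ck(\A,a) \leftarrow \T \to \Ck(\B,b)$ in $\ncat{ppTree}$ with $\T$ of height at most $k$. Applying the functor $K$ and using Prop.~\ref{prop:K_preserves_and_reflects_opes}, we obtain a span of bounded morphisms $K\Ck(\A,a) \leftarrow K\T \to K\Ck(\B,b)$ in $\ncat{KripkeTree}$, where $K\T$ has the same underlying tree as $\T$ and hence height at most $k$. Applying Theorem~\ref{thm:full_logical_equivalence} in the unimodal case yields $K\Ck(\A,a) \bisim_k K\Ck(\B,b)$.

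\textbf{Backward direction.} Assume $K\Ck(\A,a) \bisim_k K\Ck(\B,b)$. By Theorem~\ref{thm:full_logical_equivalence} (for $\Mk$), there exists a span of bounded morphisms $K\Ck(\A,a) \xleftarrow{f} \T' \xrightarrow{g} K\Ck(\B,b)$ with $\T'$ a Kripke tree of height at most $k$. The main step is to show that $\T'$ lies in the image of $K$, i.e.\ that it satisfies condition $(\star)$ from Prop.~\ref{prop:translation_functor}. Let $v' \in |\T'|$ and $R \in \bsigma$ with $v' \in R^{\T'}$. By the harmony condition of $f$, $f(v') \in R^{K\Ck(\A,a)}$; since $K\Ck(\A,a)$ is in the image of $K$, the height of $f(v')$ is at least $\arity(R)-1$. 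But morphisms of pp-trees (in particular of Kripke trees) preserve heights by Prop.~\ref{prop:morphisms_of_pptrees_preserve_height}, so the height of $v'$ is at least $\arity(R)-1$, establishing $(\star)$.

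\textbf{Completing the argument.} By Prop.~\ref{prop:translation_functor}, there exists a pp-tree $\T$ with $K\T = \T'$, and since $K$ is full, the bounded morphisms $f$ and $g$ come from morphisms $\T \to \Ck(\A,a)$ and $\T \to \Ck(\B,b)$ in $\ncat{ppTree}$, which are bounded by Prop.~\ref{prop:K_preserves_and_reflects_opes}. As $\T$ has the same underlying tree as $\T'$, its height is at most $k$. Applying Theorem~\ref{thm:full_logical_equivalence} once more, we conclude $(\A,a) \bisim_k (\B,b)$. The only non-routine part is verifying the image condition $(\star)$ for $\T'$; everything else is a direct instantiation of the arboreal framework combined with the functoriality and fullness of $K$.
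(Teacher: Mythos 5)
Your proof is correct and follows essentially the same route as the paper's: both directions reduce to the span characterisation of Theorem~\ref{thm:full_logical_equivalence} transported along $K$ via Prop.~\ref{prop:K_preserves_and_reflects_opes}, with the only substantive step being the verification that the apex $\T'$ of the Kripke-tree span satisfies condition $(\star)$ via height preservation (Prop.~\ref{prop:morphisms_of_pptrees_preserve_height}). The one detail you elide is that Theorem~\ref{thm:full_logical_equivalence} over $\tilde{\sigma}$ produces spans into $\Mk K\Ck(\A,a)$ and $\Mk K\Ck(\B,b)$ rather than into $K\Ck(\A,a)$ and $K\Ck(\B,b)$; the paper bridges this explicitly by invoking idempotence of $\Mk$, which makes the coalgebra maps isomorphisms and hence open pathwise embeddings.
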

\begin{proof}
Given two pp-trees, we write a decorated arrow $\xrightarrow{\text{ope}}$ to indicate the existence of an open pathwise embedding between them. Let $(\A, a)$ and $(\B,b)$ be as above. Then
\begin{align*}
(\A, a) \bisim_k (\B, b) &\text{ iff } \exists \T \in \EM(\Ck). \, \Ck(\A,a) \xleftarrow{\text{ope}} \T \xrightarrow{\text{ope}} \Ck(\B,b) &\text{(Thm.~\ref{thm:full_logical_equivalence})}&\\
    &\text{ iff } \exists \+T \in \EM(\Ck). \,  K\Ck(\A,a) \xleftarrow{\text{ope}} K\T \xrightarrow{\text{ope}} K\Ck(\B,b) &\text{($1$) (Prop.~\ref{prop:K_preserves_and_reflects_opes})}&.
\end{align*}
On the other hand, we know that $K\Ck(\A, a) \bisim_k K\Ck(\B, b)$ if and only if
\begin{align*}
    &\exists \+T' \in \EM(\Mk). \, \Mk K\Ck(\A,a) \xleftarrow{\text{ope}} \T' \xrightarrow{\text{ope}} \Mk K\Ck(\B,b) \\
    \text{iff } &\exists \+T' \in \EM(\Mk). \, K\Ck(\A,a) \xleftarrow{\text{ope}} \T' \xrightarrow{\text{ope}} K\Ck(\B,b) \tag{2} 
\end{align*}
where we have used that $\Mk$ is idempotent, which implies that the coalgebra maps $K\Ck(\A,a) \to \Mk K\Ck(\A,a)$ and $K\Ck(\B,b) \to \Mk K\Ck(\B,b)$ are isomorphisms and hence, in particular, open pathwise embeddings. Hence we must show the bi-implication $(1) \iff (2)$. The rightward implication is immediate, while for the leftward implication it is enough to show that given $\T'$ as in $(2)$, $\T'$ is in the image of $K$. Indeed, let $f: \T' \to K\Ck\Ap$ be any morphism (not necessarily bounded), let $v \in |\T'|$ and $R \in \bsigma$, and suppose that the height of $v$ is $h < \arity(R) - 1$. Then by Prop.~\ref{prop:morphisms_of_pptrees_preserve_height}, the height of $f(v)$ is also $h$, hence $f(v) \not\in R^{K\Ck\Ap}$, hence $v \not\in R^{\T'}$. Thus $\T'$ satisfies condition $(\star)$ and hence by Prop.~\ref{prop:translation_functor} we conclude that $\T'$ is in the image of $K$.
\end{proof}

We now discuss the relationship between the functor $K$ and existing notions of transformations between comonads and their EM categories.

Given $k\geq 0$, since $K$ preserves the height of pp-trees, it restricts and corestricts to an operation $\EM(\Ck) \to \EM(\Mk)$. More generally, one could wonder whether this operation extends to all $\sigma$-structures, turning them into related Kripke structures. This line of reasoning reverses the one presented in~\cite{jakl2023categorical}, where the authors develop a general and systematic approach to studying operations on structures and whether these operations lift to functors between EM categories which preserve open pathwise embeddings. Operations admitting such liftings are shown to enjoy Feferman-Vaught-Mostowski-style or `FVM' compositionality theorems for the logic(s) corresponding to the comonads involved. An interesting application of their formalism is the fact that all logics admitting a comonadic characterisation enjoy an FVM theorem for categorical products, which in the case of \logic readily implies the following result.
\begin{proposition}\label{prop:fvm}
    Let $\approx_k$ be any of $\Rrightarrow^+_k, \equiv^+_k, \equiv_k$ or $\equiv_k^\#$. Then given finitely-branching $\sigma$-structures $(\A_1, a_1)$, $(\A_2, a_2)$, $(\B_1, b_1)$, and $(\B_2, b_2)$, we have that
\begin{align*}
    &(\A_1, a_1) \approx_k (\A_2, a_2) \text{ and } (\B_1, b_1) \approx_k (\B_2, b_2) \\
    &\text{ implies }(A_1, a_1) \times (B_1, b_1) \approx_k (A_2, a_2) \times (B_2, b_2).
\end{align*}
\end{proposition}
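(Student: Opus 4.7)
The plan is to exploit the comonadic characterizations established in earlier sections for each of the four equivalence relations, together with the fact that the cofree functor $F_k: \Stp \to \EM(\Ck)$ of the comonadic adjunction $U_k \dashv F_k$ (Theorem~\ref{thm:arborealcover}) is a right adjoint and hence preserves products: $F_k((\A,a) \times (\B,b)) \cong F_k(\A,a) \times F_k(\B,b)$, where the right-hand product is taken in $\EM(\Ck)$. As a preliminary observation, $\A_i \times \B_i$ is finitely branching whenever $\A_i$ and $\B_i$ are, since any successor of $(a,b)$ in the product projects to a successor of $a$ and a successor of $b$; this ensures that the Hennessy-Milner-style characterizations apply uniformly both to the original structures and to their products.

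The graded case $\equiv_k^{\#}$ is handled directly: by Corollary~\ref{coro:graded_types_are_iso_classes}, the hypotheses reduce to the isomorphisms $\Ck(\A_1,a_1) \cong \Ck(\A_2,a_2)$ and $\Ck(\B_1,b_1) \cong \Ck(\B_2,b_2)$, and applying $F_k$ (which preserves products) and then invoking Corollary~\ref{coro:graded_types_are_iso_classes} once more yields $(\A_1,a_1) \times (\B_1,b_1) \equiv_k^{\#} (\A_2,a_2) \times (\B_2,b_2)$. The cases $\Rrightarrow_k^+$ and $\equiv_k^+$ employ the Kleisli characterization: by Proposition~\ref{prop:kleisli_morphisms} together with Theorem~\ref{thm:hennessy-milner}, the hypothesis $(\A_1,a_1) \Rrightarrow_k^+ (\A_2,a_2)$ yields a homomorphism $f: \Ck(\A_1,a_1) \to (\A_2,a_2)$, and similarly one obtains $g: \Ck(\B_1,b_1) \to (\B_2,b_2)$. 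Pairing $f \circ \Ck\pi_1$ and $g \circ \Ck\pi_2$, where $\pi_1, \pi_2$ are the projections of the product in $\Stp$, via the universal property of $(\A_2,a_2) \times (\B_2,b_2)$ produces a homomorphism $\Ck((\A_1,a_1) \times (\B_1,b_1)) \to (\A_2,a_2) \times (\B_2,b_2)$; translating back through the same characterization gives the desired $\Rrightarrow_k^+$-inequality between the products. The case $\equiv_k^+$ follows by the symmetric argument, invoking Corollary~\ref{coro:logical_equivalence_kleisli}.

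The case $\equiv_k$ is the main obstacle and admits two approaches. The abstract route invokes the FVM-for-products theorem of~\cite{jakl2023categorical}, which guarantees that the binary product functor on $\EM(\Ck)$ preserves open pathwise embeddings; combining this with Corollary~\ref{coro:full_logical_equivalence} and Proposition~\ref{prop:open_PE}, the product of the two given spans of bounded morphisms yields a span $\Ck((\A_1,a_1)\times(\B_1,b_1))\leftarrow \T\rightarrow\Ck((\A_2,a_2)\times(\B_2,b_2))$ with $\T$ a pp-tree of height at most $k$. The concrete route is game-theoretic via Theorem~\ref{thm:bisimilarity_game}: Duplicator's winning strategies in $\G_k((\A_1,a_1),(\A_2,a_2))$ and $\G_k((\B_1,b_1),(\B_2,b_2))$ are played in parallel in $\G_k((\A_1,a_1)\times(\B_1,b_1),(\A_2,a_2)\times(\B_2,b_2))$, with each of Spoiler's moves projected componentwise and each of Duplicator's responses constructed componentwise from the two strategies. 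The key verification is that the winning condition in the product game decomposes through the projections---a tuple lies in $R^{\A\times\B}$ iff its projections lie in $R^{\A}$ and $R^{\B}$ respectively---which is where the specific compatibility of \logic's comparison games with products, and hence the FVM phenomenon itself, becomes apparent.
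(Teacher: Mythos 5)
Your proof is correct, but it takes a genuinely different route from the paper's, which consists of a single line: ``Apply Prop.~VI.1, Prop.~VI.3, and Thm.~VI.4 in~\cite{jakl2023categorical}'', i.e.\ the paper outsources everything to the general Feferman--Vaught--Mostowski machinery for categorical products, whose point is precisely that any logic with a comonadic characterisation gets this result for free. You instead assemble a largely self-contained argument from the paper's own ingredients: the Kleisli pairing $\langle f\circ\Ck\pi_1,\, g\circ\Ck\pi_2\rangle$ for $\Rrightarrow^+_k$ and $\equiv^+_k$, preservation of products by the right adjoint $F_k$ combined with Corollary~\ref{coro:graded_types_are_iso_classes} for $\equiv_k^{\#}$, and the parallel-play strategy for $\equiv_k$ (your key verification that $R^{\A\times\B}$ decomposes through the projections, so that maintaining the biconditional componentwise maintains it in the product, is exactly the point where the argument could fail and does not). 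Your preliminary observation that products of finitely branching structures are finitely branching is also needed and correct, since $E^{\A\times\B}((a,b))=E^{\A}(a)\times E^{\B}(b)$. What the paper's approach buys is brevity and the conceptual message that FVM-for-products is automatic in this framework; what yours buys is transparency and independence from the external reference, at the cost of case-by-case work. Two small remarks: your ``abstract route'' for $\equiv_k$ slightly misstates the cited theorem (it concerns the lifting of the product operation on $\Stp$ to the Eilenberg--Moore categories, rather than the product functor on $\EM(\Ck)$ itself, and one still has to check that the product of two bounded morphisms is bounded), but your concrete game-theoretic route covers that case fully; and the parallel-play argument would in fact handle all four relations uniformly (for $\equiv_k^{\#}$ one pairs the two bijections componentwise), so your use of three different mechanisms is a stylistic choice rather than a necessity.
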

\begin{proof}
    Apply Prop. VI.1, Prop VI.3, and Thm. VI.4 in~\cite{jakl2023categorical}.
\end{proof}

The question at hand, then, is whether our functor $K: \EM(\Ck) \to \EM(\Mk)$ arises from a more general unary operation $\underline{K}: \Structpointed \to \Structpointed[\tilde{\sigma}]$ which plays a role analogous to $\times$ in the Corollary above.

Given comonads $\bb C$ and $\bb C$ on categories $\cat{E}$ and $\cat{E}'$, respectively, there is a standard notion of comonad morphism $\bb C \Rightarrow \bb D$ which generalises the one given in Definition \ref{def:comonad_morphism}, consisting of a functor $F: \cat{E} \to \cat{E}'$ together with a natural transformation $\kappa: \bb DF \Rightarrow D \bb C$ such that $F\epsilon \circ \kappa = \epsilon F$ and $F\delta \circ \kappa = \kappa \bb C \circ \bb D \kappa \circ \delta F$ (see~\cite{street1972formal} for the dual notion for monads, where it is referred to as a \emph{monad functor}). In~\cite{jakl2023categorical}, such natural transformations $\kappa$ are referred to \define{Kleisli laws} for the functor $F$. Kleisli laws for a functor $F$ and comonads $\bb C$ and $\bb D$ as above are in one-to-one correspondence with liftings of $F$ to the corresponding Kleisli categories, i.e. functors $\overline{F}: \Kl(\bb C) \to \Kl(\bb D)$ such that $\overline{F} \circ F^{\bb C} \cong F^{\bb D} \circ F$, where $F^{\bb C}$ and $F^{\bb D}$ are the right adjoints of the Kleisli adjunctions of $\bb C$ and $\bb D$, respectively~\cite{jakl2023categorical}. Since $\Ck$ and $\Mk$ are idempotent, their Kleisli categories are equivalent to their EM categories (see Remark \ref{rem:isomorphisms_where}), hence Kleisli laws also classify liftings to the EM categories.

In our case, it is not hard to see that $K$ arises from a comonad morphism $(\underline{K}, \kappa): \Structpointed \to \Structpointed[\tilde{\sigma}]$ where $\underline{K}(\Ap)$ is defined to have universe $|\A|$, basepoint $\bpa$, and interpretations
$$R^{\underline{K}\Ap} \coloneqq \{a' \in |\A| \midd \text{there is an $E$-chain $s$ from $\bpa$ to $a'$ such that $\A, s \models R$}\}.$$
The Kleisli law $\kappa: \Mk \underline{K} \Rightarrow \underline{K} \Ck$ in this case is the identity, i.e.\ we have an equality of functors $\Mk \underline{K} = \underline{K} \Ck$.

\subsubsection{Polynomial Reductions Using the Translation Functor}

We now use the functor $K$ to give computational reductions from \logic problems to their \bml analogues. Although the complexity results thus obtained may also be established directly, the reductions exhibit the close relationship between \logic and \bml.

\paragraph*{Deciding $k$-bisimilarity.} Given a finite signature $\sigma$ with $E \in \sigma$, the problem \kbisimproblemsig has as inputs two finite, pointed $\sigma$-structures $(\A,a)$ and $(\B,b)$, and asks whether $(\A,a) \bisim_k (\B,b)$.
Note that when $\sigma$ is unimodal, \kbisimproblemsig consists in checking whether two Kripke models are $k$-bisimilar in the usual sense of \bml.

\begin{corollary}\label{coro:reduction_bisimilarity}
    There is a polynomial-time reduction from \kbisimproblemsig to \kbisimproblemsig[\tilde{\sigma}]. Thus \kbisimproblemsig is in \ptime.
\end{corollary}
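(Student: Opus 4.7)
The plan is to use Theorem~\ref{thm:K_preserves_bisimilarity} as the backbone of the reduction. On input $(\Ap, \Bp)$, the reduction outputs the pair $(K\Ck\Ap, K\Ck\Bp)$ of pointed $\tilde\sigma$-structures and asks whether they are $k$-bisimilar. Correctness is immediate from Theorem~\ref{thm:K_preserves_bisimilarity}, so only two things remain to check: that the reduction is polynomial-time computable, and that $k$\textsc{-Bisim}$(\tilde\sigma)$ is in \ptime.

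For polynomial-time computability, recall that $\sigma$ and $k$ are fixed (the signature is finite and, in particular, has bounded arity $N$). The universe $|\Ck\Ap|$ consists of $E$-chains in $\A$ of length at most $k+1$ starting at $a$, so $\# |\Ck\Ap| \leq (k+1)\cdot \#|\A|^k = O(\#|\A|^k)$. For each $R \in \sigma$ of arity $r \leq N$, the definition of $R^{\Ck\A}$ forces $s_1, \dots, s_r$ to be consecutive prefixes, so each such tuple is determined by $s_r$; hence $\# R^{\Ck\A} \leq \# |\Ck\A|$, which is polynomial in $\#|\A|$. Building $\Ck\Ap$ therefore takes polynomial time, and applying the functor $K$ only collapses each $r$-tuple in $R^{\Ck\A}$ to its last element, which is again a polynomial-time operation. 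The same analysis applies to $\Bp$, so the entire reduction runs in time polynomial in $\#|\A| + \#|\B|$.

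For the second ingredient, $\tilde\sigma$ is unimodal and $k$-bisimilarity between finite Kripke models is a classical problem solvable in polynomial time by a stratified partition-refinement procedure that computes the $j$-bisimilarity partitions for $j = 0, 1, \dots, k$ in succession. Composing this algorithm with our reduction yields a polynomial-time decision procedure for $k\textsc{-Bisim}(\sigma)$, establishing the \ptime upper bound.

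The main potential obstacle is simply the size bookkeeping for $\Ck\Ap$: one must verify that neither the universe nor any relation interpretation blows up super-polynomially. Both bounds crucially depend on the immediate-successor clause in Definition~\ref{def:ppml_comonad}, which pins down each tuple in $R^{\Ck\A}$ by its last component; without this locality constraint (as would happen for $\bb E_k$), $R^{\Ck\A}$ could a priori be much larger and this simple reduction argument would not go through.
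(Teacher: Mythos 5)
Your proposal is correct and follows essentially the same route as the paper: reduce via $(\Ap,\Bp) \mapsto (K\Ck\Ap, K\Ck\Bp)$, invoke Theorem~\ref{thm:K_preserves_bisimilarity} for correctness, and observe that both $\Ck$ and $K$ are polynomial-time computable for fixed $k$ and finite $\sigma$. Your explicit size bookkeeping (each tuple of $R^{\Ck\A}$ being determined by its last component) and the partition-refinement remark for the unimodal target problem are just slightly more detailed versions of what the paper leaves to inspection.
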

\begin{proof}
    Given $(\A,a)$ and $(\B,b)$, the reduction simply computes $K\Ck(\A,a)$ and $K\Ck(\B,b)$, since by Theorem~\ref{thm:K_preserves_bisimilarity}, $(\B,b)$, $(\A,a) \bisim_k (\B,b)$ iff $K\Ck(\A,a) \bisim_k K\Ck(\B,b)$. Computing the action of $\Ck$ on finite structures is polynomial in the size of the structure
    as can be seen from inspection of Definition~\ref{def:ppml_comonad}. On the other hand, $K$ can be computed in linear time as its action can be calculated with just one pass over the input data (for each $R \in \bsigma$ and for each tuple $s \in R^{\Ck\A}$, write $\epsilon(s) \in R^{K\Ck\A}$).
\end{proof}

We now turn to the issue of truth preservation, related to giving reductions to \bml for the problems of model checking and satisfiability.

\begin{remark}
    The syntax of \logic for a given signature is independent of the arity of the relation symbols. Hence a $\sigma$-\logic-formula can always be regarded as a $\tilde{\sigma}$-\bml-formula, and viceversa.
\end{remark}

\paragraph*{Model checking.} Given a signature $\sigma$ with $E \in \sigma$, the problem \modelcheckproblemsig has as inputs a finite structure $\Ap \in \Structpointed$ and a \logic-formula $\phi$, and asks whether $\Ap \models \phi$.

\begin{theorem}\label{thm:translation_preserves_truth}
    Given $\T \in \ncat{ppTree}$, a \logic formula $\phi$ and $v \in |\T|$, $\T, \T_v \models \phi$ if and only if $K\T, v \models \phi$. In particular, when considering single-point semantics at the root of $\T$, $\T \models \phi$ if and only if $K\T \models \phi$.
\end{theorem}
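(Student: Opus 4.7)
The plan is to proceed by structural induction on $\phi$, with two cases doing the real work: the atomic case $\phi = R$ for $R \in \bsigma$, and the modal case $\phi = \Diamond\psi$. The other cases ($\top$, conjunction, negation) are routine and follow immediately from the inductive hypothesis combined with the definitions of the semantics.

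For the atomic case, fix $R \in \bsigma$ of arity $r$. On the left, $\T, \T_v \models R$ holds iff $r \leq |\T_v|$ and $\last_r(\T_v) \in R^\T$. On the right, since $R$ has arity $1$ in $\tilde{\sigma}$, $K\T, v \models R$ holds iff $v \in R^{K\T}$, which by definition of $K\T$ holds iff there exists $s \in R^\T$ with $\epsilon(s) = v$. The key observation, which I would isolate as a small sub-lemma, is that in a pp-tree any tuple $s \in R^\T$ must be the unique $E$-chain of length $r$ ending at $\epsilon(s)$ (this is condition (2) in Definition~\ref{def:pptree}). Thus the existence of such an $s$ ending at $v$ is equivalent to $r \leq |\T_v|$ and $\last_r(\T_v) \in R^\T$. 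Hence both sides coincide.

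For the modal case, $\T, \T_v \models \Diamond\psi$ unfolds to the existence of $w$ with $v \prec w$ in $\T$ such that $\T, \T_v.w \models \psi$. Since $\T$ is a pp-tree, the concatenation $\T_v.w$ is precisely the unique chain $\T_w$ from the root to $w$ (using Remark~\ref{rem:notation_for_pptrees}). On the right, $K\T, v \models \Diamond\psi$ unfolds to the existence of $w$ with $v \prec w$ in $K\T$ (which is the same accessibility relation, since $E^{K\T} = E^\T$ by definition) such that $K\T, w \models \psi$. Applying the inductive hypothesis to $\psi$ and to each such $w$ then yields the biconditional.

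The main obstacle, if any, is setting up the atomic case cleanly: one must see why the ``suffix of $\T_{\epsilon(s)}$'' constraint built into the definition of a pp-tree is exactly what makes the encoding $R^{K\T} = \{\epsilon(s) \mid s \in R^\T\}$ lossless when evaluating at a chain of sufficient length. Once this is noted, the remainder of the proof is a direct induction, and the final sentence of the statement follows by specialising to $v = u$, for which $\T_u = [u]$ is the singleton chain at the root and the single-point semantics on both sides coincide with the valuation-based semantics.
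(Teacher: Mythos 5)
Your proof is correct and follows essentially the same route as the paper: a structural induction on $\phi$ whose only non-trivial cases are the atomic one (resolved by the definition of $R^{K\T}$ together with condition (2) of Definition~\ref{def:pptree}, exactly as you isolate it) and the modal one (resolved by the identity $\T_v.w = \T_w$ and the inductive hypothesis). No gaps.
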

\begin{proof}
    We proceed by structural induction on $\phi$. The non-trivial cases are those of $\phi = R$ for some $R \in \bsigma$, which holds precisely by definition of $R^{K\T}$, and of $\phi = \Diamond \psi$, for which we reason as follows:
    \begin{align*}
    \T, \T_v \models \phi &\text{ iff } \exists w. v \prec w \text{ and } \T, \T_v.w \models \psi
    \\
    &\text{ iff } \exists w. v \prec w \text{ and } \T, \T_w \models \psi \\
    &\text{ iff } \exists w. v \prec w \text{ and } K\T, w \models \psi \tag{inductive hypothesis}\\
    &\text{ iff } K\T, v \models {\phi}.
    \end{align*}
This concludes the proof.
\end{proof}

\begin{corollary}\label{coro:model_checking}
    There is a polynomial-time reduction from \modelcheckproblemsig to the problem \modelcheckproblemsig[\tilde{\sigma}].
\end{corollary}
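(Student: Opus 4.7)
The plan is to adapt the strategy of Corollary~\ref{coro:reduction_bisimilarity}, but to avoid the blowup that would come from taking $k=\md(\phi)$ as input-dependent and handing $(K\Ck\Ap, \phi)$ directly to the \bml model checker: since $|\Ck\Ap|$ can be as large as $|\A|^{\md(\phi)+1}$, this naive attempt is not polynomial in the combined input size $|\A|+|\phi|$. Instead, I would exploit the bounded-memory reformulation of Remark~\ref{rem:bounded_arity_signatures}: as $\sigma$ is fixed, the maximum arity $N$ is a constant, and the evaluation of any \logic formula can be carried out while tracking only a window of the last $N$ visited nodes.

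Concretely, given input $(\Ap, \phi)$, I would define a $\tilde{\sigma}$-structure $(\+B, b)$ with
\[
|\+B| \coloneqq \{s \in |\A|^+ \mid s \text{ is an }\acc\text{-chain in }\A \text{ and } |s| \leq N\},
\]
basepoint $b \coloneqq [\bpa]$, interpretations $R^{\+B} \coloneqq \{s \in |\+B| \mid |s| \geq \arity(R) \text{ and } \last_{\arity(R)}(s) \in R^\A\}$ for each $R \in \bsigma$, and accessibility
\[
\acc^{\+B} \coloneqq \{(s,s') \mid \epsilon(s) \prec \epsilon(s') \text{ in } \A \text{ and } s' = \last_{N-1}(s).\epsilon(s')\}.
\]
The reduction outputs $(\+B, \phi)$, viewing $\phi$ as a $\tilde\sigma$-\bml formula (this is legitimate since \logic and \bml share the same syntax; see Example~\ref{ex:ppml_first_examples}(1)).

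I would then prove, by structural induction on $\phi$, the invariant
\[
\A, s \models \phi \quad \iff \quad \+B, s \models \phi \quad \text{for all } s \in |\+B|,
\]
where the left-hand side uses the bounded-memory \logic semantics of Remark~\ref{rem:bounded_arity_signatures} and the right-hand side uses \bml semantics. The atomic case $\phi = R$ is immediate from the definition of $R^{\+B}$; the Boolean cases are routine; the $\Diamond$ case uses the bounded-memory clause for $\Diamond$ together with the fact that the $\acc^{\+B}$-successors of $s$ are exactly the windows of the form $\last_{N-1}(s).a'$ with $\epsilon(s) \prec a'$ in $\A$. Specialising to $s = [\bpa]$ yields the required equivalence $(\A,\bpa) \models \phi \iff (\+B, b) \models \phi$.

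The main obstacle will be the bookkeeping around badly-nested subformulas (cf.\ Def.~\ref{def:modal_debt}): a subformula $R \in \bsigma$ can in principle be evaluated at a window $s$ that is too short to meet the arity condition, but in that case both semantics agree that $R$ is false, so the induction goes through uniformly without a separate case analysis. The polynomial-time bound is then immediate: $|\+B| \leq N \cdot |\A|^N$ is polynomial in $|\A|$ for fixed $\sigma$, and all interpretations of $\+B$ can be produced by a single pass over the universe, so the whole construction is polynomial-time computable in $|\A|+|\phi|$.
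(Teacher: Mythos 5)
Your reduction is correct, but it is not the one the paper gives: the paper's proof reduces $(\Ap,\phi)$ to $(K\Ck\Ap,\phi)$ with $k=\md(\phi)$ — precisely the step you rejected — invoking Corollary~\ref{coro:bisimilar-companion-ppml} and Theorem~\ref{thm:translation_preserves_truth} for correctness and the argument of Corollary~\ref{coro:reduction_bisimilarity} for the complexity bound. Your objection to that complexity claim is well taken: in the model-checking problem $k$ is read off the input formula, and $|\Ck\Ap|$ can be of order $|\A|^{\md(\phi)}$ (take $\acc^{\A}=|\A|\times|\A|$), so the paper's one-line argument yields a bound that is polynomial only for each fixed modal depth, not in $|\A|+|\phi|$ jointly. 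Your windowed structure $\+B$ repairs this: it is in effect the quotient of the full unravelling by the ``same last $N$ nodes'' equivalence, which is exactly the information that the bounded-memory semantics of Remark~\ref{rem:bounded_arity_signatures} says suffices, and its size $N\cdot|\A|^{N}$ is genuinely polynomial once $\sigma$ (hence $N$) is fixed. Your inductive invariant is the right one, and the observation that windows shorter than $\arity(R)$ falsify $R$ under both semantics disposes of badly-nested subformulas uniformly, playing the role that condition $(\star)$ and the formula $\phi_K$ play elsewhere in the paper. What you trade away is the reuse of Theorem~\ref{thm:translation_preserves_truth} and the structural uniformity with the bisimilarity reduction; what you gain is an honest polynomial bound, at the cost of proving your own truth-preservation lemma and of leaning on the equivalence asserted but not proved in Remark~\ref{rem:bounded_arity_signatures} (which you should either prove by the same induction or fold into your invariant by comparing directly with the standard semantics). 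Two minor tidying points: say explicitly that $N$ is a constant because $\sigma$ is finite and fixed, and note that computing $\acc^{\+B}$ requires examining pairs of windows rather than a single pass over the universe — still comfortably polynomial.
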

\begin{proof}
    Just as for \kbisimproblemsig, the reduction amounts to computing $K\Ck\Ap$ and checking whether $K\Ck\Ap \models \phi$.
\end{proof}

\paragraph*{Satisfiability.} The problem \satproblem has as input a \logic formula $\phi$, and asks whether $\phi$ is satisfiable over $\sigma$-structures where $\sigma$ is the finite signature consisting of the symbols appearing in $\phi$ together with their specified arities%
\footnote{We consider the information of $\arity(R)$ for each relation symbol $R$ appearing in $\phi$ to be codified in unary. This is an intuitive requirement if we recall that in First Order Logic, the arity of relation symbols appearing in a given formula is explicitly codified in unary through the variables appearing in the relational atom. In contrast, the arity of symbols is not reflected at all in the syntax of a \logic formula.}
and the binary symbol $\acc$. The satisfiability problem for \bml, \satproblembml, is defined in the same way except that all relation symbols are presumed to be unary.

\begin{theorem}\label{thm:satisfiability_reduction}
    There exists a polynomial-time reduction from \satproblem to
    \satproblembml.
    Moreover, since the former includes the latter, we deduce that \satproblem is \pspace-complete.
\end{theorem}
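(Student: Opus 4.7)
The plan is to reduce \satproblem to \satproblembml in two steps: first normalise the input $\phi$ to a well-nested formula, then exploit the correspondence between pp-tree satisfiability and \bml satisfiability on Kripke trees given by the functor $K$. Given the input formula $\phi$, the reduction computes in polynomial time the well-nested rewriting $\phi^*$ obtained by a single syntax-tree pass that replaces every occurrence of $R \in \bsigma$ not nested under at least $\arity(R) - 1$ diamonds with $\bot$; by Example~\ref{ex:ppml_first_examples}(4), $\phi$ and $\phi^*$ are \logic-equivalent, and since arities are encoded in unary this pass is polynomial in the input size. The output is $\phi^*$ itself, now regarded as a \bml-formula over the unimodal signature $\tilde{\sigma}$.

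For the forward direction of correctness, assume $\phi$ is \logic-satisfiable. By the pp-tree-model property (Corollary~\ref{coro:coalgebra-model_property}), $\phi$ is satisfied by some pp-tree $\T$ of finite height, and by Theorem~\ref{thm:translation_preserves_truth}, $K\T \models \phi$, giving \bml-satisfiability of $\phi^*$ since the two formulas are \logic-equivalent and on unimodal signatures the \logic and \bml semantics coincide.

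For the backward direction, assume $\phi^*$ is satisfied by some Kripke tree $\T'$ over $\tilde{\sigma}$ (using the standard tree-model property of \bml). In general $\T'$ need not lie in the image of $K$, i.e., it may fail condition $(\star)$ of Proposition~\ref{prop:translation_functor}. We therefore prune $\T'$ to the Kripke tree $\T''$ obtained by removing each node $v$ from $R^{\T'}$ whenever its height is strictly less than $\arity(R) - 1$. A routine structural induction on $\phi^*$, using that well-nestedness forces every occurrence of $R \in \bsigma$ to be evaluated at a node of height at least $\arity(R) - 1$, shows $\T'' \models \phi^*$. Since $\T''$ satisfies $(\star)$, Proposition~\ref{prop:translation_functor} produces a pp-tree $\T$ over $\sigma$ with $K\T = \T''$, and Theorem~\ref{thm:translation_preserves_truth} then yields $\T \models \phi^*$ in the \logic sense, so $\phi$ is \logic-satisfiable.

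The reduction runs in polynomial time, so \satproblem inherits the \pspace upper bound from \satproblembml~\cite{blackburn2001modal}. For \pspace-hardness, \satproblem syntactically subsumes \satproblembml: when $\sigma$ is a unimodal signature, the \logic- and \bml-syntaxes coincide with matching semantics by Example~\ref{ex:ppml_first_examples}(1), giving a trivial reduction from \satproblembml to \satproblem. The main technical obstacle is the pruning step in the backward direction, where we must propagate the well-nestedness invariant through the induction to guarantee that discarding low-height relation facts never changes the truth value of $\phi^*$—this is precisely what the preparatory rewriting step buys us.
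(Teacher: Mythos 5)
Your proof is correct, but it handles the crux of the reduction differently from the paper. The paper keeps the input formula $\phi$ intact and conjoins it with a formula $\phi_K$ that is a \logic-tautology but, on the \bml side, forces any Kripke-tree model of $\phi \land \phi_K$ to satisfy condition $(\star)$ and hence to lie in the image of $K$; the mismatch caused by badly-nested subformulas is thus pushed into the \emph{formula}, which simply rejects all Kripke trees outside the image of $K$. You instead push the work into two other places: a preprocessing step that rewrites $\phi$ to the well-nested $\phi^*$ (sound by Example~\ref{ex:ppml_first_examples}(4)), and a model-repair step on the \bml side that prunes low-height tuples so that an arbitrary Kripke-tree model of $\phi^*$ becomes one satisfying $(\star)$, with a height-indexed induction showing the pruning is invisible to a well-nested formula. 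Both routes rely on the same pillars (Corollary~\ref{coro:coalgebra-model_property}, Theorem~\ref{thm:translation_preserves_truth}, Proposition~\ref{prop:translation_functor}, and the \bml tree-model property); yours produces a shorter output formula and makes the role of well-nestedness explicit, at the cost of the extra pruning lemma, whereas the paper's $\phi_K$ avoids any model surgery. One justification in your forward direction is misstated: from $K\T \models \phi$ you cannot pass to $K\T \models \phi^*$ ``since the two formulas are \logic-equivalent,'' because $\phi$ and $\phi^*$ are generally \emph{not} equivalent as \bml-formulas over $\tilde{\sigma}$ (take $\phi = R$ with $R$ binary, so $\phi^* = \bot$). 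The fix is immediate: apply the \logic-equivalence on the pp-tree side first ($\T \models \phi$ implies $\T \models \phi^*$) and only then invoke Theorem~\ref{thm:translation_preserves_truth} for $\phi^*$. With that reordering your argument goes through.
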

\begin{proof}
    Let $\phi$ be a \logic formula, and let $\bsigma$ consist of the symbols in $\phi$ with their prespecified arities, so that $\sigma = \{\acc\} \union \bsigma$ is the \logic signature obtained from the input of \satproblem.

    By Theorem~\ref{thm:translation_preserves_truth}, if $\phi$ is \logic-satisfiable then it is \bml-satisfiable. However the converse does not hold since some \bml-satisfiable formulas such as $\psi = R$, for $R \in \bsigma$ of arity $2$, are badly nested as \logic-formulas and hence \logic-unsatisfiable. This can be seen as a consequence of the fact that $K$ is not essentially surjective on objects, and as a consequence of the fact that the syntactic redundancy of badly-nested formulas is not mirrored in \bml.

    To solve this problem, notice that, although the class of Kripke trees is not definable internal to \bml, the image of $K$ is \bml-definable with respect to class of Kripke trees: a Kripke tree is in the image of $K$ if and only if it satisfies the formula
    $$\phi_K \coloneqq \bigwedge_{\substack{R \in \bsigma \\ \arity_\sigma(R) > 1}} \bigwedge_{j=1}^{\arity_\sigma(R)-2} \lnot (\underbrace{\Diamond \dots \Diamond}_{\text{$j$ times}} R).$$
    Moreover, $\phi_K$ is a tautology of \logic.
    Hence, given $\phi$, set $\phi' \coloneqq \phi \land \phi_K$. We aim to show that the mapping $\phi \mapsto \phi'$ is our desired reduction.

    Since $\phi_K$ is a \logic-tautology, $\phi'$ is \logic-equivalent to $\phi$. Hence $\phi$ is \logic-satisfiable iff $\phi'$ is \logic-satisfiable. By Corollary~\ref{coro:coalgebra-model_property}, we conclude that $\phi$ is \logic-satisfiable iff $\phi'$ is \logic-satisfiable on the class of pp-trees of finite height.
    
    On the other hand, by the tree-model property of \bml, $\phi'$ is \bml-satisfiable iff it is satisfiable in the class of finite Kripke trees, and since we know it cannot hold in any Kripke tree outside the image of $K$, we conclude that $\phi'$ is \bml-satisfiable iff it is \bml-satisfiable in the image of $K$.
    
    Finally, using Theorem~\ref{thm:translation_preserves_truth} we connect both chains of equivalences and conclude that $\phi$ is \logic-satisfiable if and only if $\phi'$ is \bml-satisfiable.

    Note the length of $\phi'$ is polynomial in the length of $\phi$ and in the prespecified arities of the relation symbols.

    Since $\phi'$ can be computed from $\phi$ in polynomial time, this gives a polynomial reduction from \satproblem to \satproblembml, which means that \satproblem is in \pspace. Finally, since \satproblem includes \satproblembml for certain choices of input, and since \satproblembml is \pspace-complete~\cite{blackburn2001modal}, \satproblem is \pspace-complete as well.
\end{proof}

From the proof above we may also conclude that \logic inherits the finite-model property from \bml.
\begin{corollary}
    \logic has the finite-model property: a \logic-formula is satisfiable if and only if it is satisfied by a finite structure.
\end{corollary}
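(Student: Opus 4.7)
The plan is to piece together the ingredients already established in the proof of Theorem~\ref{thm:satisfiability_reduction}. Suppose $\phi$ is a satisfiable \logic-formula, and let $\sigma$ be the signature consisting of $\acc$ together with the relation symbols appearing in $\phi$ with their declared arities. I would form the associated \bml-formula $\phi' \coloneqq \phi \land \phi_K$ exactly as in Theorem~\ref{thm:satisfiability_reduction}, where $\phi_K$ is the \logic-tautology that characterises, within the class of rooted Kripke trees, those trees lying in the image of the functor $K$. Since $\phi_K$ is a \logic-tautology, $\phi'$ is \logic-equivalent to $\phi$, hence \logic-satisfiable; then by the reduction argument (Theorem~\ref{thm:translation_preserves_truth} applied to a satisfying pp-tree, which exists by Corollary~\ref{coro:coalgebra-model_property}), $\phi'$ is \bml-satisfiable.

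Next I would invoke the finite-model property of \bml: any \bml-satisfiable formula is satisfied by some finite Kripke structure $\+K$. By the \bml tree-model property, we may in fact take $\+K$ to be a finite rooted Kripke tree of height at most the modal depth of $\phi'$. Because $\+K \models \phi_K$, the structure $\+K$ satisfies condition $(\star)$ of Proposition~\ref{prop:translation_functor}, so it lies in the image of $K$, say $\+K \cong K\T$ for some finite pp-tree $\T$ (since $K$ acts as the identity on underlying sets, finiteness is preserved by taking a preimage).

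Finally, applying Theorem~\ref{thm:translation_preserves_truth} at the root, $\T \models \phi$ iff $K\T \models \phi$, and since $K\T \cong \+K \models \phi$, we conclude that $\T$ is a finite $\sigma$-structure satisfying $\phi$. The converse direction is trivial. The only real subtlety is the move from ``$\phi'$ is satisfied by some finite Kripke structure'' to ``it is satisfied by a finite Kripke structure in the image of $K$'': this is precisely what the auxiliary formula $\phi_K$ was engineered to guarantee, together with the \bml tree-model property which ensures we may choose a satisfying model of tree shape so that $(\star)$ becomes definable. No new ingredients beyond those already developed in Sections~\ref{sec:model_th_of_ppml} and~\ref{sec:relationship_modal} are needed.
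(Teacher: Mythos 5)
Your proposal is correct and follows essentially the same route as the paper's own proof: both pass from $\phi$ to $\phi' = \phi \land \phi_K$, use the finite-model and tree-model properties of \bml{} to obtain a finite Kripke tree model, observe that satisfaction of $\phi_K$ places it in the image of $K$, and pull back along $K$ via Theorem~\ref{thm:translation_preserves_truth} to get a finite pp-tree satisfying $\phi$. The subtlety you flag about needing a tree-shaped \bml{} model before $(\star)$ applies is handled identically in the paper.
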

\begin{proof}
    Let $\phi$ by a \logic formula and suppose that it is \logic-satisfiable. Then, from the proof of Theorem~\ref{thm:satisfiability_reduction} we know that $\phi' \coloneqq \phi \land \phi_K$, as defined in the proof, is \bml-satisfiable. But a \bml formula is satisfiable if and only if it is satisfied by a finite Kripke tree. Let $\T'$ be a finite Kripke tree such that $\T' \models \phi'$. Then since $\T' \models \phi_K$, $\T'$ is in the image of $K$, hence there exists a pp-tree $\T$ with $|\T| = |\T'|$ such that $K\T = \T'$. In particular, $\T$ is finite, and by Theorem~\ref{thm:translation_preserves_truth}, $\T \models \phi'$, hence $\T \models \phi$.
\end{proof}

We close with a note on the expressivity of \logic. Since \logic shares many of the complexity properties of \bml (and, we may add, since the \logic comonad shares with the \bml comonad the `tameness' property of idempotence, which in particular implies a tree-model property), it makes sense to expect that these two logics might somehow be also equivalent in expressive power. Although we cannot make a direct comparison, since in the general case \logic and \bml predicate over different classes of models, the translation functor $K$ allows us to establish a connection between the logical types for these two languages.

\begin{definition}
    Given $k \geq 0$ and a pointed $\sigma$-structure $\Ap$, let $[\A, \bpa]_k^\sigma$ denote the type of $\Ap$ with respect to \logick over the signature $\sigma$, i.e.\ its equivalence class with respect to $\equiv_k$. Let $\ncat{Type}_k(\sigma)$ be the set of types of $\sigma$-structures with respect to \logick and $\ncat{Type}_k(\tilde{\sigma})$ the set of types of Kripke structures (i.e.\ $\tilde{\sigma}$-structures) with respect to $\bml_k$ (i.e.\ \logick over $\tilde{\sigma}$).
\end{definition}

\begin{proposition}
    Assuming finite $\sigma$, the assignment
    $$[\A, \bpa]_k^\sigma \mapsto [K\Ck\Ap]_k^{\tilde{\sigma}}$$
    determines a well-defined function $\overline{K}: \ncat{Type}_k(\sigma) \to \ncat{Type}_k(\tilde{\sigma})$ which is injective and whose image consists of all classes of the form $[\B,\bpb]_k^{\tilde{\sigma}}$ where $\Bp$ is a Kripke structure satisfying the condition
    \begin{itemize}
        \item[{\em ($\star'$)}] for all $R \in \bsigma$ and for all $b' \in |\B|$, if $b' \in R^\B$ then there exists an $E$-chain $s$ from $b$ to $b'$ such that $|s| \geq \arity(R)$.
    \end{itemize}
\end{proposition}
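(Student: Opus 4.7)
My plan splits the proposition into three claims: well-definedness, injectivity, and the image characterization.

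Well-definedness and injectivity come together from a single chain of equivalences. Since $\sigma$ is finite, so is $\tilde\sigma$, and Theorem~\ref{thm:hennessy-milner} applies on both sides. I would string together the biconditionals
\[
\Ap \equiv_k \Ap'
\ \iff\ \Ap \bisim_k \Ap'
\ \iff\ K\Ck\Ap \bisim_k K\Ck\Ap'
\ \iff\ K\Ck\Ap \equiv_k K\Ck\Ap'
\]
using Hennessy-Milner for \logic, Theorem~\ref{thm:K_preserves_bisimilarity}, and Hennessy-Milner for \bml in turn. Reading left-to-right gives well-definedness of $\overline{K}$, reading right-to-left gives injectivity.

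Necessity for the image characterization is immediate from Proposition~\ref{prop:translation_functor}: any $K\Ck\Ap$ lies in the image of the functor $K$, hence satisfies the tree condition ($\star$). Since every node $v$ lying in $R^{K\Ck\Ap}$ has height at least $\arity(R)-1$ in the pp-tree $\Ck\Ap$, the unique $E$-chain from the root $[a]$ to $v$ has length at least $\arity(R)$, so $K\Ck\Ap$ itself witnesses ($\star'$) for the class $[K\Ck\Ap]_k^{\tilde\sigma}$.

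For sufficiency, given $\Bp$ satisfying ($\star'$), I would construct $\Ap$ by setting $|\A| \coloneqq |\B|$, $\bpa \coloneqq \bpb$, $E^\A \coloneqq E^\B$, and, for each $R \in \bsigma$ of arity $r$,
\[
R^\A \coloneqq \{(b_1, \ldots, b_r) \in |\B|^r : [b_1, \ldots, b_r] \text{ is an } E\text{-chain in } \B \text{ with } b_r \in R^\B\}.
\]
The crux is a translation lemma, proved by induction on $\phi \in \logick$: for every $E$-chain $s$ in $\B$ starting at $b$, $\Ap, s \models \phi$ in \logic iff $\Bp, \epsilon(s) \models \phi$ in \bml. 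Combined with Corollary~\ref{coro:bisimilar-companion-ppml} (so $\Ck\Ap \equiv_k \Ap$) and Theorem~\ref{thm:translation_preserves_truth} (so $K\Ck\Ap$ and $\Ck\Ap$ satisfy the same formulas), this lemma yields $K\Ck\Ap \equiv_k \Bp$ as $\tilde\sigma$-structures, placing $[\Bp]_k^{\tilde\sigma}$ in the image.

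The hard part will be the atomic case of the translation lemma. For $\phi = R$, the left-hand side requires $|s| \geq \arity(R)$ whereas the right-hand side only requires $\epsilon(s) \in R^\B$, so the equivalence needs \emph{every} $E$-chain from $b$ to any $b' \in R^\B$ to have length at least $\arity(R)$. I would use ($\star'$), read in the appropriate universally quantified strengthening if necessary, to discharge this requirement; one strategy is to first replace $\Bp$ by a $k$-bisimilar representative (the relevant initial segment of its $k$-unravelling, with interpretations reassigned so that $R$ never holds at nodes of depth less than $\arity(R)$) in which the uniform bound holds automatically, and then run the construction on that representative.
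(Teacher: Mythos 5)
Your treatment of well-definedness, injectivity and the necessity half of the image characterisation coincides with the paper's: both reduce $\equiv_k$ to $\bisim_k$ via Theorem~\ref{thm:hennessy-milner} (using finiteness of $\sigma$, hence of $\tilde{\sigma}$) and then invoke Theorem~\ref{thm:K_preserves_bisimilarity}, and both observe that $K\Ck\Ap$ itself witnesses $(\star')$ for its class. For sufficiency you diverge: the paper replaces $\Bp$ by its \bml unravelling $\Mk\Bp$, argues that $(\star')$ forces $\Mk\Bp$ to satisfy condition $(\star)$ of Proposition~\ref{prop:translation_functor}, writes $\Mk\Bp = K\T$ for a pp-tree $\T$ of height at most $k$, and uses $\T \cong \Ck\T$; no induction on formulas is needed. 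Your route instead builds a $\sigma$-structure $\Ap$ on the universe of $\Bp$ and proves a truth-transfer lemma by induction, then feeds it into Corollary~\ref{coro:bisimilar-companion-ppml} and Theorem~\ref{thm:translation_preserves_truth}. This is workable and more self-contained, but longer, and it runs into exactly the same obstruction that the paper's argument relies on resolving: the atomic case needs \emph{every} $E$-chain from $b$ to a point of $R^\B$ to have length at least $\arity(R)$, whereas $(\star')$ as written only asserts the existence of one such chain. (The paper needs the same universal reading to conclude that $\Mk\Bp$ satisfies $(\star)$: a point of $R^\B$ reachable by both a short and a long chain yields a shallow node of $\Mk\Bp$ carrying $R$.) You are right to flag this as the crux.

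Your proposed fallback, however, does not work as described. If you take the $k$-unravelling and then ``reassign interpretations so that $R$ never holds at nodes of depth less than $\arity(R)$'', you delete atoms from reachable nodes and thereby change the $\bisim_k$-class. Concretely, with $R$ ternary, $E^\B=\{(b,c),(b,d),(d,c)\}$ and $R^\B=\{c\}$, the structure $\Bp$ satisfies the existential form of $(\star')$ and satisfies $\Diamond R$, yet no structure of the form $K\Ck\Ap$ satisfies $\Diamond R$, since every point of $R^{K\Ck\A}$ has height at least $2$; so no modification of the unravelling that stays in $[\Bp]_k^{\tilde{\sigma}}$ can land in the image of $K$. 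This shows the sufficiency direction genuinely requires the universally quantified form of $(\star')$ and cannot be repaired on the model side. Commit to the universal reading (under which the unravelling already has the uniform bound and your induction, or the paper's shorter argument, goes through) and drop the reassignment step.
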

\begin{proof}
    Since $\sigma$ is finite, given $\Ap, \Bp \in \Stp$ we have that $[\A, \bpa]_k^\sigma = [\B, \bpb]_k^\sigma$ if and only if $\Ap \bisim_k \Bp$, and analogously for $\tilde{\sigma}$. Thus $\overline{K}$ is well defined and injective by Theorem~\ref{thm:K_preserves_bisimilarity}.

    As for the image of $\overline{K}$, clearly every class in the image of $\overline{K}$ contains a Kripke structure of the form $K\Ck\Ap$ for some $\Ap \in \Stp$ which in particular satisfies $(\star')$. Conversely, suppose $\Bp \in \Structpointed[\tilde{\sigma}]$ satisfies condition $(\star')$. Notice that $[\B,\bpb]_k^{\tilde{\sigma}} = [\Mk\Bp]_k^{\tilde{\sigma}}$ and, since $\Bp$ satisfies $(\star')$, $\Mk\Bp$ is a Kripke tree of height at most $k$ satisfying condition $(\star)$ in Prop.~\ref{prop:translation_functor}, hence there exists some pp-tree $\T$ of height at most $k$ over $\sigma$ such that $K\T = \Mk\Bp$. Finally, since $\T \cong \Ck\T$, we may assume without loss of generality that $\T = \Ck\Ap$ for some $\Ap \in \Stp$. Hence $[\B,\bpb]_k^{\tilde{\sigma}} = [\Mk\Bp]_k^{\tilde{\sigma}} = [K\Ck\Ap]_k^{\tilde{\sigma}}$ is in the image of $\overline{K}$.
\end{proof}

Since $\overline{K}$ embeds \logick types as a subset of $\bml_k$ types, we may say that what \logick sees of $\sigma$-structures is `the same' as what $\bml_k$ sees of $\tilde{\sigma}$-structures, except for the fact that some $\bml_k$ types do not correspond to any \logick type. For instance, if $S \in \sigma$ is binary, for the singleton Kripke structure $\{*\}$ with $S^{\{*\}} \coloneqq \{*\}$ there does not seem to be any natural choice of element in $\ncat{Type}_k(\sigma)$ to which it may be assigned.

We conclude that there are two distinct yet complementary perspectives on the relationship between \logic and \bml in terms of expressivity. On one hand, \logic predicates over a larger class of models, by allowing non-unimodal choices of signature $\sigma$. On the other hand, once the choice of $\sigma$ is fixed, the chosen arities restrict the properties that \logic may express about $\tilde{\sigma}$-structures.

\section{Conclusions and Future Work}\label{sec:conclu}

\logicfullname, or \logic, is a generalisation of Basic Modal Logic which arose from the insight that the `equal data' relationship in \datagl can be split from the definition of the modal operators and added as an atom of the language. This allows us to express \datagl not as a bimodal logic but as a unimodal logic of a new kind, and induces a natural representation of the \datagl bisimulation game as a particular case of the \logic bisimulation game.

What do we gain? We give two main motivations for \logic. The first one is that we can think of \logic as a way of interpreting Basic Modal Logic over general relational structures.
Given a first order signature $\sigma$ that contains at least one binary relation, we can select it to function as an accessibility relation and then use \logic to reason modally about the structure, replacing $\exists$ and $\forall$ with $\Diamond$ and $\Box$. This represents an important relaxation on what kinds of first order signatures admit a modal interpretation. The standard translation into First Order Logic shows how this amounts to a restriction on which bound variables can appear inside first order atoms, and in what order they appear.

Our second motivation for \logic is to present a framework for capturing and designing simple data-aware logics which is different from (multi-)modal logic, while at the same time retaining a modal-like syntax and semantics. This is seen in the case of \datagl splitting data-aware modalities into two separate syntactic constructs. More generally, this work constitutes an exploration of comonadic semantics as a framework for studying data-aware logics.

We emphasise that the comonadic formalism allowed a systematic study of multiple properties of interest of a new logic. This perspective on \logic takes as a fundamental starting point the notion of pp-tree and the $k$-step unravelling construction. From there, multiple lines of thinking open up naturally, such as the characterisation of the expressivity of graded modalities through isomorphism of unravellings, a homomorphism-counting property with respect to the class of finite pp-trees, the pp-tree model property, or the Chandra-Merlin-like correspondence between finite pp-trees and positive \logic formulas. We even mention in passing a FVM-type theorem for products of structures obtained `by free' from the comonadic formalism (Prop.~\ref{prop:fvm}). These are either applications of general results for game comonads or follow analogous results previously established in the literature, although sometimes requiring non-trivial adaptations. Moreover, using a translation technique we obtain a characterisation of the expressivity of \datagl, and still taking as fundamental the notion of unravelling and the ensuing comonads, both for \logic and for \bml, we obtain polynomial-time computational reductions from \logic to \bml. The fundamental algorithmic observation in this context is that unravellings of finite structures are themselves computable in polynomial time for fixed values of $k$.

We close with a discussion of some lines for future work. Of course, we can continue to apply comonadic techniques to the study of \logic, such as looking into homomorphism preservation properties~\cite{abramsky2022preservation} or FVM-type properties as explored in~\cite{jakl2023categorical}. Here we sketch two lines of research inspired by our two main motivations: developing modal languages for general relational structures, and giving comonadic semantics to data-aware logics.

\paragraph*{Multimodal, Polyadic \boldmath{\logic}.}

We have presented the theory of \logic as corresponding to Basic Modal Logic, but modal languages may be constructed more generally by choosing a \emphat{modal similarity type} consisting of a finite number of modal operators which moreover may be polyadic, i.e.\ correspond to accessibility relations of arbitrary finite arity~\cite{blackburn2001modal}. Allowing any subset of $\sigma$ to be interpreted as accessibility relations for modal exploration is particularly interesting from the motivation of extending the modal lens to arbitrary relational structures. Moreover, in this more general case we may ask what happens when any relation symbol can be used both as accessibility relation for a modality and as an atom at the same time. In \logic allowing $E$ as an atom of the language would not add expressive power at all, since e.g.\ $\Diamond E$ would be equivalent to $\Diamond \top$. However, this is no longer true if we introduce polyadic modalities.

\paragraph*{Comonadic semantics for \boldmath{\cdxp}.}
As we begin to explore more complex comparison games, such as bisimulation games for \cdxp~\cite{FFA15}, we expect that these games will be captured by a comonad together with a translation technique similar to our treatment of \datagl. In the case of \datagl, we translated \datagl models into relational structures and then applied the comonad $\Ck$ corresponding to \logic. For other fragments of \cdxp the comonad in itself might not correspond to an easily recognizable logic.

As a next step in this direction, recall from Section~\ref{sec:ppml_as_framework} the fragment $\cdxp_\epsilon(\downarrow^+)$ of $\cdxp(\downarrow^+)$. As we noted, although it is possible to translate this fragment into $\sigmacdxp$-\logic, this latter logic is strictly more expressive even when restricted to an appropriate class of models. This rules out the possibility of using the \logic comonad $\Ck$ to capture indistinguishability for $\cdxp_\epsilon(\downarrow^+)$. Instead, we may obtain a comonadic characterisation of $\cdxp_\epsilon(\downarrow^+)$ by recurring to a new comonad.

When we interpret $\cdxp_\epsilon(\downarrow^+)$ over tree-shaped models, the $k$-(bi)simulation game for this logic can be stated in terms of a certain two-pebble game, where the two pebbles must be moved in alternation. Here the parameter $k$, which on the side of games corresponds to the maximum number of rounds, represents the number of nested occurrences of $\dow$ in a given formula, including those inside tests.
To obtain a comonad encoding this game, start with the signature $\sigmadgl$ and the full subcategory $\ModDGL_*^\text{tree}$ of $\ModDGL_*$ spanned by data trees. Then for each $k \geq 0$ we can construct an extension of $\sigmadgl$, $\sigma_k \coloneqq \sigmadgl \union \{E_0,\dots,E_k\}$, where all the new symbols are binary, and a functor $t_k: \ModDGL_*^\text{tree} \to \Structpointed[\sigma_k]$ extending the functor from Def.~\ref{def:functor_from_datagl_models} with $E_j^{t\MMp} \coloneqq \bigcup_{i=0}^j (E^{t\MMp})^i$. In this context, for each $k\geq 0$ there exists a comonad with underlying functor $\bb X_k : \Structpointed[\sigma_k] \to \Structpointed[\sigma_k]$ such that winning strategies for Duplicator in the one-way simulation game between the data trees $\MMp, \MMpp \in \ModDGL_*^\text{tree}$ correspond to Kleisli morphisms $\bb X_k t_k \MMp \to t_k \MMpp$.

It is not clear whether it is possible to give a language for $\bb X_k$ in such a way that $\bb X_k$ becomes `its' game comonad. Further study is needed in order to assess such possibility, for this and other fragments of \cdxp. In this way it might be possible to obtain new logics closely related to \cdxp in the same way in which \logic relates to \datagl, by extending already known languages into unknown territory.

On the other hand, since we expect that the translation technique will continue to be necessary for data-aware logics, this limits some of the benefits arising from the existence of a related comonad, e.g.\ we cannot directly interpret coalgebras of these comonads as reifications of positive formulas in our language of interest. This may motivate the development of a proper theory of relative game comonads, by e.g.\ characterising the logical meaning of the relative coalgebras of a relative comonad \cite[Def. 2.11]{altenkirch2010monads}.

\bigskip

\paragraph{Acknowledgements:}
This work was partially funded by UBACyT 20020190100021BA and PICT-2021-I-A-00838. We thank Tomáš Jakl for helpful conversations on the topic of relative comonads.

\bibliographystyle{plain}
\bibliography{PPML_JLC_revised.bib}

\end{document}